\documentclass[11pt]{article}

\usepackage[T1]{fontenc}
\usepackage[utf8]{inputenc}
\usepackage{lmodern}
\usepackage[margin=1in]{geometry}
\usepackage{amsmath,amssymb,amsthm,mathtools}
\usepackage{xspace,graphicx,relsize,bm,breqn,multirow,booktabs}
\usepackage{multicol,adjustbox,tabularx,longtable,tikz,float}
\usetikzlibrary{arrows.meta,calc,fit,positioning,decorations.pathreplacing}
\usepackage{microtype}
\usepackage{physics}
\usepackage{hyperref}
\hypersetup{colorlinks=true,linkcolor=blue,citecolor=blue,urlcolor=blue}
\usepackage{cleveref}
\mathtoolsset{showonlyrefs}
\AtBeginDocument{\NewCommandCopy\origcref\cref
  \NewCommandCopy\origCref\Cref
  \RenewDocumentCommand\cref{s m}{\noeqref{#2}\IfBooleanTF{#1}{\origcref*{#2}}{\origcref{#2}}}\RenewDocumentCommand\Cref{s m}{\noeqref{#2}\IfBooleanTF{#1}{\origCref*{#2}}{\origCref{#2}}}}

\numberwithin{equation}{section}
\newtheorem{theorem}{Theorem}[section]
\newtheorem{lemma}[theorem]{Lemma}
\newtheorem{corollary}[theorem]{Corollary}
\newtheorem{proposition}[theorem]{Proposition}

\theoremstyle{remark}
\newtheorem{remark}[theorem]{Remark}
\theoremstyle{definition}
\newtheorem{definition}[theorem]{Definition}

\newcommand{\Id}{I}

\RenewDocumentCommand{\norm}{m}{\left\lVert #1\right\rVert}
\renewcommand{\rank}{\operatorname{rank}}
\newcommand{\TFD}{\textsf{TFD}}
\renewcommand{\Tr}{\operatorname{Tr}}

\RenewDocumentCommand{\ip}{m m}{\langle #1, #2 \rangle}
\newcommand{\ran}{\textsf{Im}}

\providecommand{\K}{}\renewcommand{\K}{\mathcal K}
\providecommand{\A}{}\renewcommand{\A}{\mathcal A}
\providecommand{\C}{}\renewcommand{\C}{\mathbb C}
\providecommand{\B}{}\renewcommand{\B}{\mathcal B}
\newcommand{\Aimp}{\mathcal A_{\rm imp}}
\newcommand{\calF}{\mathcal F}
\providecommand{\W}{}\renewcommand{\W}{\calW}
\providecommand{\Z}{}\renewcommand{\Z}{\mathcal Z}
\RenewDocumentCommand{\ket}{m}{\lvert#1\rangle}
\RenewDocumentCommand{\bra}{m}{\langle#1\rvert}

\newcommand{\calW}{\mathcal W}

\newcommand{\Span}{\operatorname{span}}
\newcommand{\poly}{\operatorname{poly}}

\title{Quantum impurity models: easy at equilibrium, universal in motion}

\author{
  Srinivasan Arunachalam \quad Sergey Bravyi \quad Anirban Chowdhury\\
  Arkopal Dutt \quad Alexandru Gheorghiu \quad Zhi Li\\[0.5em]
  \small IBM Research
}
\begin{document}

\date{}
\maketitle

\begin{abstract}
A quantum impurity model describes a small interacting subsystem embedded into a large bath of free fermions.
Here we study the computational complexity of  calculating  the ground energy,  thermal equilibrium, and dynamical properties of these models.
Our work reveals a sharp contrast: equilibrium properties can be efficiently approximated by classical means, whereas time evolution can implement a universal quantum computation.
More precisely, let $H$ be the Hamiltonian of an impurity model with $n$ fermionic modes and a constant-size impurity. We show that:
\begin{itemize}
\setlength{\itemsep}{0pt}
\item the ground energy of $H$ can be approximated to additive error $\varepsilon$ by a classical algorithm with runtime $\poly(n,1/\varepsilon)$, improving on the quasi-polynomial runtime of the best previously known algorithm;
\item at inverse temperature $\beta$, the Helmholtz free energy and a classical description of the thermofield double state can be computed to precision $\varepsilon$ in time $\poly(n,\beta,1/\varepsilon)$;
\item simulating the time evolution $e^{-iHt}$ is $\mathsf{BQP}$-complete, for $H$ that is time-independent and has a fixed, constant impurity size. 
\end{itemize}
Our algorithms exploit exponential suppression of multi-particle  bath excitations in a basis organized by energy scale and Krylov depth. 
Our universality construction realizes a stationary quantum processor whose program arrives in a stream of freely propagating fermions.
\end{abstract}

\newpage 
\setcounter{tocdepth}{2}
{ \tableofcontents}

\newpage 

\section{Introduction}

Quantum impurity models provide a natural arena for studying the computational cost of interactions in fermionic Hamiltonians. 
The model consists of a small interacting subsystem, called the impurity, embedded in a large bath of free fermions. 
Without the impurity, ground energies, thermal properties, and 
time evolution of free fermions can be efficiently simulated classically~\cite{terhal2002classical}. With interactions spread throughout the system, ground energy estimation can be $\mathsf{QMA}$-complete, even for variants of the Fermi--Hubbard model and electronic-structure Hamiltonians~\cite{schuch2009computational,ogorman2022electronic}. Impurity models interpolate between these two extreme limits. They restrict 
the support of interactions to a small subsystem of size $m$, which is much smaller than the total system size $n$.
However, the interaction can be arbitrarily strong, so it  
can drastically change properties of the system  and its effect can spread throughout the bath.

Originally introduced to describe magnetic impurities in metals~\cite{anderson1961localized}, these models 
now serve as building blocks for material simulations based on 
the dynamical mean-field theory~\cite{georges1996dynamical,kotliar2006electronic}.  The latter approximates an interacting lattice model by an impurity coupled to a self-consistently determined bath.
The resulting calculation repeatedly calls an impurity solver, typically to evaluate
Green's function describing the propagation of an added or removed fermion. Bauer \emph{et al.}\ proposed delegating  this step to a quantum computer while performing  the surrounding self-consistency loop
classically~\cite{bauer2016hybrid}. This raises a basic question: which properties of  impurity models can be computed efficiently by classical algorithms, and where might a quantum computer offer an advantage?

Classical impurity solvers have achieved considerable success. Numerical renormalization group~\cite{wilson1975renormalization,bulla2008numerical,merker2012numerical}, Monte Carlo methods~\cite{gull2011continuous}, and density matrix renormalization group~\cite{white1992density} provide powerful methods for equilibrium calculations. Modern tensor-network solvers, including fork tensor-product states and methods using Gaussian changes of basis, also access spectral functions and real-time response~\cite{wolf2015imaginary,bauernfeind2017fork,wu2025disentangling}. Long-time evolution is often more demanding
because entanglement growth makes compact tensor-network representations increasingly costly to maintain~\cite{grundner2024complex}.
These observations suggest  a distinction between the cost of equilibrium and dynamical simulations for impurity models, which is the main topic
investigated in the present work.

Bravyi and Gosset~\cite{bravyi2017complexity} gave the first classical algorithm for approximating the ground energy of impurity models with rigorous performance guarantees.
Assuming that the impurity size is a fixed constant, the algorithm of~\cite{bravyi2017complexity} 
approximates the ground energy with an additive error $\varepsilon$ in time  polynomial in  the total system size $n$ and exponential in $\log^3{(1/\varepsilon)}$,
that is, quasi-polynomial in $1/\varepsilon$.
Their analysis also established low-rank Gaussian approximations to low-energy states, motivating practical variational algorithms~\cite{boutin2021quantum,snyman2021efficient} and subsequent work on quantum embedding~\cite{erakovic2025high}. In this work, we will be concerned with two central questions, that were posed in the work of~\cite{bravyi2017complexity}: 
\begin{quote}
    \begin{center}
        \emph{\textbf{1.} Is there a  $\poly(n,1/\varepsilon)$-time classical algorithm for the ground energy problem?}\\[1.5mm]
\emph{\textbf{2.} Can a time-independent Hamiltonian with a constant-sized impurity be universal?}
    \end{center}
\end{quote}
Here universality means that time evolution under the considered class of Hamiltonians can realize an arbitrary quantum computation with at most polynomial slowdown; previously, it was known for time-dependent impurity Hamiltonians~\cite{brod2014computational} and for time-independent models whose impurity size grows with $n$~\cite{pham2026time}.

The main contribution of this work is in answering both questions above in the affirmative. Additionally, we  give polynomial-time classical algorithms for computing a number of thermal equilibrium properties, that could be of independent interest.

\subsection{Main results}
\label{sec:main-results}

\paragraph{Model.}
Consider $n$ fermionic modes with annihilation and creation operators
$a_j$ and $a_j^\dagger$, and a small subset
$\mathcal I\subseteq\{1,\ldots,n\}$ constituting
the impurity.
Each mode can be either empty or occupied by a fermion.
The corresponding $2^n$-dimensional Hilbert space $\mathcal F_n$, known as the Fock
space, is equipped with the occupation number basis
$\ket{x}$, where $x\in \{0,1\}^n$.
The standard number-conserving impurity Hamiltonian is
\[
H=H_0+V,\qquad
H_0=\sum_{j,k=1}^n h_{jk}a_j^\dagger a_k,\qquad
V=\sum_{j,k,\ell,r\in\mathcal I}
v_{jk\ell r}a_j^\dagger a_k^\dagger a_\ell a_r.
\]
Here $h=h^\dagger$ and the interaction coefficients are chosen so that
$V$ is Hermitian.
The quadratic term $H_0$ describes noninteracting fermions on all $n$
modes, including hopping between the 
impurity and the remaining modes.  
We shall refer to $H_0$ as the \emph{bath Hamiltonian} and choose the energy scale such that $\|h\|=O(1)$.
The quartic term $V$ is the \emph{interaction}: it acts only on the
modes in $\mathcal I$ and may have arbitrary strength.
Following the convention of~\cite{bravyi2017complexity}, let us say that the impurity has size $m=2|\mathcal I|$.
Our results also allow pairing terms in $H_0$ and more general
impurity interactions\footnote{The impurity size $m$ counts the number of Majorana modes in $\mathcal I$.}, as defined precisely in
Section~\ref{sec:impurity-model}.

To describe our results, let us first recall some standard terminology pertaining to fermionic systems, see e.g.~\cite{terhal2002classical,bravyi2004lagrangian,bravyi2017complexity}.
 A {\em Bogoliubov transformation} is a unitary operator
$U$ on the Fock space ${\mathcal F}_n$ such that the conjugated action $a_i \mapsto U^\dagger a_i U$ sends each annihilation operator to a linear combination
of annihilation and creation operators $a_j,a_j^\dagger$ with $j=1,\ldots,n$. Such a unitary $U$ can be succinctly described
by $O(n^2)$ coefficients describing its conjugated action on the annihilation operators.  A fermionic  {\em Gaussian state} is a state of the form $U|x\rangle$, where
$U$ is a Bogoliubov transformation and $x\in \{0,1\}^n$ is a Fock basis vector 
(here we only consider pure Gaussian states).

\paragraph{Ground state estimation.} 
Our first computational task is to approximate the global ground
energy of the Hamiltonian $H$, i.e., approximate the quantity
\[
    E_1(H)
    :=
    \min_{\substack{\phi\in\mathcal F_n\\
                    \lVert\phi\rVert=1}}
    \langle\phi|H|\phi\rangle,
\]
up to accuracy $\varepsilon \in (0,1/2)$ and given an input of the compact classical description of \(H\).

Our first result gives both an approximation to the ground energy and
a succinct classical description of a state attaining approximately
that energy.
\begin{theorem}[Ground-energy approximation]
\label{thm:ground-energy-algorithm}
\label{thm:main}
Let $\varepsilon \in (0,1/2)$. Let $H=H_0+V$ be a Hamiltonian describing a quantum impurity model with $n$ fermionic modes and impurity size $m$ with $1\le m\le n$. Let $E_1(H)$ be the smallest eigenvalue of $H$. There is a classical algorithm that computes $\widehat{E}_1$ satisfying
$$
E_1(H) \leq \widehat{E}_1 \leq E_1(H)+\varepsilon.
$$
The algorithm has runtime $\poly(n,1+\|V\|)\, \exp[O(m\log(m/\varepsilon))]$.
The algorithm also computes a Bogoliubov transformation $U$ and a normalized state $\phi \in {\mathcal F}_n$ such that 
$\widehat{E}_1=\langle \phi|U^\dag HU |\phi\rangle$. The state  $\phi$ is succinctly specified as superposition
of $ \exp[O(m\log(m/\varepsilon))]$ Fock basis vectors.
\end{theorem}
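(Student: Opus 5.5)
The plan follows the Bravyi--Gosset strategy, with the quasi-polynomial bottleneck replaced by a basis organized by energy scale and Krylov depth, as the abstract indicates. First, reduce to a normal form. Using a Bogoliubov transformation that acts only on the bath modes, block-tridiagonalize the quadratic Hamiltonian $h$ starting from the impurity modes. This is a block Lanczos/Krylov construction in Majorana language. The result is a chain of blocks of size $m$: block $0$ is the impurity, and block $d$ is the Krylov shell at depth $d$. Inside each block we further diagonalize by energy, so that modes are labelled by their Krylov depth $d$ and their single-particle energy scale.

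Second, fix a reference Gaussian state. I would take the Gaussian ground state of $H_0$ restricted to the bath, or more precisely of a modified $H_0$ in which the impurity--bath coupling is handled self-consistently. Relative to this state, every Fock basis vector is labelled by a set of particle-hole excitations in the transformed bath modes.

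The key analytic step is a decay estimate. Let $\psi$ be a ground state of $H$ and $\Pi_{k}$ the projector onto states with at least $k$ bath excitations, weighted by depth and energy. I want to show $\|\Pi_{k}\psi\|\le (Cm)^{O(m)}e^{-ck/m}$, or a similar bound of the form ``multi-particle excitations are exponentially suppressed.''

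My approach would be a Combes--Thomas/Lieb--Robinson-type argument. The interaction $V$ touches only block $0$, and $H_0$ couples block $d$ only to blocks $d\pm1$. Writing the eigenvalue equation in the excitation basis, creating an excitation at depth $d$ or energy $\omega$ costs either hopping steps or energy. Using the resolvent $(H-E_1)^{-1}$ restricted to high-excitation sectors then gives geometric decay.

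Once this bound holds, truncate. Keep bath modes with depth at most $D=O(m\log(m/\varepsilon))$ and excitation number at most $K=O(\log(1/\varepsilon))$ in an effective active space of $O(m\log(m/\varepsilon))$ modes. The truncated Hilbert space then has dimension $\exp[O(m\log(m/\varepsilon))]$.

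Third, compute $\widehat E_1$ by exact diagonalization of $P U^\dagger H U P$, where $P$ projects onto the truncated space. Matrix elements between Fock vectors are computed in $\poly(n)$ time because $U^\dagger H U$ is quadratic plus an $O(m)$-body term. Since $\widehat E_1$ is a Rayleigh quotient of a genuine state $\phi$ in the span, the bound $E_1\le\widehat E_1$ is automatic. The upper bound $\widehat E_1\le E_1+\varepsilon$ follows from the decay estimate by a standard variational argument: take $\phi\propto P\psi$ and bound $\langle P\psi|H|P\psi\rangle-E_1\|P\psi\|^2$ by terms like $\|H(1-P)\psi\|\,\|P\psi\|$. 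This needs the off-diagonal part of $H$ across the truncation boundary to be bounded, which holds since only $O(m)$ terms of norm $\poly(1+\|V\|)$ couple across it. The $\poly(n,1+\|V\|)$ runtime factor comes from the Lanczos step and from rescaling $\varepsilon$ by $1+\|V\|$ where needed.

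The main obstacle is the decay estimate with only $\log(1/\varepsilon)$ dependence in the exponent of the dimension. Bravyi--Gosset obtain $\log^3$ because the bath spectrum can be gapless, so low-energy excitations are not energetically suppressed. I would handle this by a multiscale decomposition of the energy window into dyadic shells $[2^{-j-1},2^{-j}]$. Within each shell, only $O(m)$ modes couple at each Krylov depth. The number of shells relevant at precision $\varepsilon$ is $O(\log(1/\varepsilon))$, and excitations in shell $j$ contribute energy errors scaled by $2^{-j}$. Balancing these contributions should give a total active-space count of $O(m\log(m/\varepsilon))$ rather than the polylogarithmic-cubed count.
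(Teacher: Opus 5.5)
You correctly identify the decay estimate as the crux, but you do not establish it, and the route you sketch does not obviously give what is needed. A Combes--Thomas bound on $(H-E_1)^{-1}$ compressed to high-excitation sectors needs a uniform spectral margin above $E_1$. There is none here: bath excitations can cost arbitrarily little, and $V$ is unbounded. Lieb--Robinson bounds say nothing about ground states without a gap, and a one-mode estimate cannot control exponentially many multi-excitation configurations.

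The paper's key lemma is a \emph{multiparticle pull-through} bound. From $[\widetilde H,b(v)]=-b(Ev)-a(Tv)$ and imaginary-time evolution one proves by induction that $\|b(v_1)\cdots b(v_p)\psi\|\le\prod_i s(v_i)$, where $s(v)=\int_0^\infty\|Te^{-tE}v\|\,dt$, uniformly in $V$. Two structural steps that you omit are what make $s(v)$ finite and small:
\begin{itemize}
\item \emph{Artificial gap.} Round every $e_j<\omega=\varepsilon/(4m)$ up to $\omega$. Then $H\preceq\widetilde H$ and $E_1(\widetilde H)\le E_1(H)+m\omega$, because only the at most $m$-dimensional projection of $\Aimp$ onto the low-energy modes can be occupied at a cost. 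Without this step $s(v)$ diverges for near-zero modes, and those modes cannot simply be discarded.
\item \emph{Shells before Krylov.} Split into dyadic shells of $\widetilde h$ \emph{before} building Krylov spaces. In shell $s$, take the boundary $\Pi_s\Aimp$ and run block Krylov on $E_s$ starting from $\operatorname{Im}T_s^\dagger$. Within a shell the spectral ratio is at most $2$ and $\|T_s\|\le e_s^\downarrow/2$, which gives $s(v)\le\frac12 3^{-\ell}$ at depth $\ell$.
\end{itemize}
Your first step builds one Lanczos chain for the whole bath and sorts by energy afterwards. That only yields a per-layer factor like $(1-\omega)/(1+\omega)$, so the required depth would be polynomial in $1/\omega$.

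Your truncation also does not give the claimed dimension. With $S=\Theta(\log(m/\varepsilon))$ shells and up to $m$ modes per depth in each, a depth cutoff $D$ leaves about $mSD$ active bath modes, not $O(m\log(m/\varepsilon))$. Certifying a small tail from per-mode bounds of order $9^{-\ell}$ forces $D=\Omega(\log(1/\varepsilon))$. Even for fixed $m$, placing your $K=\Theta(\log(1/\varepsilon))$ excitations among $\Theta(\log^2(1/\varepsilon))$ modes then gives $\exp[\Theta(\log(1/\varepsilon)\log\log(1/\varepsilon))]$ configurations, which is quasi-polynomial again.

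The missing idea is a \emph{weighted} cutoff $Q=\sum_j(\ell(j)+1)n_j\le R$ over Krylov-layer modes, with boundary occupations unrestricted and decoupled modes empty.
\begin{itemize}
\item The product bound gives $\langle\psi|3^{Q}|\psi\rangle\le e^{mS}$, so Markov's inequality bounds the discarded weight by $e^{mS}3^{-R}$.
\item A generating-function count bounds the number of retained configurations by $e^{O(mS)}3^R$. With $R=O(m\log(m/\varepsilon))$ this is $(m/\varepsilon)^{O(m)}$.
\item Leaving the boundary unrestricted gives $[P_R,V]=0$. The energy loss is then at most $\sqrt{p_R}\,(mS+R)/(1-p_R)$, using $\widetilde H_0\preceq N$ on retained configurations, with no cross terms from $V$.
\end{itemize}
Your claim that only $O(m)$ terms cross the cut is not correct for the quadratic part. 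Also, the cutoff need not depend on $\|V\|$: the $\poly(1+\|V\|)$ factor enters only through the eigensolver.
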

It suffices to consider $m\leq n$, since otherwise direct
computation in the full Fock space takes time
$2^{O(n)}\leq2^{O(m)}$, within the claimed bounds.
For a constant impurity size $m=O(1)$ and the interaction Hamiltonian satisfying $\|V\|\le \poly(n)$, we obtain a polynomial-time algorithm i.e., the runtime is $\poly(n,1/\varepsilon)$. For comparison, the previously best known algorithm of~\cite{bravyi2017complexity} has runtime scaling exponentially with $m\log^3{(m/\varepsilon)}$. 
We note that recently Jiang~\emph{et al.}\ announced a polynomial-time {\em quantum} algorithm approximating the ground energy of impurity models~\cite{jiang2026impuritybqp}.
Our classical algorithm borrows some ideas of~\cite{jiang2026impuritybqp} such as the Krylov subspaces, see Section~\ref{sec:ground-energy-proof} for details.

\paragraph{Thermofield-double approximation.}

Our second result concerns the thermal equilibrium properties of quantum impurity models. 
Let $\beta \ge 0$ be the inverse temperature and  $\rho_\beta$ be the thermal Gibbs state
\[
\rho_\beta = \frac{e^{-\beta H}}{ \mathrm{Tr} \left( e^{-\beta H} \right)}
\]
with the canonical purification
\[
    \lvert\mathsf{TFD}_\beta\rangle
    \in\mathcal F_n\otimes\mathcal F_n
    \cong\mathcal F_{2n},
\]
which is often referred to as the \emph{Thermofield Double  state} (see  Section~\ref{sec:tfd-preliminaries} for a formal definition). Our second result shows that this purification can be approximated by a short superposition of fermionic Gaussian states. 
\begin{theorem}[Gaussian spanning set]\label{thm:rank}
Let $\delta \in (0,1)$. Fix a bath Hamiltonian $H_0$ and an impurity support $\cal I$ as above. For any  inverse temperature $\beta\ge 0$, there exists a Bogoliubov unitary  $U_\beta\, : \, \calF_{2n}\to \calF_{2n}$ and a set of Fock configurations
\begin{equation}\label{eq:rank-main}
\mathcal X_{\beta,\delta} \subseteq \{0,1\}^{2n}, \qquad 
 |\mathcal X_{\beta,\delta}|  \le  (2+\beta)^{O(m)}/\delta
\end{equation}
such that 
\begin{equation}\label{eq:TFD_approx}
\sum_{x\in \mathcal X_{\beta,\delta}} |\langle  \mathsf{TFD}_\beta|U_\beta|x\rangle|^2 \ge 1-\delta^2
\end{equation}
for all interaction Hamiltonians $V$ supported on ${\cal I}$. The Bogoliubov transformation $U_\beta$ and the set $\mathcal X_{\beta,\delta}$ can be computed in time $\poly(n,1+\beta)$ and $\poly(n,1+\beta,1/\delta)$ respectively, where the degree of polynomials may depend on $m$.
\end{theorem}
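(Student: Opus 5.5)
Our plan is to reduce the thermofield double to an imaginary-time evolution of a Gaussian state by a dressed impurity interaction, and then to control the spread of that evolution in a Krylov/energy-scale organized basis of bath modes.

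\textbf{Step 1: Gaussian reference.} Write $|\TFD_\beta\rangle \propto (e^{-\beta H/2}\otimes \Id)|\Phi\rangle$, where $|\Phi\rangle$ is the maximally entangled state, itself Gaussian on $\calF_{2n}$. Interaction-picture factorization gives
\[
e^{-\beta H/2}=e^{-\beta H_0/2}\,\mathcal T\exp\Big(-\int_0^{\beta/2} V(s)\,ds\Big),\qquad V(s)=e^{sH_0}Ve^{-sH_0}.
\]
Thus $|\TFD_\beta\rangle$ is proportional to $\mathcal T\exp(-\int_0^{\beta/2} \tilde V(s)\,ds)$ applied to the Gaussian state $|\TFD^{0}_\beta\rangle$ of the free bath, with $\tilde V(s)$ the corresponding doubled-space operator. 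The operator $V(s)$ is a polynomial in the Heisenberg-evolved impurity Majoranas $c_j(s)$, $j\in\mathcal I$, which are linear combinations of all $2n$ Majoranas.

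\textbf{Step 2: compression of the bath modes.} The key structural input is that the family $\{c_j(s): s\in[0,\beta/2],\,j\in\mathcal I\}$ lies, up to error $\eta$, in a subspace of dimension $d=O(m\log((2+\beta)/\eta))$. To see this, expand $e^{sh}$ in Chebyshev polynomials of $h$. Since $\|h\|=O(1)$ and $s\le\beta/2$, truncation at degree $O(\beta+\log(1/\eta))$ suffices, and the resulting span is the Krylov subspace generated by $h$ acting on the impurity modes.

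Choose $U_\beta$ to map these $d$ Krylov modes into the first coordinates. It should also put the remaining modes, relative to the free reference state, into a Fock-diagonal form; concretely, $U_\beta$ brings the Gaussian $|\TFD^0_\beta\rangle$ restricted to the complement to a Fock vector. Computing $U_\beta$ is then Krylov construction plus diagonalization, which takes $\poly(n,1+\beta)$ time.

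A naive truncation gives only polylog factors in $\beta$. The claimed $(2+\beta)^{O(m)}$ bound suggests organizing the Krylov modes further by energy scale: dyadic shells of $h$ near the relevant band, with a constant number of modes per shell per impurity mode, i.e.\ $O(m\log(2+\beta))$ modes in total.

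\textbf{Step 3: suppression of multi-particle excitations.} Within the compressed modes, bound the weight of $|\TFD_\beta\rangle$ on Fock configurations with many excitations relative to the reference. The main obstacle is that $\|V\|$ is arbitrary, while $\mathcal X_{\beta,\delta}$ must be independent of $V$. We would therefore try to prove a weight decay of the following form: after applying $U_\beta$, the amplitude on configurations with $k$ excitations at depth/energy shell $\ell$ decays like $e^{-\Omega(\ell)}$ per particle. The argument would be a Lieb–Robinson/Krylov-depth bound, using that $V$ only couples to depth-$0$ modes and $H_0$ moves excitations by one Krylov step. Combined with an energy argument, namely that thermal weight suppresses high bath excitations at energy scale set by $\beta$, this should hold uniformly in $V$ because only $\log$ of the normalization enters. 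This mirrors the Bravyi–Gosset low-rank analysis, adapted to the purification.

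Summing these tails, configurations with total "excitation cost" above $O(m\log(2+\beta)+\log(1/\delta))$ carry weight at most $\delta^2$. The number of remaining configurations is $(2+\beta)^{O(m)}/\delta$; the simple $1/\delta$ dependence comes from a counting/Markov step on the weight distribution. Enumerating them takes $\poly(n,1+\beta,1/\delta)$ time.
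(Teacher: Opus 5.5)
Your closing step (assign each occupied bath mode a cost that grows with its Krylov depth, bound an exponential moment of the total cost, apply Markov, count configurations by weight) and the dyadic-shell/Krylov organization are the right skeleton. The gap is Step 3, which carries the whole theorem: it is only asserted, and the mechanism you propose for it cannot work. What is needed is a product bound $\norm{\mathsf d_{\eta_1}(v_1)\cdots\mathsf d_{\eta_k}(v_k)\ket{\mathsf{TFD}_\beta(H)}}\le\prod_i s_\beta(v_i)$ that holds for \emph{every} $V$.

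Your Step 1 expands in $V$ itself. The imaginary-time operator $V(s)=e^{sH_0}Ve^{-sH_0}$ is non-Hermitian and can have norm as large as $e^{O(m\beta)}\norm{V}$. Any $\eta$-approximate compression of the $c_j(s)$ in Step 2 therefore produces errors multiplied by powers of $\norm{V}$, so no $V$-independent $\mathcal X_{\beta,\delta}$ can come out. (Also, Chebyshev truncation at degree $O(\beta+\log(1/\eta))$ gives dimension $O(m(\beta+\log(1/\eta)))$, not $O(m\log((2+\beta)/\eta))$.) A Lieb--Robinson argument would again depend on the local interaction strength. ``Only the log of the normalization enters'' is precisely the claim that needs proof.

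Your reference state is also the wrong one. The free TFD of the full $H_0$ does not factor across the impurity/bath split, so its restriction to the complement of your Krylov modes is generally mixed, and no Bogoliubov map sends it to a Fock vector there. Moreover, annihilators adapted to $H_0$ have sources $[V,\cdot\,]$ of size $\norm{V}$.

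The missing idea is to perturb in the bounded boundary--bath hopping rather than in $V$. Split $\C^n=\A\oplus\B$ with $\Aimp\subseteq\A$, using one boundary piece $\A_s=\Pi_s\Aimp$ of dimension at most $m$ per shell. Take as reference the boundary vacuum times the free TFD of $H_B=\sum_{ij}E_{ij}b_i^\dagger b_j$, where $E=h_{\B\B}$. For $v\in\B\subseteq\Aimp^\perp$ one then has exactly $[H,b(v)]=-b(Ev)-a(h_{\A\B}v)$, so $V$ drops out.

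A Duhamel (pull-through) expansion writes $\mathsf d_{\eta_1}(v_1)\cdots\mathsf d_{\eta_k}(v_k)e^{-\beta H/2}$ as a signed sum over time orderings of $e^{-t_0H}C_1e^{-t_1H}\cdots C_ke^{-t_kH}$, with $\sum_j t_j=\beta/2$. The $C_i$ are boundary sources of norm $\norm{Te^{-tE}v_i}$, where $T=h_{\A\B}(\Id+e^{-\beta E})^{-1/2}$. H\"older's inequality for Schatten norms bounds each such product in Hilbert--Schmidt norm by $\sqrt{Z_\beta(H)}\prod_j\norm{C_j}$. This cancels the TFD normalization exactly, with $V$ kept entirely inside the exponentials.

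The decay is then a purely one-particle fact. A unit vector of Krylov depth $\ell$ has $s_\beta(v)=\int_0^{\beta/2}\norm{Te^{-tE}v}\,dt\le\tfrac12 3^{-\ell}$ if it lies in a shell whose energies differ by at most a factor $2$, or in the low shell $[0,\min\{1,\beta^{-1}\}]$. Decoupled modes have $s_\beta=0$, so no approximate truncation of the one-particle space is ever needed. Without this $V$-uniform product bound, your Step 3 remains a conjecture rather than a proof.
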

Here $U_\beta$ depends only on $H_0,{\cal I},\beta$, while $\mathcal X_{\beta,\delta}$ additionally depends on the accuracy $\delta$. Neither depends on the interaction Hamiltonian $V$ and no bound on $\norm{V}$ is required. Theorem~\ref{thm:rank} implies that $|\mathsf{TFD}_\beta\rangle$ can be approximated by a superposition of states
$U_\beta|x\rangle$ with $x\in \mathcal X_{\beta,\delta}$ but does not specify the coefficients in this superposition. In the next theorem, we show that these coefficients  can be efficiently computed as well under an additional norm bound $\|V\|\le 1$. This provides an efficient classical algorithm for computing a low-rank superposition of Gaussian states approximating the thermofield double state. 
\begin{theorem}[Constructive TFD approximation]\label{ct:main}
Suppose $\norm{V}\le1$.
For  $\beta\ge0$ and $\delta \in (0,1/2)$,
let $U_\beta\, : \, \calF_{2n}\to \calF_{2n}$ and $\mathcal X_{\beta,\delta}\subseteq \{0,1\}^{2n}$ be the  Bogoliubov transformation and  the set of Fock configurations of Theorem~\ref{thm:rank}. There is a classical algorithm  that computes a set of coefficients $\{c_x \in \C \, : \, x\in \mathcal X_{\beta,\delta}\}$ such that  $\sum_{x\in  \mathcal X_{\beta,\delta}} |c_x|^2=1$ and 
\[
\left\| 
|\mathsf{TFD}_\beta\rangle - \sum_{x\in \mathcal X_{\beta,\delta}} c_x U_\beta |x\rangle \right\| \le C\delta
\]
for a universal constant $C>0$.
The algorithm has runtime
\[
 \poly(n,1+\beta,1/\delta)
 \exp\!\left[O\!\left(m\log\frac{m(1+\beta)}{\delta}\right)\right].
\]
\end{theorem}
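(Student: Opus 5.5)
Given Theorem~\ref{thm:rank}, the task reduces to computing good coefficients on the known subspace $\mathcal S=\Span\{U_\beta\ket{x}: x\in\mathcal X_{\beta,\delta}\}$, of dimension $D\le(2+\beta)^{O(m)}/\delta$. The ideal choice is the orthogonal projection $P\ket{\TFD_\beta}$, normalized. By \eqref{eq:TFD_approx} this has error $O(\delta)$. The plan is to approximate it by imaginary-time evolution restricted to $\mathcal S$, or to a slightly enlarged subspace.

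Write $\ket{\TFD_\beta}\propto (e^{-\beta H/2}\otimes \Id)\ket{\Phi}$, where $\ket{\Phi}$ is the maximally entangled (infinite-temperature) state; this is Gaussian, and so is $\ket{\TFD^{0}_\beta}$ for $H_0$. Conjugate by $U_\beta$ and work in the Fock basis. The operator $\tilde H=U_\beta^\dagger (H\otimes \Id) U_\beta$ splits as $\tilde H_0+\tilde V$. Here $\tilde H_0$ is quadratic and $\tilde V$ is a degree-4 polynomial in $O(m)$ Bogoliubov-rotated modes, each a linear combination of the new modes. Matrix elements $\bra{x}\tilde H\ket{y}$ for $x,y$ in a set of size $N$ are computable in $\poly(n)N^2$ time by Wick's theorem, or directly since these are Fock states in the rotated basis.

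The first approach is a Trotter/Taylor scheme. Split $[0,\beta/2]$ into $L=\poly(1+\beta,1/\delta)$ steps. At each step, apply a low-degree Taylor expansion of $e^{-\tau H}$ (with $\norm{V}\le1$ and $\norm{h}=O(1)$, the quadratic part can be handled exactly as a Gaussian map). Then project back onto an enlarged configuration set $\mathcal X'$ that contains $\mathcal X_{\beta',\delta}$ for each intermediate $\beta'$, at the cost of changing the frame $U_{\beta'}$ along the way. A cleaner alternative is a single variational/Galerkin computation. Form the $D\times D$ Gram and overlap data
\[
G_{xy}=\bra{x}U_\beta^\dagger (e^{-\beta H/2}\otimes\Id)\ket{\Phi},
\]
and estimate $\bra{x}U_\beta^\dagger e^{-\beta H/2}\ket{\Phi}$ via a polynomial approximation $p(H)$ of $e^{-\beta H/2}$ of degree $O(\sqrt{\beta\norm{H}}\log(1/\delta))$. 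The difficulty is that $\norm{H_0}$ is $\Theta(n)$, not $O(1)$. To get around this, I would use an interaction picture: $e^{-\beta H/2}=e^{-\beta H_0/2}\,\mathcal T\exp(-\int_0^{\beta/2} V(s)\,ds)$, where $V(s)=e^{sH_0}Ve^{-sH_0}$. $V(s)$ is still quartic in $O(m)$ rotated (non-unitary-rotated, but bounded norm for $s\le\beta$ since $\norm{h}=O(1)$) modes. Truncating the Dyson series at order $K=O(\beta+\log(1/\delta))$ (using $\norm{V}\le1$ and $\norm{e^{\pm sH_0}}$ controlled via normalization by $Z_0$) gives a sum of time-ordered products of $O(K)$ rotated impurity operators applied to the Gaussian state $\ket{\TFD^0_\beta}$. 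Each term's overlap with $U_\beta\ket{x}$ is a Gaussian expectation of a product of $O(mK)$ linear fermion operators, computable by a Pfaffian. The time integrals are discretized by quadrature with $\poly(\beta/\delta)$ nodes.

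The main obstacle is the cost of this expansion. A naive Dyson sum has $(\text{nodes})^K$ terms, which is superpolynomial when $K\sim\log(1/\delta)$. To fix this, I would instead exploit that all products act within the span of $O(m)$ rotated modes per time node. I would carry the state explicitly as a vector in $\mathcal S$ (or a polynomially enlarged set), propagating step by step. Each short step multiplies by $1-\tau V(s)$ and is projected. The truncation error per step is controlled by Theorem~\ref{thm:rank}, applied at intermediate temperatures, with errors summed over $L$ steps; this would force choosing $\delta_{\rm step}=\delta/L$, which is absorbed polynomially. Finally, the partition function ratio $Z/Z_0$ is obtained from the norm of the computed vector. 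The renormalized $c_x$ then satisfy the $C\delta$ bound via the triangle inequality combined with \eqref{eq:TFD_approx}. The total runtime is $\poly(n,1+\beta,1/\delta)\cdot D^{O(1)}$, matching the claimed $\exp[O(m\log(m(1+\beta)/\delta))]$.
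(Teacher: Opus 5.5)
\textbf{Verdict: there is a genuine gap.} Your reduction to the normalized orthogonal projection of $\ket{\mathsf{TFD}_\beta(H)}$ onto $\Span\{U_\beta\ket{x}:x\in\mathcal X_{\beta,\delta}\}$ is fine, and the paper ends the same way (Gaussian overlaps, then renormalization). The problem is how you produce an accurate TFD to project.

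\textbf{Errors do not simply add up.} You assume that projecting after each imaginary-time step gives errors that add over $L$ steps, so $\delta_{\rm step}=\delta/L$ suffices. That is the intuition for unitary dynamics, and it fails here.
\begin{itemize}
\item The map taking the normalized TFD at $\sigma$ to the one at $\tau$ is $\sqrt{Z_\sigma(H)/Z_\tau(H)}\,(e^{-(\tau-\sigma)H/2})_L$. Its norm on general doubled vectors is $\bigl(Z_\sigma(H)e^{-(\tau-\sigma)E_1(H)}/Z_\tau(H)\bigr)^{1/2}$. For $\sigma=0$ this equals $\bigl(2^ne^{-\tau E_1(H)}/Z_\tau(H)\bigr)^{1/2}$, which is exponential in $n$ already for $\tau$ of order one. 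It is attained by an error proportional to $\ket{P_g}_f$, with $P_g$ the ground-space projector.
\item For a projected step $P(\Id+\Delta\tau M)$, with $M$ the generator of the imaginary-time flow, the only a priori bound is $1+\Delta\tau\norm{M}$. Since $\norm{M}$ is of order $n$, this gives $e^{O(n\beta)}$ after $L$ steps.
\item Theorem~\ref{thm:rank} does not help. It controls only the exact $\ket{\mathsf{TFD}_{\beta'}(H)}$ in its own frame $U_{\beta'}$, whose shells and Krylov layers change with $\beta'$. It says nothing about the propagated approximate vector. Nor does it cover your interaction-picture vectors $e^{sH_0}e^{-sH}\ket{\Phi}$, which are not thermofield doubles and have no norm control.
\item The Dyson route also violates the runtime. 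Conjugation by $e^{sH_0}$ turns each creator $\beta(w)^\dagger$ into $\beta(e^{sh}w)^\dagger$, so $\norm{V(s)}$ can grow exponentially in $s$. This forces a truncation order exponential in $\beta$.
\item Reading $Z/Z_0$ off the norm of the computed vector presupposes the very accuracy in question. Unnormalized vectors have norm $e^{-\tau E_1(H)/2}\sqrt{Z_\tau(H)}$.
\end{itemize}

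\textbf{What is missing is a stability mechanism, which is most of the paper's proof.} The paper does the following.
\begin{itemize}
\item[(i)] It opens an artificial gap and uses one $\tau$-independent zero-temperature Krylov decomposition, moving the Fermi--Dirac filter into $\hat s_\tau$ (Lemma~\ref{ct:lem-fixed}). Only the smooth frame $\mathcal U_\tau$ then varies, and the generator maps $\ran P_R$ into the explicit two-bit-flip enlargement $\ran\widehat P_R$.
\item[(ii)] It rescales by the computable factor $e^{\tau E_*/2}/\sqrt{Z_\tau(H_B)}$, with $E_*$ from Theorem~\ref{thm:ground-energy-algorithm}. Lemma~\ref{ct:lem-partition} shows this keeps all norms within $e^{O(mS)}$ of one.
\item[(iii)] Using H\"older's inequality, it proves that vectors in $\mathcal U_\sigma\ran\widehat P_R$ are amplified by at most $A_R\le e^{O(mS)}(n+1)2^{3R/2}$ under later evolution (Lemma~\ref{ct:lem-propagation}). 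This holds only on that explicit span, not for general vectors.
\item[(iv)] It groups $16$ Krylov layers into super-layers, so that the tail $a_R=e^{O(mS)}2^{-8R}$ beats $A_R$ times the $L\sim2^{4R}$ steps.
\item[(v)] It replaces step-by-step propagation with a single linear least-squares fit of the whole trajectory, with residuals measured in $\ran\widehat P_R$. The optimal residual is no larger than that of the unknown trial path $P_Rz(\tau_j)$. Duhamel's formula plus (iii) turns this into a final error $e^{O(mS)}\poly(n,1+\beta)\,2^{-5R/2}$. The computed path feeds back only through an interpolation term, which is absorbed once $\Delta\tau$ is small.
\end{itemize}

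A greedy project-after-each-step scheme lacks the property in (v). Its leakage $(\Id-P_R)My_j$ depends on the computed $y_j$. This produces a feedback term of order $A_R\beta\norm{M}$ times the current error, which cannot be absorbed. Without (ii)--(v), your scheme has no error bound.
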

Thus, for constant $m$, the above algorithm has runtime polynomial in $n,\beta,1/\delta$. 
Theorem~\ref{ct:main} allows us to approximate Gibbs-state expectation values of observables given as sums of polynomially
many monomials in the 
Majorana operators $a_j+a_j^\dagger$
and $i(a_j-a_j^\dagger)$. Their matrix elements between the
Gaussian states in the TFD approximation can be computed
efficiently using the generalized Wick theorem
of~\cite{bravyi2017complexity}.
For an observable $P$, the resulting additive error in approximating $\mathrm{Tr}(\rho_\beta P)$ is at most $2C\delta \|P\|$, where $\delta$ is the error parameter of  Theorem~\ref{ct:main},
see Section~\ref{ct:thermodynamics}.
Finally, using Theorem~\ref{ct:main},
we can obtain an estimate of the Helmholtz free energy of quantum impurity models.
\begin{corollary}[Free-energy estimation]\label{ct:free-energy}
There is a classical algorithm that takes as input an impurity model Hamiltonian $H=H_0+V$ with
$\norm{V}\le1$, an inverse temperature $\beta>0$, and an
error tolerance $0<\varepsilon\le1/2$, and computes a real
number $F_{\rm out}$ satisfying
\[
 \left|F_{\rm out}-\mathsf F_\beta\right|\le\varepsilon,
 \qquad
 \mathsf F_\beta=-\frac1\beta\log\Tr e^{-\beta H}.
\]
The algorithm has runtime
\[
 \poly(n,1+\beta,\varepsilon^{-1})
 \exp\!\left[
 O\!\left(m\log\frac{m(1+\beta)}{\varepsilon}\right)
 \right],
\]
which is polynomial in $n$, $1+\beta$, and $\varepsilon^{-1}$
for any constant impurity size $m$.
\end{corollary}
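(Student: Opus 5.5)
We plan to reduce free-energy estimation to a thermodynamic integration over a parameter interpolating between the free bath and the full Hamiltonian, using Theorem~\ref{ct:main} to estimate the integrand. Let $H(s)=H_0+sV$ for $s\in[0,1]$, and write $Z(s)=\Tr e^{-\beta H(s)}$ and $\mathsf F_\beta(s)=-\beta^{-1}\log Z(s)$. Differentiating gives
\[
\frac{d}{ds}\mathsf F_\beta(s)=\Tr\bigl(\rho_\beta(s)V\bigr),
\qquad
\mathsf F_\beta(1)=\mathsf F_\beta(0)+\int_0^1\Tr\bigl(\rho_\beta(s)V\bigr)\,ds ,
\]
where $\rho_\beta(s)$ is the Gibbs state of $H(s)$. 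The endpoint $\mathsf F_\beta(0)$ is the free energy of a quadratic Hamiltonian. It can be computed exactly in $\poly(n)$ time from the single-particle spectrum, for instance via $\log\Tr e^{-\beta H_0}=\sum_k\log(1+e^{-\beta\epsilon_k})$ plus a constant shift in the Majorana normal form. This step needs no division by $\beta$ beyond the final formula, so it contributes no error.

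For the integrand, note that each $H(s)$ is an impurity model with the same $H_0$ and $\mathcal I$ and with $\|sV\|\le1$. Theorem~\ref{ct:main} therefore applies uniformly in $s$, with the same $U_\beta$ and $\mathcal X_{\beta,\delta}$, which are independent of $V$. The interaction $V$ is a polynomial of degree at most $m$ in the Majorana operators on $\mathcal I$, so it is a sum of at most $2^m$ monomials with $\|V\|\le1$. By the discussion following Theorem~\ref{ct:main}, computing the matrix elements $\langle x|U_\beta^\dagger (V\otimes I)U_\beta|y\rangle$ via the generalized Wick theorem yields an estimate $g(s)$ with $|g(s)-\Tr(\rho_\beta(s)V)|\le 2C\delta$. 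We then discretize the integral with a Riemann or trapezoid rule on $N$ points. The error is controlled by the Lipschitz constant of $s\mapsto\Tr(\rho_\beta(s)V)$. Its derivative is $-\beta$ times a Kubo–Mori variance of $V$, hence at most $\beta\|V\|^2\le\beta$ in absolute value (using $\langle A,A\rangle_{\rm KM}\le\Tr\rho A^2$). Convexity of $-\log Z$ in $s$ also supports this bound. Taking $N=O(\beta/\varepsilon)$ and $\delta=\Theta(\varepsilon)$ gives total error at most $\varepsilon$ after summing the quadrature and estimation errors.

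The runtime is $N$ calls to the algorithm of Theorem~\ref{ct:main} with $\delta=\Theta(\varepsilon)$. Each call costs $\poly(n,1+\beta,1/\varepsilon)\exp[O(m\log(m(1+\beta)/\varepsilon))]$. Each call also requires $|\mathcal X|^2\cdot 2^m$ Wick evaluations, with $|\mathcal X|\le(2+\beta)^{O(m)}/\varepsilon$, which fits inside the same bound. Together these give the claimed runtime.

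The main obstacle is the bookkeeping of the observable estimate. The error bound $2C\delta\|P\|$ from Section~\ref{ct:thermodynamics} must hold for $P=V$ embedded as $V\otimes I$ on $\mathcal F_{2n}$ in the TFD picture, and $\delta$ must not have to scale like $\varepsilon/\beta$. The derivation above shows it need not, because the $1/\beta$ prefactor cancels in $d\mathsf F/ds$. If that cancellation were to fail, we would instead estimate $\log Z$ by telescoping ratios $Z(s_{k+1})/Z(s_k)=\langle e^{-\beta(s_{k+1}-s_k)V}\rangle$. This alternative costs only a polynomial increase in $N$.
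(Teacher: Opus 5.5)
Your proposal is correct and is essentially the paper's proof: you integrate $g(s)=\Tr\bigl(V\rho_\beta(H_0+sV)\bigr)$ over the coupling constant starting from the exactly computable free-bath free energy, and you estimate each $g(s_j)$ by applying Theorem~\ref{ct:main} to $H_0+s_jV$ with $\delta=\Theta(\varepsilon)$ and evaluating $\langle V_L\rangle$ by the generalized Wick theorem. The only difference is the quadrature bound: the paper uses concavity of $s\mapsto\mathsf F_\beta(H_0+sV)$ (it is $\log Z$, not $-\log Z$, that is convex) to get $g$ nonincreasing with $|g|\le1$, so a Riemann sum on $N=\lceil 4/\varepsilon\rceil$ points errs by at most $2/N$ independently of $\beta$, whereas your Kubo--Mori Lipschitz bound $|g'|\le\beta$ costs $N=O(1+\beta/\varepsilon)$ calls, which still fits the stated runtime (your fallback identity $Z(s_{k+1})/Z(s_k)=\langle e^{-\beta(s_{k+1}-s_k)V}\rangle$ would be false when $[H_0,V]\neq0$, but you never need it).
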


\paragraph{Universal quantum computation.}
Our final result demonstrates that 
the time evolution generated by quantum impurity models can simulate a universal quantum computation with at most polynomial slowdown. 
The matchgate constructions of Brod and Childs~\cite{brod2014computational} imply universality for time-dependent impurity Hamiltonians with constant impurity size. Pham and Santos~\cite{pham2026time} obtained a time-independent construction with $m=O(S\log S)$ impurity modes for circuit depth $S$. 
Our construction, by combining the fixed, autonomous update rule of quantum Turing-machine models \cite{Deutsch1985,BernsteinVazirani1997} with fermionic quantum-walk and scattering computation \cite{childs2013universal}, achieves BQP-universal dynamics with a constant-size impurity and no external control.

\begin{theorem}[Universal quantum computation]
\label{thm:main-reduction}
There exists a fixed, number-conserving quartic interaction $V_{\mathcal I}$ supported on a fixed set $\mathcal I$ of $24$ fermionic modes and a fixed output mode $o\in\mathcal I$ with the following property.
Let $\mathcal C$ be a quantum circuit on $k\ge1$ qubits with $g$ gates from the universal gate set $\{H,T,\operatorname{CZ}\}$\footnote{Alternatively, we may use the universal two-qubit gate in \cite{barenco1995universal}, decreasing the impurity size to 16.}, and let
\begin{equation}\label{eq:qc}
p_{\mathcal C}=\expval{\mathcal C^\dagger (\ketbra{0})_{k}\, \mathcal C}{0^k}
\end{equation}
be the probability of measuring $0$ on its output qubit after applying $\mathcal C$ to $\ket{0^k}$.
Then one can construct, by a deterministic classical algorithm running in time polynomial in $k+g$,
a quadratic bath Hamiltonian $H_0=\sum_{j,\ell=1}^n h_{jl}a_j^\dagger a_\ell$ with $h=h^\dagger$ and $\norm{h}\le2$,
an explicitly specified Slater determinant $\ket{\Psi}\in\mathcal F_n$,
and a positive integer evolution time $T$, such that
\begin{equation}
\label{eq:main-reduction}
\Bigl|\langle\Psi|e^{i(H_0+V_{\mathcal I})T}a_o^\dagger a_o\,
e^{-i(H_0+V_{\mathcal I})T}|\Psi\rangle-p_{\mathcal C}\Bigr|\le\frac1{12},
\qquad n,T=O\bigl((k+g)^{4}\bigr).
\end{equation}
\end{theorem}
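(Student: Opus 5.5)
The plan is to build a stationary ``processor'' at the impurity that applies a fixed gate-dependent update whenever a particle wave packet passes through it. The program will be encoded in the initial Slater determinant as a stream of freely propagating fermions on bath chains. The $k$ logical qubits are stored in the bath as dual-rail fermions. Each logical qubit is a fermion on a pair of rails, and the rails form a long ring or a conveyor of free-fermion chains. The chains are routed so that the qubits pass, in a fixed cyclic order, through a small scattering region attached to $\mathcal I$.

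Following the quantum Turing-machine paradigm of Deutsch and Bernstein--Vazirani, the gate to be applied at each pass is selected by additional ``instruction'' fermions. These travel alongside the data fermions and are also encoded in the initial Slater determinant $\ket{\Psi}$. The fixed quartic interaction $V_{\mathcal I}$ then implements a controlled scattering map, in the spirit of Childs--Gosset--Webb: data fermion plus instruction fermion in, data fermion transformed by $H$, $T$, $\operatorname{CZ}$, or identity out. Single-qubit gates $H,T$ can be realized purely by quadratic beam-splitter and phase scatterers. So the instruction modes only need to select which path the dual-rail pair takes, and the interaction is needed for the selection and for $\operatorname{CZ}$. For $\operatorname{CZ}$ I would use the momentum-switch / two-particle phase-shift scattering of Childs et al. Two wave packets meeting at the impurity acquire a phase $-1$ only when both occupy the relevant rails. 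Counting the modes for dual-rail in/out of two qubits plus instruction rails should give the fixed count of $24$.

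Key steps, in order. First, design the bath graph: long $1$D chains with uniform hopping, so that $\norm{h}\le2$, together with a fixed scattering gadget. Second, compute the single-particle S-matrix of the gadget at the chosen wave-packet momentum. Then choose the gadget so that it realizes the desired unitaries at that momentum, and construct $V_{\mathcal I}$ so that two-particle scattering yields the controlled phase. Third, schedule: lay out $\ket{\Psi}$ so that, at successive times, exactly the right pair (or single) data packet and instruction packet arrive at the impurity together. For this I would use wave packets of length $L$ and chain lengths $\gg L$, so different packets never overlap away from the impurity. Finally, after the last gate, route the output qubit's ``0'' rail into mode $o$. Since all packets are fixed in advance, $H_0$ and $\ket\Psi$ are computed deterministically from $\mathcal C$.

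Error analysis. Finite-width packets of length $L$ give scattering errors $O(1/\sqrt L)$ or $O(1/L)$ per gate, from momentum spread and dispersion. There is also the truncation error of the unused Fermi sea; the remaining modes are empty, so there is none. Summing over $g$ gates, we need $L\sim g^2$ to keep the total error below $1/12$. With $O(k+g)$ packets each needing length $L$ and separations, this gives $n,T=O((k+g)^4)$. The output probability is then $\langle a_o^\dagger a_o\rangle\approx p_{\mathcal C}$, and $T$ can be rounded to an integer.

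Main obstacle: the interacting scattering step. I must show rigorously that a fixed quartic $V_{\mathcal I}$ produces, for finite wave packets, the ideal two-particle S-matrix up to $O(\poly(1/L))$ error. I must also show there is no bound-state leakage or spurious multi-particle processes when several fermions are nearby. I would first try a decoupling argument: bound packet overlaps and apply a Lippmann--Schwinger/Childs-style truncation lemma. I would also keep only two particles at the impurity at a time, which the scheduling enforces.
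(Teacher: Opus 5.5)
Your proposal has a genuine gap: it has no stationary memory at the impurity, and every step that needs one fails.

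\textbf{Two-qubit gates, instruction control, and readout.} In your layout the qubits pass the scattering region one at a time in a fixed cyclic order. So $D_i$ and $D_j$ are never there together, and a ``two packets meet'' $\operatorname{CZ}$ cannot act on an arbitrary pair.

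More basically, a fixed interaction on $O(1)$ modes cannot produce an order-one controlled operation between two \emph{moving} packets of width $L$. This covers both instruction-on-data and data-on-data. In two-particle configuration space $V_{\mathcal I}$ is supported near a single point, so only an $O(1/L)$ fraction of the wavefunction's weight ever has both particles inside $\mathcal I$ at once. Childs--Gosset--Webb obtain an order-one two-particle phase only because their interaction acts on every vertex. It is therefore supported on the whole diagonal $x_1=x_2$, which colliding packets must cross.

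Readout fails for the same reason. Routing the output rail ``into mode $o$'' gives $\langle a_o^\dagger a_o\rangle=O(1/L)$ for a propagating packet; the answer must first be stored in a localized degree of freedom.

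The missing idea is the paper's permanently present processor fermion. It is one-hot encoded in $10$ impurity modes and carries an ancilla qubit $A$ and a five-level command register $C$. Each packet interacts only with this localized fermion, never with another packet:
\begin{itemize}
\item a program packet swaps its command into $C$;
\item the next data packet then undergoes $G_c\in\{I,H_D,T_D,\operatorname{SWAP}_{AD},\operatorname{CZ}_{AD}\}$;
\item $\operatorname{CZ}_{D_iD_j}=\operatorname{SWAP}_{AD_i}\operatorname{CZ}_{AD_j}\operatorname{SWAP}_{AD_i}$ handles qubits that arrive at different times;
\item at the end, the output qubit is swapped into $A$ and $C$ is reset, so the occupation of processor mode $(0,0)$ equals $p_{\mathcal C}$.
\end{itemize}

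\textbf{The scattering step.} What you call the main obstacle is not handled in the paper by S-matrix estimates; it is removed by the form of $V_{\mathcal I}$. The hopping across the edge $0\leftrightarrow1$ of each of the seven chains and rings is deleted from $H_0$. It is replaced by the $U_*$-dressed hopping $\sum u_{q'\alpha,q\beta}f_{1,\alpha}^\dagger a_{q'}^\dagger a_qf_{0,\beta}+\mathrm{h.c.}$, which is where $24=10+2\cdot 7$ comes from. In the one-processor, one-packet sector this gives the exact identity $h_U=\Gamma h_{\mathrm f}\Gamma^\dagger$ with $\Gamma=P_L\otimes I+P_R\otimes U_*$. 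Each crossing therefore applies $U_*$ with no reflection, no momentum dependence and no bound states.

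The only errors are free dispersion and exponentially small confinement errors. Dispersion is $O(t\sigma^{-3})$ per packet for the profile $x^3(\sigma-x)^3$ centred at $\kappa_0=-\pi/2$, where $E''=0$. Summing over all $O(k+K)$ packets and $W=O(K)$ windows gives total error $O(W(k+K)\sigma^{-2})$. This count includes the $K=O(kg)$ program packets, which your estimate of $O(k+g)$ packets omits. Hence $\sigma=O((k+g)^2)$ and $n,T=O((k+g)^4)$. Your per-gate $O(1/\sqrt L)$ or $O(1/L)$ estimates are not tied to any construction. They also ignore that every packet disperses throughout the whole run.
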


We note that the interaction $V_{\mathcal I}$ and output mode $o$ are identical for every instance, and the information of the circuit $\mathcal C$ is encoded into the initial state $\ket{\Psi}$.
As a corollary, estimating the occupation of the output mode $o$ after time evolution under a quantum impurity Hamiltonian is $\mathsf{BQP}$-hard, and the associated promise problem is $\mathsf{BQP}$-complete; containment in $\mathsf{BQP}$ follows from standard Hamiltonian simulation.

\begin{corollary}[BQP-hardness]\label{cor:bqp}
For $V_{\mathcal I}$ and $o$ as in \Cref{thm:main-reduction}, consider the following problem:
given $H_0$, $\ket{\Psi}$ and $T$ as above, with $T$ bounded by a polynomial in the input length,
estimate
\begin{equation}
\label{eq:occupation}
\langle\Psi|e^{i(H_0+V_{\mathcal I})T}a_o^\dagger a_o\,
e^{-i(H_0+V_{\mathcal I})T}|\Psi\rangle
\end{equation}
to additive error $1/12$ with success probability at least $2/3$.
This estimation problem is $\mathsf{BQP}$-hard, and the associated promise problem of deciding whether \eqref{eq:occupation} is at least $2/3$ or at most $1/3$ is $\mathsf{BQP}$-complete.
\end{corollary}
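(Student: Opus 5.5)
The plan is to treat the corollary as a consequence of Theorem~\ref{thm:main-reduction}, with two directions. \textbf{Hardness:} reduce an arbitrary promise problem $L\in\mathsf{BQP}$ to the estimation problem. \textbf{Containment:} give a quantum algorithm for the decision version.

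For hardness, fix $L\in\mathsf{BQP}$ with a uniform family of circuits $\mathcal C_x$ over $\{H,T,\operatorname{CZ}\}$, each of size $\poly(|x|)$, on $k$ qubits. First amplify by majority vote, computing the majority into the output qubit, so that $p_{\mathcal C_x}\ge 3/4$ on YES instances and $p_{\mathcal C_x}\le 1/4$ on NO instances. This costs only a polynomial blow-up in $k+g$.

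Next, apply the deterministic classical algorithm of Theorem~\ref{thm:main-reduction} to $\mathcal C_x$. It produces $H_0$, $\ket{\Psi}$ and $T$ in time $\poly(k+g)=\poly(|x|)$, with $n,T=O((k+g)^4)$. In particular $T$ is polynomially bounded in the input length, as the problem requires.

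By \eqref{eq:main-reduction}, the occupation \eqref{eq:occupation} is at least $3/4-1/12=2/3$ on YES instances and at most $1/4+1/12=1/3$ on NO instances. Hence any procedure that estimates \eqref{eq:occupation} to additive error $1/12$ with success probability $2/3$ decides $L$ with the same success probability. Compare the estimate with $1/2$: the two cases are separated by a gap of $1/3$, which is more than twice the error $1/12$. This shows both the estimation problem and the promise problem are $\mathsf{BQP}$-hard.

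For containment, the quantum algorithm runs as follows.
\begin{enumerate}
\item Map the $n$ modes to qubits via the Jordan--Wigner transformation.
\item Prepare the Slater determinant $\ket{\Psi}$ with a polynomial-size Givens-rotation circuit, using its explicit classical description.
\item Simulate $e^{-i(H_0+V_{\mathcal I})T}$ to operator-norm error $\eta$. Use, for example, Trotterization or qubitization on the $O(n^2)$ terms. Since $\|h\|\le 2$ and $V_{\mathcal I}$ is a fixed constant-size operator, $\|H\|=O(n)$, so the gate cost is $\poly(n,T,1/\eta)$.
\item Measure the occupation $a_o^\dagger a_o$, which becomes a single-qubit $Z$ measurement.
\end{enumerate}
Choosing $\eta=1/100$ shifts the acceptance probability by at most $2\eta$. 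Then $O(1)$ repetitions with a majority vote, using Chernoff bounds, separate the cases $\ge 2/3$ and $\le 1/3$ with probability $2/3$. The same sampling also solves the estimation problem to error $1/12$.

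The only delicate point is the hardness direction. The gap guaranteed by Theorem~\ref{thm:main-reduction} (error $1/12$) must be compatible with the $2/3$ vs.\ $1/3$ thresholds, and that requires the preliminary amplification of $p_{\mathcal C}$ to the interval $[0,1/4]\cup[3/4,1]$. One must also check that amplification keeps $k+g$ polynomial, which it does, since a constant number of repetitions suffices.
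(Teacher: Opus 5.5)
Your proposal is correct and follows the paper's own argument, which is only a line long: hardness comes from applying Theorem~\ref{thm:main-reduction} to a BQP circuit, and containment from Jordan--Wigner encoding, Slater-determinant preparation, and standard Hamiltonian simulation. You also make explicit a step the paper leaves implicit, namely amplifying so that $p_{\mathcal C}\in[0,1/4]\cup[3/4,1]$; combined with the additive error $1/12$, this gives exactly the $2/3$ versus $1/3$ thresholds in the statement.
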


\subsection{AI statement and acknowledgements} 
The key ideas and technical work behind the main classical algorithms, as well as the BQP-hardness result, were generated by GPT-5.6 Sol and Astra.
The idea of using Krylov subspaces was obtained from a talk of Jiang~\cite{jiang2026impuritybqp} on a \emph{quantum algorithm} for impurity models, which we used as a starting point in searching for a classical polynomial-time algorithm.  
Subsequent interactions with AI were used to refine and clarify the proof.
The authors independently verified, developed, and simplified the argument presented here and take full responsibility for its correctness, exposition, and attribution. 

AG is grateful to the Simons Institute for the Theory of Computing for hosting him for the 2026 Quantum Summer Cluster at the start of this work.

\paragraph{Concurrent work.} While finalizing the paper, we learned of Jiang et al.'s work~\cite{Jiang2026} that obtains similar and complementary results to ours. In particular, they also construct polynomial-time classical algorithms for equilibria properties but their techniques for finite-temperature properties diverge from ours. They prove computational hardness results for evaluation of certain dynamical quantities\,---\,namely the thermal Green's functions\,---\,with time-dependent Hamiltonians whereas we establish universal quantum computation with time-independent Hamiltonians.

\subsection{Technical overview I: equilibrium algorithms}
In this section we give a brief overview of the classical algorithms for equilibrium.

\subsubsection{Ground energy estimation}
Here we sketch main ideas used to prove Theorem~\ref{thm:ground-energy-algorithm} which gives an efficient classical algorithm for ground-energy estimation of fermionic Hamiltonians with impurities. Given an input impurity Hamiltonian, we identify a small set of Fock configurations whose span contains a state of nearly optimal energy and prove that these configurations can be efficiently enumerated. Diagonalizing the Hamiltonian restricted to this span then gives the desired ground-energy estimate. Our approach shares structural similarities with ideas used in numerical renormalization group and block Lanczos methods for impurity models~\cite{bulla2008numerical,shirakawa2014block}.
We illustrate the associated main ideas in Figure~\ref{fig:impurity-shell-krylov} and describe them below. 

The impurity Hamiltonian can be brought into the following canonical form after applying a Bogoliubov transformation and a known shift by a multiple of the identity (see e.g.,~\cite{terhal2002classical,bravyi2017complexity})
\begin{equation}\label{eq:H0_after_Bogo}
H = H_0 + V, \quad H_0=\sum_{j=1}^n e_j \beta_j^\dagger \beta_j = d\Gamma(h), \quad h = \mathrm{diag}(e_1,\ldots,e_n), \quad 0\le e_j\le 1.    
\end{equation}
Here, $d\Gamma(h)$ denotes the quadratic (free) Hamiltonian associated with the one-particle matrix $h$. For a vector $z\in \C^n$, let $\beta(z) = \sum_{j=1}^n \overline{z}_j \beta_j$ be the annihilation operator of the one-particle direction $z$. 
 Each original mode is a linear combination $a_j=\beta(w_j)+\beta(w_j')^\dagger$ for some vectors $w_j,w_j'\in\C^n$. By assumption, $V$ is a polynomial in $a_j,a_j^\dagger$ with $j\in\mathcal I$, hence an even Hermitian polynomial in $\beta(w)$ and $\beta(w)^\dagger$ with $w$ in the \emph{impurity subspace}
 \begin{equation}
 \label{eq:one-particle-directions}
\Aimp:=\Span{\left\{w_j,w_j' : j\in\mathcal I\right\}}.
 \end{equation}
 By definition, $m=2|\mathcal I|$ and thus
$\dim{\Aimp}\le m$. We can assume without loss of generality that 
\[
\dim{\Aimp}=m
\]
by extending $\Aimp$ arbitrarily to an $m$-dimensional subspace, if necessaary.

Even though $V$ is supported only on one-particle directions from $\Aimp$, the quadratic Hamiltonian $H_0$ can nevertheless couple $\Aimp$ to additional bath directions. The challenge is to determine which additional bath directions and simultaneous occupations to retain so that the resulting subspace contains a state of nearly optimal energy.

\begin{figure}[!htbp]
  \centering
  \includegraphics[width=15cm]{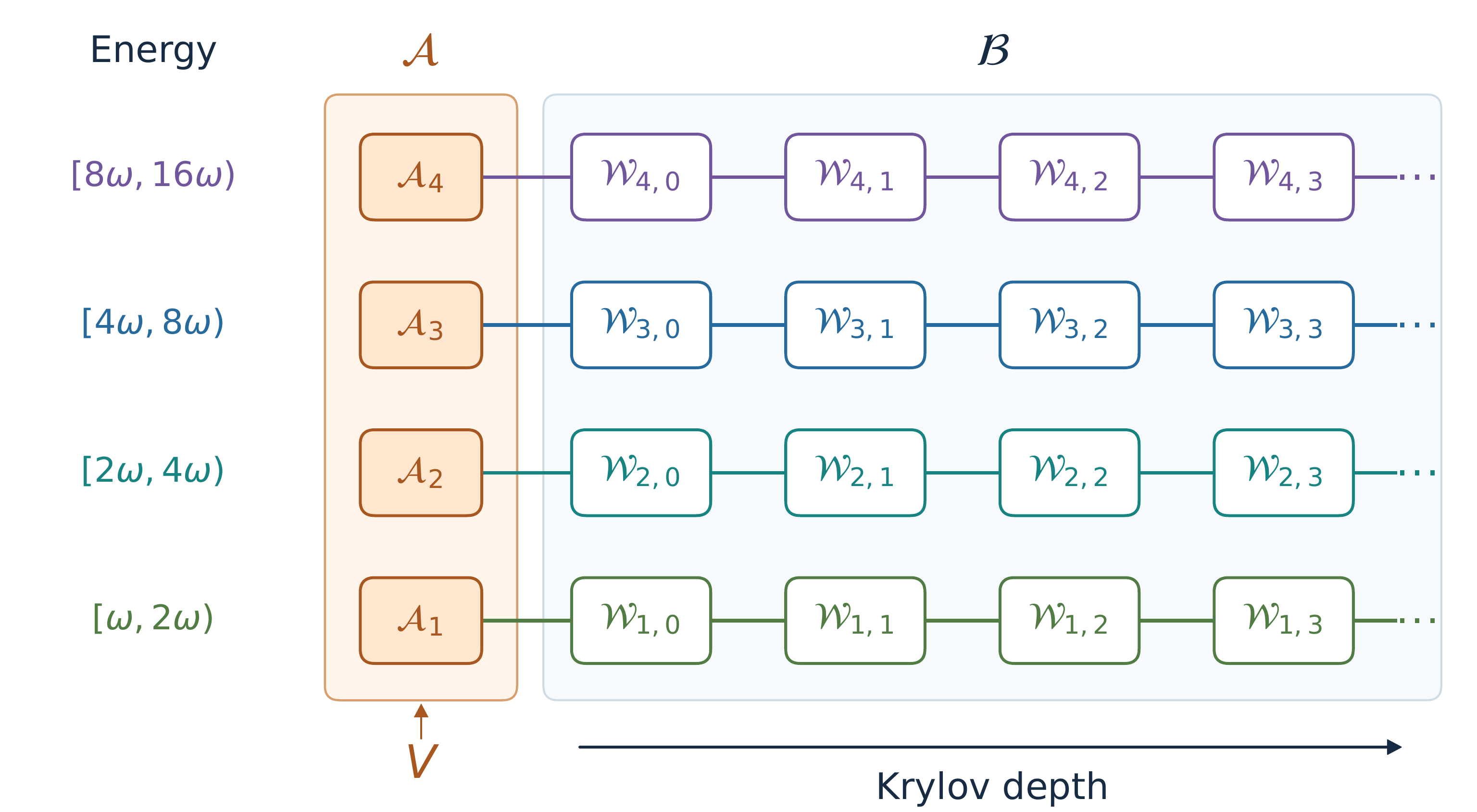}
  \caption{\textbf{Shell Krylov representation of a quantum impurity model.}
  The figure illustrates the energy shells and Krylov layers. After creating a small artificial gap $\omega>0$ in the bath Hamiltonian and changing the one-particle basis, the Hamiltonian is represented as a network of fermionic blocks indicated by boxes.  Each box contains at most $m$ fermionic modes, where $m=O(1)$ is the impurity size.
Each horizontal chain of boxes corresponds to a dyadic  energy shell spanned by bath modes with excitation energies in $[2^p\omega,2^{p+1}\omega)$ for some integer $p\ge 0$.
Blue boxes in each horizontal chain are successive orthogonal Krylov layers. The quadratic bath Hamiltonian couples nearest-neighbor boxes in each horizontal chain.  The orange boxes together form the boundary $\mathcal A$, which contains all fermionic modes supporting the impurity. The impurity Hamiltonian can couple orange boxes from different chains. 
Our algorithm exploits a special ground state structure: a box with a  large Krylov depth has an exponentially small probability of being nonempty. More strongly, joint occupation number probabilities satisfy
product bounds 
that control simultaneous excitations in multiple boxes, see Lemma~\ref{ge:pull} for a formal statement.  We truncate the Hilbert space using a weighted particle budget $R$. Each particle in a blue box at Krylov depth $\ell$ consumes $\ell+1$ units of this budget, summed over all chains. Occupations of the orange boundary boxes are unrestricted. We retain the Fock configurations satisfying this constraint and diagonalize the original Hamiltonian projected onto their span. For additive ground-energy error $\varepsilon$, choose
$\omega=\varepsilon/(4m)$. It then suffices to take
$R=\lceil C m\log(m/\varepsilon)\rceil$ for a sufficiently
large universal constant $C$.
}
    \label{fig:impurity-shell-krylov}
\end{figure}

\paragraph{Opening an artificial bath gap.}
The first obstacle is that the bath $H_0$ may contain arbitrarily many low-energy modes. To reduce this dimension, we can choose to only keep those configurations below a given bath-energy threshold. However, this still may retain exponentially many configurations since many modes can be occupied independently while their total energy remains below the threshold. Restricting instead to configurations with at most a fixed number of particles reduces the dimension, but may discard relevant configurations since low bath energy does not imply low particle number. 
To keep both the retained space small and control the error in the ground energy, we thus choose to regularize the bath energies instead of truncating the Fock space. We achieve this by fixing a parameter $\omega = \varepsilon/(4m)$ given a desired accuracy $\varepsilon \in (0,1)$ and then replacing $e_j$ in Eq.~\eqref{eq:H0_after_Bogo} by
$$
\widetilde{e}_j := \max\{e_j,\omega\},
$$
and defining the modified bath Hamiltonian denoted by $\widetilde{H}_0$ and that including the impurity interaction $V$, denoted by $\widetilde{H}$, as
\begin{equation}\label{eq:gap_Hamiltonians}
\widetilde H_0 = \sum_j \widetilde{e}_j \beta_j^\dagger \beta_j, \qquad \widetilde H=\widetilde H_0+V .
\end{equation}
Denoting the new one-particle matrix as $\widetilde{h} = \mathrm{diag}(\widetilde{e}_1,\ldots,\widetilde{e}_n)$ so that $\widetilde{H}_0 = d\Gamma(\widetilde{h})$, we immediately observe that $\omega \Id \preceq \widetilde h\preceq I$. The bath is thus now gapped but still contains all $n$ modes.
Note also that $H\preceq \widetilde H$.

Moreover, we prove that the ground-state energy of this modified Hamiltonian is still a good approximation of the true energy $E_1(H)$. Particularly, we show in Lemma~\ref{ge:rounding-lemma} (in Section~\ref{sec:ground-energy-proof}) that
$$
0 \leq E_1(\widetilde H)-E_1(H) \leq m\omega = \varepsilon/4.
$$
Note that the approximation error only depends on the impurity dimension and not on the number of low-energy bath modes. This is due to the fact that among the bath directions whose original energies are
below $\omega$, only the projection of the impurity subspace $\Aimp$ can participate nontrivially in the interaction, and this projection has dimension at most \(m\).  All orthogonal low-energy directions are free zero-energy modes after temporarily setting these energies to zero, and a ground state may be chosen with all such modes empty.  Thus only an \(m\)-dimensional collection of low-energy directions can incur the rounding cost.  

\paragraph{Energy shells and the enlarged boundary.}
The next step is to exploit the fact that the impurity, which acts only through the subspace $\Aimp$ of dimension at most $m$, also only sees a few directions at each energy scale. We partition the spectrum of $\widetilde{h}$ into dyadic intervals
\[
[\omega,2\omega),\ [2\omega,4\omega),\ [4\omega,8\omega),\ldots ,
\]
Specifically, we define energy \emph{shells}
$$
I_s := [\omega_s, 2\omega_s) \quad \text{where} \quad \omega_s := 2^s \omega, \forall s = 0,\ldots, S-1,
$$
with $S = O(\log(1/\omega)) = O\!\left(\log{m}/{\varepsilon}\right)$. Let $\mathcal{H}_s$ be the corresponding spectral subspaces of the single-body Hilbert space and let $\Pi_s$ denote the orthogonal projector onto $\mathcal{H}_s$. Note that we simply bin the spectrum of $\widetilde{h}$ and do not replace the energies within a shell by a common value. A shell may contain many modes but the interaction $V$ sees at most $m$ directions in it. We define the shell boundary $\A_s$, which is the portion of a given shell $s$ accessible to the impurity interaction, as
\begin{equation}\label{def:shell_boundary}
\A_s := \Pi_s \Aimp \subseteq\mathcal H_s
   \end{equation}
where $\Aimp$ is the impurity subspace defined in Eq.~\eqref{eq:one-particle-directions}. As $\dim \Aimp = m$, we have  $\dim \A_s\le m$. The corresponding orthogonal complement $\B_s$ is then
\[
\B_s := \mathcal{H}_s \ominus \A_s = \mathcal{H}_s \cap \A_s^\perp,
\]
which we call the \emph{residual bath} and where $\ominus$ denotes the orthogonal complement within the indicated space. Thus, $\mathcal H_s=\mathcal{A}_s\oplus \mathcal{B}_s$. Modes in \(\mathcal{B}_s\) do not appear directly in the interaction,
although they may still influence the impurity indirectly through the
quadratic bath Hamiltonian. Taking direct sums over all shells gives
\[
\A=\bigoplus_s \A_s, \qquad \B=\bigoplus_s \B_s,
\]
with $ \dim \A \leq mS = O\!\left(m\log {m}/{\varepsilon}\right)$. Thus all interacting degrees of freedom have been collected into a small one-particle boundary \(\A\), while the potentially large remainder \(\B\) is free except for its coupling to this boundary. Relative to \(\A\oplus \B\), the rounded one-particle Hamiltonian has block
form
\[
\widetilde h =
\begin{pmatrix}
    h_{\A} &T\\
    T^\dagger & E
\end{pmatrix},
\]
where \(E=\bigoplus_sE_s\) acts within the residual bath and
\(T=\bigoplus_sT_s\) couples the residual bath to the boundary.
If shell \(s\) contains energies in \([\omega_s,2\omega_s]\), then
Lemma~\ref{ge:block-lemma} shows that
\[
    \omega_sI\preceq E_s\preceq2\omega_sI,
    \qquad
    \|T_s\|\le \frac{\omega_s}{2}.
\]
These bounds use only that the energies within a shell differ by at most a
factor of two. At the many-body level, the same decomposition takes the schematic form
\[
    \widetilde H
    =
    H_A
    +
    b^\dagger E b
    +
    a^\dagger T b
    +
    b^\dagger T^\dagger a.
\]
Here the fermionic modes \(a_\ell\) span the boundary \(\mathcal{A}\), the modes
\(b_j\) span the residual bath \(\mathcal{B}\), and
\[
    n_A:=\dim \mathcal{A}\le mS.
\]
The operator \(H_A\), which contains the full interaction \(V\), acts only
on the boundary modes.  The term \(b^\dagger E b\) describes free evolution
within the residual bath, while
\[
    a^\dagger T b+b^\dagger T^\dagger a
\]
contains the only couplings between the residual bath and the interacting
boundary. This decomposition suggests the central question of the proof:
\emph{which of the many residual-bath directions can actually influence the
interacting ground state, and how strongly can they do so?} To answer this, we use a decomposition of the residual bath closely related to Lanczos-based constructions used in numerical impurity solvers~\cite{busser2013lanczos,shirakawa2014block}.

\paragraph{Krylov depth.}
The residual bath \(\B_s=\mathcal H_s\ominus \A_s\) may still have dimension
\(\Theta(n)\), but its directions need not be equally relevant.  Since the
interaction acts only through the boundary \(\A_s\), a residual-bath mode can
influence the impurity only by communicating with that boundary. The map
\[
T_s^\dagger : \A_s \to \B_s
\]
identifies the bath directions that couple directly to \( \A_s\), namely
$\operatorname{Im}(T_s^\dagger)$.  Once inside the residual bath, further propagation is governed entirely by the free bath Hamiltonian \(E_s\).  It is therefore natural to organize \(\B_s\) according to how many applications of \(E_s\) are needed to generate
a new direction starting from \(\operatorname{Im}(T_s^\dagger)\). For an integer \(\ell\ge0\), define the block Krylov space~\cite{golub2013matrix}
\[
    \mathcal K_{s,\ell}
    :=
    \operatorname{span}
    \bigl\{
        E_s^jT_s^\dagger u:
        u\in \A_s,\ 0\le j<\ell
    \bigr\},
    \qquad
    \mathcal K_{s,0}:=\{0\}.
\]
They are subspaces of $\mathcal B_s$. In particular,
\[
    \mathcal K_{s,1}
    =
    \operatorname{Im}(T_s^\dagger)
\]
contains precisely the bath directions coupled directly to the boundary,
while \(\mathcal K_{s,2}\) additionally contains the new directions obtained
after one application of \(E_s\), and so on.  The genuinely new directions
appearing at step \(\ell+1\) form the orthogonal layer
\[
    {\mathcal W}_{s,\ell}
    :=
    \mathcal K_{s,\ell+1}
    \ominus
    \mathcal K_{s,\ell}.
\]
We call \(\ell\) the \emph{Krylov depth}.  It is an algebraic, rather than
spatial, notion of distance from the impurity: a direction at large depth is
one that can only be reached after repeatedly propagating the
boundary--bath coupling through the free bath Hamiltonian. Because the initial space
\(\operatorname{Im}(T_s^\dagger)\) has dimension at most \(m\), each new
Krylov layer also has dimension at most \(m\):
\[
    \dim {\mathcal W}_{s,\ell}\le m.
\]
Hence across all shells there are at most
$mS$ 
bath modes at any fixed depth.  More importantly, if
\(v\in {\mathcal W}_{s,\ell}\), then orthogonality to all shallower Krylov spaces gives
\[
    T_sE_s^jv=0,
    \qquad
    0\le j<\ell.
\]
The formal Krylov construction and these properties are given in
Section~\ref{sec:blockkrylov}. 
Thus a depth-\(\ell\) mode cannot couple to the boundary through fewer than $\ell$ applications of $E_s$.

This is exactly the structural information we need, but by itself it is
only a one-particle statement.  To obtain an algorithm we must convert
Krylov depth into a statement about the occupation of these modes in the
actual interacting many-body ground state.

\paragraph{From Krylov depth to ground-state occupation.}
Let \(\psi\) be a normalized ground state of \(\widetilde H\). For $v \in \mathcal{B}$ and $\norm{v}=1$, let $b(v) := \beta(v)$ where we write $b$ to emphasize that it belongs to the residual bath.
Since a fermionic mode can have occupation only \(0\) or \(1\), the operator $n_v:=b(v)^\dagger b(v)$ has eigenvalues \(0\) and \(1\). Therefore, for any state \(\psi\),
\[
\langle\psi|n_v|\psi\rangle
=
\langle\psi|b(v)^\dagger b(v)|\psi\rangle
=
\|b(v)\psi\|^2.
\]
Thus \(\|b(v)\psi\|^2\) is exactly the probability that the mode \(v\)
is occupied in the state \(\psi\), referred to as the occupation probability of the fermionic mode \(v\). It remains to understand how likely a given bath mode is to be occupied in the interacting ground state. 
 To this end,  we must keep track of all the ways in which the free bath Hamiltonian
can propagate \(v\) into a direction visible to the boundary.  This is captured by the imaginary-time evolution
\[
    e^{-tE}v
    =
    v-tEv+\frac{t^2}{2}E^2v-\cdots.
\]
Applying $T$ then extracts components that couple back into the boundary, i.e.,  $\|Te^{-tE}v\|_2$
measures the strength with which the bath direction \(v\), after evolving under the free residual-bath Hamiltonian for imaginary time \(t\), couples
to the impurity-facing boundary.
 This naturally leads us to define
\[
    s(v)
    :=
    \int_0^\infty
    \|Te^{-tE}v\|\,dt,
\]
which measures the total effective coupling of the bath direction \(v\) to
the boundary over imaginary time. The usefulness of the Krylov decomposition is that this quantity decays
exponentially with depth.  More precisely, Lemma~\ref{ge:depth} proves that
for every unit vector \(v\in\mathcal W_{s,\ell}\),
\begin{align}
\label{eq:boundons}
    s(v)
    \le
    \frac12\,3^{-\ell},
\end{align}
while \(s(v)=0\) for completely decoupled bath directions. The reason for this decay is already visible from the Krylov construction.
If \(v\in\mathcal W_{s,\ell}\), then
\[
    T_sE_s^jv=0,
    \qquad
    0\le j<\ell.
\]
Consequentially,
\[
T_se^{-tE_s}v
=
\sum_{j=0}^\infty
\frac{(-t)^j}{j!}T_sE_s^jv = \sum_{j \geq \ell} \frac{(-t)^j}{j!}T_sE_s^jv
\]
as the first \(\ell\) terms vanish.  Thus a depth-\(\ell\) mode can couple back
to the boundary only through terms involving at least \(\ell\) powers of
\(E_s\).  Using the fact that the energies within a shell differ by at most
a factor of two gives the exponential bound in Eq.~\eqref{eq:boundons}. The full calculation is
given in Section~\ref{sec:krylovdecay}.  Thus, Krylov depth has the desired operational meaning: bath modes
far from the boundary in Krylov depth are exponentially unlikely to be
occupied in the interacting ground state.

A one-mode bound, however, is not enough to control the full many-body state:
there are exponentially many Fock configurations, and a configuration may
contain several individually unlikely bath modes.  To truncate the Fock space,
we therefore need a bound on the probability that an arbitrary collection of
bath modes is occupied simultaneously.

\paragraph{Multiparticle pull-through.}
To convert the one-particle decay above into control of the many-body
ground state, we use the multiparticle pull-through lemma, see Lemma~\ref{ge:pull}. Intuitively, a pull-through identity moves bath annihilation operators through the imaginary-time evolution, replacing each of them by an integrated source acting on the impurity-facing boundary.  Let \(\psi\) be a normalized global ground state of \(\widetilde H\). 
We show in Lemma~\ref{ge:pull} that for any one-particle directions $v_1,\ldots,v_p$, 
\[
\|b(v_1)\cdots b(v_p)\psi\| \le \prod_{i=1}^p s(v_i).
\]
The proof is given in Section~\ref{sec:pullthrough}. Consequently, if $v_1,\ldots,v_p$ are orthonormal vectors such that $v_j$ belong to a Krylov layer of depth $\ell(j)$,  the corresponding number operators $n_j =b(v_j)^\dagger b(v_j)$  obey
\[
\left\langle\psi\middle|
    \prod_{j=1}^p n_j
\middle|\psi\right\rangle
\le
\prod_{j=1}^p s(v_j)^2\leq      \prod_{j=1}^p
\left(\frac14\,9^{-\ell(j)}\right),
\]
where the second inequality follows from Eq.~\eqref{eq:boundons}. 
Thus not only is each deep mode unlikely to be occupied; simultaneous
occupation of several deep modes is suppressed by the product of their
individual penalties. This product structure is what finally lets us control the exponentially
large set of Fock configurations.  Rather than truncating at a fixed Krylov
depth, we assign a cost to each occupied mode and bound the total cost.

\paragraph{Weighted Fock-space truncation.}
The joint occupation bound suggests a natural many-body truncation. For a layer mode at Krylov depth $\ell$, the occupation probability is suppressed exponentially in $\ell$, and the penalties multiply for simultaneous occupations.  Thus, after taking logarithms, the relevant quantity is an \emph{additive} cost over all occupied bath modes, rather than the maximum Krylov depth appearing in the configuration. 

Let $\{v_i\}_{i \in [n]}$ be an orthonormal one-particle basis over the boundary $\mathcal{A}$, all non-zero Krylov layers $\mathcal{W}_{s,\ell}$ and the decoupled remainder bath subspace. Let $J \subseteq [n]$ be the indices of the Krylov-layer vectors. For $j \in J$, let $\ell(j)$ denote the depth of the layer containing $v_j$ and let $n_j := b(v_j)^\dagger b(v_j)$ denote its occupation. 
Let $U$ be the Bogoliubov transformation mapping the original Fock
basis $\{\ket{x}\}$ to the occupation basis of the constructed modes.
Thus $U\ket{x}$ has occupation $x_j$ in the mode associated with
$v_j$.
We then define the weighted occupation operator
\[
Q := \sum_{j\in J} (\ell(j)+1)n_j,
\]
where the extra $+1$ also penalizes the number of bath excitations, so that many simultaneous occupations at shallow depth are not retained for free.  Boundary occupations are left completely unrestricted.
 
For a Fock configuration $x\in\{0,1\}^n$ in this basis, where $x_j$ is the occupation of $v_j$, let
\[
Q(x) := \sum_{j\in J}(\ell(j)+1)x_j,
\]
and for an integer cutoff $R\ge 0$ define
\[
\mathcal{X}_R
:=
\left\{
    x\in\{0,1\}^n:
    Q(x)\le R,\quad
    x_j=0\ \text{for every decoupled mode }j
\right\}.
\]
Let $P_R$ project onto the span of $\{U\ket{x}:x\in \mathcal{X}_R\}$, that is,
\[
 P_R=U\left(\sum_{x\in \mathcal{X}_R}\ket{x}\bra{x}\right)U^\dagger,
\]
and write $    p_R:=\|(I-P_R)\psi\|^2,$ and 
$    D_R:=|\mathcal{X}_R|=\operatorname{rank}(P_R).$ 
Thus $p_R$ is the ground-state probability mass discarded by the
truncation, while $D_R$ is the dimension of the retained many-body
space. In Section~\ref{sec:weightfoctruncation}, we give the two complementary estimates
\begin{align}
\label{eq:boundonpR}
    p_R\le e^{mS}3^{-R},
    \qquad
    D_R\le e^{(\log 2+1/2)mS}3^R.
\end{align}
The first follows by bounding an exponential moment of $Q$ using the 
joint occupation estimates and then applying Markov's inequality.
The second is the corresponding weighted counting argument for the
number of configurations satisfying $Q(x)\le R$. The important point is that we do \emph{not} keep all configurations
supported on the first $R$ Krylov layers.  Such a cutoff would ignore
the additional suppression of configurations containing many
simultaneous excitations.  Instead, we constrain their total weighted
Krylov cost.  Hence increasing $R$ exponentially suppresses the
discarded ground-state mass while increasing the retained dimension
by only an exponential factor in $R$.  This is the main compression
step of the proof.

\paragraph{From norm truncation to energy truncation.}
The fact that $p_R$ is small does not by itself imply that the energy
changes little, since $\|\widetilde H\|$ may grow with $n$.  The
particular truncation above has an additional useful property: it does
not restrict the boundary modes.  Since the interaction $V$ is
supported entirely on the boundary, $[P_R,V]=0$. 

Let
$   N=\sum_j \beta_j^\dagger \beta_j
$  
be the total quasiparticle-number operator in the basis used above.
Every retained configuration contains at most $mS$ boundary particles
and at most $R$ Krylov-layer particles, because every occupied layer
mode contributes at least one unit to $Q$.  Therefore
\[
    \|NP_R\|\le mS+R.
\]
Since the rounded one-particle energies satisfy
$    0\le \widetilde e_j\le 1,$ 
we have $   0\preceq \widetilde H_0\preceq N,
$ and 
$ \|\widetilde H_0P_R\|\le mS+R.$  Now define the normalized projected state
\[
    \phi_R
    :=
    \frac{P_R\psi}{\sqrt{1-p_R}}.
\]
Using the ground-state equation together with $[P_R,V]=0$ gives
\[
    0
    \le
    \langle\phi_R|\widetilde H|\phi_R\rangle
    -
    E_1(\widetilde H)
    \le
    \frac{\sqrt{p_R}}{1-p_R}(mS+R).
\]
This is proved in Section~\ref{sec:groundenergyapprox}.  Notice in
particular that the bound is independent of $\|V\|$: the interaction
term cancels exactly because $P_R$ commutes with $V$. We now choose the cutoff exactly as in the formal proof, $R=O(mS+\log 1/\varepsilon)$
 and plugging this with the upper bound on $p_R$ in Eq.~\eqref{eq:boundonpR}, we have
\[
    0
    \le
    \langle\phi_R|\widetilde H|\phi_R\rangle
    -
    E_1(\widetilde H)
    \le {\varepsilon}/{4}.
\]
Similarly plugging this into Eq.~\eqref{eq:boundonpR}, we also have $D_R\leq \exp(O(m\log (m/\varepsilon)))$.
 Thus the original $2^n$-dimensional ground-energy problem has been
reduced to an explicitly defined subspace whose dimension is
$(m/\varepsilon)^{O(m)}$, and hence polynomial in $1/\varepsilon$ for
constant impurity size $m$.
\paragraph{The algorithm.}
It remains to show that the retained space and the restriction of the
Hamiltonian to it can be constructed efficiently.  This is carried out
in Section~\ref{ge:algorithm}.

Set $\omega=\varepsilon/(4m)$, $R=O(mS+\log 1/\varepsilon)$. 
Starting from the compact description of the input Hamiltonian, we
construct the rounded bath energies, the dyadic shells and their
boundary spaces, and the full ordered basis of boundary, Krylov-layer,
and decoupled modes. Only boundary modes and Krylov layers of depth
less than $R$ can be occupied in the retained space; all other modes
are required to be empty.
The full basis and the induced action of $U$ on the modes can be
computed in polynomial time in $n$ and the relevant parameters.
We enumerate the configurations in $\mathcal{X}_R=\{x_1,\ldots,x_{D_R}\}$
and form the compression of the original Hamiltonian,
\[
    B_{\mu\nu}
    :=
    \bra{x_\mu}U^\dagger HU\ket{x_\nu} .
\]
The quadratic contribution is obtained directly from the retained
matrix elements of the original one-particle Hamiltonian $h$.
The interaction $V$ acts only on the
boundary and is given by its compact impurity description, so its
matrix elements can also be computed without ever constructing the
full $2^n\times2^n$ Hamiltonian.  Importantly, the entire retained
basis is determined from the input Hamiltonian alone; the unknown
ground state $\psi$ is used only in the analysis.

Finally, we compute a normalized approximate lowest eigenvector $z$
of $B$, with $z^\dagger Bz\le E_1(B)+\varepsilon/4$.
Here $E_1(B)$ is the smallest eigenvalue of $B$.
Define the sparse state in the original occupation basis by
\[
 \ket\phi=\sum_{\mu=1}^{D_R}z_\mu\ket{x_\mu},
 \qquad U\ket\phi=\sum_{\mu=1}^{D_R}z_\mu U\ket{x_\mu},
\]
and return $U$, $\phi$, and
\[
 \widehat E_1=z^\dagger Bz
 =\bra\phi U^\dagger HU\ket\phi.
\]
The projected ground state $\phi_R$ is a variational witness for $B$.
Using $H\preceq\widetilde H$, the truncation bound, and the rounding
bound gives
\begin{align*}
 E_1(H)\leq E_1(B)
 &\leq\bra{\phi_R}H\ket{\phi_R}
 \leq\bra{\phi_R}\widetilde H\ket{\phi_R}\\
 &\leq E_1(\widetilde H)+\varepsilon/4
 \leq E_1(H)+\varepsilon/2.
\end{align*}
The eigensolver bound therefore implies
\[
 E_1(H)\leq\widehat E_1
 \leq E_1(H)+3\varepsilon/4\leq E_1(H)+\varepsilon,
\]
as claimed. The output $\phi$ is supported on $D_R$ original Fock
configurations, while $U\phi$ is the low-energy physical state.

Since $S=O(\log(m/\varepsilon))$, $R=O(m\log(m/\varepsilon))$, and $D_R
    =
    \exp\!\left[
        O\!\left(
            m\log {m}/{\varepsilon}
        \right)
    \right],$
all one-particle linear-algebra operations are polynomial in $n$ and
the relevant parameters, while constructing and diagonalizing the
projected many-body Hamiltonian costs polynomial time in $D_R$ and
$1+\|V\|$.  Section~\ref{ge:algorithm} gives the detailed arithmetic
complexity and yields the total runtime
\[
    \poly(n,1+\|V\|)
    \exp\!\left[
        O\!\left(
            m\log\frac{m}{\varepsilon}
        \right)
    \right].
\]
For constant impurity size $m$, this is polynomial in both $n$ and
$1/\varepsilon$.

\subsubsection{Gaussian spanning set for the TFD state}

Next we sketch the proof of Theorem~\ref{thm:rank}, which constructs
a small Gaussian spanning set approximating the thermofield-double
state. The proof uses the Krylov decomposition and weighted truncation
ideas from the ground-energy proof, but applies them to excitations
relative to the TFD state of a certain free-fermion Hamiltonian.
 We first explain the choice of
annihilation operators that makes this possible.
Below we write $\rho_\beta(H)$ and 
$\ket{\mathsf{TFD}_\beta(H)}$ for the Gibbs state and the thermofield double state at the inverse temperature $\beta$ associated with a Hamiltonian $H$.

\paragraph{Free TFD state as a vacuum.}
Consider first a single fermionic mode $b,b^\dagger$ with a Hamiltonian
$\lambda b^\dagger b$, where $\lambda\ge0$.
The corresponding Gibbs state  at the inverse temperature $\beta$ is 
\[
\rho_\beta(\lambda b^\dagger b) = p_0 |0\rangle \langle 0| + p_1 |1\rangle \langle 1|
\]
where the occupation probabilities are 
\[
p_0 = \frac1{1+e^{-\beta\lambda}}   \quad \mbox{and} \quad p_1= \frac{e^{-\beta\lambda}}{1+e^{-\beta\lambda}}
\]
The corresponding TFD state  is defined as 
\[
 \ket{\mathsf{TFD}_\beta(\lambda b^\dag b)}
 =\sqrt{p_0} \ket{00}-i\sqrt{p_1}\ket{11} \in {\cal F}_2.
\]
Here the second mode is an ancilla (environment) used for purifying the first (physical) mode.
The extra phase shift $-i$ follows 
the fermionic purification convention of Section~\ref{sec:tfd-preliminaries} and does not affect the physical Gibbs state.
By definition,  the Gibbs state is obtained from the TFD state by tracing out the ancillary mode.
Let $\widetilde b,\widetilde b^\dagger$ be the annihilation and creation operators on the ancillary mode.
The operators $b,\widetilde b$ obeys the canonical anticommutation rules. In particular, $b \widetilde b = - \widetilde b b$.

Define a pair of operators
\[
 \mathsf d_-=\sqrt{p_0}\, b+i\sqrt{p_1}\, \widetilde b^\dagger
 \quad\mbox{and}\quad
 \mathsf d_+=\sqrt{p_1}\, b^\dagger+i\sqrt{p_0}\, \widetilde b.
\]
One can check that $ \mathsf d_\pm$  annihilate the TFD state,
\[
 \mathsf d_\pm\ket{\mathsf{TFD}_\beta(\lambda b^\dag b)}=0.
\]
Besides, $\mathsf d_\pm$
satisfy the canonical anticommutation relations, that is,
\[
\{ \mathsf d_\eta, \mathsf d_{\eta'}\}=0  \quad \mbox{and} \quad 
\{ \mathsf d_\eta, \mathsf d^\dagger_{\eta'}\} =\delta_{\eta,\eta'} I.
\]
Thus the TFD state is equivalent to the vacuum state $|00\rangle$ modulo a basis change described
by a Bogoliubov transformation mapping $b,\widetilde b$ to $\mathsf d_+,\mathsf d_-$.
Informally, this basis change removes all  thermal fluctuations.

This construction naturally extends to any free fermion Hamiltonian.
Indeed, any such Hamiltonian can be diagonalized by a Bogoliubov transformation
and, up to additive scalar,  takes a form $H_{\mathrm{free}}=\sum_{j=1}^n \lambda_j b_j^\dagger b_j$,
where $\lambda_j\ge0$ and $b_1,\ldots,b_n$ are annihilation operators for $n$ fermionic modes.
The corresponding Gibbs state $\rho_\beta(H_{\mathrm{free}})$ is a product of $n$ single-mode Gibbs states
considered above and the  state $\ket{\mathsf{TFD}_\beta(H_{\mathrm{free}})}$
is defined by purifying each mode independently, see Section~\ref{sec:tfd-preliminaries} for details.
The state $\ket{\mathsf{TFD}_\beta(H_{\mathrm{free}})}$  is annihilated by $2n$  operators $\mathsf d_{\pm,j}$ such that 
$\mathsf d_{-,j}$ is a linear combination of $b_j$ and the corresponding ancillary mode operator $\widetilde b_j^\dagger$,
while $\mathsf d_{+,j}$ is a linear combination of $b_j^\dagger$ and $\widetilde b_j$.
The operators $\mathsf d_{\pm,j}$ with $j=1,\ldots,n$
obey the canonical anticommutation rules
and thus define a basis in which the state $\ket{\mathsf{TFD}_\beta(H_{\mathrm{free}})}$ becomes the vacuum $|0^{2n}\rangle$.
Informally, this basis change removes all  thermal fluctuations {\em in the absence of interactions}. 
We refer to the  operators  $\mathsf d_{\pm,j}$ as \emph{free TFD annihilators}.

\paragraph{Exponential suppression of excitations.}
We express the interacting TFD
state $\ket{\mathsf{TFD}_\beta(H)}$ in the basis determined by the free TFD annihilators $\mathsf d_{\pm,j}$ associated with the residual bath Hamiltonian
$b^\dag E b$
used  in the ground state algorithm (with a few technical modifications).
The residual bath contains $n_B\le n$ modes, where $n_B$ depends on details of the Krylov decomposition.
These details are not essential for this high-level overview. 
The number operators $\mathsf d_{\pm,j}^\dagger\mathsf d_{\pm,j}$ with $j=1,\ldots,n_B$
count only  excitations caused by the interaction $V$ relative to the free TFD.
The crux of the proof is to control these  excitations and showing that joint occupations of modes at large Krylov depth
are exponentially suppressed. 
To this end we consider annihilators
\[
 \mathsf d_-(v)=\sum_{j=1}^{n_B}
                   \overline{\langle v_j,v\rangle}\mathsf d_{-,j}
 \quad \mbox{and} \quad
 \mathsf d_+(v)=\sum_{j=1}^{n_B}
                   \langle v_j,v\rangle \mathsf d_{+,j},
\]
where $v\in \B$ is an arbitrary vector in the one-particle space describing the residual bath
and $v_1,\ldots,v_{n_B}\in \B$ is the eigenbasis of $E$.
By definition, the operators $\mathsf d_{\pm}(v)$ annihilate the free TFD state $\ket{\mathsf{TFD}_\beta(b^\dag E b)}$.
We prove a thermal analogue of the multiparticle pull-through lemma showing that 
multiparticle excitations associated with the modes $\mathsf d_{\pm}(v)$ and the {\em interacting} TFD state are exponentially suppressed.
More formally,
\[
 \norm{\mathsf d_{\eta_1}(v_1)\cdots
       \mathsf d_{\eta_k}(v_k)\ket{\mathsf{TFD}_\beta(H)}}
 \leq\prod_{i=1}^k s_\beta(v_i)
\]
for arbitrary vectors $v_i\in\B$ and signs $\eta_i\in\{+,-\}$, where
\[
 s_\beta(v)=\int_0^{\beta/2}\norm{Te^{-tE}v}\,dt
\]
is a finite-temperature analogue of the integrated coupling used in the ground state algorithm. 
One technical difference is that the map $T$ coupling the residual bath $\B$ to the boundary $\A$ now
contains an extra thermal factor,
\[
T=h_{\A\B}(1+e^{-\beta E})^{-1/2}.
\]

For $\beta>0$, we
partition the bath spectrum into $S=O(\log(2+\beta))$ energy shells:
the low-energy interval $[0,\min\{1,\beta^{-1}\}]$ and disjoint
dyadic intervals above it.
The Krylov decomposition gives at most $m$ directions per shell
at each depth $\ell$, with
$s_\beta(v)\leq\frac12\,3^{-\ell}$ for every unit depth-$\ell$
vector. Each direction gives two free TFD modes, so there are at
most $2mS$ such modes at each depth. For any subset of distinct canonical modes
with number operators $\mathsf n_j$ and depths $\ell(j)$,
the pull-through lemma implies
\[
 \bra{\mathsf{TFD}_\beta(H)}\prod_j\mathsf n_j
       \ket{\mathsf{TFD}_\beta(H)}
 \leq\prod_j\left(\frac14\,9^{-\ell(j)}\right).
\]

\paragraph{Weighted truncation.}
Finally, we leave boundary occupations unrestricted and assign
cost $\ell+1$ to each occupied free TFD mode of depth $\ell$.
Together with the physical and ancillary boundary modes, the
free TFD modes define a Fock basis $\{U_\beta\ket{x}\}$, where $U_\beta$
is a Bogoliubov transformation. 
Projecting $\ket{\mathsf{TFD}_\beta(H)}$ onto the span of
configurations of total cost at most $R$, with all decoupled modes
empty, discards squared norm at most $e^{4mS}4^{-R}$ and retains
at most $e^{O(mS)}2^R$ configurations. Choosing
$R=4mS+\lceil\log_2(2/\delta)\rceil$ therefore gives discarded
squared norm at most $\delta^2$, and the retained configuration
set satisfies
\[
 |\mathcal X_{\beta,\delta}|
 \leq\delta^{-1}(2+\beta)^{O(m)}.
\]
The transformation $U_\beta$ is computed from $H_0$, the impurity
support, and $\beta$; selecting $\mathcal X_{\beta,\delta}$
additionally uses $\delta$. Both constructions are independent
of $V$, and the approximation holds without any bound on
$\norm{V}$. At $\beta=0$, the TFD is itself a Gaussian state
independent of $H$, so a single spanning vector suffices.

\subsubsection{Constructive approximation of the TFD state}

We now sketch the proof of Theorem~\ref{ct:main}, which computes
$\ket{\mathsf{TFD}_\beta(H)}$ as a linear
combination of states in the Gaussian spanning set of
Theorem~\ref{thm:rank}, assuming $\norm{V}\le1$.
The idea is to simulate Trotterized  imaginary time evolution 
of  $\ket{\mathsf{TFD}_\tau(H)}$ over the interval $\tau \in [0,\beta]$
starting from  $\tau=0$.
The initial state $\ket{\mathsf{TFD}_0(H)}$
is an explicitly known Gaussian EPR-like state. 
We adapt the proof of Theorem~\ref{thm:rank} to obtain
one $\tau$-independent Krylov decomposition.
At each time $\tau$, the corresponding free TFD annihilators
define a $\tau$-dependent Bogoliubov transformation $\mathcal U_\tau$
and a Fock basis $\{ \mathcal U_\tau |x\rangle\}$  such that 
$\ket{\mathsf{TFD}_\tau(H)}$ 
has almost all its mass in the span of low-weight Fock configurations (weighted by the Krylov depth)
in the basis determined by $\mathcal U_\tau$.
We truncate all Fock configurations with the weight above certain cutoff $R$.
The truncated version of $\ket{\mathsf{TFD}_\tau(H)}$ 
can be represented  succinctly by its coefficients in the chosen basis. 
We can also  efficiently compute matrix elements of $H$ needed 
to simulate each Trotter step, including 
the contribution from the time dependence of the basis.

The main challenge  is that imaginary  time evolution can amplify
errors. If we simply evolve and truncate at each Trotter step, the
per-step truncation errors
can accumulate into a large final error growing exponentially with $\beta$.
We avoid this problem by fitting the entire trajectory using
least squares. We fix the known initial state and determine
the coefficients at all subsequent Trotter  steps simultaneously,
minimizing the sum of squared errors in the Trotterized 
evolution. These errors include the components
outside the retained Gaussian span. The exact  trajectory  $\ket{\mathsf{TFD}_\tau(H)}$,
after truncation, provides a good candidate for this fit with a small total residual. 
Therefore, the trajectory obtained by solving the least-squares problem also has a small total residual
and, in contrast to the exact trajectory, can be computed efficiently.

The key technical step is to prove that this small
residual implies a small error in the final state.
We show that the residuals lie in a small, explicitly computable enlargement
of the retained Gaussian span. We use H\"older's inequality for Schatten norms
to control
how much subsequent imaginary-time evolution can amplify
vectors in this enlarged span, while the thermal multiparticle pull-through lemma
makes the truncation errors small enough to compensate
for that amplification. We also need a computable rescaling
of the trajectory: an approximate ground energy, supplied by
Theorem~\ref{thm:ground-energy-algorithm}, and the free-bath
partition function keep the vector norms controlled without
requiring the unknown interacting partition function.
The rescaled trajectory satisfies a linear evolution equation,
so the fit is an ordinary linear least-squares problem.
Together,
these estimates give an algorithm that computes the final low-rank Gaussian approximation
of $\ket{\mathsf{TFD}_\beta(H)}$
 in time polynomial in $n$, $1+\beta$,
and $\delta^{-1}$ for fixed impurity size $m$.

\subsection{Technical overview II: time evolution}

In this subsection, we overview the underlying construction that proves \Cref{thm:main-reduction} and hence \Cref{cor:bqp}.  Let \(\mathcal C\) be a quantum
circuit on \(k\) qubits with \(g\) gates.  Our goal is to encode the 
computation of \(\mathcal C\) into the time evolution of a
time-independent quantum impurity Hamiltonian \(H=H_0+V\) with constant
impurity size. 
More precisely, we construct a time-independent Hamiltonian
$    H=H_0+V,$ 
an explicitly specified initial state \(\ket{\Psi}\), an evolution time
\(T\), and a fixed fermionic mode \(o\), such that
\[
    \bra{\Psi}
    e^{iHT}a_o^\dagger a_o e^{-iHT}
    \ket{\Psi}\approx p_{\mathcal C},
\]
where \(p_{\mathcal C}\) is the output probability of the original
circuit.  Thus, estimating the occupation of a single fermionic mode
after time evolution allows us to recover the output of the quantum
computation.
The main point is that
\(V\) is supported on only a constant number of modes, while both the
total number \(n\) of fermionic modes and \(T\) are polynomial in \(k+g\).

\begin{figure}[H]
\centering
\IfFileExists{figures/processor_architecture.pdf}{\includegraphics[width=0.98\linewidth]{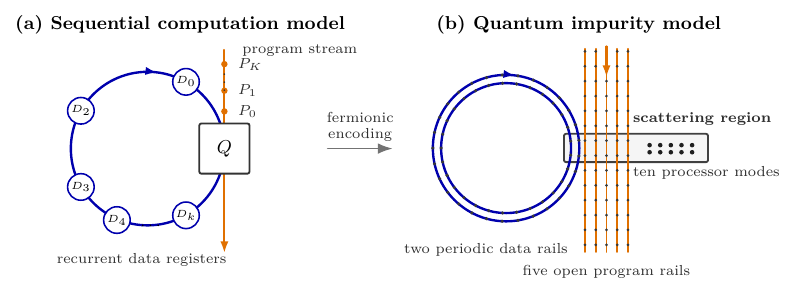}}{\fbox{\parbox{0.9\linewidth}{\centering Processor architecture figure unavailable.}}}
\caption{The two stages of the construction.  We first rewrite the
quantum circuit as a sequence of interactions with the same
constant-size system.  We then realize this computation using the time
evolution of a quantum impurity model.}
\label{fig:processor-architecture}
\end{figure}

The construction has two parts as described by the figure above. 
We first rewrite the circuit in what we call a ``sequential computational model'', i.e.,  its gates are implemented one at a time through repeated interactions with the same constant-size system.  
We then realize these interactions using the time evolution generated by \(H\). 
We summarize the reduction in five steps, illustrated in the following schematic diagram and in \Cref{fig:processor-architecture}.
\begin{equation}
\begin{gathered}
\text{quantum circuit}\\
\bigg\downarrow\;\text{\small(steps 1,2)}\\
\text{sequential computation model\;(\Cref{fig:processor-architecture}(a))}\\
\bigg\downarrow\;\text{\small(steps 3,4,5)}\\
\text{time evolution of impurity model\;(\Cref{fig:processor-architecture}(b))}
\end{gathered}
\end{equation}

The purpose of the Steps $(1,2)$ is to separate the computation itself
from the fermionic translation. 
We define a small processor that handles one register at a time, and then show how to realize any quantum circuit by choosing the sequence of program registers appropriately. After Step~2, this sequential computation
model is observed to be already universal for BQP; Steps $(3-5)$ only show how to realize this universal sequential model as the time evolution of a quantum impurity
Hamiltonian (with constant impurity size). 

 \paragraph{Step 1: sequential computation model.}

We first rewrite the circuit so that the computation is carried out by repeatedly
using the same small processor. The \(k\) qubits of the circuit, denoted
\(D_1,\ldots,D_k\), are presented to the processor one at a time in a cyclic
order. Before each data qubit arrives, a program register specifies which
operation should be performed next.
 The processor has two parts. An auxiliary qubit \(A\) is used to mediate
two-qubit gates, while a five-state register \(C\) stores the current instruction.
We write
$    Q=A\otimes C$ 
for the processor. A visiting register is either a program register \(P\),
carrying one of five possible instructions, or a data register \(D\), carrying
one qubit. On a program visit, the instruction in \(P\) is loaded into \(C\);
on the following data visit, the operation stored in \(C\) is applied to \(D\)
and, when necessary, to \(A\). 
Each program register $P$ is a 5-dimensional qudit storing one of five commands (which will correspond to a gate):
\begin{equation}
c\in\{0,1,2,3,4\}
\quad\longleftrightarrow\quad
G_c\in\{I,H_D,T_D,\operatorname{SWAP}_{AD},\operatorname{CZ}_{AD}\}.
\end{equation}
Importantly, the program registers travel along an open path, whereas the $k$ data registers move cyclically in the order $D_1,D_2,\ldots,D_k,D_1,\ldots$.

Initially, the program registers will be programmed into a product state, and everything else starts at $\ket{0}$.
Then the program and data registers will visit the processor in an alternating order:
\begin{equation}
    P_1,D_1,P_2,D_2,\cdots,
\end{equation}
with the indices for $D$ understood as mod $k$.
The interaction between the processor and the visiting register $P$ or $D$ is designed in a way that:
\begin{itemize}
    \item on a program visit it swaps the command stored in the program register $P$ into $C$;
    \item on the following data visit it applies the gate controlled by $C$ on $A$ and $D$. 
\end{itemize}
As a result, the computational model will execute the sequence of gates on the data qubits as specified by the program registers $P_1, P_2, \cdots$. There are two possible updates, depending on whether the visiting register is a
program register or a data register.  A program visit loads the new command
into \(C\), while a data visit applies the command currently stored in \(C\).
Accordingly, define
\begin{align}
U_*^{P}\bigl(\ket{a,c}_Q\ket{p}_P\bigr)
 &=\ket{a,p}_Q\ket{c}_P,\\
U_*^{D}\bigl(\ket{a,c}_Q\ket{d}_D\bigr)
 &=G_c\bigl(\ket{a}_A\ket{d}_D\bigr)\otimes\ket{c}_C.
\end{align}
For later convenience, we combine the two possible types of visiting
registers into the seven-dimensional space
\begin{equation}
X=P\oplus D,
\end{equation}
so that at each step the system undergoes an identical, fixed unitary $U_*=\mathrm{diag}\{U_*^{P},U_*^{D}\}$ acting on~$Q\otimes X$.

\paragraph{Step 2: encoding circuits into programs.}

Given a circuit $\mathcal C$, we decompose it into a sequence of elementary gates $\{H,T,\operatorname{CZ}\}$.
For every two-qubit gate, for example $\operatorname{CZ}_{D_i,D_j}$, we further transform it by the identity
\begin{equation}
I_A\otimes\operatorname{CZ}_{D_i,D_j} = \operatorname{SWAP}_{A,D_i}\operatorname{CZ}_{A,D_j}
\operatorname{SWAP}_{A,D_i}.
\end{equation}
Consequently, after the transformation, all gates are either one-qubit gates ($H$ or $T$) on $D$ or two-qubit gates ($\operatorname{SWAP}$ or $\operatorname{CZ}$) acting between $A$ and $D$.
With the cyclic nature of the data registers in mind, we pad the sequence with identity gates until desired registers arrive.
With the initialization specified in step 1, our computational model will produce $\mathcal C\ket{0^k}$ on the data registers.
We then use a final $\operatorname{SWAP}$ to swap the designated output qubit into $A$.  
Hence measuring $A$ in state $\ket{0}$ gives exactly $p_{\mathcal C}$ in \Cref{eq:qc}.

\paragraph{Step 3: realizing data and program registers by fermions.}

We use five one-dimensional chains of fermionic modes, labelled by the five command values $p\in\{0,1,2,3,4\}$.
A program register storing command $p$ is represented by a single fermion in a spatially localized wave packet on chain $p$.
The wave packets representing different program registers live on these same five chains, but are spatially separated according to their position in the program stream. The data registers are handled in the same way, with two one-dimensional rings in place of the five chains. See \Cref{fig:processor-architecture} for an illustration.

The spatial motion of these packets are governed by $H_0$, realizing the ordinary nearest-neighbor hopping (in the first-quantization picture):
\begin{equation}
    \sum_x \left(\ket{x+1}\bra{x}+\ket{x}\bra{x+1}\right)
\end{equation}
with appropriate boundary conditions specified above. 
We will choose the wave packet profiles so that, approximately, every packet travels along its chain or ring as an isolated and rigid particle with a common, constant speed.
Namely, the packets neither interact with one another nor broaden appreciably over the course of the evolution.

\paragraph{Step 4: implementing $U_*$ through impurity interaction.}

We encode the processor by one fermion in 10 modes. Different from the data and program wave packets, the processor modes are placed at a specific, fixed location, for which we choose to be the $0 \leftrightarrow 1$ edge.

We define a quartic impurity-type interaction as follows.
Let $a_q$ be the annihilation operator for the $q$-th processor mode ($q\in \{0,1\}_A\times\{0,1,2,3,4\}_C$).
Denote $a_{x,\alpha}$ be the annihilation operator for the mode at position $x$ and state $\alpha\in\{0,1\}_D\cup\{0,1,2,3,4\}_P$.
We define
\begin{equation}
V_{\mathcal I}=
\sum_{q,q',\alpha,\beta}
u_{q'\alpha,q\beta}
a_{1,\alpha}^\dagger a_{q'}^\dagger a_q a_{0,\beta}
+\mathrm{h.c.},
\end{equation}
where 
\begin{equation}
u_{q'\alpha,q\beta}
=\matrixel{q',\alpha}{U_*}{q,\beta}.
\end{equation}
As alluded from the above equation, the impurity interaction is designed such that it applies $U_*$ to the processor and a passing packet whenever the packet crosses the processor.
The interaction $V_{\mathcal I}$ is number conserving and quartic.
It acts only on the ten processor modes and the seven register modes at each endpoint of the $0\leftrightarrow 1$ edge, so its support has size
\begin{equation}
|\mathcal I|=10+(5+2)\times 2=24.
\end{equation}
We call the support of these modes the scattering region.
Away from this region, the Hamiltonian is purely quadratic, realizing the nearest-neighbor hopping.

\paragraph{Step 5: Initialization and time evolution.}

The initial state \(\ket{\Psi}\) encodes the entire computation.  It contains
one fermion for the processor, one for each data register, and one for each
program register, arranged as an explicitly specified Slater determinant.
The internal state of each program fermion specifies which operation is to be
performed, while its initial position determines when that operation reaches
the processor.  Likewise, the data fermions are positioned so that the data
qubits arrive in the required cyclic order.

Thus the gate sequence is not implemented by changing the Hamiltonian in
time.  Instead, it is built into the initial locations and internal states of
the fermions.  Under the quadratic evolution \(H_0\), these fermions move
toward the interaction region and cross it in the order prescribed in
Step~1.
We will show that the evolution of this state under the impurity Hamiltonian is well approximated by an idealized evolution in which the wave packets propagate without spreading and each passage through the processor applies $U_*$. 
This idealized evolution executes the computational model described above.
As a result, we will show
\begin{equation}
\abs{\expval{e^{iHT} a_{(0,0)}^\dagger a_{(0,0)} e^{-iHT}}{\Psi_{\mathcal C}^{\mathrm{in}}}- p_{\mathcal C}}\le \frac{1}{12},
\end{equation}
while \(n,T=O((k+g)^4)\).

\section{Preliminaries}

\subsection{Fermionic systems}
\label{sec:fermionic-systems}
We begin by recalling the definition of a finite-dimensional Fermionic
system.  A system of $n$ Fermionic modes lives in the \emph{particle} space
$\mathbb{C}^n$, with one basis vector for each available \emph{mode}.  A mode can
be either empty or occupied by one fermion.  The full Hilbert space is called the
$2^n$-dimensional Fermionic Fock space
\[
    \mathcal{F}_n
    :=
    \bigoplus_{r=0}^n \bigwedge^r \mathbb{C}^n,
\]
where $r$ above denotes the total number of occupied modes among the $n$ modes. 
Equivalently, $\mathcal{F}_n$ has
the orthonormal occupation-number basis
\[
    \bigl\{
        \lvert x_1,\ldots,x_n\rangle
        : x_j\in\{0,1\}
    \bigr\}.
\]
Here $x_j=1$ means that mode $j$ is occupied. 
 We often refer to the state $\lvert 0\rangle=\lvert 0,\ldots,0\rangle$ as the vacuum state, and let
$a_j^\dagger$ and $a_j$ denote the \emph{creation and annihilation} operators
for mode $j$.  These operators are known to satisfy the anticommutation relations
\[
    \{a_j,a_k^\dagger\}=\delta_{jk}I,
    \qquad
    \{a_j,a_k\}=0,
    \qquad
    \{a_j^\dagger,a_k^\dagger\}=0,
\]
where $\{A,B\}=AB+BA$. The above in particular implies that 
$(a_j^\dagger)^2=a_j^2=0$, i.e., a mode cannot be occupied twice. Furthermore, we use the convention
\[
    \lvert x_1,\ldots,x_n\rangle
    =
    (a_1^\dagger)^{x_1}\cdots
    (a_n^\dagger)^{x_n}\lvert 0\rangle.
\]
The anticommutation relations induce natural signs when writing out these operator relations. For $j\in [n]$, let $    \eta_j(x):=\sum_{k<j}x_k$. The action of the creation and annihilation operators is given by
\begin{align}
    a_j\lvert x\rangle
    &=
    (-1)^{\eta_j(x)}x_j
    \lvert x_1,\ldots,x_{j-1},0,x_{j+1},\ldots,x_n\rangle,  \\   a_j^\dagger\lvert x\rangle
    &=
    (-1)^{\eta_j(x)}(1-x_j)
    \lvert x_1,\ldots,x_{j-1},1,x_{j+1},\ldots,x_n\rangle.
\end{align}
Thus $a_j$ removes the fermion in mode $j$, when one is present, but
acquires a sign determined by the parity of the occupied modes preceding
$j$. As always, we can canonically work in either the particle basis or the Fock space basis as long as we preserve the operations consistently. 

\paragraph{Majorana operators.} Apart from the creation--annihilation formalism, it is often convenient to use Majorana operators, which treat hopping and pairing terms uniformly. Associated with the chosen fermionic modes are the canonical \emph{Majorana basis operators}
\[
    \gamma_{2j-1}=a_j+a_j^\dagger,
    \qquad
    \gamma_{2j}=-i(a_j-a_j^\dagger).
\]
equivalently, 
\[
    a_j=\frac{\gamma_{2j-1}+i\gamma_{2j}}{2},
    \qquad
    a_j^\dagger
    =\frac{\gamma_{2j-1}-i\gamma_{2j}}{2},
\]
 The Majorana basis
operators are Hermitian and satisfy
\[
    \{\gamma_p,\gamma_q\}=2\delta_{pq}I,\quad     \gamma_p^2=I,
    \qquad
    \gamma_p\gamma_q=-\gamma_q\gamma_p
    \quad\text{for }p\neq q.
\]
\subsection{Fermionic Hamiltonians}\label{sec:bogoliubov}
We will work with quadratic fermionic Hamiltonians, namely
Hermitian linear combinations of monomials involving at most two creation
and annihilation operators. A general parity-preserving quadratic
Hamiltonian can be written as
\begin{equation}\label{def:quadratic_H_parity_preserving}
H_0
=
E_0 I
+
\sum_{j,k=1}^n h_{jk}a_j^\dagger a_k
+
\frac12\sum_{j,k=1}^n
\left(
    \Delta_{jk}a_j^\dagger a_k^\dagger
    +
    \overline{\Delta_{jk}}a_k a_j
\right),
\end{equation}
where $h=h^\dagger$ and
$\Delta=-\Delta^{\mathsf T}$. A Hamiltonian is parity preserving if it
commutes with $P=(-1)^N$, where $N=\sum_j a_j^\dagger a_j$ is the
total particle-number operator.  Equivalently, it does not mix states
with even and odd particle number, although it may change the particle
number by an even amount. Here $E_0 I$ is an overall energy
shift.  In the number-preserving term
\[
    \sum_{j,k=1}^n h_{jk}a_j^\dagger a_k,
\]
the diagonal coefficient $h_{jj}$ is the on-site energy of mode $j$.
For $j\neq k$, the operator $a_j^\dagger a_k$ moves a fermion from
mode $k$ to mode $j$; the coefficient $h_{jk}$ is its hopping
amplitude, and $\lvert h_{jk}\rvert$ is the corresponding hopping
strength.  The terms proportional to $\Delta_{jk}$ create or annihilate
pairs of fermions. In Majorana notation, a general parity-preserving quadratic Hamiltonian
takes the compact form
\[
    H_0
    =
    E_0 I
    +
    \frac{i}{4}
    \sum_{p,q=1}^{2n}
    A_{pq}\gamma_p\gamma_q,
\]
where $A\in\mathbb{R}^{2n\times 2n}$ is antisymmetric.  Thus the
occupation, hopping, and pairing terms appearing separately in the
creation--annihilation representation are encoded by a single real
antisymmetric matrix. Now, every real antisymmetric matrix can be brought by an orthogonal change of
basis to the block-diagonal form
\[
    OAO^{\mathsf T}
    =
    \bigoplus_{j=1}^n
    \begin{pmatrix}
        0&e_j\\
        -e_j&0
    \end{pmatrix},
    \qquad e_j\geq0.
\]
Define the corresponding transformed Majorana operators by
\[
    \widetilde{\gamma}_p
    =
    \sum_{q=1}^{2n}O_{pq}\gamma_q.
\]
Because $O$ is orthogonal, these operators satisfy the same Majorana
relations,
\[
    \{\widetilde{\gamma}_p,\widetilde{\gamma}_q\}
    =
    2\delta_{pq}I.
\]
We now combine each consecutive pair of transformed Majoranas into an
ordinary fermionic mode:
\[
    \beta_j
    =
    \frac{\widetilde{\gamma}_{2j-1}
    +i\widetilde{\gamma}_{2j}}{2},
    \qquad
    \beta_j^\dagger
    =
    \frac{\widetilde{\gamma}_{2j-1}
    -i\widetilde{\gamma}_{2j}}{2}.
\]
Since
\[
    i\widetilde{\gamma}_{2j-1}\widetilde{\gamma}_{2j}
    =
    2\beta_j^\dagger\beta_j-I,
\]
the Hamiltonian becomes
\[
    H_0
    =
    E_{\mathrm{free}}I
    +
    \sum_{j=1}^n e_j\beta_j^\dagger\beta_j,
\]
where $E_{\mathrm{free}}=E_0-\frac12\sum_j e_j$.  This orthogonal
change of Majorana coordinates is equivalently a linear transformation
mixing the original creation and annihilation operators.

Based on this definition, there are two orthonormal bases of the same \(2^n\)-dimensional Fock 
space. 
 The original modes \(a_1,\ldots,a_n\) define the \emph{occupation basis}
\[
    \lvert x_1,\ldots,x_n\rangle
    :=
    (a_1^\dagger)^{x_1}\cdots
    (a_n^\dagger)^{x_n}\lvert0^n\rangle.
\]
 and the canonical modes
\(\beta_1,\ldots,\beta_n\) define a different occupation basis
\[
    \lvert x_1,\ldots,x_n\rangle_\beta
    :=
    (\beta_1^\dagger)^{x_1}\cdots
    (\beta_n^\dagger)^{x_n}\lvert0^n\rangle_\beta.
\]
Since any quadratic fermionic Hamiltonian can be efficiently diagonalized by a Bogoliubov transformation, we henceforth work in a canonical basis in which the free bath Hamiltonian takes the form
\begin{equation}
\label{eq:bath-hamiltonian}
    H_0
    =
    E_{\mathrm{free}} I
    +
    \sum_{j=1}^n e_j \beta_j^\dagger \beta_j,
    \qquad e_j \ge 0.
\end{equation}
The scalar term \(E_{\mathrm{free}}I\) shifts all energies by the same amount, so we subtract it throughout and restore it at the end. After rescaling the overall energy unit, we may therefore assume
\begin{equation}
    H_0
    =
    \sum_{j=1}^n e_j \beta_j^\dagger \beta_j,
    \qquad
    0 \le e_j \le 1.
\end{equation}

\subsection{Impurity models}
\label{sec:impurity-model}

A fermionic quantum impurity model describes a small interacting
subsystem, called the \emph{impurity}, coupled to a large collection of
otherwise free fermionic modes.  The model  Hamiltonian is
\[
    H=H_0+V,
\]
where \(H_0\) is a quadratic fermionic Hamiltonian on all \(n\) modes
called the {\em bath Hamiltonian}
and \(V\) is an {\em interaction} supported on a small subset of modes
${\mathcal I}\subseteq [n]$. Thus $V$ belong to the algebra generated by the operators $a_j,a_j^\dagger$ with $j\in \mathcal I$.
We assume that $V$ is parity preserving.
The impurity size is defined as $m=2|\mathcal I|$.
As shown in Section~\ref{sec:bogoliubov}, after a
Bogoliubov transformation, together with a constant energy shift and a
rescaling if necessary, the bath Hamiltonian takes the form
\begin{equation}
\label{eq:writingoutH}
    H_0
    =
    \sum_{j=1}^n e_j\beta_j^\dagger\beta_j,
    \qquad
    0\leq e_j\leq 1.
\end{equation}
The normalization \(e_j\leq 1\) does not imply
\(\lVert H_0\rVert\leq 1\).  Indeed, the eigenvalues of \(H_0\) are $   \sum_{j\in X}e_j$ for $X\subseteq [n]$ 
and hence
$    \lVert H_0\rVert=\sum_{j=1}^n e_j$ 
may be as large as \(n\).
We say that the bath Hamiltonian has \emph{bath gap}
$\omega>0$ if
\[
    \omega\leq e_j\leq1
    \qquad\text{for every }j\in[n].
\]
In our original problem
we do not make any assumptions about the bath gap.

\paragraph{Impurity subspace.}
For $z\in\C^n$, define the annihilation operator
\[
 \beta(z):=\sum_{j=1}^n\overline{z_j}\,\beta_j.
\]
Each original annihilation operator can be written as
\[
 a_j=\beta(w_j)+\beta(w_j')^\dagger
\]
for some vectors $w_j,w_j'\in\C^n$.
Since $V$ is an even Hermitian polynomial in $a_j,a_j^\dagger$
with $j\in\mathcal I$, the subspace
\[
 \Span\{w_j,w_j':j\in\mathcal I\}
\]
supports $V$ and has dimension at most $2|\mathcal I|=m$.
Assuming $m\leq n$, choose an $m$-dimensional subspace
$\Aimp\subseteq\C^n$ containing this span.
We call $\Aimp$ the \emph{impurity subspace}.
By construction, $V$ is an even Hermitian polynomial in
$\beta(w)$ and $\beta(w)^\dagger$ with $w\in\Aimp$.

Fix an orthonormal basis $u_1,\ldots,u_m$ of $\Aimp$.
The operators $\beta(u_1),\ldots,\beta(u_m)$ obey the canonical
anticommutation relations and describe $m$ fermionic modes
supporting $V$. These modes need not coincide with the original
modes indexed by $\mathcal I$. No bound on the interaction
strength $\norm V$ is imposed here.

\paragraph{Equivalent bath-subspace formulation.}
The preceding definition can equivalently be expressed by identifying
the one-particle directions on which the interaction acts trivially.
If \(z\in\Aimp^\perp\), then
\(\{\beta(z),\beta(w)^\dagger\}=\langle z,w\rangle I=0\)
and \(\{\beta(z),\beta(w)\}=0\) for every \(w\in\Aimp\), so
$\beta(z)$ anticommutes with every creation and annihilation
operator supported on $\Aimp$. It therefore commutes with
every even polynomial in these operators, and hence
\begin{equation}
\label{eq:impurity-commutant}
    [V,\beta(z)]=0
    \qquad
    \text{for every }z\in\Aimp^\perp.
\end{equation}
Thus the \((n-m)\)-dimensional subspace \(\Aimp^\perp\) consists of
fermionic directions on which the interaction acts trivially, while
the interaction is confined to the low-dimensional subspace \(\Aimp\).

Conversely, suppose that \(V\) is even and Hermitian and that
\([V,\beta(z)]=0\) for every vector \(z\) in some subspace of
\(\mathbb C^n\) of codimension \(m\).  Hermiticity also gives
\([V,\beta(z)^\dagger]=0\) for these \(z\).  It follows that \(V\) is
supported on the \(m\)-dimensional orthogonal complement of that
subspace, which can therefore be taken as \(\Aimp\).  
Thus support on an $m$-dimensional one-particle subspace
is equivalent to commutation with all bath annihilation
operators in its orthogonal complement.

\paragraph{Representation size.}
Although $H$ acts on the $2^n$-dimensional Fock space $\mathcal F_n$,
it has a classical description using
\[
 O(n^2+nm+4^m)
\]
real parameters: $O(n^2)$ parameters for the quadratic Hamiltonian,
$O(nm)$ parameters for an orthonormal basis of $\Aimp$, and
$O(4^m)$ parameters for the matrix of $V$ on the $m$ fermionic
modes associated with this basis.
Since $V$ is even and Hermitian, this matrix consists of two
Hermitian blocks, one for each parity sector, and can be
specified by $2^{2m-1}$ real parameters.
We assume $m\geq1$, since for $m=0$ the Hamiltonian is quadratic
and its ground energy can be computed directly by a Bogoliubov
transformation.
 
\paragraph{Enumeration of weighted configurations}

\begin{lemma}[Enumeration of weighted configurations]\label{lem:enum}
Let $M,R$ be nonnegative integers, and let $w_1,\ldots,w_M$ be
nonnegative integer weights. The set
\[
 \mathcal Y_R=\left\{x\in\{0,1\}^M:
                   \sum_{j=1}^M w_jx_j\le R\right\}
\]
can be listed with $O\bigl((M+1)|\mathcal Y_R|\bigr)$ elementary
operations.
\end{lemma}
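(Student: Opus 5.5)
We will enumerate $\mathcal Y_R$ by depth-first search over the coordinates $x_1,\ldots,x_M$ in order. Each node of the search tree is a prefix $(x_1,\ldots,x_j)$ whose partial weight satisfies $\sum_{i\le j}w_ix_i\le R$, and its leaves at depth $M$ are exactly the elements of $\mathcal Y_R$. The key is pruning. The weights are nonnegative, so every extension of a feasible prefix by zeros is again feasible. Hence each node of the tree, i.e.\ each feasible prefix, extends to at least one element of $\mathcal Y_R$, namely its zero-completion. So no search branch is wasted. Along each branch we store a running residual budget $R-\sum_{i\le j}w_ix_i$.

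The steps are as follows. First, at a node of depth $j<M$, always recurse into the child $x_{j+1}=0$; it is feasible and the budget is unchanged. Second, recurse into the child $x_{j+1}=1$ only if $w_{j+1}$ does not exceed the current residual budget. Checking and updating the budget costs $O(1)$ operations per node. Third, at a leaf, output the current vector, either by writing it out in $O(M)$ time or by maintaining it in place and emitting a copy.

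For the counting, we charge each node of the tree to the lexicographically first leaf in its subtree. This is the leaf obtained by always choosing $0$. Each leaf receives charges only from its ancestors along its trailing path of zeros, so at most $M+1$ nodes are charged to it. The tree therefore has at most $(M+1)|\mathcal Y_R|$ nodes, each handled in $O(1)$ time. Outputting the leaves costs $O(M|\mathcal Y_R|)$. The total is $O((M+1)|\mathcal Y_R|)$.

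The only delicate point is the bound on the number of internal nodes. It relies entirely on the nonnegativity of the weights: every feasible prefix has a feasible completion, so the search never explores dead branches. The rest is routine. We also note that $\mathcal Y_R$ is nonempty, since it contains $0^M$. This covers the edge cases $M=0$ and $R=0$, which is why the bound is stated with the factor $M+1$.
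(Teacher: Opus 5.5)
Your proof is correct and is essentially the paper's argument: a prefix-tree traversal with pruning, where nonnegativity of the weights guarantees that every surviving prefix extends by zeros to an element of $\mathcal Y_R$, so the number of surviving nodes is at most $(M+1)|\mathcal Y_R|$. Charging each node to its zero-completion is a slightly more explicit form of the paper's observation that every surviving node lies on one of the $|\mathcal Y_R|$ root-to-leaf paths, each containing $M+1$ nodes.
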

\begin{proof}
Traverse the binary tree of occupation prefixes in the fixed mode
order, pruning a prefix as soon as its accumulated weight exceeds
$R$. Every surviving prefix extends to an element of $\mathcal Y_R$
by setting the remaining occupations to zero. Thus every surviving
node lies on the path to a listed configuration, and each such path
contains $M+1$ nodes. There are at most $(M+1)|\mathcal Y_R|$ surviving
nodes, each requiring at most two child checks. Writing the listed
configurations has the same asymptotic cost.
\end{proof}

\subsection{Fermionic purification and thermofield double}
\label{sec:tfd-preliminaries}

We first introduce a canonical purification of fermionic operators and mixed states, which will be necessary to define the fermionic thermofield double. To identify \(\mathcal F_n\otimes\mathcal F_n\) with the doubled Fock space \(\mathcal F_{2n}\), we introduce auxiliary annihilation operators
$    \widetilde\beta_1,\ldots,\widetilde\beta_n$ 
which satisfy the canonical anticommutation relations and anticommute
with all physical creation and annihilation operators.  Let
\[
    \ket{\Omega_{2n}}
    :=
    \ket{0,\ldots,0}\in\mathcal F_{2n}
\]
denote the common vacuum of the physical and auxiliary modes, so that
\[
    \beta_j\ket{0^{2n}}
    =
    \widetilde\beta_j\ket{\Omega_{2n}}
    =
    0
    \qquad
    (j\in[n]).
\]
\paragraph{Fermionic purification} Define the unnormalized maximally entangled fermionic vector
\begin{equation}\label{tf:entangled-vector}
    \ket{I_f}
    :=
    \prod_{j=1}^n
    \left(
        I-i\beta_j^\dagger\widetilde\beta_j^\dagger
    \right)
    \ket{\Omega_{2n}},
\end{equation}
where the product is taken in increasing order of \(j\).  The factors
are even fermionic operators and therefore commute.
 For an operator \(Y\) on \(\mathcal F_n\), define its fermionic
vectorization by
\[
    \ket{Y}_f:=Y_L\ket{I_f},
\]
where \(Y_L=Y\otimes I_{\mathrm{anc}}\) acts as \(Y\) on the original
physical modes and as the identity on the auxiliary copy. The subscript
\(L\) denotes left multiplication under vectorization: for every
operator \(X\) on \(\mathcal F_n\),
\[
    Y_L\ket{X}_f=\ket{YX}_f.
\]
This vectorization preserves the Hilbert--Schmidt inner product:
\begin{equation}\label{tf:isometry}
    \!\langle Y|Z\rangle_f
    =
    \operatorname{Tr}(Y^\dagger Z),
    \qquad
    \norm{Y}_{\mathrm{HS}}
    =(\operatorname{Tr}(Y^\dagger Y))^{1/2}.
\end{equation}
The canonical purification for any fermionic mixed state $\rho = X^\dagger X$ is then given by $\ket{X}_f$.

A fermionic
operator $Y$ with definite parity $p(Y)\in\{0,1\}$ satisfies
$PYP=(-1)^{p(Y)}Y$ and is said to be \emph{homogeneous}. Then, it follows from \eqref{tf:entangled-vector} that
\begin{equation}\label{tf:right-action}
 \widetilde \beta_j^\dagger\ket{Y}_f
   =i(-1)^{p(Y)}\ket{Y \beta_j}_f,\quad \text{and} \quad
 \widetilde \beta_j\ket{Y}_f
   =i(-1)^{p(Y)}\ket{Y \beta_j^\dagger}_f
\end{equation}
for all homogeneous $Y$ and any $j$. 
It can be verified that the above identities hold for $Y=\Id$, and moving an auxilliary 
operator past $Y_L$ gives the indicated parity sign. 

\paragraph{Gibbs state and thermofield double}

The Gibbs state for a Hamiltonian
$H$  at inverse temperature \(\beta\geq0\) is
\[
    \rho_\beta(H)
    :=
    \frac{e^{-\beta H}}{Z_\beta(H)},
    \text{ where }
    Z_\beta(H):=\operatorname{Tr}(e^{-\beta H}).
\]
Following the purification convention introduced above, we define the {\em thermofield-double state} for a Hamiltonian $H$ at inverse temperature $\beta$ as the canonical purification of the Gibbs state $\rho_\beta(H)$:
\begin{equation}\label{tf:tfd}
    \ket{\mathsf{TFD}_\beta(H)}
    =\frac1{\sqrt{Z_\beta(H)}}
    \ket{e^{-\beta H/2}}_f
    \in {\mathcal F}_{2n}
\end{equation}
The contribution of a canonical mode \(j\) to the free Hamiltonian
is
$    H_{0,j}=e_j\beta_j^\dagger\beta_j.$ 
The empty state has energy \(0\), while the occupied state has energy
\(e_j\).  Their Gibbs probabilities are therefore
\[
    p_{0,j}
    =
    \frac{1}{1+e^{-\beta e_j}},
    \qquad
    p_{1,j}
    =
    \frac{e^{-\beta e_j}}{1+e^{-\beta e_j}}.
\]
Define $ u_j:=\sqrt{p_{0,j}},  w_j:=\sqrt{p_{1,j}}$. 
Thus $  u_j^2+w_j^2=1,$ and $ {w_j}/{u_j}=e^{-\beta e_j/2}.$ 
In the fermionic phase convention above, the thermofield double of
this mode and its auxiliary copy is
\[
    u_j\ket{00}_j-iw_j\ket{11}_j,
\]
where the first bit records the occupation of the physical mode
\(\beta_j\), and the second records the occupation of its auxiliary
copy \(\widetilde\beta_j\).  Tracing out the auxiliary mode gives
\[
\begin{aligned}
    &\operatorname{Tr}_{\mathrm{aux}}
    \left[
       \left(
          u_j\ket{00}_j-iw_j\ket{11}_j
       \right)
       \left(
          u_j\bra{00}_j+iw_j\bra{11}_j
       \right)
    \right] 
    =
    u_j^2\ket0\!\bra0+w_j^2\ket1\!\bra1
    =
    \frac{
       \ket0\!\bra0+
       e^{-\beta e_j}\ket1\!\bra1
    }{
       1+e^{-\beta e_j}
    },
\end{aligned}
\]
which is precisely the Gibbs state of \(H_{0,j}\).

\paragraph{Thermal annihilation operators.}
We shall now construct a canonical set of annihilation operators
for the state $\ket{\mathsf{TFD}_\beta(H_{\mathrm{ free}})}$, where
$H_{{\mathrm free}}$ is a quadratic fermionic Hamiltonian of the form
\[
H_{{\mathrm free}}
    =
    \sum_{j=1}^n \lambda_j b_j^\dagger b_j,
\]
where $b_j=\beta(v_j)=
\sum_{k=1}^n \overline{(v_j)_k} \beta_k$ for some orthonormal set $v_1,\ldots,v_n\in \C^n$ and  $\lambda_j \in [0,1]$ are single-particle excitation energies. 
Let $\widetilde b_j=\sum_{k=1}^n(v_j)_k\widetilde\beta_k$ be the correspondingly rotated ancillary modes. This conjugate choice gives $\sum_j b_j^\dagger\widetilde b_j^\dagger=\sum_k\beta_k^\dagger\widetilde\beta_k^\dagger$, so the entangled vector $\ket{I_f}$ and the right-action identities \eqref{tf:right-action} keep their form with $\beta_j,\widetilde\beta_j$ replaced by $b_j,\widetilde b_j$.
The same calculation as above shows that 
the TFD state factorizes as
\[
\ket{\mathsf{TFD}_\beta(H_{\mathrm{ free}})}
    =
    \prod_{j=1}^n
    \left(
       u_jI-iw_j
       b_j^\dagger\widetilde b_j^\dagger
    \right)
    \ket{\Omega_{2n}}=  \prod_{j=1}^n( u_j\ket{00}_j-iw_j\ket{11}_j),
\]
where Fock configurations are defined in the basis determined by $b_j, \widetilde b_j$,
\[
    u_j
    =
    \left(\frac{1}{1+e^{-\beta \lambda_j}}\right)^{1/2}
    \quad \mbox{and} \quad
    w_j
    =
    \left(\frac{e^{-\beta \lambda_j}}{1+e^{-\beta \lambda_j}}\right)^{1/2}.
\]
The TFD state is generally not the ordinary vacuum, since
\[
    b_j\ket{\mathsf{TFD}_\beta(H_{\mathrm{ free}})}\neq0
\]
whenever \(w_j\neq0\).  We therefore define the \emph{transformed doubled
operators}
\[
    \mathsf d_{-,j}
    :=
    u_j b_j + i w_j \widetilde b_j^\dagger
    \quad \mbox{and} \quad
    \mathsf d_{+,j}:=
   w_j b_j^\dagger + iu_j\widetilde b_j.
\]
For each \(j\), a direct calculation gives
\[
\begin{aligned}
    \mathsf d_{-,j}
    \left(
       u_j\ket{00}_j-iw_j\ket{11}_j
    \right)
    &=
    u_j\left(-iw_j\ket{01}_j\right)
    +
    iw_j\left(u_j\ket{01}_j\right)=0.
\end{aligned}
\]
Similarly,
\[
    \mathsf d_{+,j}
    \left(
       u_j\ket{00}_j-iw_j\ket{11}_j
    \right)
    =0.
\]
It follows that
\[
    \mathsf d_{\pm,j}
    \ket{\mathsf{TFD}_\beta(H_{\mathrm{ free}})}
    =
    0
    \qquad
    (j\in[n]).
\]
Because \(u_j^2+w_j^2=1\), these operators satisfy the canonical
anticommutation relations
\[
    \{\mathsf d_{\eta,j},\mathsf d_{\zeta,k}\}=0,
    \qquad
    \{\mathsf d_{\eta,j},\mathsf d_{\zeta,k}^\dagger\}
    =
    \delta_{\eta\zeta}\delta_{jk}I,
    \qquad
    \eta,\zeta\in\{-,+\}.
\]
Hence \(\mathsf d_{-,j}\) and \(\mathsf d_{+,j}\) are valid fermionic annihilation
operators. In this sense,
$\ket{\mathsf{TFD}_\beta(H_{\mathrm{ free}})}$  is the common vacuum of the
transformed doubled modes \(\mathsf d_{-,j},\mathsf d_{+,j}\).  This does not mean
that its physical modes are empty.  Rather, it means that the state
contains no excitations with respect to these transformed modes.

The transformation from the physical and auxiliary modes
$\{    b_j,\widetilde b_j\}$ 
to the doubled thermal modes
$\{    \mathsf d_{-,j},\mathsf d_{+,j}\}$ 
is a Bogoliubov transformation: it mixes creation and annihilation
operators while preserving the canonical anticommutation relations.
Since the free thermofield double is the common vacuum of these
Bogoliubov modes, it is a pure fermionic Gaussian state. For the interacting Hamiltonian
\[
    H=H_0+V,
\]
the thermofield double need not be Gaussian, and the transformed modes
need not be empty.  Their occupations measure how the interaction
changes the thermal state relative to the free thermal state.  These
are the occupations controlled in the proof of 
Theorem~\ref{thm:rank}.

\section{Classical algorithm for ground energy estimation}
\label{sec:ground-energy-proof}

In this section we prove Theorem~\ref{thm:ground-energy-algorithm}.

\subsection{Exponential suppression of multi-particle bath excitations}
\label{sec:pullthrough}

Consider a bipartite fermionic system $AB$, where $A$ and $B$ consist of $n_A$ and $n_B$ fermionic modes respectively.
We refer to these subsystems as  the {\em boundary} ($A$) and the {\em residual bath} ($B$).
We write $a_\ell$ and $b_j$ for the fermionic annihilation operators describing $A$ and $B$, where 
$\ell \in [n_A]$ and $j\in [n_B]$. These operators obey the standard canonical anticommutation rules.
In particular,  $a_\ell b_j=-b_j a_\ell$ for all $j$ and $\ell$. Suppose $E\in \C^{n_B\times n_B}$ is a  positive definite hermitian matrix and $T\in \C^{n_A\times n_B}$ is an arbitrary matrix.
Consider a Hamiltonian 
\begin{equation}\label{ge:boundary-H}
 \widetilde H
 =H_A+\sum_{i,j=1}^{n_B}E_{ij}b_i^\dagger b_j
+\sum_{\ell=1}^{n_A}\sum_{j=1}^{n_B}
   \left(T_{\ell j}a_\ell^\dagger b_j+
        \overline{T_{\ell j}}b_j^\dagger a_\ell\right),
\end{equation}
where $H_A$ is any even hermitian Hamiltonian supported  only on $A$.
Given a vector $v\in \C^{n_B}$, let $b(v)=\sum_{j=1}^{n_B} \overline{v}_j b_j$,
where $\overline{v}_j$ is the complex conjugate of $v_j$.
The following lemma provides an upper bound on occupation 
of the residual bath modes. Below, empty operator and scalar products should be understood as $I$ and $1$ respectively. 
\begin{lemma}[\bf Multiparticle pull-through]
\label{ge:pull}
Let $\psi\in \calF_{n_A+n_B} $ be a normalized ground state of $\widetilde H$.
For any integer $p\geq 0$ and any vectors $v_1,\ldots,v_p\in \C^{n_B}$
one has 
\begin{equation}\label{ge:pull-bound}
 \norm{b(v_1)\cdots b(v_p)\psi}
 \leq\prod_{i=1}^p s(v_i),
\end{equation}
where
\[
s(v)=\int_0^\infty\norm{Te^{-tE}v}\,dt.
\]
If the vectors $v_1,\ldots,v_p$ are orthonormal, the occupation number operators
$n_i:=b(v_i)^\dagger b(v_i)$ pairwise commute and obey
\begin{equation}\label{ge:joint-general}
 \bra{\psi}\prod_{i=1}^p n_i\ket{\psi}
 \leq\prod_{i=1}^p s(v_i)^2.
\end{equation}
\end{lemma}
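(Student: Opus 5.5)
The plan is to represent each bath annihilator acting on the ground state as an imaginary-time integral of a boundary annihilator. Products are then handled by evolving all bath vectors simultaneously and inducting on $p$.

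\textbf{Setup.} Let $E_1$ be the ground energy, so $\widetilde H\psi=E_1\psi$, $\widetilde H-E_1\succeq0$, and $\|e^{-t(\widetilde H-E_1)}\|\le1$ for $t\ge0$. For $u\in\C^{n_A}$ write $a(u)=\sum_\ell\overline{u}_\ell a_\ell$, so that $\|a(u)\|\le\|u\|$; likewise $\|b(w)\|\le\|w\|$. The first step is to compute, from the CAR, the commutator
\[
[\widetilde H,b(w)]=-b(Ew)-a(Tw).
\]
Here $[H_A,b(w)]=0$ because $H_A$ is even and supported on $A$, the $E$-term produces $-b(Ew)$, and the $T$-term produces $-a(Tw)$. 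A convention-dependent sign or conjugation may appear here; it is harmless because only norms are used.

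\textbf{Evolving all bath vectors at once.} Set $w_i(t):=e^{-tE}v_i$, $B(t):=b(w_1(t))\cdots b(w_p(t))$, and
\[
F(t):=e^{-t(\widetilde H-E_1)}B(t)\psi .
\]
We have $F(0)=B(0)\psi$. Since $E$ is positive definite, $w_i(t)\to0$, so $F(t)\to0$ as $t\to\infty$ (for $p\ge1$). Differentiate $F$ and use $(\widetilde H-E_1)B\psi=[\widetilde H,B]\psi$. Expanding the commutator by the Leibniz rule, the terms $-b(Ew_k)$ cancel exactly against $\frac{d}{dt}w_k=-Ew_k$. What remains is
\[
F'(t)=e^{-t(\widetilde H-E_1)}\sum_{k}b(w_1)\cdots a(Tw_k)\cdots b(w_p)\psi .
\]
Next, anticommute $a(Tw_k)$ to the left; this costs only a sign, since $a$'s anticommute with $b$'s. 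This gives
\[
\|F'(t)\|\le\sum_k\|Tw_k(t)\|\cdot\Bigl\|\prod_{j\ne k}b(w_j(t))\psi\Bigr\|.
\]

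\textbf{Induction on $p$.} The case $p=0$ is trivial. For the inductive step, apply the hypothesis to the $p-1$ vectors $w_j(t)$, $j\ne k$. Put $g_j(t):=\|Te^{-tE}v_j\|$. Then
\[
s(w_j(t))=\int_0^\infty\|Te^{-(t+u)E}v_j\|\,du=\int_t^\infty g_j=:G_j(t).
\]
Hence
\[
\|B(0)\psi\|\le\int_0^\infty\|F'(t)\|\,dt\le\int_0^\infty\sum_k g_k\prod_{j\ne k}G_j\,dt=\int_0^\infty-\frac{d}{dt}\prod_jG_j\,dt=\prod_j s(v_j).
\]
This proves \eqref{ge:pull-bound}, modulo minor issues: $F$ is differentiable and vanishes at infinity, and if some $s(v_j)=\infty$ the bound is vacuous.

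\textbf{Joint occupations.} For orthonormal $v_i$ we have $\{b(v_i),b(v_j)^\dagger\}=\delta_{ij}$, so the bilinears $n_i$ commute. Moreover, $b(v_i)$ and $b(v_i)^\dagger$ commute with $n_j$ for $j\ne i$. Using these relations, one checks
\[
\prod_i n_i=(b(v_1)\cdots b(v_p))^\dagger(b(v_1)\cdots b(v_p))
\]
up to reordering, and the reordering signs cancel in the $\dagger$-pair. Therefore
\[
\langle\psi|\textstyle\prod n_i|\psi\rangle=\|b(v_1)\cdots b(v_p)\psi\|^2,
\]
and \eqref{ge:joint-general} follows.

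\textbf{Main obstacle.} I expect the main obstacle to be the multiparticle step: $b(v_2)\cdots b(v_p)\psi$ is not an eigenstate, so the single-particle argument cannot simply be iterated. The simultaneous evolution $w_i(t)=e^{-tE}v_i$ is designed to fix this. It makes the $E$-terms cancel identically, and it makes the integrand a total derivative of $\prod_jG_j$, so the constants come out exactly with no combinatorial loss.
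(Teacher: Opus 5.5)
Your proposal is correct and follows the paper's proof essentially step for step: the same commutator $[\widetilde H,b(w)]=-b(Ew)-a(Tw)$, the same simultaneously evolved vector $e^{-t(\widetilde H-E_1)}\prod_j b(e^{-tE}v_j)\psi$, and induction on $p$. As in the paper, the identity $s(e^{-tE}v_j)=\int_t^\infty\|Te^{-uE}v_j\|\,du$ turns the integrand into a total derivative, and the CAR identify $\prod_i n_i$ with $(b(v_1)\cdots b(v_p))^\dagger b(v_1)\cdots b(v_p)$. The one sign that genuinely matters is the one on $b(Ew)$, which must cancel against $\tfrac{d}{dt}w=-Ew$, and you have it right, so your hedge about convention-dependent signs is only needed for the harmless $a(Tw)$ term.
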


\begin{proof}
Given a vector $u\in \C^{n_A}$, let $a(u)=\sum_{\ell=1}^{n_A} \overline{u}_\ell a_\ell$.
The canonical anticommutation rules imply that 
$\|a(u)\|=\|u\|$ and $\|b(v)\|=\|v\|$ for all $u\in \C^{n_A}$ and $v\in \C^{n_B}$.
Furthermore,  for any vectors $v,w\in \C^{n_B}$ one has
\begin{equation}\label{ge:commutator}
 [\widetilde H,b(v)]=-b(Ev)-a(Tv)\quad \mbox{and} \quad
 \{a(Tv),b(w)\}=0.
\end{equation}

The proof of \eqref{ge:pull-bound} follows by induction on $p$. The base case $p=0$ is simply the claim that $\|\psi\| \le 1$ since no operator is applied to the LHS of \eqref{ge:pull-bound}.
For $p\geq1$ and $t\geq0$, define the ordered products
\[
 B_p(t)=\prod_{j=1}^p b(e^{-tE}v_j),\qquad
 B_{p,i}(t)=\prod_{\substack{j\in[p]\\j\ne i}}b(e^{-tE}v_j)
 \quad(i\in[p]),
\]
where both products are taken in increasing order of $j$, and an empty product in $ B_{p,i}(t)$ is understood to be the identity $I$. Note that $B_p(0) = \prod_{j=1}^p b(v_j)$.
Differentiation yields
\begin{equation}\label{ge:derivative}
 \frac{d}{dt}\bigl(e^{-t(\widetilde H-E_1(\widetilde H)\Id)}B_p(t)\psi\bigr)
 =e^{-t(\widetilde H-E_1(\widetilde H)\Id)}\sum_{i=1}^p(-1)^{i-1}
                a(Te^{-tE}v_i)B_{p,i}(t)\psi.
\end{equation}
Indeed, the derivative of the exponential contributes
$-e^{-t(\widetilde H-E_1(\widetilde H)\Id)}[\widetilde H,B_p(t)]\psi$.
The terms containing $b(Ee^{-tE}v_i)$
cancel the derivatives of the annihilators. Moving each remaining
boundary annihilator $a(Te^{-tE}v_i)$ past its $i-1$ preceding bath
annihilators using the second identity of \eqref{ge:commutator}
gives the displayed sign. 

We claim that the vector being differentiated in the left hand side of \eqref{ge:derivative}
 tends to zero as $t\to\infty$. Indeed, we have $\widetilde H-E_1(\widetilde H)\Id\succeq0$
 and thus $\|e^{-t(\widetilde H-E_1(\widetilde H)\Id)}\|\le 1$ for all $t\ge 0$.
By assumption, $E$ is positive definite. Let $\omega>0$ be the smallest eigenvalue of $E$.
Then
\[
 \norm{B_p(t)}\leq e^{-p\omega t}\prod_{i=1}^p\norm{v_i}
\]
which tends to $0$ as $t\to\infty$.
Integrating \eqref{ge:derivative} over $[0,\infty)$ therefore gives
\begin{equation}\label{ge:derivative-integrated}
B_p(0)\psi= -\sum_{i=1}^p(-1)^{i-1} \int_0^\infty e^{-t(\widetilde H -  E_1(\widetilde H)\Id)} a(Te^{-tE}v_i)B_{p,i}(t)\psi\,dt.  
\end{equation}

Now, apply the induction hypothesis to each $(p-1)$-particle product and
again use $\norm{a(Te^{-tE}v_i)}=\norm{Te^{-tE}v_i}$.
For every $i\in[p]$ one has
\[
 s(e^{-tE}v_i)=\int_t^\infty\norm{Te^{-uE}v_i}\,du
 \quad \mbox{and} \quad
 \frac{d}{dt}s(e^{-tE}v_i)=-\norm{Te^{-tE}v_i}.
\]
Consequently,
\begin{align}
 \norm{B_p(0)\psi}
 &\leq\sum_{i=1}^p\int_0^\infty
       \norm{Te^{-tE}v_i}
       \prod_{\substack{j\in[p]\\j\ne i}}s(e^{-tE}v_j)\,dt\notag\\
 &=-\int_0^\infty\frac{d}{dt}
              \left(\prod_{i=1}^p s(e^{-tE}v_i)\right)dt
 =\prod_{i=1}^p s(v_i).
 \label{ge:induction}
\end{align}
Here $s(e^{-tE}v_i)\to0$ as $t\to\infty$.
This proves
\eqref{ge:pull-bound}. For orthonormal vectors $v_1,\ldots,v_p$,
the canonical anticommutation rules imply that the operators
$n_i=b(v_i)^\dagger b(v_i)$ pairwise commute.
Then  \eqref{ge:joint-general} follows directly from \eqref{ge:pull-bound}.
\end{proof}
Lemma~\ref{ge:pull} motivates imposing an additional structure on the
residual bath space $\C^{n_B}$ that facilitates application of the upper bound \eqref{ge:pull-bound}.
In the next sections we decompose $\C^{n_B}$ into a direct sum of subspaces labelled by 
a Krylov depth and an energy scale, such that the quantity $s(v)$ that appears in the upper bound \eqref{ge:pull-bound}
decays exponentially with the Krylov depth for each energy scale.

\subsection{Shelled Krylov decompositions}
\label{sec:krylov-decompositions}

Here we define an abstract Krylov decomposition of impurity models.
It serves as a general template for  more specialized Krylov decompositions used in the proof of  Theorem~\ref{thm:ground-energy-algorithm}
and the two TFD theorems.
The abstract Krylov decomposition depends on the data $(\A,\B,E,T,\beta)$, where
$\A$ and $\B$ are finite-dimensional complex Hilbert spaces,
$E\, : \,  \B\to \B$ is a hermitian operator such that all eigenvalues of $E$ lie in $[0,1]$,
$T\, : \, \B \to \A$ is an arbitrary linear map, and $\beta\in[0,\infty]$ is the inverse temperature.
We call $\A$ the {\em boundary} and we call $\B$ the {\em residual bath}.
Let  $n_A=\dim\A$ and $n_B=\dim\B$.   
Below all vector norms are Euclidean, and all
operator norms are the induced norms.

\subsubsection{Krylov depth and integrated coupling}

Let us  define two measures of how strongly a vector $v\in \B$ is coupled to the boundary $\A$.

\begin{definition}[Krylov depth]
A vector $v\in\B$ has Krylov depth $\ell$, where $\ell\geq0$ is an integer,
if
\begin{equation}\label{kd:depth}
 TE^jv=0\quad \mbox{for all} \quad 0\leq j<\ell.
\end{equation}
\end{definition}

\begin{definition}[Integrated coupling]
A vector  $v\in\B$ has integrated coupling 
\begin{equation}\label{kd:integrated-coupling}
 s_\beta(v)=\int_0^{\beta/2}\norm{Te^{-tE}v}\,dt.
\end{equation}
For $\beta=\infty$, the integral is over $[0,\infty)$ and may be infinite. We use a shorthand $s(v)\equiv s_{\infty}(v)$.
\end{definition}
Note that the zero-temperature integrated coupling $s(v)$ appears in the upper bound \eqref{ge:pull-bound}  that controls occupation numbers of the residual bath modes
in the ground state of impurity models. 
In the thermal applications, \(s_\beta(v)\) controls deviations of the residual bath from its free thermal state.
The following lemma gives an upper bound on the integrated coupling in terms of the Krylov depth.
\begin{lemma}[Krylov depth vs integrated coupling]
\label{kd:interval-lemma}
Assume all eigenvalues of $E$ lie in the interval $[e^\downarrow,e^\uparrow]\subseteq [0,1]$..
Suppose first that $e^\downarrow>0$.
Then for any unit vector $v\in \B$ of Krylov depth $\ell$ one has
\begin{equation}\label{kd:interval-bound}
 s_\beta(v)\leq\frac{\norm{T}}{e^\downarrow}
 \left(\frac{e^\uparrow-e^\downarrow}
 {e^\uparrow+e^\downarrow}\right)^\ell.
\end{equation}
Suppose next that $e^\downarrow=0$, 
 $e^\uparrow>0$, and $\beta<\infty$. Then
\begin{equation}\label{kd:finite-time-bound}
 s_\beta(v)\leq
 \frac{\norm{T}(e^\uparrow/2)^\ell(\beta/2)^{\ell+1}}
 {(\ell+1)!}.
\end{equation}
\end{lemma}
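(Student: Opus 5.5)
The plan is to use the Krylov-depth condition \eqref{kd:depth} to subtract an arbitrary polynomial of degree less than $\ell$ from the exponential. If $v$ has depth $\ell$, then $Tq(E)v=0$ for every polynomial $q$ of degree $<\ell$. Hence for each $t\ge0$ and any such $q$ (allowed to depend on $t$),
\[
 \norm{Te^{-tE}v}=\norm{T\bigl(e^{-tE}-q(E)\bigr)v}\le\norm{T}\max_{x\in[e^\downarrow,e^\uparrow]}\bigl|e^{-tx}-q(x)\bigr|,
\]
using the spectral theorem for the hermitian $E$ and $\norm v=1$. So everything reduces to a one-variable polynomial approximation problem for $e^{-tx}$ on the spectral interval, followed by integration in $t$.

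\emph{Gapped case $e^\downarrow>0$.} Rather than approximate $e^{-tx}$ for each $t$ separately, I would first integrate and then approximate. Since $e^{-tx}$ has a positive integrand, the triangle inequality gives $s_\beta(v)\le s_\infty(v)$. A cleaner route is available: write $e^{-tE}=p(E)\,e^{-tE}+(I-p(E))e^{-tE}$, where $p$ has degree $\ell$ and $p(0)=1$. Then $x^{-1}(1-p(x))$ is a polynomial of degree $<\ell$. Alternatively, and more directly, I would choose $q_t(x)=e^{-tx}\bigl(1-r(x)\bigr)$ truncated, but that is not a polynomial.

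So instead I will use the following. Let $r(x)=\bigl(\frac{c-x}{c}\bigr)^\ell$ with $c=(e^\uparrow+e^\downarrow)/2$. Then $1-r(x)$ is a polynomial of degree $\le\ell$ vanishing at $0$, and $|r(x)|\le\rho^\ell$ on the interval, where $\rho=\frac{e^\uparrow-e^\downarrow}{e^\uparrow+e^\downarrow}$. The key identity is that
\[
 \int_0^\infty Te^{-tE}v\,\dots
\]
does not directly give a norm-integral, so I would avoid integrating vectors. Instead, for each fixed $t$, I decompose $e^{-tx}=e^{-tx}r(x)+e^{-tx}(1-r(x))$. The second term is $x\cdot(\text{poly of degree}<\ell)\cdot e^{-tx}$, which still is not a polynomial.

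The fix, and what I expect to be the main obstacle, is to make the per-$t$ bound integrable with the right constant $\norm{T}/e^\downarrow$. I would therefore approximate $e^{-tx}$ pointwise by its degree-$(\ell-1)$ Taylor expansion around the center $c$. Writing $e^{-tx}=e^{-tc}e^{-t(x-c)}$, the Taylor remainder in $(x-c)$ is bounded by $e^{-tc}\sum_{j\ge\ell}\frac{(t\delta)^j}{j!}$, where $\delta=(e^\uparrow-e^\downarrow)/2$. Integrating term by term gives
\[
 \int_0^\infty e^{-tc}\frac{(t\delta)^j}{j!}\,dt=\frac{\delta^j}{c^{j+1}}.
\]
Summing over $j\ge\ell$ yields
\[
 \frac{1}{c}\cdot\frac{\rho^\ell}{1-\rho}=\frac{\rho^\ell}{e^\downarrow},
\]
since $c(1-\rho)=c-\delta=e^\downarrow$. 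This gives \eqref{kd:interval-bound}, and $s_\beta\le s_\infty$ handles finite $\beta$.

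\emph{Case $e^\downarrow=0$, $\beta<\infty$.} I would run the same Taylor argument centered at $c=e^\uparrow/2$, now with $\delta=c$, bounding the remainder by the Lagrange form $\frac{(t\delta)^\ell}{\ell!}\sup_\xi e^{-t(\xi-c)}e^{-tc}$ with $\xi$ in the interval. A cleaner choice is to center at $c$ but bound $e^{-tc}e^{t\delta}\le1$ using $\delta=c$, which gives a per-$t$ bound $\norm{T}(tc)^\ell/\ell!$. Integrating over $[0,\beta/2]$ then produces
\[
 \norm{T}\frac{c^\ell(\beta/2)^{\ell+1}}{(\ell+1)!},
\]
which is exactly \eqref{kd:finite-time-bound}. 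The only care needed is that the Lagrange remainder for the scalar function is used inside the spectral sup, which is legitimate since $E$ is hermitian.
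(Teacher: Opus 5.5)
Your final argument (Taylor-expanding $e^{-tx}=e^{-tc}e^{-t(x-c)}$ about the midpoint $c$ of the spectral interval, using the depth condition to discard the degree-$<\ell$ part, and integrating the tail term by term, or bounding it by $(tc)^\ell/\ell!$ via the Lagrange remainder when $e^\downarrow=0$) is correct and is essentially the paper's proof. The paper writes the same expansion as an operator series in $E-cI$ with $\norm{E-cI}\le(e^\uparrow-e^\downarrow)/2$, and uses $\sum_{j\ge\ell}x^j/j!\le x^\ell e^x/\ell!$ where you use the Lagrange bound; the earlier detours in your write-up (the polynomial $r(x)$, the ``integrate first'' idea) play no role and should simply be deleted.
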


\begin{proof}
Subtracting the midpoint of the spectral interval gives
\[
 \norm{E-\frac{e^\downarrow+e^\uparrow}{2}I}
 \leq\frac{e^\uparrow-e^\downarrow}{2}.
\]
The RHS in \eqref{kd:depth} gives 0 even with
$\bigl(E-\tfrac12(e^\downarrow+e^\uparrow)I\bigr)^j$ in place of
$E^j$. Expanding the exponential about the midpoint therefore yields
\[
 \norm{Te^{-tE}v}
 \leq\norm{T}e^{-t(e^\downarrow+e^\uparrow)/2}
 \sum_{j=\ell}^\infty
 \frac{\bigl(t(e^\uparrow-e^\downarrow)/2\bigr)^j}{j!}.
\]
Integrating the nonnegative terms over $[0,\infty)$ gives
\begin{align*}
 s_\beta(v)\leq s_\infty(v)
 &\leq\frac{2\norm{T}}{e^\downarrow+e^\uparrow}
 \sum_{j=\ell}^\infty
 \left(\frac{e^\uparrow-e^\downarrow}
 {e^\uparrow+e^\downarrow}\right)^j=\frac{\norm{T}}{e^\downarrow}
 \left(\frac{e^\uparrow-e^\downarrow}
 {e^\uparrow+e^\downarrow}\right)^\ell.
\end{align*}

To prove \eqref{kd:finite-time-bound}, note that the midpoint now is $e^\uparrow/2$.
For $t\geq0$,
\[
 \norm{Te^{-tE}v}
 \leq\norm{T}e^{-te^\uparrow/2}
 \sum_{j=\ell}^\infty\frac{(te^\uparrow/2)^j}{j!}
 \leq\norm{T}\frac{(te^\uparrow/2)^\ell}{\ell!}.
\]
The last inequality follows from
$\sum_{j=\ell}^\infty x^j/j!\leq x^\ell e^x/\ell!$ for $x\geq0$.
Integrating over $[0,\beta/2]$ proves the claim.
\end{proof}
In particular, if
\begin{equation}\label{kd:positive-shell-hypotheses}
 e^\downarrow I\leq E\leq2e^\downarrow I,
 \qquad e^\downarrow>0,
 \qquad\norm{T}\leq e^\downarrow/2,
\end{equation}
then Lemma~\ref{kd:interval-lemma} gives
\begin{equation}\label{kd:positive-depth-bound}
 s_\beta(v)\leq\tfrac12\,3^{-\ell}
 \qquad(\beta\in[0,\infty])
\end{equation}
for any vector $v$ of Krylov depth $\ell$.
This motivates splitting the spectrum into intervals with a bounded
ratio of upper and lower endpoints. 

\subsubsection{Energy shells }

We shall mostly work with a fine-grained version of Krylov decompositions where the residual bath $\B$
is represented as a direct sum of subspaces called energy shells.

\begin{definition}\label{kd:shelled-krylov-definition}
A shelled Krylov decomposition is a tuple 
 $(\A,\B,E,T,\beta)$ as above equipped with
 an orthogonal decomposition
\begin{equation}\label{kd:shells}
 \B=\bigoplus_{s=0}^{S-1}\B_s
\end{equation}
such that each subspace $\B_s$ is $E$-invariant. We refer to $\B_s$ as energy shells.
Accordingly, 
\begin{equation}\label{kd:shells_for_E}
E=\bigoplus_{s=0}^{S-1}E_s, \qquad E_s=E|_{\B_s}.
\end{equation}
Let $T_s\, : \, \B_s\to \A$ be the restriction of $T$ onto $\B_s$ such that $T=\sum_sT_sP_s$ and $Te^{-tE}v=T_se^{-tE_s}v$ for $v\in\B_s$.
Let $P_s\, :\, \B\to \B$ be the orthogonal projector onto $\B_s$.
Then $T_s^\dagger=P_sT^\dagger$.
\end{definition}
Note that all maps $T_s$ have the same target $\A$ and their images need not be
orthogonal. 
Concretely, we will later obtain such decompositions by partitioning $[0,1]$ into
intervals $I_s$ and choosing $P_s$ as the projector onto the span of eigenvectors of $E$
with eigenvalues lying in $I_s$. 

For each shell, define the Krylov spaces and their orthogonal layers by
\begin{align}
 \K_{s,\ell}
 &=\Span\{E_s^jT_s^\dagger u:
       u\in\A,\ 0\leq j<\ell\},
       \qquad \K_{s,0}=\{0\},\label{kd:krylov}\\
 \W_{s,\ell}
 &=\K_{s,\ell+1}\ominus\K_{s,\ell}
       \qquad(\ell\geq0).\label{kd:layers}
\end{align}
Here $U\ominus V=U\cap V^\perp$ when $V\subseteq U$.
Thus $\K_{s,1}=\ran T_s^\dagger$ consists of fermionic directions that couple
directly to the boundary. Each subsequent layer contains the new
directions obtained by one further application of $E_s$. The lemma below collects some useful facts about the geometry of Krylov spaces.

\begin{lemma}[Krylov geometry]\label{kd:geometry}
The following hold for any shelled Krylov decomposition as defined above.
\begin{itemize}
\item For every shell $s$ and integer $\ell\geq0$,
\begin{equation}\label{kd:layer-size}
 \dim\W_{s,\ell}\leq\rank T_s\leq n_A.
\end{equation}
\item  A vector $v\in\B_s$ has Krylov depth $\ell$ if and only if
$v\perp\K_{s,\ell}$. In particular, every vector in $\W_{s,\ell}$
has Krylov depth $\ell$.
\item The Krylov spaces stabilize by $\ell=n_B$. 
\item The largest $E_s$-invariant subspace of $\ker T_s$ subspace is 
\begin{equation}\label{kd:decoupled}
 \Z_s=\B_s\ominus\K_{s,n_B},
\end{equation}
and for any vector $v\in\Z_s$ and $\beta\in[0,\infty]$ we have
\begin{equation}\label{kd:zero-coupling}
 s_\beta(v)=0.
\end{equation}
\item Lastly, we have
\begin{align}
 E_s\W_{s,\ell}
 &\subseteq\W_{s,\ell-1}\oplus\W_{s,\ell}\oplus\W_{s,\ell+1},
 \label{kd:nearest-layers}\\
 \B_s&=\left(\bigoplus_{\ell=0}^{n_B-1}\W_{s,\ell}\right)\oplus\Z_s,
 \label{kd:decomposition}
\end{align}
where $\W_{s,-1}=\{0\}$ and layers with $\ell\geq n_B$ are zero.
\end{itemize}
\end{lemma}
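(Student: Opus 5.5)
The plan is to prove all items from one basic observation: $\K_{s,\ell+1}=\K_{s,\ell}+E_s^{\ell}\ran T_s^\dagger$, together with $\K_{s,\ell+1}=\ran T_s^\dagger+E_s\K_{s,\ell}$ for $\ell\ge0$. Both follow directly from definition \eqref{kd:krylov}.

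\emph{Layer size.} Set $r=\rank T_s$, so that $\ran T_s^\dagger$ has dimension $r$. By the first identity, $\dim\K_{s,\ell+1}-\dim\K_{s,\ell}\le\dim E_s^\ell\ran T_s^\dagger\le r$. Hence $\dim\W_{s,\ell}\le r\le\dim\A=n_A$, which is \eqref{kd:layer-size}.

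\emph{Depth characterization.} Fix $v\in\B_s$. Then $TE^jv=T_sE_s^jv$, and $\langle u,T_sE_s^jv\rangle=\langle E_s^jT_s^\dagger u,v\rangle$ because $E_s$ is hermitian. Therefore $T_sE_s^jv=0$ for all $j<\ell$ if and only if $v$ is orthogonal to every generator of $\K_{s,\ell}$, i.e.\ $v\perp\K_{s,\ell}$. Since $\W_{s,\ell}\subseteq\K_{s,\ell}^\perp$ by \eqref{kd:layers}, every vector of $\W_{s,\ell}$ has depth $\ell$.

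\emph{Stabilization.} The dimensions of the nested spaces $\K_{s,\ell}$ are nondecreasing in $\ell$ and bounded by $n_B$. If $\K_{s,\ell+1}=\K_{s,\ell}$ for some $\ell$, then the second identity gives $\K_{s,\ell+2}=\ran T_s^\dagger+E_s\K_{s,\ell+1}=\ran T_s^\dagger+E_s\K_{s,\ell}=\K_{s,\ell+1}$, so the sequence stays constant from then on. A strictly increasing chain of subspaces of $\B_s$ can have at most $n_B$ steps, so $\K_{s,n_B}=\K_{s,\ell}$ for all $\ell\ge n_B$. Call this limit $\K_s$.

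\emph{Decoupled part.} The limit $\K_s$ is $E_s$-invariant by the second identity, and since $E_s$ is hermitian, $\Z_s=\B_s\ominus\K_s$ is $E_s$-invariant as well. By the depth characterization, every $v\in\Z_s$ has infinite depth; in particular $T_sv=0$, so $\Z_s\subseteq\ker T_s$.

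For maximality, let $Y\subseteq\ker T_s$ be $E_s$-invariant and take $y\in Y$. Then $E_s^jy\in Y\subseteq\ker T_s$, so $T_sE_s^jy=0$ for all $j$, and the depth characterization gives $y\perp\K_s$, i.e.\ $y\in\Z_s$.

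For \eqref{kd:zero-coupling}, take $v\in\Z_s$. Then $e^{-tE}v\in\Z_s\subseteq\ker T_s$, so the integrand in \eqref{kd:integrated-coupling} vanishes identically and $s_\beta(v)=0$.

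\emph{Decomposition.} Telescoping the orthogonal layers gives $\K_{s,\ell}=\bigoplus_{j<\ell}\W_{s,j}$, and hence \eqref{kd:decomposition}. For \eqref{kd:nearest-layers}, take $w\in\W_{s,\ell}$. Since $w\in\K_{s,\ell+1}$, we get $E_sw\in E_s\K_{s,\ell+1}\subseteq\K_{s,\ell+2}=\bigoplus_{j\le\ell+1}\W_{s,j}$. It remains to show that $E_sw\perp\K_{s,\ell-1}$. For any $x\in\K_{s,\ell-1}$, hermiticity gives $\langle x,E_sw\rangle=\langle E_sx,w\rangle$, and $E_sx\in\K_{s,\ell}$, which is orthogonal to $w$. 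So $E_sw$ has no components in layers $j<\ell-1$, which is the claim.

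The only mildly delicate point is the stabilization step, which needs the second identity (recursion through $E_s$) rather than the first; everything else is routine linear algebra using hermiticity of $E_s$.
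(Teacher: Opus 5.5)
Your proof is correct and follows the paper's argument essentially step for step: layer growth by $E_s^\ell\ran T_s^\dagger$, adjointness for the depth criterion, and hermiticity both for the invariance of $\Z_s$ and for the nearest-layer structure. The only minor difference is in stabilization, where the paper cites Cayley--Hamilton while you combine the recursion $\K_{s,\ell+1}=\ran T_s^\dagger+E_s\K_{s,\ell}$ with a dimension count; this is equally valid and slightly more self-contained.
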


\begin{proof}
Passing from $\K_{s,\ell}$ to $\K_{s,\ell+1}$ adds only
$E_s^\ell\ran T_s^\dagger$, whose dimension is at most $\rank T_s$.
Also,
\[
 v\perp\K_{s,\ell}
 \quad\Longleftrightarrow\quad
 T_sE_s^jv=0\quad(0\leq j<\ell).
\]
For $v\in\B_s$ these are exactly the conditions $TE^jv=0$.

The Cayley--Hamilton theorem implies stabilization by
$\dim\B_s\leq n_B$, so $\K_{s,n_B}$ is invariant under $E_s$.
Hermiticity makes $\Z_s$ invariant as well. Orthogonality to
$\ran T_s^\dagger$ gives $T_s\Z_s=0$.
Conversely, if an $E_s$-invariant subspace lies in $\ker T_s$, each
of its vectors is orthogonal to every $E_s^j\ran T_s^\dagger$, so it
lies in $\Z_s$. Since $e^{-tE_s}$ preserves $\Z_s$ and $T_s$ vanishes
there, \eqref{kd:zero-coupling} follows.

Finally, $E_s\W_{s,\ell}\subseteq\K_{s,\ell+2}$. For $\ell\geq1$,
Hermiticity and $E_s\K_{s,\ell-1}\subseteq\K_{s,\ell}$ imply
$E_s\W_{s,\ell}\perp\K_{s,\ell-1}$. This proves
\eqref{kd:nearest-layers}; the case $\ell=0$ is immediate.
Equation~\eqref{kd:decomposition} follows by taking successive
orthogonal differences of the nested Krylov spaces.
\end{proof}

Choose an orthonormal basis in each nonzero layer and in each $\Z_s$.
Their union is an orthonormal basis of $\B$. At any fixed depth, the
number of layer vectors is at most
\begin{equation}\label{kd:total-layer-size}
 \sum_{s=0}^{S-1}\dim\W_{s,\ell}
 \leq\sum_{s=0}^{S-1}\rank T_s.
\end{equation}
Thus a bound $\rank T_s\leq m$ gives at most $mS$ vectors per depth.
Lemma~\ref{kd:interval-lemma} gives the following.
\begin{corollary}\label{cor:kd-shell-depth}
Consider a shell $s$ and suppose all eigenvalues of $E_s$ lie in the interval
$[e_s^\downarrow,e_s^\uparrow]\subseteq (0,1]$.
For any unit vector $v\in\W_{s,\ell}$ and any  $\beta\in[0,\infty]$ one has
\begin{equation}\label{kd:shell-depth-bound}
 s_\beta(v)\leq\frac{\norm{T_s}}{e_s^\downarrow}
 \left(\frac{e_s^\uparrow-e_s^\downarrow}
 {e_s^\uparrow+e_s^\downarrow}\right)^\ell.
\end{equation}
In particular, $s_\beta(v)\leq\tfrac12 3^{-\ell}$ if
$e_s^\uparrow\leq2e_s^\downarrow$ and
$\norm{T_s}\leq e_s^\downarrow/2$.
\end{corollary}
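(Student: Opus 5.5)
The plan is to reduce the corollary to the first case of Lemma~\ref{kd:interval-lemma}, applied to the single shell $s$. First, for $v\in\B_s$ the definition of a shelled Krylov decomposition gives $Te^{-tE}v=T_se^{-tE_s}v$ for all $t\ge 0$. Hence $s_\beta(v)$ computed with the data $(\A,\B,E,T,\beta)$ equals the integrated coupling computed with the restricted data $(\A,\B_s,E_s,T_s,\beta)$. Note also that $\B_s$ is $E$-invariant, so $E_sv\in\B_s$ and the identity propagates through all powers and exponentials.

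Next I will check the Krylov depth. By Lemma~\ref{kd:geometry}, every vector in $\W_{s,\ell}$ has Krylov depth $\ell$, meaning $T_sE_s^jv=TE^jv=0$ for $0\le j<\ell$. This is exactly condition \eqref{kd:depth} for the restricted data. The spectrum of $E_s$ lies in $[e_s^\downarrow,e_s^\uparrow]$ with $e_s^\downarrow>0$ by hypothesis.

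Applying \eqref{kd:interval-bound} to the restricted data therefore yields \eqref{kd:shell-depth-bound}, with $\norm{T_s}$ in place of $\norm{T}$. This bound holds for every $\beta\in[0,\infty]$, since the proof of that lemma bounds $s_\beta\le s_\infty$ by integrating a nonnegative integrand over a longer interval.

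For the special case, I will substitute the two hypotheses. The condition $e_s^\uparrow\le 2e_s^\downarrow$ gives
\[
\frac{e_s^\uparrow-e_s^\downarrow}{e_s^\uparrow+e_s^\downarrow}\le\frac{1}{3},
\]
because the ratio $(x-1)/(x+1)$ is increasing in $x=e_s^\uparrow/e_s^\downarrow$ and equals $1/3$ at $x=2$. The condition $\norm{T_s}/e_s^\downarrow\le 1/2$ supplies the prefactor, giving $s_\beta(v)\le\tfrac12 3^{-\ell}$.

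There is no real obstacle here. The only point needing care is justifying the restriction to the shell, namely that the depth condition and the integrand depend only on $(E_s,T_s)$. This follows from the $E$-invariance of $\B_s$.
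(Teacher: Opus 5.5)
Your proposal is correct and follows the paper's own argument. The paper obtains this corollary directly by applying the first case of Lemma~\ref{kd:interval-lemma} to the shell data $(E_s,T_s)$, with the depth condition on $\W_{s,\ell}$ supplied by Lemma~\ref{kd:geometry}, and your explicit checks (that the integrand and the depth conditions involve only $(E_s,T_s)$, and that $(x-1)/(x+1)\le 1/3$ for $x\in[1,2]$) fill in the routine details the paper leaves implicit.
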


On a shell with $0\leq E_s\leq e_s^\uparrow I$, the finite-time
estimate instead gives
\begin{equation}\label{kd:low-shell-bound}
 s_\beta(v)\leq
 \frac{(\beta e_s^\uparrow/4)^{\ell+1}}{(\ell+1)!}
 \leq\tfrac12\,3^{-\ell}
\end{equation}
for any unit vector $v\in\W_{s,\ell}$
provided $\beta<\infty$, $\beta e_s^\uparrow\leq1$, and
$\norm{T_s}\leq e_s^\uparrow/2$.
Indeed, the first inequality follows from \eqref{kd:finite-time-bound},
and the second uses
$4^{-(\ell+1)}/(\ell+1)!\leq\tfrac12 3^{-\ell}$.
If $e_s^\uparrow=0$, the claim follows from $T_s=0$.

\subsection{Opening a bath gap}

    We now discuss how to create an artificial  bath gap  by replacing
each bath excitation energy \(e_j\) with
$\max\{e_j,\omega\}$, where $0<\omega\le 1$ is the desired bath gap.
A rounding argument then shows that this modification changes the
ground energy by at most \(m\omega\). \footnote{Strictly speaking, only positive one-particle energies below $\omega$ need to be rounded up. The exact zero-energy subspace splits into an at-most-$m$-dimensional part visible to the interaction, which can be retained explicitly, and a complementary part that decouples and can be removed without changing the ground energy. We round up every one-particle energy below $\omega$ for convenience.}
Let $\beta_1,\ldots,\beta_n$ be the annihilation operators diagonalizing $H_0$ such that 
\[
H_0=\sum_{j=1}^n e_j \beta_j^\dag \beta_j.
\]
For $j\in [n]$, define
\begin{equation}\label{ge:rounded-H}
 \widetilde e_j=\max\{e_j,\omega\},\quad
 \widetilde H_0=\sum_{j=1}^n\widetilde e_j \beta_j^\dagger \beta_j,
 \qquad \widetilde H=\widetilde H_0+V.
\end{equation}
For subsequent one-particle calculations, write
\[
 h=\operatorname{diag}(e_1,\ldots,e_n),\qquad
 \widetilde h=\operatorname{diag}(\widetilde e_1,\ldots,\widetilde e_n).
\]
Thus $\omega\Id\preceq\widetilde h\preceq\Id$.
\begin{lemma}\label{ge:rounding-lemma}
The rounded Hamiltonian $\widetilde{H}$ of Eq.~\eqref{ge:rounded-H} satisfies
$$
H\preceq\widetilde H,\qquad 0\leq E_1(\widetilde H)-E_1(H)\leq m\omega.
$$
\end{lemma}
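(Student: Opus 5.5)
The operator inequality $H\preceq\widetilde H$ is immediate. Since $\widetilde e_j\ge e_j$ for every $j$, we have
\[
\widetilde H-H=\sum_j(\widetilde e_j-e_j)\beta_j^\dagger\beta_j\succeq 0 .
\]
This gives $E_1(H)\le E_1(\widetilde H)$. The real work is the upper bound $E_1(\widetilde H)\le E_1(H)+m\omega$. The plan is to pass through an intermediate Hamiltonian in which all low energies are set to zero, and to exploit the fact that the interaction sees at most $m$ low-energy directions.

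\textbf{Step 1: flatten the low energies to zero.} Let $L=\{j:e_j<\omega\}$ and let $\mathcal L=\Span\{\text{basis vectors } j\in L\}$ be the corresponding one-particle subspace. Define
\[
H'=\sum_{j\notin L}e_j\beta_j^\dagger\beta_j+V .
\]
Then $H'\preceq H$, so $E_1(H')\le E_1(H)$. We also have
\[
\widetilde H=H'+\omega\sum_{j\in L}\beta_j^\dagger\beta_j = H'+\omega N_{\mathcal L},
\]
where $N_{\mathcal L}$ is the number operator of $\mathcal L$. Since $N_{\mathcal L}=d\Gamma(\Pi_{\mathcal L})$ is basis-independent within $\mathcal L$, we may rotate freely inside $\mathcal L$. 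Split
\[
\mathcal L=\mathcal L_1\oplus\mathcal L_2,\qquad \mathcal L_1:=\Pi_{\mathcal L}\Aimp,
\]
so that $\dim\mathcal L_1\le m$ and $\mathcal L_2=\mathcal L\cap\Aimp^\perp$.

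\textbf{Step 2: the modes in $\mathcal L_2$ decouple from $H'$.} Take $z\in\mathcal L_2$. First, $[V,\beta(z)]=0$ by \eqref{eq:impurity-commutant}, because $z\in\Aimp^\perp$. Second, $\beta(z)$ commutes with $\sum_{j\notin L}e_j\beta_j^\dagger\beta_j$, because $z$ is supported on $L$. So $\beta(z)$ and $\beta(z)^\dagger$ commute with $H'$, and hence $N_{\mathcal L_2}$ commutes with $H'$. The projector onto the vacuum of the $\mathcal L_2$ modes is a function of these operators (a product of $\beta(z_i)\beta(z_i)^\dagger$ over an orthonormal basis of $\mathcal L_2$), so it also commutes with $H'$.

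\textbf{Step 3: choose a good ground state.} Let $\psi$ be a ground state of $H'$. Fix an orthonormal basis $z_1,\dots,z_k$ of $\mathcal L_2$ and repeatedly apply the annihilators $\beta(z_i)$. Each application commutes with $H'$, so it either gives zero or keeps us in the ground eigenspace. After normalization we obtain a ground state $\psi'$ of $H'$ with $N_{\mathcal L_2}\psi'=0$; it is nonzero by choosing, at each stage, whether or not to apply $\beta(z_i)$. Since $\mathcal L_1\perp\mathcal L_2$, the number operator splits as $N_{\mathcal L}=N_{\mathcal L_1}+N_{\mathcal L_2}$, and $N_{\mathcal L_1}\preceq \dim\mathcal L_1\le m$. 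Therefore
\[
E_1(\widetilde H)\le\langle\psi'|\widetilde H|\psi'\rangle = E_1(H')+\omega\langle\psi'|N_{\mathcal L_1}|\psi'\rangle \le E_1(H)+m\omega .
\]

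\textbf{Main obstacle.} The only delicate point is Step 2 together with the construction of $\psi'$. One must check that $\beta(z)$ for $z\in\mathcal L\cap\Aimp^\perp$ genuinely commutes with the flattened $H'$. This uses both $z\in\Aimp^\perp$ and the fact that the energies on $\mathcal L$ have been set to zero, which is exactly why the intermediate $H'$ is introduced rather than working with $H$ directly. One must also handle the sign bookkeeping when annihilating the $\mathcal L_2$ modes. The dimension count $\dim\Pi_{\mathcal L}\Aimp\le m$ is what makes the final error independent of $n$.
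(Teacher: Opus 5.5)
Your proof is correct and follows the paper's own argument. Both use the intermediate Hamiltonian $H^-$ (your $H'$) with the sub-$\omega$ energies set to zero, split the low-energy subspace into $P_\omega\Aimp$ and its orthogonal complement inside $\ran P_\omega$, pick a ground state of $H^-$ with the complementary modes empty, and then bound $\widetilde H-H^-=\omega N_\omega$ by $m\omega$ on that state. Your version differs only in being more explicit about why the relevant annihilators commute with $H^-$ and how to build the emptied ground state.
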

\begin{proof}
Let us define $J_\omega := \{j\in[n]:e_j<\omega\}$ to be the set of indices for which the corresponding energy is lower than the threshold i.e., $e_j < \omega$. Let $P_{\omega}$ be the projector onto the corresponding coordinate subspace. We now define
$$
H^- := \sum_{j\in[n] \setminus J_{\omega}} e_j \beta_j^\dagger \beta_j+V, \qquad N_{\omega}=\sum_{j\in J_{\omega}} \beta_j^\dagger \beta_j.
$$
Since $\dim P_{\omega} \Aimp \leq m$, most of the low-energy space is untouched by either term of $H^-$: the modes spanning $\ran P_{\omega} \ominus P_{\omega} \Aimp$ (where $U\ominus W:=U\cap W^\perp$ for $W\subseteq U$) carry zero quadratic energy under $H^-$ and their
one-particle vectors lie in $\Aimp^\perp$, so they don't feel the interaction either. We may therefore pick a ground state $\theta$ of $H^-$ in which all of these unused modes are empty — this is always possible, since their annihilators commute with $H^-$ and so send a ground state either
to zero or to another ground state.
For this $\theta$ we get $\bra{\theta}N_{\omega} \ket{\theta}\leq m$, because only the (at most $m$-dimensional) modes in $P_{\omega}\Aimp$ remain available to be occupied. Now observe that
\[
    H-H^-=\sum_{j\in J_\omega} e_j\,\beta_j^\dagger\beta_j\succeq0,
    \qquad
    \widetilde H-H=\sum_{j\in J_\omega}(\omega-e_j)\beta_j^\dagger\beta_j\succeq0,
\]
so \(H^-\preceq H\preceq\widetilde H\).  Moreover,
\[
    \widetilde H-H^-=\omega\sum_{j\in J_\omega}\beta_j^\dagger\beta_j
    =\omega N_\omega.
\]
so we have that
\[
 E_1(\widetilde H)
 \leq\bra{\theta}\widetilde H\ket{\theta}
 \leq E_1(H^-)+m\omega\leq E_1(H)+m\omega.
\]
The remaining inequality $E_1(H)\le E_1(\widetilde H)$ is immediate from
$H\preceq\widetilde H$.
\end{proof}

\subsection{Zero-temperature  Krylov decomposition}
\label{ge:shelled-decomposition}

Let us specialize abstract Krylov decompositions of Section~\ref{sec:krylov-decompositions}
to the rounded impurity model Hamiltonian. We shall use $\beta=\infty$ version of the Krylov decompositions
since we are interested in ground state properties.

\subsubsection{Energy shells and the impurity boundary}
\label{sec:shell1}\label{ge:shells}
We begin by decomposing the spectrum of $\widetilde h$ into energy shells of increasing width, reminiscent of the logarithmic discretization scheme used in renormalization group methods~\cite{wilson1975renormalization,bulla2008numerical}. Let
\[
 S=1+\left\lfloor\log_2(1/\omega)\right\rfloor
\]
be the total number of shells. For $s=0,\ldots,S-1$, define
\[
 I_s=[2^s\omega,\,2^{s+1}\omega),
 \qquad
 \Pi_s=\sum_{\substack{j\in[n]\\ \widetilde e_j\in I_s}}|j\rangle\langle j|,
 \qquad
 \mathcal H_s=\ran\Pi_s \subseteq \C^n.
\]
Empty shells are allowed. The spaces $\mathcal H_s$ are mutually
orthogonal, have direct sum $\C^n$, and are preserved by $\widetilde h$.
The spectrum of $\widetilde h|_{\mathcal H_s}$ lies in
$[2^s\omega,\min\{2^{s+1}\omega,1\}]$. 

The boundary and residual-bath subspaces in shell $s$ are
\begin{equation}\label{ge:shell-AB}
 \A_s=\Pi_s\Aimp,
 \qquad
 \B_s=\mathcal H_s\ominus\A_s.
\end{equation}
Note that $\A_s$ is generally not a subspace of $\Aimp$,
since $\Pi_s$ need not preserve $\Aimp$. Define
\begin{equation}\label{ge:AB}
 \A=\bigoplus_{s=0}^{S-1}\A_s,
 \qquad
 \B=\bigoplus_{s=0}^{S-1}\B_s,
 \qquad
 n_A=\dim\A,\quad n_B=\dim\B=n-n_A.
\end{equation}
Then $\C^n=\A\oplus\B$ orthogonally, $\dim\A_s\leq m$,
$n_A\leq mS$, and $\Aimp\subseteq\A$.
In particular, $\B\subseteq\Aimp^\perp$ and $V$ is supported entirely on $\A$.
Throughout the proof, \emph{boundary} means $\A$ and
\emph{residual bath} means $\B$.

\subsubsection{Boundary--bath decomposition}
\label{sec:shell2}

Relative to the decomposition $\C^n=\A\oplus\B$, write $ \widetilde h$ as a block matrix
\[
 \widetilde h=
 \begin{pmatrix}
  \widetilde h_{\A\A}&\widetilde h_{\A\B}\\
  \widetilde h_{\A\B}^{\dagger}&\widetilde h_{\B\B}
 \end{pmatrix}.
\]
We instantiate the abstract Krylov decomposition template of Section~\ref{sec:krylov-decompositions}  with
\begin{equation}\label{ge:template-data}
 E=\widetilde h_{\B\B},\qquad
 T=\widetilde h_{\A\B},\qquad
 \beta=\infty,
\end{equation}
and with the spaces $\A$, $\B$, and $\B_s$ defined above.
Each $\B_s$ is invariant under $E$, and $\omega\Id_{\B}\preceq E\preceq\Id_{\B}$.
The template's shell projectors are $P_s=\Pi_s|_{\B}$;
its shell maps are $E_s=E|_{\B_s}$ and $T_s=T|_{\B_s}$.
Here $T_s$ has target $\A$, as in the template, but we have the stronger condition that
$\ran T_s\subseteq\A_s$.

Choose orthonormal bases $u_1,\ldots,u_{n_A}$ of $\A$
and $v_1,\ldots,v_{n_B}$ of $\B$, and set
$a_\ell=\beta(u_\ell)$ and $b_j=\beta(v_j)$.
In these bases, the rounded Hamiltonian has the form
\eqref{ge:boundary-H}, with
\[
 H_A=V+\sum_{\ell,\ell'=1}^{n_A}
 (\widetilde h_{\A\A})_{\ell\ell'}
 a_\ell^\dagger a_{\ell'}.
\]
Since $H_A$ is even and supported on the boundary, and
$E\succeq\omega\Id_{\B}$ with $\omega>0$,
Lemma~\ref{ge:pull} applies.

\begin{lemma}[Shell block bounds]\label{ge:block-lemma}
For every $s\in\{0,\ldots,S-1\}$ one has
\begin{equation}\label{ge:block-bounds}
 2^s\omega\Id_{\B_s}\preceq E_s
 \preceq\min\{2^{s+1}\omega,1\}\Id_{\B_s},
 \qquad
 \norm{T_s}\leq\frac{2^s\omega}{2}.
\end{equation}
\end{lemma}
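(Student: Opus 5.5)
The two claims are proved separately: the spectral bounds on $E_s$ come from compressing $\widetilde h$ to a shell, and the bound on $T_s$ comes from a commutator identity combined with the fact that $\B_s$ is orthogonal to $\A_s$ inside $\mathcal H_s$.

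\emph{Spectral bounds.} The spaces $\mathcal H_s$ are spectral subspaces of $\widetilde h$, so $\widetilde h$ preserves $\mathcal H_s$. By definition $\B_s\subseteq\mathcal H_s$, and $E_s$ is the compression $P_{\B_s}\widetilde h|_{\B_s}$. Hence for any unit vector $v\in\B_s$, the quantity $\langle v,E_sv\rangle=\langle v,\widetilde h v\rangle$ is a Rayleigh quotient of $\widetilde h|_{\mathcal H_s}$. The spectrum of that restriction lies in $[2^s\omega,\min\{2^{s+1}\omega,1\}]$, which gives the claimed two-sided bound on $E_s$.

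\emph{Locating the coupling.} For $v\in\B_s$ we have $\widetilde hv\in\mathcal H_s=\A_s\oplus\B_s$. Therefore $T_sv=P_\A\widetilde h v=P_{\A_s}\widetilde hv$, confirming $\ran T_s\subseteq\A_s$. Write $c_s:=\tfrac32\,2^s\omega$ for the midpoint of $I_s$, so that $\|(\widetilde h-c_s)|_{\mathcal H_s}\|\le\tfrac12 2^s\omega$; this uses the upper end $2^{s+1}\omega$, and the cap at $1$ only shrinks the interval. Since $v\perp\A_s$, we get $P_{\A_s}c_sv=0$, and hence
\[
T_sv=P_{\A_s}(\widetilde h-c_s)v .
\]
It follows that $\|T_s\|\le\|(\widetilde h-c_s)|_{\mathcal H_s}\|\le\tfrac12 2^s\omega$. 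The key step is the observation that any multiple of the identity can be subtracted for free, because $\B_s$ and $\A_s$ are orthogonal inside a single $\widetilde h$-invariant shell; this is what yields the factor $\tfrac12$.

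The one delicate point is the half-open interval at the top shell. There the true upper edge is $\min\{2^{s+1}\omega,1\}\le 2^{s+1}\omega$, and the midpoint argument above already handles it. I expect no real obstacle beyond keeping the projections $P_{\A_s}$ versus $P_\A$ straight.
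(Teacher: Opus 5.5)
Your proof is correct and follows essentially the same route as the paper. You obtain the bounds on $E_s$ by compressing $\widetilde h|_{\mathcal H_s}$ to $\B_s$. You then bound $\norm{T_s}$ by subtracting the shell midpoint $\tfrac32 2^s\omega$, which leaves the off-diagonal $\A_s$--$\B_s$ block unchanged, so its norm is at most $\tfrac12 2^s\omega$.
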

\begin{proof}
The bounds on $E_s$ follow by compressing
$\widetilde h|_{\mathcal H_s}$ to $\B_s$.
For the coupling bound, subtract the midpoint of the shell:
\[
 \left\|\widetilde h|_{\mathcal H_s}
       -\frac{3\cdot2^s\omega}{2}\Id_{\mathcal H_s}\right\|
 \leq\frac{2^s\omega}{2}.
\]
The off-diagonal block relative to
$\mathcal H_s=\A_s\oplus\B_s$ is still $T_s$, with its image
regarded as a subspace of $\A_s$. Its norm is therefore at most
$2^s\omega/2$.
\end{proof}

\subsubsection{Krylov layers and integrated coupling}
\label{sec:blockkrylov}\label{sec:krylovdecay}
Let $\K_{s,\ell}$, $\W_{s,\ell}$, and $\Z_s$ be the spaces
defined in \eqref{kd:krylov}, \eqref{kd:layers}, and
\eqref{kd:decoupled} for this instance of the template. 
Lemma~\ref{kd:geometry} gives
\begin{equation}\label{ge:layer-size}
 \dim\W_{s,\ell}\leq\rank T_s\leq\dim\A_s\leq m
\end{equation}
where the last inequality is strengthened in this case because $\ran T_s\subseteq\A_s$. It also provides the orthogonal decomposition
\[
 \C^n=\A\oplus
 \bigoplus_{s=0}^{S-1}
 \left(\bigoplus_{\ell=0}^{n_B-1}\W_{s,\ell}\oplus\Z_s\right).
\]
Choose an orthonormal basis in each nonzero layer and each $\Z_s$.
There are at most $mS$ layer vectors at any fixed depth.
Throughout this section, write $s(v)=s_\infty(v)$ for the integrated
coupling defined in \eqref{kd:integrated-coupling}.
It is finite for every $v\in\B$, since $E\succeq\omega\Id_{\B}$ and $\omega>0$.

\begin{lemma}[Integrated coupling by depth]\label{ge:depth}
For every $s\in\{0,\ldots,S-1\}$, $0\leq\ell<n_B$, and unit vector
$v\in\W_{s,\ell}$, one has
\begin{equation}\label{ge:depth-bound}
 s(v)\leq\tfrac12\,3^{-\ell}.
\end{equation}
For every $v\in\Z_s$, one has $s(v)=0$.
\end{lemma}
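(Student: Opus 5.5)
The plan is to derive Lemma~\ref{ge:depth} directly from the abstract template results, using the concrete shell bounds of Lemma~\ref{ge:block-lemma}. Fix a shell $s$ and a unit vector $v\in\W_{s,\ell}$. By Lemma~\ref{kd:geometry}, every vector of $\W_{s,\ell}$ has Krylov depth $\ell$, i.e.\ $T_sE_s^jv=0$ for $0\le j<\ell$. Since $\B_s$ is $E$-invariant, we have $Te^{-tE}v=T_se^{-tE_s}v$. Hence $s(v)=s_\infty(v)$ depends only on the pair $(E_s,T_s)$, and we may apply Corollary~\ref{cor:kd-shell-depth} (equivalently, Lemma~\ref{kd:interval-lemma} with $\beta=\infty$) to the shell restriction.

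Next, verify the hypotheses of the corollary. We take $e_s^\downarrow=2^s\omega$ and $e_s^\uparrow=\min\{2^{s+1}\omega,1\}$. Lemma~\ref{ge:block-lemma} gives $e_s^\downarrow I\preceq E_s\preceq e_s^\uparrow I$ with $e_s^\downarrow>0$, because $\omega>0$. We also need $e_s^\uparrow\le 1$ and $e_s^\downarrow\le e_s^\uparrow$. The first holds by construction. For the second, note that $2^s\omega\le 1$ for $s\le S-1=\lfloor\log_2(1/\omega)\rfloor$.

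The same lemma gives $\norm{T_s}\le e_s^\downarrow/2$, and clearly $e_s^\uparrow\le 2e_s^\downarrow$. The corollary then yields
\[
s(v)\le \frac{\norm{T_s}}{e_s^\downarrow}\left(\frac{e_s^\uparrow-e_s^\downarrow}{e_s^\uparrow+e_s^\downarrow}\right)^{\ell}\le \frac12\cdot 3^{-\ell},
\]
since the ratio $(x-1)/(x+1)$ with $x=e_s^\uparrow/e_s^\downarrow\in[1,2]$ is at most $1/3$. An empty shell or zero layer makes the statement vacuous.

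For the decoupled part, a vector $v\in\Z_s$ has $s_\beta(v)=0$ for all $\beta$ by \eqref{kd:zero-coupling} in Lemma~\ref{kd:geometry}. The underlying reason is that $\Z_s$ is $E_s$-invariant and lies in $\ker T_s$, so $T_se^{-tE_s}v=0$ for all $t$.

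There is essentially no obstacle here. The only point needing care is checking that the shell spectral interval sits inside $(0,1]$, including the top shell where the cap at $1$ applies, and that the coupling bound $\norm{T_s}\le 2^s\omega/2$ from Lemma~\ref{ge:block-lemma} is measured against the lower endpoint $e_s^\downarrow=2^s\omega$.
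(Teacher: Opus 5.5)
Your proposal is correct and follows the paper's proof: Krylov depth from Lemma~\ref{kd:geometry}, then Corollary~\ref{cor:kd-shell-depth} applied to $(E_s,T_s)$ with the bounds of Lemma~\ref{ge:block-lemma}, and the $\Z_s$ claim from \eqref{kd:zero-coupling}. Your explicit check that $2^s\omega\le 1$ for $s\le S-1$, so that the shell interval lies in $(0,1]$ including the capped top shell, is a detail the paper leaves implicit.
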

\begin{proof}
Every vector in $\W_{s,\ell}$ has Krylov depth $\ell$ by
Lemma~\ref{kd:geometry}. Applying \Cref{cor:kd-shell-depth}
to the shell maps $E_s,T_s$ and using Lemma~\ref{ge:block-lemma}, proves \eqref{ge:depth-bound}.
The assertion for $\Z_s$ is part of Lemma~\ref{kd:geometry}.
\end{proof}

\subsection{From Krylov depth to ground-state occupations}
\subsubsection{Joint occupation bounds}
We now combine the boundary, Krylov-layer, and decoupled bath bases
into one ordered orthonormal basis $\xi_1,\ldots,\xi_n$ of $\C^n$.
The orthogonal decomposition established above ensures that the
following lists together form a complete orthonormal basis.
\begin{enumerate}
\item \emph{Boundary modes.} The first $n_A$ vectors are
$\xi_\ell=u_\ell$ for $1\leq\ell\leq n_A$, where
$u_1,\ldots,u_{n_A}$ is the orthonormal basis of the boundary space $\A$
used in Section~\ref{sec:shell2}.

\item \emph{Krylov-layer modes.} Next list the chosen orthonormal
bases of the nonzero spaces $\W_{s,\ell}$. Order the shells
as $s=0,1,\ldots,S-1$. Within each shell, order the layers as
$\ell=0,1,\ldots,n_B-1$. This is lexicographic order on $(s,\ell)$,
with the shell index compared first. Fix any order for the basis
vectors within each layer.

\item \emph{Decoupled bath modes.} Finally, for
$s=0,1,\ldots,S-1$, list an orthonormal basis of
$\Z_s=\B_s\ominus\K_{s,n_B}$, fixing any order within that basis.
These vectors lie in the part of the residual bath that cannot couple
to the boundary under free bath propagation:
$T_se^{-tE_s}v=0$ for $v\in\Z_s$ and every $t\geq0$.
This is the meaning of \emph{decoupled} here.
\end{enumerate}
The occupation bounds do not depend on this choice of order; we fix
it to specify the ordered products that define Fock basis states below.

Each vector $\xi_j$ has a position $j$ in this combined basis.
Let $\mathcal J\subseteq[n]$ be the set of positions occupied by
the Krylov-layer basis vectors in item~2, and let
$\mathcal D\subseteq[n]$ be the set of positions occupied by the
decoupled bath basis vectors in item~3. Thus
\[
 [n]=\{1,\ldots,n_A\}\cup \mathcal J \cup \mathcal D
\]
where the unions are disjoint.
For $j\in\mathcal J$, the vector $\xi_j$ belongs to a unique chosen
layer $\W_{s,\ell}$; define $\ell(j)=\ell$ to be its
Krylov depth. The position $j$ and the depth $\ell(j)$ are different
indices: many basis vectors can have the same depth.

For each $j\in[n]$, let
\[
 f_j=\beta(\xi_j),\qquad n_j=f_j^\dagger f_j.
\]
Here $f_j$ annihilates the fermionic mode associated with the
one-particle vector $\xi_j$, and $n_j$ is the projection onto states
in which that mode is occupied. There are at most $mS$ Krylov-layer
modes at any given depth, by \eqref{ge:layer-size}.
From now on, $\psi$ is any normalized global ground state of
$\widetilde H$.

\begin{corollary}[Joint occupation probabilities]\label{ge:joint}
For every subset $I\subseteq\mathcal J$,
\begin{equation}\label{ge:joint-depth}
 \bra{\psi}\prod_{j\in I}n_j\ket{\psi}
 \leq\prod_{j\in I}\left(\tfrac14 9^{-\ell(j)}\right).
\end{equation}
Every decoupled mode is empty: $n_j\psi=0$ for $j\in\mathcal D$.
\end{corollary}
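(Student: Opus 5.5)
The first claim follows from Lemma~\ref{ge:pull} combined with Lemma~\ref{ge:depth}. The setting of Section~\ref{sec:shell2} matches the hypotheses of Lemma~\ref{ge:pull}. The rounded Hamiltonian $\widetilde H$ has the boundary--bath form \eqref{ge:boundary-H} with respect to the modes $a_\ell=\beta(u_\ell)$ and $b_j=\beta(v_j)$. Here $H_A$ is even and supported on $\A$, and $E=\widetilde h_{\B\B}\succeq\omega\Id_{\B}$ is positive definite. Each Krylov-layer vector $\xi_j$ with $j\in\mathcal J$ lies in $\B$. The mode $f_j=\beta(\xi_j)$ therefore coincides with $b(\xi_j)$ in the notation of Lemma~\ref{ge:pull}, after expanding $\xi_j$ in the basis $v_1,\ldots,v_{n_B}$.

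For $I\subseteq\mathcal J$, the vectors $\{\xi_j\}_{j\in I}$ are orthonormal, so \eqref{ge:joint-general} gives
\[
\bra{\psi}\prod_{j\in I}n_j\ket{\psi}\le\prod_{j\in I}s(\xi_j)^2 .
\]
Each $\xi_j$ is a unit vector in $\W_{s,\ell(j)}$ for some shell $s$. Lemma~\ref{ge:depth} then yields $s(\xi_j)\le\tfrac12 3^{-\ell(j)}$, and squaring gives $\tfrac14 9^{-\ell(j)}$, which is \eqref{ge:joint-depth}. The order of the product is immaterial because the $n_j$ pairwise commute.

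For the decoupled modes, take $j\in\mathcal D$, so that $\xi_j\in\Z_s$. Lemma~\ref{ge:depth} (equivalently \eqref{kd:zero-coupling}) gives $s(\xi_j)=0$. Applying \eqref{ge:pull-bound} with $p=1$ yields $\norm{f_j\psi}\le s(\xi_j)=0$, hence $f_j\psi=0$ and $n_j\psi=f_j^\dagger f_j\psi=0$.

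The only point needing care is the identification of $\beta(\xi_j)$ with $b(\xi_j)$, that is, checking that the conjugation convention $b(v)=\sum_j\overline{v}_jb_j$ is consistent with $\beta(z)=\sum_k\overline{z_k}\beta_k$. This holds because $b_j=\beta(v_j)$ and $\beta(\cdot)$ is antilinear in its argument, so $\beta\bigl(\sum_j c_jv_j\bigr)=\sum_j\overline{c_j}\beta(v_j)$. Beyond this, the argument is a direct substitution of earlier results.
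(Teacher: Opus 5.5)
Your proposal is correct and follows the paper's proof: you apply the joint bound \eqref{ge:joint-general} of Lemma~\ref{ge:pull} to the orthonormal Krylov-layer vectors, bound each $s(\xi_j)$ by $\tfrac12 3^{-\ell(j)}$ via Lemma~\ref{ge:depth}, and use $s(v)=0$ on $\Z_s$ to show that decoupled annihilators kill $\psi$. Your explicit check that $\beta(\xi_j)=b(c)$, with $c$ the coordinate vector of $\xi_j$ in the basis $v_1,\ldots,v_{n_B}$, is a detail the paper leaves implicit, and it is right.
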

\begin{proof}
Apply Lemma~\ref{ge:pull} to the orthonormal layer vectors and use
Lemma~\ref{ge:depth}. A decoupled vector has $s(v)=0$, so its
annihilator kills $\psi$.
\end{proof}

\subsection{Weighted Fock-space truncation}
\label{sec:weightfoctruncation}
Let \(\Omega\) denote the common vacuum of
\(\beta_1,\ldots,\beta_n\), and hence also of
\(f_1,\ldots,f_n\). 
We reserve $\ket{x}$ for the occupation basis of the original modes
$a_1,\ldots,a_n$ from Section~\ref{sec:fermionic-systems}.
Define the unitary $U$ on $\calF_n$ by
\begin{equation}\label{ge:bogoliubov-U}
 U\ket{x}=\prod_{j=1}^n(f_j^\dagger)^{x_j}\ket{\Omega}
 \qquad(x\in\{0,1\}^n).
\end{equation}
The states $U\ket{x}$ form the orthonormal Krylov Fock basis.
The ordered products defining the two occupation bases obey the
same sign rules, so
$Ua_jU^\dagger=f_j$, which is a linear combination of the $a_k$ and
$a_k^\dagger$. Hence $U$ is a Bogoliubov transformation. It is
determined by the canonical transformation of
Section~\ref{sec:fermionic-systems} and the one-particle basis
$\xi_1,\ldots,\xi_n$. Its linear action on the modes specifies $U$
up to an irrelevant overall phase using $O(n^2)$ coefficients.

Define
a weighted occupation number operator for the Krylov-layer modes,
\begin{equation}\label{ge:Q}
 Q=\sum_{j\in\mathcal J}(\ell(j)+1)n_j.
\end{equation}
For $x\in\{0,1\}^n$, let $Q(x)$ be the eigenvalue of $Q$ on
the Krylov Fock state $U\ket{x}$:
\begin{equation}
Q(x)=\bra{x}U^\dagger QU\ket{x}
    =\sum_{j\in\mathcal J}(\ell(j)+1)x_j.
\end{equation}
For an integer  cutoff $R\geq0$, define a set of Fock configurations 
\begin{equation}
 \mathcal X_R=\{x\in\{0,1\}^n:Q(x)\leq R \quad \mbox{and} \quad
                   x_j=0\text{ for every }j\in\mathcal D\}.
\end{equation}        
Let $P_R$ project onto the span of the retained Krylov Fock states,
\[
 P_R=U\left(\sum_{x\in\mathcal X_R}\ket{x}\bra{x}\right)U^\dagger.
\]
Let $\psi$ be any fixed normalized ground state of $\widetilde H$
and $p_R$ be its weight outside this retained subspace,
\[
p_R=\norm{(I-P_R)\psi}^2
   =1-\sum_{x\in\mathcal X_R}|\bra{x}U^\dagger\ket\psi|^2.
\]
Our goal is to choose $R$ large enough that $p_R$ is negligible and, at the same time, the dimension $|\mathcal X_R|$ is 
sufficiently small.

\begin{proposition}[\bf Tail and dimension bounds]\label{ge:tail}
For every integer $R\geq0$,
\begin{equation}\label{ge:tail-dimension}
 p_R\leq e^{mS} 3^{-R}\quad \mbox{and} \quad
 |\mathcal X_R|\leq e^{(\log2+1/2)mS}3^R.
\end{equation}
\end{proposition}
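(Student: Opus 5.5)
The plan is to prove both bounds in \eqref{ge:tail-dimension} by exponential-moment (Chernoff-type) arguments in base $3$. The tail bound uses the joint occupation estimates of Corollary~\ref{ge:joint}, and the dimension bound uses a generating-function count. Throughout, the only structural input is that for every depth $\ell$ there are at most $mS$ Krylov-layer positions $j\in\mathcal J$ with $\ell(j)=\ell$, which follows from \eqref{ge:layer-size}.

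\emph{Tail bound.} First, Corollary~\ref{ge:joint} gives $n_j\psi=0$ for $j\in\mathcal D$. Hence $\psi$ lies in the span of Krylov Fock states with all decoupled modes empty, and $p_R=\bra{\psi}\mathbf 1[Q>R]\ket{\psi}$. Since $Q$ takes integer values, Markov's inequality gives $p_R\le 3^{-(R+1)}\bra{\psi}3^{Q}\ket{\psi}$. The number operators $n_j$ are commuting projectors, so
\[
3^{Q}=\prod_{j\in\mathcal J}\bigl(I+(3^{\ell(j)+1}-1)n_j\bigr)=\sum_{I\subseteq\mathcal J}\prod_{j\in I}(3^{\ell(j)+1}-1)\prod_{j\in I}n_j .
\]
I will take the expectation in $\psi$ term by term, apply \eqref{ge:joint-depth}, and refactor the sum over subsets into a product:
\[
\bra{\psi}3^Q\ket{\psi}\le\prod_{j\in\mathcal J}\Bigl(1+\tfrac14(3^{\ell(j)+1}-1)9^{-\ell(j)}\Bigr)\le\exp\Bigl(mS\sum_{\ell\ge0}\tfrac{3\cdot3^{-\ell}-9^{-\ell}}{4}\Bigr).
\]
The inner sum equals $\tfrac98-\tfrac{9}{32}=\tfrac{27}{32}<1$. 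This gives $p_R\le e^{mS}3^{-R-1}\le e^{mS}3^{-R}$.

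The step needing care is keeping the $-1$ in $3^{\ell+1}-1$. If one drops it, the constant in the exponent becomes $9/8$ and the stated $e^{mS}$ is lost. That is the only delicate point I anticipate.

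\emph{Dimension bound.} A configuration $x\in\mathcal X_R$ is determined by three pieces of data. The boundary occupations are arbitrary, giving at most $2^{n_A}\le 2^{mS}$ choices. The decoupled occupations are forced to zero. The layer occupations $(x_j)_{j\in\mathcal J}$ must satisfy $Q(x)\le R$. For the last count I use the standard weighting trick:
\[
\#\{y\in\{0,1\}^{\mathcal J}:Q(y)\le R\}\le\sum_{y}3^{R-Q(y)}=3^R\prod_{j\in\mathcal J}\bigl(1+3^{-(\ell(j)+1)}\bigr)\le3^R\exp\Bigl(mS\sum_{\ell\ge0}3^{-(\ell+1)}\Bigr)=3^Re^{mS/2}.
\]
Multiplying by $2^{mS}$ gives $|\mathcal X_R|\le e^{(\log2+1/2)mS}3^R$, as claimed. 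Enumerability is not part of this proposition; it follows separately from Lemma~\ref{lem:enum}.
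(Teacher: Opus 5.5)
Your proposal is correct and follows the paper's proof essentially step for step. It uses the same subset expansion of $\langle\psi|3^{Q}|\psi\rangle$ against Corollary~\ref{ge:joint} with $z_j=3^{\ell(j)+1}$ (giving the constant $27/32$), then Markov's inequality, and the same $3^{R-Q(x)}$ weighting times $2^{n_A}$ for the count. Your use of integrality of $Q$ to get the extra factor $3^{-1}$ is a harmless slight sharpening.
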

\begin{proof}
For any numbers $z_j\geq1$, $j\in\mathcal J$, expand the product
of commuting number projections and apply \eqref{ge:joint-depth}:
\begin{align}
 \bra{\psi}\prod_{j\in\mathcal J}z_j^{n_j}\ket{\psi}
 &=\sum_{I\subseteq\mathcal J}
     \left(\prod_{j\in I}(z_j-1)\right)
       \bra{\psi}\prod_{j\in I}n_j\ket{\psi}\leq\prod_{j\in\mathcal J}
        \left[1+(z_j-1)\tfrac14 9^{-\ell(j)}\right].
 \label{ge:generating}
\end{align}
All coefficients used in the inequality are nonnegative.
Set $z_j=3^{\ell(j)+1}$. Since there are at most $mS$ modes per Krylov depth,
\[
 \log\bra{\psi}3^Q\ket{\psi}
 \leq\frac {mS}4\sum_{\ell=0}^\infty(3^{\ell+1}-1)9^{-\ell}
 =\frac{27mS}{32}<mS.
\]
Decoupled modes are empty, so $p_R$ is the probability of measuring
$Q>R$. Markov's inequality proves the first bound in
\eqref{ge:tail-dimension}. For every retained configuration $x\in {\mathcal X}_R$ one has  $1\leq3^{R-Q(x)}$.
Sum this inequality over all $x\in {\mathcal X}_R$ and then allow arbitrary layer occupations:
\begin{align*}
 |\mathcal X_R| &\leq2^{n_A}3^R\prod_{j\in\mathcal J}
                  (1+3^{-(\ell(j)+1)})\leq2^{n_A}3^R\exp\left(mS\sum_{\ell=0}^\infty3^{-(\ell+1)}\right)
 \leq e^{(\log2+1/2)mS}3^R.
\end{align*}
\end{proof}
Let $n^*(R)$ be the number of modes $j\in [n]$ such that $x_j=1$
for some $x\in {\mathcal X}_R$. In other words,
$n^*(R)$ is the number of modes one needs to express the set of retained Fock configurations
${\mathcal X}_R$, ignoring $|0\rangle$ modes.
By definition of ${\mathcal X}_R$, only 
 layers of Krylov  depth $0,\ldots,R-1$ can be occupied.  Thus
 \begin{equation}\label{ge:active-count}
 n^*(R)=n_A+\sum_{s=0}^{S-1}\;
            \sum_{\ell=0}^{\min\{R,n_B\}-1}\dim\W_{s,\ell}
 \leq n_A+mSR\leq mS(R+1),
\end{equation}
with an empty inner sum when $R=0$. The retained space contains only $|{\mathcal X}_R|$
Fock configurations, rather than all $2^{n^*(R)}$ configurations
on these modes.

\subsection{Ground-energy approximation}
\label{sec:groundenergyapprox}
Since the basis change is unitary on the one-particle space,
\[
 N=\sum_{j=1}^n \beta_j^\dagger \beta_j=\sum_{j=1}^n f_j^\dagger f_j.
\]
The number $N$ counts bath quasiparticles in the diagonal convention
of the main-results section; it need not coincide with a particle
number defined before the initial Bogoliubov transformation.

\begin{lemma}\label{ge:energy-lemma}
If $R\geq0$ is an integer and $p_R<1$, define
$\phi_R=P_R\psi/\sqrt{1-p_R}$. Then
\begin{equation}\label{ge:energy-bound}
 0\leq\bra{\phi_R}\widetilde H\ket{\phi_R}-E_1(\widetilde H)
 \leq\frac{\sqrt{p_R}}{1-p_R}(mS+R).
\end{equation}
\end{lemma}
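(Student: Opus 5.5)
The plan is to compare the Rayleigh quotient of $\phi_R$ with the ground energy through the ground-state equation, exploiting that $P_R$ commutes with the interaction. The lower bound is immediate from the variational principle. For the upper bound, write $E_1:=E_1(\widetilde H)$ and $\widetilde H\psi=E_1\psi$. Then
\[
 (1-p_R)\bigl(\bra{\phi_R}\widetilde H\ket{\phi_R}-E_1\bigr)
 =\bra{\psi}P_R(\widetilde H-E_1)P_R\ket{\psi}.
\]
Since $(\widetilde H-E_1)\psi=0$, we have $(\widetilde H-E_1)P_R\psi=[\widetilde H,P_R]\psi$. Hence the right-hand side equals $\bra{\psi}P_R[\widetilde H,P_R]\ket{\psi}$.

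Now use the structure of $P_R$. The projector $P_R$ is diagonal in the Krylov Fock basis $U\ket{x}$ and places no restriction on the boundary occupations. Every operator supported on the boundary modes $f_1,\ldots,f_{n_A}$ maps $U\ket{x}$ to combinations of states with the same layer and decoupled occupations, and so the same $Q(x)$ and the same decoupled pattern. Because $V$ and $\widetilde h_{\A\A}$ are of this form, $[H_A,P_R]=0$, and only the quadratic part $\widetilde H_0$ survives in the commutator. Thus $[\widetilde H,P_R]=[\widetilde H_0,P_R]$, giving
\[
 \bra{\psi}P_R[\widetilde H_0,P_R]\ket{\psi}
 =\bra{\psi}P_R\widetilde H_0(I-P_R)\ket{\psi},
\]
where I used $P_R[\widetilde H_0,P_R]=P_R\widetilde H_0-P_R\widetilde H_0P_R$.

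Bound this by Cauchy--Schwarz:
\[
 \bigl|\bra{\psi}P_R\widetilde H_0(I-P_R)\ket{\psi}\bigr|
 \le\norm{\widetilde H_0P_R}\,\norm{(I-P_R)\psi}
 =\norm{\widetilde H_0P_R}\sqrt{p_R}.
\]
It remains to show $\norm{\widetilde H_0P_R}\le mS+R$. The operators $\widetilde H_0$ and $N$ commute (since $\widetilde h$ is diagonal in the $\beta_j$ basis) and satisfy $0\preceq\widetilde H_0\preceq N$ because $0\le\widetilde e_j\le1$. Also $N$ commutes with $P_R$, because $P_R$ is diagonal in the $f$-occupation basis and $N=\sum_j f_j^\dagger f_j$. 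On $\operatorname{ran}P_R$, each configuration has at most $n_A\le mS$ boundary particles, no decoupled particles, and at most $Q(x)\le R$ layer particles, since each layer particle costs at least $1$. So $NP_R\preceq(mS+R)P_R$.

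The one subtle point is that $\widetilde H_0$ need not commute with $P_R$. Still, $\widetilde H_0$ preserves the eigenspaces of $N$, so $\widetilde H_0P_R$ maps into the sector $N\le mS+R$, on which $\norm{\widetilde H_0}\le mS+R$. Hence $\norm{\widetilde H_0P_R}\le mS+R$. Dividing by $1-p_R$ gives \eqref{ge:energy-bound}.

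I expect the main obstacle to be verifying cleanly that $[H_A,P_R]=0$, including the fermionic sign issues. Even boundary operators preserve the layer occupations of every Fock state, and signs do not affect projector membership, so this should be straightforward to check. It is essential, however, because it is what makes the bound independent of $\norm V$.
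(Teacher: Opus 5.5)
Your proof is correct and follows essentially the paper's route: the ground-state equation reduces the energy excess to an off-diagonal matrix element between $P_R\psi$ and $(I-P_R)\psi$. The interaction drops out because $[P_R,V]=0$, and the bound then follows from Cauchy--Schwarz with $\norm{(I-P_R)\psi}=\sqrt{p_R}$ and $\norm{\widetilde H_0P_R}\le mS+R$ via $0\preceq\widetilde H_0\preceq N$ and $NP_R\preceq(mS+R)P_R$. One harmless slip: $P_R[\widetilde H_0,P_R]=P_R\widetilde H_0P_R-P_R\widetilde H_0=-P_R\widetilde H_0(I-P_R)$, the opposite sign from what you wrote, which does not matter since you take absolute values.
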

\begin{proof}
The interaction is even and supported on the boundary, whose
occupations are unrestricted. Hence $[P_R,V]=0$.
Every retained configuration has at most $n_A+R\leq mS+R$ particles,
so $\norm{NP_R}\leq mS+R$.
The commuting operators $\widetilde H_0$ and $N$ satisfy
$0\preceq\widetilde H_0\preceq N$ and
$\widetilde H_0^2\preceq N^2$, since each $\widetilde e_j$ lies
in $[0,1]$. Therefore
\begin{equation}\label{ge:H0P}
 \norm{\widetilde H_0P_R}\leq mS+R.
\end{equation}
Put $\chi_R=(\Id-P_R)\psi$. The ground-state equation gives
\begin{align}
 &(1-p_R)\bigl(\bra{\phi_R}\widetilde H\ket{\phi_R}
                             -E_1(\widetilde H)\bigr)=-\bra{\chi_R}\widetilde H P_R\ket{\psi}
              =-\bra{\chi_R}\widetilde H_0P_R\ket{\psi}.
 \label{ge:energy-identity}
\end{align}
The interaction term vanishes by commutation with $P_R$.
The variational principle makes the left-hand side nonnegative.
Taking absolute values on the right and using
$\norm{\chi_R}=\sqrt{p_R}$ and \eqref{ge:H0P} proves the result.
\end{proof}

\begin{proposition}\label{ge:R-proposition}
For $\varepsilon\in (0,1/2],$  let $R=4\left(mS+\left\lceil\log(16/\varepsilon)\right\rceil\right).
$
Then $p_R<1/2$~and
\begin{equation}\label{ge:R-bounds}
 \bra{\phi_R}\widetilde H\ket{\phi_R}-E_1(\widetilde H)
 \leq\varepsilon/4,
 \qquad |\mathcal X_R|\leq81e^{6mS}(16/\varepsilon)^7.
\end{equation}
\end{proposition}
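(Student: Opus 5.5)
This is a direct substitution of the chosen $R$ into Proposition~\ref{ge:tail} and Lemma~\ref{ge:energy-lemma}. Write $L=\lceil\log(16/\varepsilon)\rceil$, so $R=4(mS+L)$.

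\textbf{Tail bound.} Proposition~\ref{ge:tail} gives
\[
p_R\le e^{mS}3^{-R}=e^{mS}3^{-4mS}3^{-4L}.
\]
Since $3^4=81>e$, the factor $e^{mS}3^{-4mS}$ is at most $1$. This leaves
\[
p_R\le 3^{-4L}\le e^{-4L}\le(\varepsilon/16)^4<1/2.
\]

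\textbf{Energy bound.} Lemma~\ref{ge:energy-lemma} applies because $p_R<1$. With $1/(1-p_R)\le 2$ it yields
\[
\bra{\phi_R}\widetilde H\ket{\phi_R}-E_1(\widetilde H)\le 2\sqrt{p_R}\,(mS+R)=10\sqrt{p_R}\,(mS+L).
\]
I would keep the factor $e^{mS}3^{-4mS}$ instead of discarding it. Then
\[
\sqrt{p_R}\le (e/81)^{mS/2}\,9^{-2L},
\]
and $(e/81)^{mS/2}mS$ is bounded by a small absolute constant, since $x\,c^{x}$ is bounded for $c<1$.

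The $L$-dependent part satisfies $9^{-2L}\le(\varepsilon/16)^{2\log 9}$, which is much smaller than $\varepsilon$. The factor $L\,9^{-2L}$ is also controlled. Using $L\le\log(16/\varepsilon)+1$ and $\varepsilon\le1/2$, so $L\ge 4$, one can bound $10\,(mS+L)\sqrt{p_R}$ by
\[
C_1\,(\varepsilon/16)^{2}\bigl(1+\log(16/\varepsilon)\bigr),
\]
and this is at most $\varepsilon/4$ for $\varepsilon\le1/2$.

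This constant-chasing is the only mildly delicate step. I would carry it out by splitting the sum as $mS\sqrt{p_R}+L\sqrt{p_R}$ and bounding each piece crudely:
\[
mS\,(e/81)^{mS/2}\le 1,\qquad L\,9^{-2L}\le(\varepsilon/16)^{2}.
\]
The second uses that $L\,9^{-L}\le1$ and $9^{-L}\le\varepsilon/16$.

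\textbf{Dimension bound.} Proposition~\ref{ge:tail} gives
\[
|\mathcal X_R|\le e^{(\log2+1/2)mS}\,3^{4mS}\,3^{4L}.
\]
Here $e^{(\log 2+1/2)mS}3^{4mS}=e^{(\log2+1/2+4\log3)mS}\le e^{6mS}$, since $\log 2+1/2+4\log 3\approx5.59$. For the remaining factor, $L\le\log(16/\varepsilon)+1$ gives
\[
3^{4L}\le 81\cdot 3^{4\log(16/\varepsilon)}=81\,(16/\varepsilon)^{4\log3}\le 81\,(16/\varepsilon)^{7},
\]
because $4\log 3\approx4.39\le7$ and $16/\varepsilon\ge1$. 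This gives the stated bound $|\mathcal X_R|\le 81e^{6mS}(16/\varepsilon)^7$.
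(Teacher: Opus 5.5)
Your route is the paper's: substitute $R=4(mS+L)$ into Proposition~\ref{ge:tail} and Lemma~\ref{ge:energy-lemma}. The tail bound and the dimension bound are correct as you wrote them. The paper uses $3>e$ to write $p_R\le e^{mS-R}$ instead of keeping the factor $(e/81)^{mS}$, but that difference is cosmetic.

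The energy step has an arithmetic error, and your constant-chasing depends on it.

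\textbf{The error.} From $p_R\le(e/81)^{mS}3^{-4L}$ you get $\sqrt{p_R}\le(e/81)^{mS/2}3^{-2L}=(e/81)^{mS/2}9^{-L}$, not $9^{-2L}$.

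\textbf{Why the last step fails.} With the correct exponent, your listed ingredients ($mS(e/81)^{mS/2}\le1$, $L9^{-L}\le1$, $9^{-L}\le\varepsilon/16$) only give $10L\sqrt{p_R}\le 10L\varepsilon/16$. Since $L\ge4$, this is larger than $\varepsilon/4$, so the final step does not close as written. A minor point: $2(mS+R)=10mS+8L$, so your ``$=$'' should be ``$\le$''.

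\textbf{The repair.} It is easy. Since $L\ge4$ and $3^{-L}\le e^{-L}\le\varepsilon/16$, you have
\[
9^{-L}=3^{-L}\,3^{-L}\le\frac{1}{81}\cdot\frac{\varepsilon}{16},
\qquad
L9^{-L}=(L3^{-L})\,3^{-L}\le\frac{4}{81}\cdot\frac{\varepsilon}{16}.
\]
Hence $10(mS+L)\sqrt{p_R}\le10(1+L)9^{-L}\le\frac{50}{81}\cdot\frac{\varepsilon}{16}<\frac{\varepsilon}{4}$. Your stated intermediate bound $C_1(\varepsilon/16)^2(1+\log(16/\varepsilon))$ also survives, for example with $C_1=20$. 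This is because the correct estimate $9^{-L}\le(\varepsilon/16)^{\log 9}$ is still at most $(\varepsilon/16)^2$.

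\textbf{The paper's version.} The paper avoids the splitting. With $u=mS+L$ it bounds $2(mS+R)\sqrt{p_R}\le12u\,e^{-3mS/2-2L}$. It then spends one factor $e^{-L}\le\varepsilon/16$ and absorbs the polynomial prefactor using $ue^{-u}\le e^{-1}$. This yields $\frac{\varepsilon}{4}\cdot 3e^{-3/2}$ directly.
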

\begin{proof}
Let $t=\lceil\log(16/\varepsilon)\rceil$ and $u=mS+t$.
Then $e^{-t}\leq\varepsilon/16$. Since $\log3>1$ and
$R=4u$, Proposition~\ref{ge:tail} gives
\[
 p_R\leq e^{mS-R}\leq e^{-3mS-4t}<1/2.
\]
Also $mS+R\leq6u$. Lemma~\ref{ge:energy-lemma}
bounds the energy loss by
\begin{align*}
 2(mS+R)\sqrt{p_R}
 &\leq12u e^{-3mS/2-2t}\leq\frac\varepsilon4\left(3u e^{-u}e^{-mS/2}\right)
 \leq\frac\varepsilon4\,3e^{-3/2}<\frac\varepsilon4.
\end{align*}
We used $u e^{-u}\leq e^{-1}$ for $u>0$ and $mS\geq1$.
For the dimension, substitute $R\leq4mS+4\log(16/\varepsilon)+4$ into
\eqref{ge:tail-dimension}:
\[
 |\mathcal X_R| \leq81e^{(\log2+1/2+4\log3)mS}(16/\varepsilon)^{4\log 3}
 \leq81e^{6mS}(16/\varepsilon)^7,
\]
proving the proposition statement.
\end{proof}

\subsection{Algorithm and analysis}
\label{ge:algorithm}

\paragraph{Algorithm.}
Start from the input Hamiltonian $H=H_0+V$, represented as
described in Section~\ref{sec:impurity-model}.

\begin{enumerate}
\item Set $\omega=\varepsilon/(4m)$ and construct the rounded bath
energies \eqref{ge:rounded-H}.

\item Construct the shelled Krylov decomposition from
Section~\ref{ge:shelled-decomposition}, with 
$S=1+\left\lfloor\log_2(1/\omega)\right\rfloor$ energy shells and
the data in
\eqref{ge:template-data}.
Set $R=4\left(mS+\left\lceil\log(16/\varepsilon)\right\rceil\right).$

\item Compute an orthonormal basis of $\A$. In each shell, compute
the Krylov layers by
starting from $\ran T_s^\dagger$, successively applying $E_s$,
and orthogonalizing until the Krylov space becomes invariant.
Compute an orthonormal basis of the decoupled complement $\Z_s$.
Together these vectors give the full ordered one-particle basis
$\xi_1,\ldots,\xi_n$ defined above and the Bogoliubov transformation
$U$ of~\eqref{ge:bogoliubov-U}. Retain all boundary modes and
Krylov-layer modes of depth less than $R$; require all other modes
to be empty.

\item Using Lemma~\ref{lem:enum},
enumerate all configurations in $\mathcal X_R$, with
arbitrary boundary occupations. Write their binary labels as
$x_1,\ldots,x_D$, where 
\[
D=|\mathcal X_R|.
\]

\item Form the matrix of the original Hamiltonian $H$ projected onto the span
of the
retained Krylov Fock states:
\begin{equation}\label{ge:matrix}
 B_{\mu\nu}=\bra{x_\mu}U^\dagger HU\ket{x_\nu}
 \qquad(\mu,\nu\in[D]).
\end{equation}
Here $B$ is a many-body matrix; $\B$ is the residual one-particle
space.
\item Compute a normalized vector $z\in\C^D$ such that
\begin{equation}\label{ge:eigensolver}
 z^\dagger Bz\leq E_1(B)+\varepsilon/4,
\end{equation}
where $E_1(B)$ is the smallest eigenvalue of $B$ and $\varepsilon/4$ accounts for eigensolver error.
Define the normalized sparse state
\begin{equation}\label{ge:trial-state}
 \ket\phi=\sum_{\mu=1}^D z_\mu\ket{x_\mu}.
\end{equation}
Return $U$, $\phi$, and the energy
\begin{equation}\label{ge:reporting}
 \widehat E_1:=z^\dagger Bz=\bra\phi U^\dagger HU\ket\phi.
\end{equation}
Thus $U\ket{\phi}$ is the physical trial state, whereas $\phi$ is specified
by its $D$ coefficients in the original Fock basis.
\end{enumerate}

Computing the full one-particle basis makes $U$ explicit and costs
only polynomial time in $n$. No ground state is used to construct
$U$ or enumerate $\mathcal X_R$.

\paragraph{Correctness.}
Rounding and the choice of $\omega$ give
\[
 H\preceq\widetilde H,\qquad
 E_1(\widetilde H)\leq E_1(H)+\varepsilon/4.
\]
Proposition~\ref{ge:R-proposition} supplies the normalized vector
$\phi_R\in\ran P_R$ with rounded energy at most
$E_1(\widetilde H)+\varepsilon/4$. Since $B$ is a projection of $H$ onto
$\ran P_R$, the variational principle gives
\begin{align*}
 E_1(H)\leq E_1(B)
 &\leq\bra{\phi_R}H\ket{\phi_R}\\
 &\leq\bra{\phi_R}\widetilde H\ket{\phi_R}\\
 &\leq E_1(\widetilde H)+\varepsilon/4
 \leq E_1(H)+\varepsilon/2.
\end{align*}
Combining this with~\eqref{ge:eigensolver}--\eqref{ge:reporting} yields
\[
 E_1(H)\leq\widehat E_1
 =\bra\phi U^\dagger HU\ket\phi
 \leq E_1(H)+3\varepsilon/4\leq E_1(H)+\varepsilon.
\]
The state $\phi$ is normalized because $z$ is normalized and the
original Fock basis is orthonormal. It has at most $D=|\mathcal X_R|$
nonzero coefficients. Together with~\eqref{ge:bogoliubov-U}, this
proves both the energy and state-output claims in
Theorem~\ref{thm:ground-energy-algorithm}.

\paragraph{Constructing the projected matrix from the impurity input}
\label{ge:matrix-construction}
For this matrix construction, choose an orthonormal basis
$w_1,\ldots,w_m$ of $\Aimp$. For $X,Y\subseteq[m]$, the
normal-ordered monomials
\begin{equation}\label{ge:impurity-monomials}
 \left(\prod_{\ell\in X}\beta(w_\ell)^\dagger\right)
 \left(\prod_{\ell'\in Y}^{\leftarrow}\beta(w_{\ell'})\right)
\end{equation}
form a basis of the operator space on the $m$ modes spanning $\Aimp$. The
first product has increasing indices, and the arrow specifies
decreasing indices in the second product. There are $4^m$ monomials,
each of length at most $2m$. The input matrix can be converted to
this basis by finite-dimensional linear algebra using $\exp[O(m)]$
arithmetic operations. Since $V$ is even, only monomials with
$|X|+|Y|$ even have nonzero coefficients. Each $w_\ell$ belongs to $\A$, so
\[
 \beta(w_\ell)=\sum_{\gamma=1}^{n_A}
                 \overline{\ip{u_\gamma}{w_\ell}}f_\gamma
\]
for $\ell\in [m]$. Expanding all monomials \eqref{ge:impurity-monomials} therefore gives
at most
\begin{equation}\label{ge:impurity-term-count}
 \sum_{X\subseteq[m]}\sum_{Y\subseteq[m]}n_A^{|X|+|Y|}
 =(1+n_A)^{2m}
\end{equation}
ordered products of boundary ladder operators. Since
$U^\dagger f_jU=a_j$, the matrix elements of these products between
$U\ket{x_\mu}$ and $U\ket{x_\nu}$ can be evaluated by applying the
corresponding original-mode ladder operators to $\ket{x_\nu}$ from
right to left, including the CAR signs.
For the quadratic contribution, use the original one-particle matrix
$h$ in the constructed basis:
\[
 H_0=\sum_{i,j=1}^n\ip{\xi_i}{h\xi_j}f_i^\dagger f_j.
\]
Only terms with both indices on retained modes can have nonzero
matrix elements between retained configurations, so at most
$(n^*(R))^2$ quadratic terms contribute. This constructs every entry
of~\eqref{ge:matrix} using $H$, while the basis itself is constructed
using $\widetilde H$.

\subsubsection{Runtime}\label{ge:cost}

The one-particle matrix calculations, construction of the full
ordered basis, and computation of the linear action of $U$ cost
$\poly(n,m,R)$ arithmetic operations. By Lemma~\ref{lem:enum}, the retained configurations can be enumerated
in $O((n^*(R)+1)|\mathcal X_R|)$ operations, assigning weight zero to boundary modes
and weight $\ell+1$ to bath modes in layer $\ell$.
A coarse sufficient bound for the projected matrix construction is
\[
 \poly(n^*(R),m)
       \bigl((n^*(R))^2+(1+n_A)^{2m}\bigr)|\mathcal X_R|^2+\exp[O(m)].
\]
Since $n_A\leq mS$, this obeys the claimed cost bound. Computing a
normalized vector satisfying~\eqref{ge:eigensolver} has arithmetic cost
polynomial in $D\equiv |\mathcal X_R|$, $\max{(1,\|B\|)}$, and $1/\varepsilon$.
Evaluating~\eqref{ge:reporting} costs $O(D^2)$ additional operations.
Since $0\preceq H_0\preceq N$ and every retained configuration has
at most $mS+R$ quasiparticles,
$\norm{P_RH_0P_R}\leq mS+R$, and hence
$\norm B\leq mS+R+\norm V$.
Together these bounds give runtime
\[
 \poly(n,1+\|V\|)\exp[O(m\log(m/\varepsilon))]
\]
and complete the proof of
Theorem~\ref{thm:ground-energy-algorithm}.

\section{Gaussian decomposition of thermofield-double state}
\label{tf:section}

In this section we turn to the thermal Gibbs state of quantum impurity models and show that it admits an efficient classical description. In particular, we show that the purification of the Gibbs state, i.e., the thermofield-double state, as defined in Section~\ref{sec:tfd-preliminaries} can be approximated by a compact superposition of fermionic Gaussian states and thereby prove Theorem~\ref{thm:rank}. 

The infinite-temperature or $\beta=0$ case is trivial as the state $\ket{\TFD_0(H)}=2^{-n/2}\ket{I_f}$
is independent of $H$. Define
\begin{equation}\label{tf:beta-zero}
 U_0=\prod_{j=1}^n
 \exp\!\left[-\frac{i\pi}{4}
 \left(\beta_j^\dagger\widetilde\beta_j^\dagger+
       \widetilde\beta_j\beta_j\right)\right],
 \qquad \mathcal X_{0,\delta}=\{0^{2n}\}.
\end{equation}
and observe that quadratic generators act on disjoint mode pairs, implying
$U_0\ket{\Omega_{2n}}=2^{-n/2}\ket{I_f}=\ket{\TFD_0(H)}$.
Thus Theorem~\ref{thm:rank} holds at $\beta=0$ with a singleton
configuration set and $O(n)$ one-particle operations.

For the rest of this section, we assume that $\beta>0$.
We write $\delta$ for the error tolerance in Theorem~\ref{thm:rank}. 
We use the Gaussian purification of a free thermal bath as our reference state.
A Bogoliubov transformation makes this purification the vacuum of a
set of doubled fermionic modes. Their occupations therefore measure
deviations from the free thermal state. We bound joint occupations of
these modes by a thermal version of multiparticle pull-through.
Its proof integrates boundary sources over the finite imaginary-time
interval $[0,\beta/2]$. H\"older's inequality controls the
resulting products without any dependence on $\norm{V}$.
In our proof, we instantiate the Krylov template of Section~\ref{sec:krylov-decompositions}
and obtain exponential decay
with Krylov depth. The resulting weighted truncation gives
a set of Fock configurations of size proportional
to $1/\delta$ with discarded probability at most $\delta^2$.

\subsection{Finite-temperature Krylov decomposition}
\label{tf:shells}

\subsubsection{Energy shells and the impurity boundary}
We work with the unrounded one-particle Hamiltonian
$h=\operatorname{diag}(e_1,\ldots,e_n)$.
The shell construction is analogous to Section~\ref{sec:shell1},
but here we retain a low-energy shell containing all zero modes.
Set 
\[
a_0=\min\{1,\beta^{-1}\} \quad \mbox{and} \quad
S=1+\lceil\log_2(1/a_0)\rceil.
\]
Define
\begin{equation}\label{tf:intervals}
 I_0=[0,a_0],\qquad
 I_s=(2^{s-1}a_0,\min\{2^sa_0,1\}]
 \quad(1\leq s<S).
\end{equation}
These intervals are disjoint and cover $[0,1]$. Empty shells are allowed.
Let
\begin{equation}\label{tf:shell-projectors}
 \Pi_s=\sum_{\substack{j\in[n]\\e_j\in I_s}}|j\rangle\langle j|,
 \qquad \mathcal H_s=\ran\Pi_s.
\end{equation}
The boundary and residual bath in shell $s$ are
\begin{equation}\label{tf-boundary-shells}
 \A_s=\Pi_s\Aimp,\qquad
 \B_s=\mathcal H_s\ominus\A_s.
\end{equation}
Recall that \(\Aimp\subseteq\C^n\) denotes the one-particle impurity subspace, i.e., the span of the one-particle directions on which \(V\) is supported. 
Define
\begin{equation}\label{tf-boundary}
 \A=\bigoplus_{s=0}^{S-1}\A_s,\qquad
 \B=\bigoplus_{s=0}^{S-1}\B_s,\qquad
 n_A=\dim\A,\quad n_B=\dim\B=n-n_A.
\end{equation}
As in the ground-energy proof, $\dim\A_s\leq m$ and
$\Aimp\subseteq\A$. In particular,
\begin{equation}\label{tf:shell-count}
 S=1+\left\lceil\log_2\max\{1,\beta\}\right\rceil,
 \qquad n_A\leq mS,\qquad \B\subseteq\Aimp^\perp.
\end{equation}
Thus $V$ is supported entirely on $\A$.

\subsubsection{Boundary--bath decomposition}
Relative to $\C^n=\A\oplus\B$, write the one-particle Hamiltonian $h$ as a block matrix
\[
 h=\begin{pmatrix}
 h_{\A\A}&h_{\A\B}\\
 h_{\A\B}^{\dagger}&h_{\B\B}
 \end{pmatrix}.
\]
We instantiate the abstract Krylov decomposition template of Section~\ref{sec:krylov-decompositions}  with
 the spaces above, the given
inverse temperature $\beta$, and
\begin{equation}\label{tf:blocks}
 E=h_{\B\B},\qquad
 T=h_{\A\B}(\Id+e^{-\beta E})^{-1/2}.
\end{equation}
Each $\B_s$ is invariant under $E$, and $0\preceq E\preceq\Id_{\B}$.
The template's shell projectors are $P_s=\Pi_s|_{\B}$,
and its shell maps are $E_s=E|_{\B_s}$ and $T_s=T|_{\B_s}$.
Since the factor $(\Id+e^{-\beta E})^{-1/2}$ preserves each $\B_s$,
$\ran T_s\subseteq\A_s$ and $\rank T_s\leq m$.

\begin{lemma}[Thermal shell block bounds]\label{tf:thermal-block-lemma}
For $1\leq s<S$,
\begin{equation}\label{tf:block-bounds}
 2^{s-1}a_0\Id_{\B_s}\preceq E_s
 \preceq\min\{2^sa_0,1\}\Id_{\B_s},
 \qquad \norm{T_s}\leq\frac{2^{s-1}a_0}{2}.
\end{equation}
For the low-energy shell,
\begin{equation}\label{tf:low-block-bounds}
 0\preceq E_0\preceq a_0\Id_{\B_0},
 \qquad \norm{T_0}\leq a_0/2.
\end{equation}
\end{lemma}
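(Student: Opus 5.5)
The plan mirrors the proof of Lemma~\ref{ge:block-lemma}: read the bounds on $E_s$ off a compression, and bound the off-diagonal block by subtracting the midpoint of the shell's spectral interval. The one new feature is the thermal factor in $T$, and it only helps.

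\textbf{Bounds on $E_s$.} By construction $\mathcal H_s=\A_s\oplus\B_s$ is invariant under $h$, and the spectrum of $h|_{\mathcal H_s}$ lies in the closure of $I_s$. For $1\le s<S$ this closure is $[2^{s-1}a_0,\min\{2^sa_0,1\}]$; for $s=0$ it is $[0,a_0]$. The operator $E_s$ is the compression of $h|_{\mathcal H_s}$ to $\B_s$, since the $\B\B$ block of $h$ restricted to $\B_s$ only sees $\mathcal H_s$. Compression preserves operator inequalities between multiples of the identity, so the stated bounds on $E_s$ and $E_0$ follow at once.

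\textbf{Bound on $\|T_s\|$.} Write $T_s=h_{\A\B}|_{\B_s}\,(\Id+e^{-\beta E_s})^{-1/2}$, which uses that the thermal factor commutes with $E$ and preserves $\B_s$. Since $0\preceq e^{-\beta E_s}$, we have $(\Id+e^{-\beta E_s})^{-1/2}\preceq\Id$, so its norm is at most $1$. Hence $\|T_s\|\le\|h_{\A\B}|_{\B_s}\|$. Because $\mathcal H_s$ is $h$-invariant, $h_{\A\B}|_{\B_s}$ maps into $\A_s$ and equals the off-diagonal block of $h|_{\mathcal H_s}$ relative to $\A_s\oplus\B_s$.

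The off-diagonal block of a Hermitian operator is unchanged by subtracting a multiple of the identity, and the norm of a block is at most the norm of the whole operator. For $1\le s<S$, subtract the midpoint $c_s$ of the interval $[2^{s-1}a_0,\min\{2^sa_0,1\}]$. This gives
\[
\|h|_{\mathcal H_s}-c_s\Id\|\le\tfrac12\bigl(\min\{2^sa_0,1\}-2^{s-1}a_0\bigr)\le\tfrac12\,2^{s-1}a_0 .
\]
For $s=0$, subtract $a_0/2$, which gives the bound $a_0/2$. This is in fact stronger than the stated $\|T_0\|\le a_0/2$; one could even just use $\|h|_{\mathcal H_0}\|\le a_0$, which gives exactly $a_0$, so the midpoint trick is needed to reach the stated constant.

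\textbf{Main obstacle.} The only real subtlety is checking that the thermal factor preserves $\B_s$ and has norm at most $1$. Both follow from functional calculus on $E=\bigoplus_s E_s$, so I expect no serious difficulty.
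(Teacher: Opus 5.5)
Your proposal is correct and matches the paper's proof essentially step for step: compress $h|_{\mathcal H_s}$ to $\B_s$ to get the bounds on $E_s$, subtract the interval midpoint to bound the off-diagonal block $h_{\A\B}|_{\B_s}$, and note that $(\Id+e^{-\beta E_s})^{-1/2}$ is a contraction to pass to $T_s$. One small slip: for $s=0$ the midpoint bound gives exactly the stated $a_0/2$, not something stronger, although your point that the midpoint trick is needed to reach this constant is right.
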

\begin{proof}
Compression gives the bounds on $E_s$. Subtracting the interval
midpoint, as in Lemma~\ref{ge:block-lemma}, gives
$\norm{h_{\A\B}|_{\B_s}}\leq2^{s-1}a_0/2$ for $s\geq1$
and $\norm{h_{\A\B}|_{\B_0}}\leq a_0/2$.
The same bounds hold for $T_s$, since
$(\Id+e^{-\beta E_s})^{-1/2}$ is a contraction.
\end{proof}

Choose orthonormal bases $u_1,\ldots,u_{n_A}$ of $\A$
and $v_1,\ldots,v_{n_B}$ of $\B$, and set
$a_\ell=\beta(u_\ell)$ and $b_j=\beta(v_j)$.
Write $a(u)=\beta(u)$ for $u\in\A$ and $b(v)=\beta(v)$ for $v\in\B$.
Matrix entries below are taken in these bases.
The original impurity model Hamiltonian becomes
\begin{align}
 H&=H_A+H_B+
 \sum_{\ell=1}^{n_A}\sum_{j=1}^{n_B}
 \left((h_{\A\B})_{\ell j}a_\ell^\dagger b_j+
 \overline{(h_{\A\B})_{\ell j}}b_j^\dagger a_\ell\right),\notag\\
 H_B&=\sum_{i,j=1}^{n_B}E_{ij}b_i^\dagger b_j, \quad  H_A=V+\sum_{\ell,\ell'=1}^{n_A}
 (h_{\A\A})_{\ell\ell'}a_\ell^\dagger a_{\ell'}.
 \label{tf:boundary-H}
\end{align}
For all $v,w\in\B$, the canonical anticommutation relations imply
\begin{align}
 [H,b(v)]&=-b(Ev)-a(h_{\A\B}v),
 \qquad\{a(h_{\A\B}v),b(w)\}=0.
 \label{tf:commutators}
\end{align}
The boundary--bath anticommutators also vanish if either or both
operators are replaced by their adjoints.

\subsection{Annihilators adapted to the free thermal bath}
\label{tf:annihilators}

Below we use the definitions of Section~\ref{sec:tfd-preliminaries} with $H_{\mathrm{free}}$ replaced by $H_B$ and $n$ replaced by $n_B$, that is, we regard $H_B$ as a Hamiltonian on the $n_B$ residual-bath modes $b_1,\ldots,b_{n_B}$ and its thermofield double as a state of these modes and their ancillas.

Let us fix the orthonormal basis $v_1,\ldots,v_{n_B}\in \B$ chosen above to be an eigenbasis of $E$, so that $Ev_j=\lambda_jv_j$ with $\lambda_j\in[0,1]$.
The corresponding physical
and ancilla annihilators are
\[
 b_j=\sum_{k=1}^n \overline{(v_j)_k} \beta_k
 \quad \mbox{and}\quad
 \widetilde b_j=\sum_{k=1}^n(v_j)_k\widetilde\beta_k
\]
respectively. For $v\in\B$, define {\em free TFD annihilators}
\[
 \mathsf d_-(v)=\sum_{j=1}^{n_B}
                   \overline{\langle v_j,v\rangle}\mathsf d_{-,j},
 \qquad
 \mathsf d_+(v)=\sum_{j=1}^{n_B}
                   \langle v_j,v\rangle \mathsf d_{+,j},
\]
where
\begin{equation}\label{tf:bogoliubov}
 \mathsf d_{-,j}
 =\frac{b_j+i e^{-\beta\lambda_j/2}\widetilde b_j^\dagger}
        {\sqrt{1+e^{-\beta\lambda_j}}},
 \qquad
 \mathsf d_{+,j}
 =\frac{e^{-\beta\lambda_j/2}b_j^\dagger+i\widetilde b_j}
        {\sqrt{1+e^{-\beta\lambda_j}}}.
\end{equation}
The minus family is antilinear in $v$, and the plus family is
linear in $v$. The operators $ \mathsf d_\pm(v)$ with $v\in \B$
annihilate the thermofield-double state of $H_B$, see Section~\ref{sec:tfd-preliminaries}.

The canonical anticommutation relations of the physical and ancilla
modes give
\[
 \{\mathsf d_{\eta,i},\mathsf d_{\zeta,j}\}=0,\qquad
 \{\mathsf d_{\eta,i},\mathsf d_{\zeta,j}^\dagger\}
 =\delta_{\eta\zeta}\delta_{ij}\Id
\]
for $\eta,\zeta\in\{-,+\}$ and $i,j\in[n_B]$.
Consequently, any orthonormal basis of $\B$ gives $2n_B$ mutually
canonical free TFD annihilators.
Together with the physical and ancilla
boundary modes, these form a complete set of $2n$ annihilators.

We next express the action of free TFD annihilators  on vectorized operators.
For a homogeneous operator $Y$ of parity $p(Y)\in \{0,1\}$, the right-action
identities \eqref{tf:right-action} give
\begin{align}
 \mathsf d_-(v)\ket{Y}_f
 &=\ket{b\bigl((\Id+e^{-\beta E})^{-1/2}v\bigr)Y}_f
-(-1)^{p(Y)}
 \ket{Yb\bigl(e^{-\beta E/2}(\Id+e^{-\beta E})^{-1/2}v\bigr)}_f,
 \notag\\
 \mathsf d_+(v)\ket{Y}_f
 &=\ket{b\bigl(e^{-\beta E/2}(\Id+e^{-\beta E})^{-1/2}v\bigr)^\dagger Y}_f
 -(-1)^{p(Y)}
 \ket{Yb\bigl((\Id+e^{-\beta E})^{-1/2}v\bigr)^\dagger}_f.
 \label{tf:annihilators-def}
\end{align}
These identities extend linearly to arbitrary $Y$.
Using the isometry \eqref{tf:isometry}, we also write
$\mathsf d_\pm(v)Y$ for the operator whose vectorization is
$\mathsf d_\pm(v)\ket{Y}_f$. 
When $h_{\A\B}=0$, any free TFD annihilator 
annihilates $\ket{\mathsf{TFD}_\beta(H)}$.

\subsection{Thermal multiparticle pull-through}
\label{tf:pull-through}

The next lemma bounds joint occupations of the free TFD modes
in terms of their integrated couplings. It is a finite-temperature
analogue of the ground-state pull-through bound in Lemma~\ref{ge:pull}.

\begin{lemma}[Thermal multiparticle pull-through]\label{tf:pull}
Let $s_\beta(v)$ be the integrated coupling defined in
\eqref{kd:integrated-coupling}, with $E,T$ from \eqref{tf:blocks}.
For every integer $k\geq0$, vectors $v_1,\ldots,v_k\in\B$, and
signs $\eta_1,\ldots,\eta_k\in\{-,+\}$, one has
\begin{equation}\label{tf:product-bound}
 \norm{\mathsf d_{\eta_1}(v_1)\cdots
       \mathsf d_{\eta_k}(v_k)\ket{\mathsf{TFD}_\beta(H)}}
 \leq\prod_{i=1}^k s_\beta(v_i).
\end{equation}
If the vectors $v_i$ and $v_j$ are orthogonal whenever $i\ne j$ and $\eta_i=\eta_j$ then the corresponding
number operators 
$\mathsf n_i=\mathsf d_{\eta_i}(v_i)^\dagger
                 \mathsf d_{\eta_i}(v_i)$ with $i=1,\ldots,k$ pairwise commute and obey
\begin{equation}\label{tf:joint-bound}
 \bra{\mathsf{TFD}_\beta(H)}\prod_{i=1}^k\mathsf n_i
                  \ket{\mathsf{TFD}_\beta(H)}
 \leq\prod_{i=1}^k s_\beta(v_i)^2.
\end{equation}
\end{lemma}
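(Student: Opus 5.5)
The plan is to mimic the proof of Lemma~\ref{ge:pull}, working in the operator picture through the isometry \eqref{tf:isometry}. Write $Y=e^{-\beta H/2}/\sqrt{Z_\beta(H)}$, so that $\ket{\TFD_\beta(H)}=\ket{Y}_f$ and the left-hand side of \eqref{tf:product-bound} is a Hilbert--Schmidt norm of the operator $\mathsf d_{\eta_1}(v_1)\cdots\mathsf d_{\eta_k}(v_k)Y$. Set $c=(\Id+e^{-\beta E})^{-1/2}$.

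\textbf{Single-particle identity.} First I derive an exact formula for one free TFD annihilator. Using \eqref{tf:annihilators-def} with $Y$ even, $\mathsf d_-(v)Y=b(cv)Y-Y\,b(e^{-\beta E/2}cv)$. For $x=e^{-\beta E/2}cv$, consider the interpolation
\[
 \Phi(t)=e^{-(\beta/2-t)H}\,b(e^{tE}x)\,e^{-tH},\qquad t\in[0,\beta/2],
\]
so that $\Phi(0)=Y b(x)$ and $\Phi(\beta/2)=b(cv)Y$, up to the normalization $Z_\beta^{-1/2}$. Differentiating gives $[H,b(y)]+b(Ey)$ sandwiched between the exponentials. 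By \eqref{tf:commutators} the $E$-terms cancel exactly, as in \eqref{ge:derivative}, leaving only the boundary source:
\[
 \mathsf d_-(v)Y=-\int_0^{\beta/2}e^{-(\beta/2-t)H}\,a\bigl(h_{\A\B}c\,e^{-(\beta/2-t)E}v\bigr)\,e^{-tH}\,dt.
\]
The norm of the inserted operator is $\norm{Te^{-uE}v}$ with $u=\beta/2-t$. For $\mathsf d_+(v)$ I run the same computation with adjoints; the roles of the two thermal factors are swapped, and the resulting source is $a(\cdot)^\dagger$ with the same norm. H\"older's inequality for Schatten norms, $\norm{e^{-uH}Ae^{-tH}}_2\le\norm{A}\,\norm{e^{-uH}}_{\beta/u}\norm{e^{-tH}}_{\beta/t}=\norm{A}\,Z_\beta^{1/2}$ when $u+t=\beta/2$, then gives the case $k=1$, namely the bound $s_\beta(v)$. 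This bound does not depend on $\norm V$.

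\textbf{Induction.} For general $k$ I strengthen the induction hypothesis to operators of the form
\[
 e^{-t_0H}A_1e^{-t_1H}A_2\cdots A_r e^{-t_rH},\qquad \textstyle\sum_i t_i=\beta/2,
\]
where each $A_i$ is a boundary ladder operator (or its adjoint). The generalized H\"older bound $\norm{\prod_i e^{-t_iH}A_i}_2\le\prod_i\norm{A_i}\,Z_\beta^{1/2}$ controls all such products. When $\mathsf d_{\eta}(v)$ is pulled through an expression of this form, one does the same interpolation, now with the free-bath factors $e^{\pm tE}$ attached to the remaining annihilators. The terms produced are of two kinds. Either a source $a(\cdot)$ is emitted at some time $t$, or a bath operator is commuted past an already-emitted boundary insertion; the latter contributes nothing, since bath operators anticommute with boundary operators and so only signs appear. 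The outcome is a $k$-fold time-ordered integral of products of the factors $\norm{Te^{-u_iE}v_i}$. Its total is bounded by $\prod_i\int_0^{\beta/2}\norm{Te^{-uE}v_i}\,du$, by the same telescoping $-\frac{d}{dt}\prod_i s(\cdot)$ argument used in \eqref{ge:induction}.

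\textbf{Main obstacle.} The hardest step will be the bookkeeping that makes this induction close. The interpolation for the $i$-th annihilator must be performed inside an imaginary-time interval already partitioned by earlier insertions. The evolved remaining vectors must then have integrated couplings equal to the tail integrals $\int_t^{\beta/2}$, so that the product-rule telescoping works exactly as in the ground-state case. I will first try ordering the pulls so that each new source is placed at a time earlier than all previous ones. This should reproduce the structure of \eqref{ge:derivative-integrated}.

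\textbf{Joint occupations.} The bound \eqref{tf:joint-bound} then follows as in Lemma~\ref{ge:pull}. Under the stated orthogonality assumption, the canonical relations for the $\mathsf d_{\pm,j}$ make the $\mathsf n_i$ commuting projections, so $\prod_i\mathsf n_i$ is itself a projection and $\bra{\TFD_\beta(H)}\prod_i\mathsf n_i\ket{\TFD_\beta(H)}=\norm{\prod_i\mathsf d_{\eta_i}(v_i)\ket{\TFD_\beta(H)}}^2$. Applying \eqref{tf:product-bound} to this product gives \eqref{tf:joint-bound}.
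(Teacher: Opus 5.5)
Your proposal is correct and follows essentially the same route as the paper: vectorize, write each $\mathsf d_{\eta_i}(v_i)$ acting on $e^{-\beta H/2}$ as a Duhamel integral of a single boundary source (the free-bath terms cancel), observe that moving a bath operator past an already-inserted boundary source costs only a sign, obtain a sum over permutations of time-ordered integrals, and bound each term by H\"older's inequality for Schatten norms (Lemma~\ref{tf:holder}). The bookkeeping you flag is handled in the paper by the endpoint maps $\mathcal D_i^{s,t}(Y)=O_i(s)Y-(-1)^{p(Y)}YO_i(t)$, which obey a product rule with $\mathcal D_i^{u,u}(C_j(u))=\{O_i(u),C_j(u)\}=0$; each newly pulled annihilator therefore deposits its source in every existing interval, so no special ordering of the pulls and no tail integrals are needed, and the final step is just that the $k!$ time orderings tile $[0,\beta/2]^k$, which is equivalent to, but simpler than, your telescoping.
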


The above lemma applies to any orthogonal decomposition
$\C^n=\A\oplus\B$ with $V$ supported on $\A$, using the corresponding
choice of $E,T$ in \eqref{tf:blocks}; it does not require the specific
shell partition of Section~\ref{tf:shells}.

We first use H\"older's inequality for Schatten norms to bound
products of imaginary-time evolution operators with boundary
operators inserted between them.
We use the Schatten norms
$\norm{Y}_p=(\Tr|Y|^p)^{1/p}$ for $1\le p<\infty$, where
$|Y|=(Y^\dagger Y)^{1/2}$, and $\norm{Y}_\infty=\norm{Y}$.
In particular,
$\norm{Y}_2=\norm{Y}_{\mathrm{HS}}$.
The notation $\norm{\cdot}_{\mathrm{HS}}$ denotes the
Hilbert--Schmidt  norm (which coincides with the Frobenius norm).

\begin{lemma}[Thermal product norm]\label{tf:holder}
For $k\geq0$, let $t_0,\ldots,t_k\geq0$ satisfy
$\sum_{j=0}^k t_j=\beta/2$,
and let $C_1,\ldots,C_k$ be arbitrary operators on $\calF_n$.
Then
\begin{equation}\label{tf:holder-bound}
 \norm{e^{-t_0H}C_1e^{-t_1H}\cdots C_ke^{-t_kH}}_{\mathrm{HS}}
 \leq\sqrt{Z_\beta(H)}\prod_{j=1}^k\norm{C_j}.
\end{equation}
More generally, for every $p\in[1,\infty]$,
\begin{equation}\label{tf:holder-bound-p}
 \norm{e^{-t_0H}C_1e^{-t_1H}\cdots C_ke^{-t_kH}}_p
 \leq\norm{e^{-\beta H/2}}_p\prod_{j=1}^k\norm{C_j}.
\end{equation}
\end{lemma}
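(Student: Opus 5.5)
The plan is to prove the general Schatten-norm bound \eqref{tf:holder-bound-p} first; the Hilbert--Schmidt bound \eqref{tf:holder-bound} is then the special case $p=2$. Indeed, $\norm{e^{-\beta H/2}}_2^2=\Tr e^{-\beta H}=Z_\beta(H)$.

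The main tool will be the generalized H\"older inequality for Schatten norms. It states that $\norm{Y_0Y_1\cdots Y_r}_p\le\prod_i\norm{Y_i}_{p_i}$ whenever $\sum_i 1/p_i=1/p$, with the convention $p_i=\infty$ allowed. I would apply it to the product
\[
e^{-t_0H}C_1e^{-t_1H}\cdots C_ke^{-t_kH},
\]
taking $C_j$ in the operator norm, that is with exponent $\infty$. Each exponential factor $e^{-t_jH}$ with $t_j>0$ gets the exponent $p_j=p\beta/(2t_j)$. Factors with $t_j=0$ are the identity and are simply dropped, or equivalently assigned exponent $\infty$ with norm $1$. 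The exponents then satisfy
\[
\sum_j \frac{1}{p_j}=\sum_j\frac{2t_j}{p\beta}=\frac1p,
\]
because $\sum_j t_j=\beta/2$. The case $\beta=0$ needs no work: all $t_j=0$ and the claim reduces to submultiplicativity, $\norm{C_1\cdots C_k}_p\le\norm{I}_p\prod_j\norm{C_j}$.

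The key computation is the identity $\norm{e^{-t_jH}}_{p\beta/(2t_j)}=\norm{e^{-\beta H/2}}_p^{2t_j/\beta}$. Since $H$ is Hermitian, $e^{-t_jH}$ is positive, and
\[
\Tr\bigl(e^{-t_jH}\bigr)^{p\beta/(2t_j)}=\Tr e^{-p\beta H/2}=\norm{e^{-\beta H/2}}_p^p,
\]
so taking the $(p\beta/(2t_j))$-th root gives the identity. Multiplying over $j$, the exponents $2t_j/\beta$ sum to $1$. This gives exactly $\norm{e^{-\beta H/2}}_p\prod_j\norm{C_j}$. For $p=\infty$ every exponent is $\infty$, and the bound is just submultiplicativity of the operator norm together with $\norm{e^{-t_jH}}=\norm{e^{-\beta H/2}}^{2t_j/\beta}$.

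I do not expect a real obstacle. The only points needing care are the degenerate exponents: factors with $t_j=0$, and the possibility that some $p_j<1$ when $p<1$, which does not arise here because $p\ge1$ forces every $p_j\ge p\ge1$. The generalized H\"older inequality itself I would cite as standard; it follows from the two-factor case by induction.
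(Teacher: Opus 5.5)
Your proposal is correct and follows the paper's proof essentially step for step. The shared steps are: H\"older's inequality for Schatten norms with exponent $p\beta/(2t_j)$ on each exponential with $t_j>0$ and $\infty$ on each $C_j$; the identity $\norm{e^{-t_jH}}_{p\beta/(2t_j)}=\norm{e^{-\beta H/2}}_p^{2t_j/\beta}$; separate treatment of $p=\infty$ and $\beta=0$ by submultiplicativity; and specialization to $p=2$ for the Hilbert--Schmidt bound. Your remark that $p_j\ge p\ge 1$ is a small admissibility check that the paper leaves implicit.
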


\begin{proof}
We prove \eqref{tf:holder-bound-p} first.
Suppose $\beta>0$ and $1\leq p<\infty$.
H\"older's inequality for Schatten norms~\cite{bhatia97book} states that, for
arbitrary operators $O_1,\ldots,O_L$ on $\calF_n$ and exponents
$p,q_1,\ldots,q_L\in[1,\infty]$ satisfying
\[
 \frac{1}{q_1}+\cdots+\frac{1}{q_L}=\frac{1}{p},
\]
one has
\[
 \norm{O_1\cdots O_L}_p
 \leq \prod_{j=1}^L\norm{O_j}_{q_j}.
\]
Here $q_j$ specifies the Schatten norm used for the factor
$O_j$, with the conventions $1/\infty=0$ and
$\norm{O_j}_\infty=\norm{O_j}$.

For each factor $e^{-t_jH}$ with $t_j>0$, choose exponent
$p\beta/(2t_j)$. For each inserted operator $C_j$, choose
exponent $\infty$, so its contribution is its operator norm.
If $t_j=0$, the factor $e^{-t_jH}$  is the identity; assign it exponent
$\infty$ as well. These choices are admissible because
\[
 \sum_{\substack{0\leq j\leq k\\t_j>0}}
 \frac{2t_j}{p\beta}
 =\frac{1}{p}.
\]
Thus
\begin{equation}\label{tf:holder-step}
 \norm{e^{-t_0H}C_1e^{-t_1H}\cdots C_ke^{-t_kH}}_p
 \leq
 \prod_{\substack{0\leq j\leq k\\t_j>0}}
 \norm{e^{-t_jH}}_{p\beta/(2t_j)}
 \prod_{j=1}^k\norm{C_j}.
\end{equation}
Since $H$ is Hermitian,
$e^{-t_jH}$ is positive definite, and thus
\[
 \norm{e^{-t_jH}}_{p\beta/(2t_j)}
 =
 \left(\Tr e^{-p\beta H/2}\right)^{2t_j/(p\beta)}.
\]
Multiplying these expressions and using
$\sum_j t_j=\beta/2$ gives
\[
 \prod_{\substack{0\leq j\leq k\\t_j>0}}
 \norm{e^{-t_jH}}_{p\beta/(2t_j)}
 =
 \left(\Tr e^{-p\beta H/2}\right)^{1/p}
 =
 \norm{e^{-\beta H/2}}_p.
\]
Substitution into \eqref{tf:holder-step} proves
\eqref{tf:holder-bound-p} for finite $p$.

For $p=\infty$, use submultiplicativity of the operator norm.
All  factors $e^{-t_jH}$ attain their operator norm on a ground state
of $H$, so
\[
 \prod_{j=0}^k\norm{e^{-t_jH}}
 =
 e^{-(\beta/2)E_1(H)}
 =
 \norm{e^{-\beta H/2}}.
\]
This proves the same bound for $p=\infty$.

If $\beta=0$, all $t_j$ vanish. Repeatedly applying
$\norm{XY}_p\leq\norm{X}_p\norm{Y}$, starting with $X=\Id$,
gives
\[
 \norm{C_1\cdots C_k}_p
 \leq\norm{\Id}_p\prod_{j=1}^k\norm{C_j},
\]
which is precisely \eqref{tf:holder-bound-p} in this case. Finally, taking $p=2$ gives \eqref{tf:holder-bound}, since
\[
 \norm{e^{-\beta H/2}}_2
 =\sqrt{\Tr e^{-\beta H}}
 =\sqrt{Z_\beta(H)}.
\]
\end{proof}

\begin{proof}[Proof of Lemma~\ref{tf:pull}]
We prove the lemma by induction. In the base case $k=0$, no operators are applied to the LHS and \eqref{tf:product-bound} holds because the state is normalized. Now, fix $k\geq1$ and the vectors and
signs as in the statement. 
For $i\in[k]$ and $0\leq t\leq\beta/2$, set
\begin{equation}\label{tf:odd-paths}
 O_i(t)=
 \begin{cases}
 b(e^{-tE}(\Id+e^{-\beta E})^{-1/2}v_i),&\eta_i=-,\\
 b(e^{-(\beta/2-t)E}(\Id+e^{-\beta E})^{-1/2}v_i)^\dagger,&\eta_i=+.
 \end{cases}
\end{equation}
These odd bath operators satisfy $\dot O_i(t)=[H_B,O_i(t)]$.
Their boundary sources are
\begin{equation}\label{tf:sources}
 C_i(t)=[H,O_i(t)]-\dot O_i(t)=
 \begin{cases}
 -a(Te^{-tE}v_i),&\eta_i=-,\\
 a(Te^{-(\beta/2-t)E}v_i)^\dagger,&\eta_i=+.
 \end{cases}
\end{equation}
Boundary--bath anticommutation and a change of integration variable give
\begin{equation}\label{tf:source-properties}
 \{O_i(t),C_j(u)\}=0,\qquad
 \int_0^{\beta/2}\norm{C_i(t)}\,dt=s_\beta(v_i)
 \quad(i,j\in[k],\ 0\leq t,u\leq\beta/2).
\end{equation}

For $0\leq s\leq t\leq\beta/2$ and homogeneous $Y$, define
\begin{equation}\label{tf:endpoint-map}
 \mathcal D_i^{s,t}(Y)=O_i(s)Y-(-1)^{p(Y)}YO_i(t).
\end{equation}
At the full interval, $\mathcal D_i^{0,\beta/2}=\mathsf d_{\eta_i}(v_i)$.
Differentiating
$e^{-(u-s)H}O_i(u)e^{-(t-u)H}$ and integrating over $u\in[s,t]$
gives
\begin{equation}\label{tf:duhamel}
 \mathcal D_i^{s,t}(e^{-(t-s)H})
 =\int_s^t e^{-(u-s)H}C_i(u)e^{-(t-u)H}\,du.
\end{equation}
For homogeneous $Y,Z$ and $s\leq u\leq t$, direct expansion gives
\begin{equation}\label{tf:product-rule}
 \mathcal D_i^{s,t}(YZ)
 =\mathcal D_i^{s,u}(Y)Z
      +(-1)^{p(Y)}Y\mathcal D_i^{u,t}(Z).
\end{equation}
When the map acts on a source already inserted at time $u$, it gives
$\mathcal D_i^{u,u}(C_j(u))=\{O_i(u),C_j(u)\}=0$.
Thus each application inserts exactly one new boundary source into
one of the exponential intervals.

Let $\mathfrak S_k$ be the permutations of $[k]$ and set $t_0=0$.
Induction using \eqref{tf:duhamel}--\eqref{tf:product-rule} yields
\begin{align}
 &\mathsf d_{\eta_1}(v_1)\cdots
       \mathsf d_{\eta_k}(v_k)e^{-\beta H/2}\notag\\
 &=\sum_{\pi\in\mathfrak S_k}\operatorname{sgn}(\pi)
   \int_{0<t_1<\cdots<t_k<\beta/2}
   \left[\prod_{j=1}^k
       e^{-(t_j-t_{j-1})H}C_{\pi(j)}(t_j)\right]
       e^{-(\beta/2-t_k)H}\,dt_1\cdots dt_k.
 \label{tf:ordered-expansion}
\end{align}
The product has increasing $j$ from left to right. Inserting a new
source past each preceding odd source gives one minus sign, producing
the permutation sign in this formula.

Take Hilbert--Schmidt norms in \eqref{tf:ordered-expansion}, apply
Lemma~\ref{tf:holder}, divide by $\sqrt{Z_\beta(H)}$, and use the vectorization isometry. This gives
\begin{align}
 &\norm{\mathsf d_{\eta_1}(v_1)\cdots
       \mathsf d_{\eta_k}(v_k)
       \ket{\mathsf{TFD}_\beta(H)}}\notag\\         
 &\quad\leq\sum_{\pi\in\mathfrak S_k}
       \int_{0<t_1<\cdots<t_k<\beta/2}
                 \prod_{j=1}^k\norm{C_{\pi(j)}(t_j)}\,dt_1\cdots dt_k\notag\\
 &\quad=\prod_{i=1}^k\int_0^{\beta/2}\norm{C_i(t)}\,dt
        =\prod_{i=1}^k s_\beta(v_i).
 \label{tf:cube-bound}
\end{align}
The equality holds because the $k!$ time orderings partition
$[0,\beta/2]^k$, up to sets of measure zero. This proves
\eqref{tf:product-bound}. 

Under the stated orthogonality condition, the anticommutation
relations identify the squared norm in \eqref{tf:product-bound}
with the expectation in \eqref{tf:joint-bound}, proving the latter.
\end{proof}

\subsection{Krylov layers and decay with depth}
\label{tf:krylov}

Let $\K_{s,\ell}$, $\W_{s,\ell}$, and $\Z_s$ be the spaces
defined in \eqref{kd:krylov}, \eqref{kd:layers}, and
\eqref{kd:decoupled} for the data in \eqref{tf:blocks}.
These spaces depend on $\beta$ through $T$.
Lemma~\ref{kd:geometry} gives
\begin{equation}\label{tf:layer-size}
 \dim\W_{s,\ell}\leq\rank T_s\leq\dim\A_s\leq m
\end{equation}
and the orthogonal decomposition
\[
 \B=\bigoplus_{s=0}^{S-1}
 \left(\bigoplus_{\ell=0}^{n_B-1}\W_{s,\ell}\oplus\Z_s\right).
\]
Every vector in $\W_{s,\ell}$ has Krylov depth $\ell$.

\begin{lemma}[Thermal decay with Krylov depth]\label{tf:depth}
For every $s\in\{0,\ldots,S-1\}$, $0\leq\ell<n_B$, and unit vector
$v\in\W_{s,\ell}$,
\begin{equation}\label{tf:depth-bound}
 s_\beta(v)\leq\tfrac12\,3^{-\ell}.
\end{equation}
On the low-energy shell, the stronger estimate
\begin{equation}\label{tf:low-depth}
 s_\beta(v)\leq
 \frac{(\beta a_0/4)^{\ell+1}}{(\ell+1)!}
 \leq\frac{4^{-(\ell+1)}}{(\ell+1)!}
\end{equation}
holds. For every $v\in\Z_s$, one has $s_\beta(v)=0$.
\end{lemma}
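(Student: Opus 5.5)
The plan is to reduce everything to the template results of Section~\ref{sec:krylov-decompositions}, namely Lemma~\ref{kd:geometry}, Corollary~\ref{cor:kd-shell-depth} and the low-shell estimate \eqref{kd:low-shell-bound}. We feed them the shell block bounds of Lemma~\ref{tf:thermal-block-lemma}. By Lemma~\ref{kd:geometry}, every unit vector $v\in\W_{s,\ell}$ has Krylov depth $\ell$ with respect to $(E_s,T_s)$, where $T$ now carries the thermal factor $(\Id+e^{-\beta E})^{-1/2}$. Also, $s_\beta(v)=0$ for $v\in\Z_s$. The last claim of the lemma is therefore immediate, and the work is a case split on the shell index.

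For the positive shells $1\le s<S$, Lemma~\ref{tf:thermal-block-lemma} gives $E_s$ with spectrum in $[e_s^\downarrow,e_s^\uparrow]$, where $e_s^\downarrow=2^{s-1}a_0>0$ and $e_s^\uparrow=\min\{2^sa_0,1\}\le 2e_s^\downarrow$. It also gives $\norm{T_s}\le e_s^\downarrow/2$. These are exactly the hypotheses of the ``in particular'' clause of Corollary~\ref{cor:kd-shell-depth}, so $s_\beta(v)\le\tfrac12 3^{-\ell}$ for every $\beta\in[0,\infty]$. One small point needs checking. The corollary is stated with $T_s$ as the restriction of the template's $T$, and here $T$ includes the thermal factor. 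Since that factor preserves $\B_s$ and commutes with $E_s$, the Krylov spaces built from $T_s^\dagger$ are those of the template, so nothing changes.

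For the low shell $s=0$, Lemma~\ref{tf:thermal-block-lemma} gives $0\preceq E_0\preceq a_0\Id$ and $\norm{T_0}\le a_0/2$. We also have $\beta<\infty$, since $\beta>0$ is a fixed finite inverse temperature in this section. Moreover $\beta a_0=\beta\min\{1,\beta^{-1}\}\le 1$. We apply \eqref{kd:finite-time-bound} of Lemma~\ref{kd:interval-lemma} with $e^\downarrow=0$ and $e^\uparrow=a_0$, which gives
\[
 s_\beta(v)\le \frac{(a_0/2)(a_0/2)^\ell(\beta/2)^{\ell+1}}{(\ell+1)!}=\frac{(\beta a_0/4)^{\ell+1}}{(\ell+1)!}.
\]
If $a_0$ is replaced by the actual top of the spectrum of $E_0$, the bound only improves. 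If $E_0=0$, then $T_0$ has the form $h_{\A\B}|_{\B_0}$ times a scalar and the bound still holds directly from the integrand estimate. Using $\beta a_0\le1$ gives the second inequality in \eqref{tf:low-depth}. Finally, $4^{-(\ell+1)}/(\ell+1)!\le\tfrac12 3^{-\ell}$, as noted after \eqref{kd:low-shell-bound}, so \eqref{tf:depth-bound} also holds on shell $0$.

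The only step needing care is the low shell. There the eigenvalues of $E_0$ may approach or equal zero, so the infinite-time bound \eqref{kd:interval-bound} is useless. It is essential to use the finite imaginary-time window $[0,\beta/2]$ together with the calibration $a_0\le\beta^{-1}$. I expect the rest to be bookkeeping.
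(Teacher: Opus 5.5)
Your proposal is correct and follows the paper's proof: the positive shells are handled by Corollary~\ref{cor:kd-shell-depth} with the bounds of Lemma~\ref{tf:thermal-block-lemma}, the low shell by the finite-time estimate \eqref{kd:finite-time-bound} (equivalently \eqref{kd:low-shell-bound}) with $e^\uparrow=a_0$ and $\beta a_0\le 1$, and the $\Z_s$ claim by Lemma~\ref{kd:geometry}. Your side remarks on the thermal factor and on the case $E_0=0$ are unnecessary but harmless, since the template allows an arbitrary $T$ and taking $e^\uparrow=a_0>0$ already covers $E_0=0$.
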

\begin{proof}
For $s\geq1$, apply \Cref{cor:kd-shell-depth} to
$E_s,T_s$, using the bounds in Lemma~\ref{tf:thermal-block-lemma}.
For $s=0$, apply \eqref{kd:low-shell-bound} with upper energy bound
$a_0$. Its hypotheses follow from \eqref{tf:low-block-bounds} and
$\beta a_0\leq1$. The assertion for $\Z_s$ follows from
Lemma~\ref{kd:geometry}.
\end{proof}

For a unit vector $v\in\Z_s$, Lemma~\ref{tf:pull} implies that
both $\mathsf d_-(v)$ and $\mathsf d_+(v)$ annihilate the TFD exactly.
Thus these are empty modes in the doubled thermal basis, even though
the corresponding physical free modes may have nonzero thermal occupations.

\subsection{Weighted truncation and the Gaussian superposition}
\label{tf:truncation}

Choose an orthonormal basis in each nonzero layer
$\W_{s,\ell}$ and each
$\Z_s$, with $s\in\{0,\ldots,S-1\}$ and $0\leq\ell<n_B$.
Make these choices, including the phases and ordering of the vectors,
by fixed deterministic one-particle linear algebra using only
$H_0,\Aimp,\beta$. For each such vector $v$, include both doubled
annihilators $\mathsf d_-(v)$ and $\mathsf d_+(v)$.
Complete these with the $2n_A$ physical and ancilla boundary annihilators,
and fix an ordering of the resulting $2n$ modes, denoted
$\mathsf f_1,\ldots,\mathsf f_{2n}$, with boundary modes first.
Their common normalized vacuum is denoted $\Omega_\beta$.
It is the boundary vacuum tensored, in the fermionic ordering,
with the free residual-bath TFD. Fix its phase by
$\langle\Omega_{2n}|\Omega_\beta\rangle>0$; this overlap is nonzero
at every finite $\beta$.

Define $U_\beta$ by its action on every standard doubled Fock vector:
\begin{equation}\label{tf:fock-basis}
 U_\beta\ket{x}=\ket{x}_\beta
 :=\prod_{j=1}^{2n}(\mathsf f_j^\dagger)^{x_j}
                  \ket{\Omega_\beta}
 \quad(x\in\{0,1\}^{2n}).
\end{equation}
The product has increasing $j$ from left to right. The canonical anticommutation rules imply that
these vectors form an orthonormal basis, so $U_\beta$ is unitary.
Conjugating each standard doubled annihilator by $U_\beta$ gives the
corresponding $\mathsf f_j$, a linear combination of the physical and
ancilla creation and annihilation operators. Hence $U_\beta$ is a
Bogoliubov transformation and every $U_\beta\ket{x}$ is a pure
fermionic Gaussian state. This construction uses neither $\delta$
nor $V$.

Let $\mathcal J_\beta\subseteq[2n]$ label the doubled layer modes
and $\mathcal D_\beta\subseteq[2n]$ the doubled decoupled modes.
For $j\in\mathcal J_\beta$, let $\ell(j)$ be its Krylov depth and
put $\mathsf n_j=\mathsf f_j^\dagger\mathsf f_j$.
There are at most $2mS$ doubled layer modes at any fixed depth
and $2n_A\leq2mS$ boundary modes.
For every subset $I\subseteq\mathcal J_\beta$, Lemmas~\ref{tf:pull}
and \ref{tf:depth} give
\begin{equation}\label{tf:depth-product}
 \bra{\mathsf{TFD}_\beta(H)}\prod_{j\in I}\mathsf n_j
                       \ket{\mathsf{TFD}_\beta(H)}
 \leq\prod_{j\in I}\left(\tfrac14\,9^{-\ell(j)}\right).
\end{equation}

Define the weighted particle number and its value on a configuration by
\begin{equation}\label{tf:weight}
 Q_\beta=\sum_{j\in\mathcal J_\beta}(\ell(j)+1)\mathsf n_j,
 \qquad Q_\beta(x)=\sum_{j\in\mathcal J_\beta}(\ell(j)+1)x_j.
\end{equation}
For an integer $R\geq0$, let
\begin{align}
 \mathcal Y_{\beta,R}
 &=\{x\in\{0,1\}^{2n}: Q_\beta(x)\leq R,
             \ x_j=0\text{ for every }j\in\mathcal D_\beta\},\notag\\
 P_{\beta,R}&=\sum_{x\in\mathcal Y_{\beta,R}}
                      \ket{x}_\beta\,{}_{\beta}\!\bra{x},\notag\\
 p_{\beta,R}(H)&=\norm{(\Id-P_{\beta,R})
                              \ket{\mathsf{TFD}_\beta(H)}}^2,
 \qquad D_{\beta,R}=|\mathcal Y_{\beta,R}|.
 \label{tf:cutoff-def}
\end{align}
Boundary occupations are unrestricted.

\begin{lemma}[Thermal tail and dimension bounds]\label{tf:tail}
For every integer $R\geq0$,
\begin{equation}\label{tf:tail-dimension}
 p_{\beta,R}(H)\leq e^{4mS}4^{-R},\qquad
 D_{\beta,R}\leq e^{(2\log2+2)mS}2^R.
\end{equation}
\end{lemma}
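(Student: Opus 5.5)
The argument mirrors Proposition~\ref{ge:tail}, with the ground-state joint occupation bound replaced by \eqref{tf:depth-product} and the mode count per depth doubled to $2mS$.

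\emph{Tail bound.} For numbers $z_j\ge1$, $j\in\mathcal J_\beta$, the number operators $\mathsf n_j$ are commuting projections. They commute by Lemma~\ref{tf:pull}, because the chosen layer vectors are orthonormal within each sign family. Hence we can expand
\[
 \prod_{j\in\mathcal J_\beta}z_j^{\mathsf n_j}=\sum_{I\subseteq\mathcal J_\beta}\prod_{j\in I}(z_j-1)\mathsf n_j .
\]
Taking the expectation in $\ket{\mathsf{TFD}_\beta(H)}$ and applying \eqref{tf:depth-product} term by term (all coefficients are nonnegative) gives
\[
 \bra{\mathsf{TFD}_\beta(H)}\prod_{j\in\mathcal J_\beta}z_j^{\mathsf n_j}\ket{\mathsf{TFD}_\beta(H)}\le\prod_{j\in\mathcal J_\beta}\left[1+(z_j-1)\tfrac14 9^{-\ell(j)}\right].
\]
We choose $z_j=4^{\ell(j)+1}$, so the left-hand side becomes $\langle 4^{Q_\beta}\rangle$. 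Using $1+x\le e^x$ and the fact that there are at most $2mS$ doubled layer modes per depth, we get
\[
 \log\langle 4^{Q_\beta}\rangle\le \frac{2mS}{4}\sum_{\ell\ge0}(4^{\ell+1}-1)9^{-\ell}\le \frac{mS}{2}\cdot 4\cdot\frac{1}{1-4/9}=\frac{18}{5}mS<4mS .
\]
By the remark after Lemma~\ref{tf:depth}, the doubled decoupled modes $j\in\mathcal D_\beta$ annihilate the TFD exactly, because $s_\beta(v)=0$ for $v\in\Z_s$. So the discarded mass $p_{\beta,R}(H)$ is exactly the probability that $Q_\beta>R$, i.e.\ $Q_\beta\ge R+1$. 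Markov's inequality applied to $4^{Q_\beta}$ then gives $p_{\beta,R}(H)\le e^{4mS}4^{-R}$.

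\emph{Dimension bound.} For each retained configuration $x\in\mathcal Y_{\beta,R}$ we have $1\le 2^{R-Q_\beta(x)}$. Summing this over retained $x$ and then relaxing to arbitrary layer occupations, with unrestricted boundary bits and decoupled bits fixed to zero, gives
\[
 D_{\beta,R}\le 2^{2n_A}2^R\prod_{j\in\mathcal J_\beta}\left(1+2^{-(\ell(j)+1)}\right)\le 2^{2mS}2^R\exp\!\Big(2mS\sum_{\ell\ge0}2^{-(\ell+1)}\Big)=e^{(2\log 2+2)mS}2^R .
\]
Here we used $2n_A\le 2mS$ and again at most $2mS$ layer modes per depth.

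The only step needing care is the commutation and orthogonality needed to invoke \eqref{tf:joint-bound} for an arbitrary subset $I$. A subset may contain both $\mathsf d_-(v)$ and $\mathsf d_+(v)$ for the same vector $v$. This is permitted, since Lemma~\ref{tf:pull} only requires orthogonality among vectors carrying the same sign, and these are distinct canonical modes. Once this is settled, the rest is the routine generating-function estimate above.
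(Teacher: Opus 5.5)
Your proposal is correct and follows the paper's proof essentially line for line. Both use the generating-function expansion with $z_j=4^{\ell(j)+1}$ followed by Markov's inequality for the tail, and the weighted count via $1\le 2^{R-Q_\beta(x)}$ for the dimension; the only cosmetic differences are that you bound the series by dropping the $-1$ (getting $\tfrac{18}{5}mS$) instead of evaluating it exactly as $\tfrac{243}{80}mS$, and that you spell out why pairs $\mathsf d_\pm(v)$ with the same $v$ are admissible in Lemma~\ref{tf:pull}.
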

\begin{proof}
Expanding the product of commuting number projections and applying
\eqref{tf:depth-product}, for $z_j\geq1$ with
$j\in\mathcal J_\beta$ we obtain
\begin{align}
 &\bra{\mathsf{TFD}_\beta(H)}
       \prod_{j\in\mathcal J_\beta}z_j^{\mathsf n_j}
                   \ket{\mathsf{TFD}_\beta(H)}\notag\\
 &\quad\leq\prod_{j\in\mathcal J_\beta}
       \left[1+(z_j-1)\tfrac14\,9^{-\ell(j)}\right].
 \label{tf:generating-function}
\end{align}
Set $z_j=4^{\ell(j)+1}$, use $\log(1+x)\leq x$ for $x\geq0$,
and sum over the at most $2mS$ modes at each depth:
\begin{equation}\label{tf:four-moment}
 \log\bra{\mathsf{TFD}_\beta(H)}4^{Q_\beta}
                          \ket{\mathsf{TFD}_\beta(H)}
 \leq\frac{2mS}{4}\sum_{\ell=0}^\infty
                   (4^{\ell+1}-1)9^{-\ell}
 =\frac{243mS}{80}\leq4mS.
\end{equation}
All decoupled doubled modes are empty, so $p_{\beta,R}(H)$ is the
probability that $Q_\beta>R$. Markov's inequality proves the first
bound in \eqref{tf:tail-dimension}.

For every retained configuration, $1\leq2^{R-Q_\beta(x)}$.
Summing and then allowing arbitrary layer occupations gives
\begin{align*}
 D_{\beta,R}
 &\leq 2^{2n_A}2^R\prod_{j\in\mathcal J_\beta}
                   (1+2^{-(\ell(j)+1)})\\
 &\leq 2^{2n_A}2^R\exp\left(2mS\sum_{\ell=0}^\infty
     2^{-(\ell+1)}\right)
 \leq e^{(2\log2+2)mS}2^R.
\end{align*}
This proves the second bound. The infinite series only bound sums
over finitely many doubled modes.
\end{proof}

\subsection{Proof of Theorem~\ref{thm:rank}}

The case $\beta=0$ is covered by \eqref{tf:beta-zero}.
For $\beta>0$ and $0<\delta<1$, define
\begin{equation}\label{tf:cutoff-choice}
 R_{\beta,\delta}=4mS+\left\lceil\log_2(2/\delta)\right\rceil,
 \qquad
 \mathcal X_{\beta,\delta}=\mathcal Y_{\beta,R_{\beta,\delta}}.
\end{equation}
Thus $\mathcal Y_{\beta,R}$ is the auxiliary family indexed by an
integer cutoff, and $\mathcal X_{\beta,\delta}$ is the set in the
theorem. The unitary $U_\beta$ is defined in \eqref{tf:fock-basis}.
Let $\Pi_{\beta,\delta}$ project onto its retained vectors:
\begin{equation}\label{tf:theorem-projector}
 \Pi_{\beta,\delta}
 =\sum_{x\in\mathcal X_{\beta,\delta}}
       U_\beta\ket{x}\bra{x}U_\beta^\dagger
 =P_{\beta,R_{\beta,\delta}}.
\end{equation}
Write $R=R_{\beta,\delta}$ in the remaining estimates.
Lemma~\ref{tf:tail} gives, for every allowed interaction $V$ and
$H=H_0+V$,
\begin{equation}\label{tf:chosen-tail}
 p_{\beta,R}(H)
 \leq\frac{\delta^2}{4}\exp[(4-8\log2)mS]
 \leq\frac{\delta^2}{4}<1.
\end{equation}
Consequently,
\begin{equation}\label{tf:concentration}
 \sum_{x\in\mathcal X_{\beta,\delta}}
   |\langle\mathsf{TFD}_\beta(H)|U_\beta|x\rangle|^2
 =1-p_{\beta,R}(H)\geq1-\delta^2,
\end{equation}
which is precisely \eqref{eq:TFD_approx}. For the cardinality,
$2^R\leq(4/\delta)2^{4mS}$ and Lemma~\ref{tf:tail} imply
\begin{equation}\label{tf:rank-bound}
 |\mathcal X_{\beta,\delta}|=D_{\beta,R}
 \leq\frac4\delta e^{(6\log2+2)mS}
 \leq\frac4\delta e^{7mS}
 \leq\frac1\delta(2+\beta)^{O(m)}.
\end{equation}
The last inequality uses \eqref{tf:shell-count} and $m\geq1$.

We now verify the two separate construction costs. The partition,
boundary spaces, template maps $E_s,T_s$, and all
Krylov layers depend only on $H_0,\Aimp,\beta$. They are obtained
from matrices of dimension at most $n$. The Krylov spaces stabilize
after at most $n_B$ steps, so the coefficients of the complete
Bogoliubov transformation $U_\beta$ are computed with polynomially
many one-particle operations, independently of $\delta$ and $V$.
The phase convention in \eqref{tf:fock-basis} is also fixed by these
data. In the arithmetic model, scalar exponentials are evaluated
to the accuracy required by the output tolerance. This gives the stated
$\poly(n,1+\beta)$ cost for $U_\beta$.

Only layer depths less than $R$ can be occupied in retained
configurations. Including the unrestricted boundary, at most
$2mS(R+1)$ doubled modes can be occupied. Applying
Lemma~\ref{lem:enum} to these modes, with all other occupations fixed
to zero, gives enumeration cost polynomial in $n,mS,R,D_{\beta,R}$. Since
$mS=O(m\log(2+\beta))$ and $R=O(mS+\log(1/\delta))$, the cost is
$\poly(n,1+\beta,\delta^{-1})$ for fixed $m$.
This proves the required independence
and the theorem.

\section{Constructive approximation of thermofield-double state}
\label{ct:construction}

Here we prove Theorem~\ref{ct:main}. 
This section depends on the ground-energy algorithm of
Theorem~\ref{thm:ground-energy-algorithm} which we use as a subroutine here
and the thermal multiparticle
pull-through, Lemma~\ref{tf:pull}. We extend the latter to Schatten
norms and use an artificial bath gap to obtain one fixed Krylov decomposition
for an  imaginary-time trajectory $\tau\in [0,\beta]$.
 A linear least-squares fit
  then determines a discretized version of this trajectory within the span of retained Fock configurations.
Throughout this section we
 assume $\norm{V}\leq1$.

\subsection{Krylov decomposition for the least-squares fit}
\label{ct:fixed-basis-section}

Let $e_1,\ldots,e_n \in [0,1]$ be the bath excitation energies.
First, suppose $\omega\leq e_j\leq1$ for all $j$,
where $0<\omega\leq1$ is the bath gap.
The gapped bath assumption is relaxed in  Section~\ref{ct:ungapped-section}.

We shall use the zero-temperature Krylov decomposition of Section~\ref{ge:shelled-decomposition}
which we briefly summarize below. 
The bath spectrum is partitioned into $S$ dyadic energy shells  with
\[
S=1+\lfloor\log_2(1/\omega)\rfloor
\]
and shell indices
$s=0,\ldots,S-1$, see Section~\ref{ge:shells}.  Empty shells are allowed.
Let $\Pi_s$ be the corresponding spectral projector of $h$ and
$\mathcal H_s=\ran\Pi_s$. The boundary and the residual  bath spaces are defined as
\[
 \A_s=\Pi_s\Aimp,\qquad
 \B_s=\mathcal H_s\ominus\A_s,\qquad
 \A=\bigoplus_{s=0}^{S-1}\A_s,\qquad
 \B=\bigoplus_{s=0}^{S-1}\B_s.
\]
Thus  $n_A=\dim\A\leq mS$ and $n_B=\dim\B=n-n_A$.
Since $\Aimp\subseteq\A$, $V$ is supported on $\A$.
Relative to $\A\oplus\B$, write
\[
 h=\begin{pmatrix}
 h_{\A\A}&h_{\A\B}\\
 h_{\A\B}^\dagger&h_{\B\B}
 \end{pmatrix}.
\]
The linear maps $E\, : \, \B\to \B$ and $T\, : \, \B\to \A$ are defined as
\begin{equation}\label{ct:blocks}
 E=h_{\B\B},\qquad T=h_{\A\B},\qquad
 P_s=\Pi_s|_{\B},\qquad E_s=E|_{\B_s},\quad T_s=T|_{\B_s}.
\end{equation}
Consider a vector $v\in \B$. We shall measure how strongly $v$ is coupled to the boundary using 
a modified 
integrated coupling defined as
\begin{equation}\label{ct:integrated-coupling}
\hat{s}_\tau(v)=\int_0^{\tau/2}\norm{Te^{-uE}(\Id+e^{-\tau E})^{-1/2}v}\,du,
\end{equation}
where $\tau \in [0,\beta]$ is the imaginary time used in the least-squares fit interpolation.
Note that $\hat{s}_\tau(v)$ is similar to the integrated coupling used in the finite-temperature Krylov decomposition of Section~\ref{tf:shells}
with $\beta=\tau$,
except that the Fermi-Dirac filter $(\Id+e^{-\tau E})^{-1/2}$ is moved from the linear map $T$ to 
the integrated coupling, see \eqref{tf:blocks}. This is more convenient for our purposes since 
we do not need to rebuild the Krylov basis as $\tau$ varies.

By definition, 
each $\B_s$ is $E$-invariant, $\ran T_s\subseteq\A_s$, and
$\rank T_s\leq m$. Lemma~\ref{ge:block-lemma} gives
\[
 2^s\omega\Id\leq E_s\leq\min\{2^{s+1}\omega,1\}\Id,
 \qquad \norm{T_s}\leq 2^{s-1}\omega.
\]
Let $\K_{s,r}$, $\W_{s,r}$, and $\Z_s$ be the spaces defined in
\eqref{kd:krylov}, \eqref{kd:layers}, and \eqref{kd:decoupled}
for the considered Krylov decomposition.
 Lemma~\ref{kd:geometry} gives
$\dim\W_{s,r}\leq m$, and every vector in $\W_{s,r}$ has
Krylov depth\footnote{We reserve the notation $\ell$ for Krylov depth of certain groups of shells defined below.} $r$. The space $\Z_s$ is $E_s$-invariant and
$T_s\Z_s=0$. Fix an orthonormal basis in every nonzero layer and
in every $\Z_s$. Together with a fixed boundary basis, these
give an orthonormal  basis of $\C^n$ used  throughout the construction.
Let $v_1,\ldots,v_{n_B}$ be the corresponding 
orthonormal basis of $\B$.
This basis is in general not an eigenbasis of $E$; the free TFD annihilators $\mathsf d_{\eta,\tau}(v)$ used below are nevertheless well defined for every $v\in\B$ by \eqref{tf:annihilators-def}.

Write the residual-bath Hamiltonian as
\[
 H_B=\sum_{j,k=1}^{n_B}E_{jk}\beta(v_j)^\dagger\beta(v_k).
\]
Let $Z_\tau(H_B)$ be the partition function of $H_B$ considered as a Hamiltonian 
acting on the Fock space  $\calF_{n_B}$ ignoring the boundary subsystem.
If $\lambda_1,\ldots,\lambda_{n_B} \in [0,1]$ are the eigenvalues of $E=h_{\B\B}$ then 
\[
 Z_\tau(H_B)
         =\prod_{j=1}^{n_B}(1+e^{-\tau\lambda_j}).
\]
One can compute $Z_\tau(H_B)$ in time $\poly(n)$ by diagonalizing $E$.

\begin{lemma}[Exponential decay]\label{ct:lem-fixed}
For every unit vector $v\in\W_{s,r}$ and every $\tau\geq0$,
\begin{equation}\label{ct:fixed-decay}
 \hat{s}_\tau(v)
 \leq\min\{1/2,C_0\,2^{-r}\},
 \qquad C_0=\frac{2}{\sqrt{1-e^{-1/2}}}.
\end{equation}
\end{lemma}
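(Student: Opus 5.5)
The plan is to prove the two bounds separately: the bound $1/2$ from a crude norm estimate, and the decay bound by expanding the Fermi--Dirac filter as a series of exponentials in $E$. Fix a shell $s$ and write $a:=2^s\omega$. By Lemma~\ref{ge:block-lemma}, $a\Id\preceq E_s\preceq 2a\Id$ and $\norm{T_s}\le a/2$. The filter $F_\tau:=(\Id+e^{-\tau E})^{-1/2}$ is a function of $E$, so it preserves $\B_s$ and commutes with $e^{-uE}$. The case $\tau=0$ is trivial because the integration range is empty, so assume $\tau>0$.

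\emph{The bound $1/2$.} Since $F_\tau$ is a contraction and $v\in\B_s$ is a unit vector,
\[
\norm{Te^{-uE}F_\tau v}=\norm{T_se^{-uE_s}F_\tau v}\le \tfrac a2 e^{-ua}.
\]
Integrating over $u\in[0,\tau/2]$ gives $\hat s_\tau(v)\le\tfrac12$.

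\emph{The decay bound.} The difficulty is that $F_\tau v$ need not have Krylov depth $r$: it is a power series in $E$ applied to $v$, so $TE^jF_\tau v\neq0$ in general for $j<r$. Lemma~\ref{kd:interval-lemma} therefore cannot be applied to $F_\tau v$ directly. To get around this, I would move the filter into the time variable. On $\B_s$ one has $0<e^{-\tau E_s}\preceq e^{-\tau a}\Id$ with $e^{-\tau a}<1$, so the binomial series converges absolutely in operator norm:
\[
F_\tau=\sum_{k\ge0}c_k\,e^{-k\tau E_s},\qquad c_k=\binom{-1/2}{k},\quad |c_k|\le1 .
\]
Consequently
\[
\norm{T e^{-uE}F_\tau v}\le\sum_{k\ge0}\norm{T_se^{-(u+k\tau)E_s}v}.
\]
Each summand involves only exponentials applied to the depth-$r$ vector $v$ itself. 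The midpoint expansion from the proof of Lemma~\ref{kd:interval-lemma} then applies to each term, giving the pointwise bound
\[
\norm{T_se^{-tE_s}v}\le \tfrac a2\, e^{-3at/2}\sum_{j\ge r}\frac{(ta/2)^j}{j!}=:g(t),
\qquad \int_0^\infty g(t)\,dt\le\tfrac12\,3^{-r}.
\]

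Integrating over $u\in[0,\tau/2]$ and substituting $t=u+k\tau$ turns the $k$-th term into $\int_{k\tau}^{k\tau+\tau/2}g(t)\,dt$. For different $k$ these intervals $[k\tau,k\tau+\tau/2]$ are pairwise disjoint subsets of $[0,\infty)$, and $g\ge0$. Hence
\[
\hat s_\tau(v)\le\sum_{k\ge0}\int_{k\tau}^{k\tau+\tau/2}g(t)\,dt\le\int_0^\infty g(t)\,dt\le\tfrac12\,3^{-r}\le C_0\,2^{-r},
\]
where the last step uses $C_0\ge 2$. Interchanging the sum with the integral is justified by Tonelli, since all terms are nonnegative. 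Combining with the first bound gives \eqref{ct:fixed-decay}. This argument yields a slightly stronger constant than the one stated.

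The step I expect to need the most care is the justification of the series expansion of $F_\tau$. Absolute convergence relies on the gap $E_s\succeq a\Id$ with $a\ge\omega>0$, which makes $\sum_k|c_k|e^{-k\tau a}$ finite. It must hold uniformly in $\tau>0$ so that the bound does not depend on $\tau$. If one wanted to avoid the series, a fallback is polynomial approximation: $T_sg(E_s)v=T_s\bigl(g-p\bigr)(E_s)v$ for any polynomial $p$ of degree $<r$, followed by a Chebyshev estimate on $[a,2a]$. That route, however, has to control the complex singularities of $(1+e^{-\tau x})^{-1/2}$ at $x\in i\pi(2\mathbb Z+1)/\tau$, which approach the real axis as $\tau$ grows. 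This is presumably the source of the weaker rate $2^{-r}$ and the constant $C_0$ in the statement.
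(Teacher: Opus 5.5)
Your proof is correct, and it takes a different route from the paper's. The paper treats $z\mapsto e^{-uz}(1+e^{-\tau z})^{-1/2}$ as a single analytic function on the disk $|z-3e_s^\downarrow/2|\le e_s^\downarrow$. A two-case argument ($\tau e_s^\downarrow\le1$ or $\tau e_s^\downarrow\ge1$) shows it is bounded there by $e^{-ue_s^\downarrow/2}/\sqrt{1-e^{-1/2}}$, uniformly in $\tau$. The paper then applies Cauchy's coefficient bound to the Taylor expansion at the shell midpoint, whose first $r$ terms vanish by the depth condition. The ratio $1/2$ between the spectral half-width $e_s^\downarrow/2$ and the disk radius gives the rate $2^{-r}$, and the filter bound together with the weaker decay $e^{-ue_s^\downarrow/2}$ on the disk accounts for $C_0$.

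You instead expand only the filter, $(\Id+e^{-\tau E_s})^{-1/2}=\sum_k\binom{-1/2}{k}e^{-k\tau E_s}$ with $|\binom{-1/2}{k}|\le1$. Each term is then $T_se^{-tE_s}v$ for the depth-$r$ vector $v$ itself, at the shifted time $t=u+k\tau$. The real-variable midpoint estimate from Lemma~\ref{kd:interval-lemma} applies termwise, and because the windows $[k\tau,k\tau+\tau/2]$ are pairwise disjoint, everything is dominated by a single zero-temperature integral.

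What your route buys: it is more elementary, reuses the existing lemma, and gives the stronger bound $\hat s_\tau(v)\le\tfrac12 3^{-r}$. That bound implies both entries of the minimum, so your separate argument for $1/2$ is redundant, and it would allow grouping fewer layers per super-layer in Lemma~\ref{ct:lem-cutoff}. The price is that it exploits the specific structure of this filter: bounded binomial coefficients, and an integration window of length $\tau/2$ shorter than the shift $\tau$. The Cauchy-estimate route works for any filter that is analytic and uniformly bounded on the disk.

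Two corrections to your commentary. First, convergence of the series is needed only for each fixed $\tau>0$, with no uniformity in $\tau$, since the final bound uses nothing but $|c_k|\le1$ and disjointness. Second, the singularities of $(1+e^{-\tau z})^{-1/2}$ sit at $z=i\pi(2k+1)/\tau$ on the imaginary axis, always at distance at least $e_s^\downarrow$ from the spectrum. They are not what limits the paper to the rate $2^{-r}$; that comes from the radius of its Cauchy disk, and the paper uses Taylor coefficients rather than a Chebyshev estimate.
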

\begin{proof}
Use the lower shell endpoint $e_s^\downarrow=2^s\omega$.
For $u,\tau\geq0$, the function
\[
 z\longmapsto e^{-uz}(1+e^{-\tau z})^{-1/2}
\]
is analytic on the disk
$|z-3e_s^\downarrow/2|\leq e_s^\downarrow$ and satisfies
\[
 \bigl|e^{-uz}(1+e^{-\tau z})^{-1/2}\bigr|\leq
 \frac{e^{-u e_s^\downarrow/2}}{\sqrt{1-e^{-1/2}}}.
\]
Indeed, if $\tau e_s^\downarrow\leq1$, then
$|\tau\operatorname{Im}z|\leq1$ and
$\operatorname{Re}(e^{-\tau z})\geq0$. If
$\tau e_s^\downarrow\geq1$, then $|e^{-\tau z}|\leq e^{-1/2}$.
In either case $\operatorname{Re}(1+e^{-\tau z})\geq1-e^{-1/2}>0$,
so the principal inverse square root is analytic and obeys the
stated bound.

The depth condition \eqref{kd:depth} makes the first $r$ terms
of the Taylor expansion centered at $3e_s^\downarrow/2$ vanish 
 after applying $T_s$ to
$e^{-uE_s}(\Id+e^{-\tau E_s})^{-1/2}v$. Cauchy's coefficient bound and
$\norm{E_s-3e_s^\downarrow\Id/2}\leq e_s^\downarrow/2$ give
\[
 \norm{T_s e^{-uE_s}(\Id+e^{-\tau E_s})^{-1/2}v}
 \leq\frac{e_s^\downarrow e^{-u e_s^\downarrow/2}}
                {\sqrt{1-e^{-1/2}}}\,2^{-r}.
\]
Integration over $u\in[0,\tau/2]$ gives $C_0\,2^{-r}$.
The separate bound $1/2$ follows from
$\norm{T_s}\leq e_s^\downarrow/2$,
$E_s\geq e_s^\downarrow\Id$, and
$\norm{(\Id+e^{-\tau E_s})^{-1/2}}\leq1$.
\end{proof}

The exponential decay offered by Lemma~\ref{ct:lem-fixed} might
not be strong enough on its own to compensate for the amplification
of errors in the least-squares approximation. However, grouping
$O(1)$ successive Krylov layers into a single super-layer gives
sufficiently fast decay in the super-layer index.
More formally, for each energy shell $s$ and integer $\ell\ge0$,
define the Krylov super-layer
\[
 \W^{\rm super}_{s,\ell}
 =\bigoplus_{r=16\ell}^{16\ell+15}\W_{s,r}.
\]
Use the union of the already chosen orthonormal bases of these
layers as a basis of $\W^{\rm super}_{s,\ell}$.
Assign weight $\ell+1$ to each mode in super-layer $\W^{\rm super}_{s,\ell}$.
For each shell, there are at most $16m$ bath modes per super-layer.

For each $v\in\B$, imaginary time $\tau\in[0,\beta]$, and sign
$\eta\in\{-,+\}$, let $\mathsf d_{\eta,\tau}(v)$ be the free TFD
annihilator defined in Subsection~\ref{tf:annihilators}, with
inverse temperature $\tau$.
These operators annihilate $\ket{\mathsf{TFD}_\tau(H_B)}$,
the thermofield-double state of the bath Hamiltonian $H_B$.
Let
\[
 \mathcal J
 =\{(v_i,\eta):i=1,\ldots,n_B,\ \eta\in\{-,+\}\},
\]
where $v_1,\ldots,v_{n_B}\in \B$ is the chosen orthonormal basis of $\B$.
The $2n_B$ operators $\mathsf d_{\eta,\tau}(v)$ with
$(v,\eta)\in\mathcal J$ obey the canonical anticommutation relations.

Let $u_1,\ldots,u_{n_A}\in \A$ be some fixed orthonormal basis of $\A$.
For each vector $u\in \{u_1,\ldots,u_{n_A}\}$
the corresponding physical and ancilla boundary annihilators are
\begin{equation}
\label{eq:boundary_annihilators_confusing}
 \sum_{k=1}^n \overline{u}_k \beta_k
 \quad \mbox{and} \quad
 \sum_{k=1}^n u_k\widetilde\beta_k.
\end{equation}
Fix an ordering of $\mathcal J$ and of the $2n_A$ boundary
annihilators, independent of $\tau$.
Let $\mathcal U_\tau:\mathcal F_{2n}\to\mathcal F_{2n}$ be a
unitary Bogoliubov transformation such that conjugation
$a_j\mapsto\mathcal U_\tau a_j\mathcal U_\tau^\dagger$
maps the standard annihilators $a_1,\ldots,a_{2n_B}$ to the
ordered free TFD annihilators
$\mathsf d_{\eta,\tau}(v)$ with
$(v,\eta)\in\mathcal J$
and maps the remaining standard
annihilators to the ordered boundary annihilators \eqref{eq:boundary_annihilators_confusing}. 
These conjugation relations determine $\mathcal U_\tau$ up to
an overall phase. Fix this phase by requiring that
$\mathcal U_\tau$ maps the standard vacuum to
$\ket{\mathsf{TFD}_\tau(H_B)}$ with both copies of the boundary
in their vacuum. With this convention, $\mathcal U_\tau$
depends smoothly on $\tau$.

Fix the mode ordering in the definition of $\mathcal U_\tau$ so
that modes $2j-1$ and $2j$ correspond to $(v_j,-)$ and $(v_j,+)$,
respectively, for $j=1,\ldots,n_B$. The remaining $2n_A$ modes
are the  boundary modes. This determines the order of bits in Fock configurations.

For a Fock configuration $x\in\{0,1\}^{2n}$, define its weight as
\[
 Q(x)
 =
 \sum_{s=0}^{S-1}\; \sum_{\ell\ge0}
 (\ell+1)
 \sum_{\substack{1\le j\le n_B\\
                 v_j\in\W^{\mathrm{super}}_{s,\ell}}}
 \bigl(x_{2j-1}+x_{2j}\bigr).
\]
Thus each occupied bath bit associated with super-layer $\ell$
contributes weight $\ell+1$, while boundary bits contribute nothing.
Given an integer cutoff $R\ge 1$, define a set of retained configurations
\[
 \mathcal X_R
 =
 \left\{
 x\in\{0,1\}^{2n}:
 \begin{array}{l}
 Q(x)\le R,\\
 x_{2j-1}=x_{2j}=0
 \text{ whenever }v_j\in\displaystyle\bigoplus_s\Z_s
 \end{array}
 \right\}.
\]
The second condition requires all decoupled bath modes to be
empty. Boundary bits are unrestricted.

Define
\[
 P_R=\sum_{x\in\mathcal X_R}\ket{x}\bra{x}.
\]
These are the doubled-space analogues of the set $\mathcal X_R$, the projector $P_R$ and the weight $Q(x)$ of Section~\ref{sec:weightfoctruncation}, with Krylov depth replaced by the super-layer index.
The projector onto the span of
$\{\mathcal U_\tau\ket{x}:x\in\mathcal X_R\}$ is
\[
 \mathcal P_R(\tau)=\mathcal U_\tau P_R\mathcal U_\tau^\dagger.
\]

\subsection{Pull-through bounds in Schatten norms and normalization}\label{ct:schatten-section}

The definitions of $\hat{s}_\tau(v)$, $\mathsf d_{\eta,\tau}(v)$,
$\mathcal U_\tau$, and $\mathcal P_R(\tau)$ extend to every
$\tau\ge0$. We use this extension in the partition function bound
below; the least-squares fit still uses $\tau\in[0,\beta]$.

We use the vectorization \eqref{tf:isometry}, with
\[
 \ket{\mathsf{TFD}_\tau(H)}
 =\frac{\ket{e^{-\tau H/2}}_f}{\sqrt{Z_\tau(H)}}.
\]
Schatten norms are taken on the original Fock space $\calF_n$.
The notation $\norm{\cdot}_{\mathrm{HS}}$ denotes the
Hilbert--Schmidt  norm (which coincides with the Frobenius norm), while doubled vectors have the Euclidean norm.
As in Section~\ref{tf:annihilators}, a doubled projector or
annihilator acting on an operator denotes its induced action
through vectorization.

For future reference, let us record a bound 
\begin{equation}\label{ct:eq-all-p}
 \norm{\mathsf d_{\eta_1,\tau}(v_{i_1})\cdots
       \mathsf d_{\eta_k,\tau}(v_{i_k})e^{-\tau H/2}}_p
 \le \norm{e^{-\tau H/2}}_p
       \prod_{j=1}^k\hat{s}_\tau(v_{i_j})
\end{equation}
which holds for every $p\in[1,\infty]$  and any pairs  $(v_{i_1},\eta_1),\ldots,(v_{i_k},\eta_k)\in\mathcal J$.
The above bound is a rephrasing of \eqref{tf:holder-bound-p} of 
Lemma~\ref{tf:pull}.
We next bound the number of retained Fock configurations and
the truncation error in Schatten norms.

\begin{lemma}[Cutoff estimates]\label{ct:lem-cutoff}
With $16$ Krylov layers per super-layer, there is an absolute
constant $c>0$ such that, simultaneously for all $\tau\ge0$,
integers $R\ge1$, and $p\in[1,\infty]$,
\begin{align}
 D_R:=\rank P_R&\le e^{cmS}2^R,\label{ct:eq-dimension}\\
 \norm{(I-\mathcal P_R(\tau))e^{-\tau H/2}}_{\mathrm{HS}}
      &\le e^{cmS}2^{-8R}\norm{e^{-\tau H/2}}_{\mathrm{HS}},\label{ct:eq-hs-tail}\\
 \norm{(I-\mathcal P_R(\tau))e^{-\tau H/2}}_p
      &\le e^{cmS}2^{-R}\norm{e^{-\tau H/2}}_p.\label{ct:eq-p-tail}
\end{align}
Every configuration  $x\in\mathcal X_R$ has at most $R$ occupied bath bits.
\end{lemma}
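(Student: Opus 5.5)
The plan is to handle all three estimates with one exponential-moment argument in the doubled Fock basis $\{\mathcal U_\tau\ket{x}\}$, fed by the Schatten-norm pull-through bound \eqref{ct:eq-all-p} and the per-mode decay of Lemma~\ref{ct:lem-fixed}. First I record the mode counts. A basis vector $v_j$ lying in $\W^{\rm super}_{s,\ell}$ lies in some $\W_{s,r}$ with $r\ge16\ell$. Lemma~\ref{ct:lem-fixed} therefore gives $\hat s_\tau(v_j)\le\min\{1/2,\,C_0 2^{-16\ell}\}$ for every $\tau\ge0$. Each super-layer of each shell has at most $16m$ basis vectors, and each vector gives two doubled bits $(v_j,\pm)$. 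Hence at most $32mS$ weighted bath bits carry any fixed weight $\ell+1$. There are also $2n_A\le2mS$ unweighted boundary bits. Every occupied bath bit has weight at least $1$, so every $x\in\mathcal X_R$ has at most $R$ occupied bath bits.

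For the dimension bound \eqref{ct:eq-dimension} I repeat the counting in Lemma~\ref{tf:tail}. Since $1\le2^{R-Q(x)}$ on $\mathcal X_R$, summing and dropping the constraint on bath bits gives
\[
 D_R\le2^{2n_A}2^R\prod_{\ell\ge0}\bigl(1+2^{-(\ell+1)}\bigr)^{32mS}\le e^{cmS}2^R .
\]

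For the tails I avoid Markov's inequality, which is only available in the $p=2$ case, and work at the operator level. Fix $\lambda\in\{2,2^8\}$ and let $F=\lambda^{Q}$, defined with respect to the $\tau$-dependent modes. Then $F$ is diagonal in the basis $\mathcal U_\tau\ket{x}$. On configurations with $Q(x)>R$ it is at least $\lambda^{R}$, and configurations with an occupied decoupled bit are excluded from $\mathcal X_R$ anyway. I will therefore bound $I-\mathcal P_R(\tau)$ by two pieces. The first is $(I-\mathcal P_R)F^{-1}F$ restricted to configurations with no decoupled excitation, where $\norm{(I-\mathcal P_R)F^{-1}}\le\lambda^{-R}$. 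The second consists of configurations with some decoupled bit occupied. That piece is annihilated when it acts on $e^{-\tau H/2}$, because a decoupled $v$ has $\hat s_\tau(v)=0$, and \eqref{ct:eq-all-p} then forces $\mathsf d_{\eta,\tau}(v)e^{-\tau H/2}=0$.

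It remains to show $\norm{F e^{-\tau H/2}}_p\le e^{cmS}\norm{e^{-\tau H/2}}_p$. The number operators of distinct bits commute, so
\[
 F=\prod_j\bigl(I+(\lambda^{w_j}-1)\mathsf n_j\bigr)=\sum_{I}\prod_{j\in I}(\lambda^{w_j}-1)\prod_{j\in I}\mathsf n_j .
\]
Because the $\mathsf d$'s of distinct bits anticommute, $\prod_{j\in I}\mathsf n_j=\pm\bigl(\prod_{j\in I}\mathsf d_j^\dagger\bigr)^{\rm rev}\prod_{j\in I}\mathsf d_j$. The creation part has operator norm at most $1$. Together with \eqref{ct:eq-all-p} this gives $\norm{\prod_{j\in I}\mathsf n_j e^{-\tau H/2}}_p\le\prod_{j\in I}\hat s_\tau(v_j)\,\norm{e^{-\tau H/2}}_p$. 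Here I use that the induced left action of a doubled operator of operator norm at most $1$ does not increase Schatten norms; I will check this from \eqref{tf:annihilators-def}, where the action is a left and/or right multiplication by physical operators. The triangle inequality then yields
\[
 \norm{Fe^{-\tau H/2}}_p\le\prod_j\bigl(1+(\lambda^{\ell_j+1}-1)\hat s_\tau(v_j)\bigr)\norm{e^{-\tau H/2}}_p .
\]

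The remaining step is the numerical check, and this is where grouping $16$ layers per super-layer matters. For $\lambda=2^8$, the $\ell=0$ factor is at most $1+2^8/2$, a constant, and there are at most $32mS$ such factors. For $\ell\ge1$ the factor is at most $1+2^{8}C_0 2^{-8\ell}$. The logarithm of the product is therefore at most $32mS\bigl(\log 129+\sum_{\ell\ge1}2^8C_02^{-8\ell}\bigr)=O(mS)$. This gives \eqref{ct:eq-hs-tail} with $2^{-8R}$, and the case $\lambda=2$ gives \eqref{ct:eq-p-tail}; no square root is needed. The step I expect to be most delicate is justifying \eqref{ct:eq-all-p} for the non-eigenbasis vectors $v_j$ with the Fermi--Dirac filter moved into $\hat s_\tau$, together with the sign and Schatten-contraction bookkeeping for the vectorized left action. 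I would verify it by rerunning the proof of Lemma~\ref{tf:pull} with $O_i(t)$ built from $(\Id+e^{-\tau E})^{-1/2}v_i$ and using \eqref{tf:holder-bound-p} in place of the HS bound.
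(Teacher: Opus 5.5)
\textbf{Verdict: genuine gap.} Your dimension count, the bound on occupied bath bits, and the Hilbert--Schmidt tail \eqref{ct:eq-hs-tail} are correct.

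\textbf{The HS tail.} For $p=2$, bounding $\|(I-\mathcal P_R(\tau))\ket{e^{-\tau H/2}}_f\|\le 2^{-8R}\|2^{8Q}\ket{e^{-\tau H/2}}_f\|$ is legitimate. So is controlling this first moment by expanding $2^{8Q}$ and using pull-through. It is a fine substitute for the paper's second-moment bound on $\mathbb E[2^{16Q}]$ followed by Markov's inequality.

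\textbf{The gap: the general-$p$ tail.} The problem is \eqref{ct:eq-p-tail} for $p\neq2$, which is exactly what Lemma~\ref{ct:lem-partition} needs at $p=1$ and $p=\infty$.
\begin{itemize}
\item The factorization $(I-\mathcal P_R)=[(I-\mathcal P_R)F^{-1}]\,F$ only helps if the superoperator $Y\mapsto(I-\mathcal P_R)F^{-1}Y$ has $S_p\to S_p$ norm at most $\lambda^{-R}$.
\item Your bound $\|(I-\mathcal P_R)F^{-1}\|\le\lambda^{-R}$ is an operator norm on $\mathcal F_{2n}$. Through the vectorization isometry, that is only an $S_2\to S_2$ bound.
\item The principle you invoke, that a doubled operator of norm at most $1$ does not increase Schatten norms, is false. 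Every Hilbert--Schmidt-orthogonal projection of operators is such a doubled operator, and triangular truncation is one whose $S_\infty$ norm grows with the dimension.
\item It fails even for a single free TFD creator. For one bath mode at $\tau=0$, $\mathsf d_-^\dagger=(b^\dagger-i\widetilde b)/\sqrt2$ sends $\ket{I}_f$ to $\ket{\sqrt2\,b^\dagger}_f$, raising the operator norm by $\sqrt2$.
\end{itemize}
For the creators this costs only a constant per mode (the paper uses $2$), which your numerics absorb. But $(I-\mathcal P_R)F^{-1}$ is a polynomial of unbounded degree in all the $\mathsf n_j$. Once you split it off from $e^{-\tau H/2}$, it no longer benefits from the decay of $\hat s_\tau$. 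Any term-by-term estimate of its $S_p$ action then grows exponentially in the number of bath modes, not like $\lambda^{-R}$.

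\textbf{The missing idea.} Never separate the projector from $e^{-\tau H/2}$.
\begin{itemize}
\item Expand the indicator as $\mathbf 1_{\{Q(x)>R\}}=\sum_I c_I\prod_{j\in I}x_j$ in the bath bits. M\"obius inversion gives $|c_I|\le2^{|I|}$, and $c_I=0$ unless the total weight of $I$ exceeds $R$.
\item Since decoupled modes are exactly empty, $(I-\mathcal P_R(\tau))e^{-\tau H/2}=\sum_I c_I\bigl[\prod_{j\in I}\mathsf n_j\bigr]e^{-\tau H/2}$.
\item By the creator bound $2$ and \eqref{ct:eq-all-p}, each term is at most $4^{|I|}\prod_{j\in I}\hat s_\tau(v_j)\,\|e^{-\tau H/2}\|_p$.
\item Only heavy $I$ contribute, so you may multiply each term by $2^{-R}\prod_{j\in I}2^{\ell_j+1}\ge1$. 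Summing over all subsets then factorizes into $2^{-R}\prod_\ell(1+8C_0 2^{-15\ell})^{32mS}$.
\end{itemize}
The weight $2^{\ell+1}$ plays the role of your $\lambda=2$. The difference is that it enters as a scalar inside the sum over heavy subsets, not through an operator $F$ whose $S_p$ action you cannot control.
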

\begin{proof}
For each super-layer index $\ell$, there are at most $32mS$
free TFD modes, counting both signs and all shells.
Lemma~\ref{ct:lem-fixed} gives
\[
 \hat{s}_\tau(v)\le C_0\,2^{-16\ell}
\]
for every chosen basis vector $v$ in super-layer $\ell$.
We use these two facts throughout the proof.

\medskip
\noindent\emph{Counting retained configurations.}
For every $x\in\mathcal X_R$,
$1\le 2^{R-Q(x)}$.
Summing this inequality over retained configurations and then
allowing arbitrary occupations of the super-layer modes gives
\[
 D_R=|\mathcal X_R|
 \le
 2^{2n_A}2^R
 \prod_{\ell=0}^{\infty}
 \left(1+2^{-(\ell+1)}\right)^{32mS}
 \le e^{cmS}2^R.
\]
Here $2^{2n_A}$ counts the unrestricted boundary bits.
The last inequality follows from $n_A\le mS$,
$\log(1+u)\le u$ for $u\ge0$, and
$\sum_{\ell\ge0}2^{-(\ell+1)}=1$.
This proves \eqref{ct:eq-dimension}.

\medskip
\noindent\emph{Hilbert--Schmidt truncation error.}
Consider measuring $\ket{\mathsf{TFD}_\tau(H)}$ in the
orthonormal basis $\{\mathcal U_\tau\ket{x}\}$.
The probability of outcome $x$ is
\[
 \left|\bra{x}\mathcal U_\tau^\dagger
                  \ket{\mathsf{TFD}_\tau(H)}\right|^2.
\]
All probabilities and expectations below refer to this
measurement.

Let $I$ be any subset of $\mathcal J$.
Each element of $I$ is a pair $(v,\eta)$ specifying one free
TFD mode. The canonical anticommutation relations and
\eqref{ct:eq-all-p} with $p=2$ give
\begin{align}
 &\Pr[\text{every mode in }I\text{ is occupied}]
 \notag\\
 &\qquad=
 \frac{1}{Z_\tau(H)}
 \left\|
   \left[\prod_{(v,\eta)\in I}\mathsf d_{\eta,\tau}(v)\right]
   e^{-\tau H/2}
 \right\|_{\mathrm{HS}}^2
 \le
 \prod_{(v,\eta)\in I}\hat{s}_\tau(v)^2.
 \label{ct:eq-joint-occupation}
\end{align}
The annihilators may be taken in any fixed order.
If $v\in\bigoplus_s\Z_s$, then $\hat{s}_\tau(v)=0$,
so both modes associated with $v$ are exactly empty.
Consequently, an outcome lies outside $\mathcal X_R$
precisely when its weight exceeds $R$, except on an event
of probability zero.

For $z>1$, each bath bit $x_i$ belonging to super-layer $\ell$
satisfies
\[
 z^{(\ell+1)x_i}
 =1+(z^{\ell+1}-1)x_i.
\]
Multiply this identity over all bath bits with positive weight.
The left-hand side becomes $z^{Q(x)}$.
Expanding the right-hand side produces products of occupied
bits, whose expectations are the joint occupation
probabilities bounded in \eqref{ct:eq-joint-occupation}.
All expansion coefficients are nonnegative.
Using $\hat{s}_\tau(v)^2\le C_0^2\,2^{-32\ell}$ and the bound
of $32mS$ modes at each super-layer index therefore gives
\[
 \mathbb E\!\left[z^{Q(x)}\right]
 \le
 \prod_{\ell=0}^{\infty}
 \left[1+(z^{\ell+1}-1)C_0^2\,2^{-32\ell}\right]^{32mS}.
\]
Set $z=2^{16}$. The logarithm of the right-hand side is at most
\[
 32mS\,C_0^2
 \sum_{\ell=0}^{\infty}
 (2^{16(\ell+1)}-1)2^{-32\ell}
 \le
 32mS\,C_0^2\,2^{16}
 \sum_{\ell=0}^{\infty}2^{-16\ell}
 \le cmS.
\]
Thus Markov's inequality gives
\[
 \Pr[Q(x)>R]
 \le 2^{-16R}
       \mathbb E\!\left[2^{16Q(x)}\right]
 \le e^{cmS}2^{-16R}.
\]
By the isometry \eqref{tf:isometry},
\[
 \frac{\norm{(I-\mathcal P_R(\tau))e^{-\tau H/2}}_{\mathrm{HS}}^2}
      {Z_\tau(H)}
 =
 \Pr[x\notin\mathcal X_R]
 =
 \Pr[Q(x)>R].
\]
Taking square roots proves \eqref{ct:eq-hs-tail}, after
enlarging the absolute constant $c$ if needed.

\medskip
\noindent\emph{Truncation error in Schatten norms.}
We now use \eqref{ct:eq-all-p} for general $p$.
Expand the function
$\mathbf 1_{\{Q(x)>R\}}$
as a multilinear polynomial in the bath bits.
Using the fixed correspondence between bits and mode pairs,
write $c_I$ for the coefficient of the monomial formed by
the bits indexed by $I\subseteq\mathcal J$.

By inclusion--exclusion, $c_I$ is an alternating sum of
$2^{|I|}$ values of this indicator, evaluated on configurations
whose occupied bath modes form a subset of $I$.
Each value is either zero or one, so
\[
 |c_I|\le 2^{|I|}.
\]
If the total weight of the bits indexed by $I$ is at most $R$,
all these indicator values vanish, and hence $c_I=0$.
Also, coefficients involving a decoupled mode vanish because
the indicator does not depend on its bit.

Replace each bath bit by its corresponding number projection
$\mathsf d_{\eta,\tau}(v)^\dagger\mathsf d_{\eta,\tau}(v)$.
Since the decoupled modes are exactly empty, this gives
\begin{align*}
 (I-\mathcal P_R(\tau))e^{-\tau H/2}=
 \sum_{I\subseteq\mathcal J}c_I
 \left[
   \prod_{(v,\eta)\in I}
   \mathsf d_{\eta,\tau}(v)^\dagger
   \mathsf d_{\eta,\tau}(v)
 \right]e^{-\tau H/2}.
\end{align*}

Each free TFD creation operator, acting through vectorization,
has induced Schatten-$p$ norm at most $2$.
Indeed, the formulas in Section~\ref{tf:annihilators} express
its action as a sum of two terms involving left or right
multiplication by operators of norm at most $1$.
Conjugation by fermion parity also preserves Schatten norms.

The canonical anticommutation relations allow us to move all
creators in each monomial to the left of all annihilators.
Applying the creator bound and then \eqref{ct:eq-all-p} gives
\[
 \left\|
 \left[
   \prod_{(v,\eta)\in I}
   \mathsf d_{\eta,\tau}(v)^\dagger
   \mathsf d_{\eta,\tau}(v)
 \right]e^{-\tau H/2}
 \right\|_p
 \le
 2^{|I|}\norm{e^{-\tau H/2}}_p
 \prod_{(v,\eta)\in I}\hat{s}_\tau(v).
\]
Together with $|c_I|\le2^{|I|}$, this implies
\[
 \frac{\norm{(I-\mathcal P_R(\tau))e^{-\tau H/2}}_p}
      {\norm{e^{-\tau H/2}}_p}
 \le
 \sum_{\substack{I\subseteq\mathcal J\\c_I\ne0}}
 \prod_{(v,\eta)\in I}4\hat{s}_\tau(v).
\]

Every contributing subset has total weight greater than $R$.
We may therefore multiply its bound by $2^{-R}$ and by
$2^{\ell+1}$ for each of its modes in super-layer $\ell$:
the combined multiplier is at least one.
After this step, a selected mode in super-layer $\ell$
contributes at most
\[
 4\,2^{\ell+1}\hat{s}_\tau(v)
 \le 8C_0\,2^{-15\ell}.
\]
Allowing all subsets of the super-layer modes now makes the
sum factorize, giving
\[
 \frac{\norm{(I-\mathcal P_R(\tau))e^{-\tau H/2}}_p}
      {\norm{e^{-\tau H/2}}_p}
 \le
 2^{-R}
 \prod_{\ell=0}^{\infty}
 \left(1+8C_0\,2^{-15\ell}\right)^{32mS}
 \le e^{cmS}2^{-R}.
\]
The last inequality follows from
$\log(1+u)\le u$ and the convergence of
$\sum_{\ell\ge0}2^{-15\ell}$.
This proves \eqref{ct:eq-p-tail}.
\end{proof}

For $x\in\{0,1\}^{2n}$, define the operator $G_x(\tau)$ on
$\calF_n$ by
\[
 \ket{G_x(\tau)}_f=\mathcal U_\tau\ket{x},
 \qquad
 k(x)=\sum_{j=1}^{n_B}(x_{2j-1}+x_{2j}).
\]
Thus $k(x)$ counts occupied bath bits without their weights.
With all bath bits zero, the corresponding operator is, up to
a phase, a boundary matrix unit tensored with
$e^{-\tau H_B/2}/\sqrt{Z_\tau(H_B)}$.
Boundary creators have induced Schatten norm at most $1$, and
free TFD bath creators have induced norm at most $2$. Consequently,
\begin{equation}\label{ct:eq-basis-p}
 \norm{G_x(\tau)}_p\le
 2^{k(x)}\frac{Z_{p\tau/2}(H_B)^{1/p}}{\sqrt{Z_\tau(H_B)}}.
\end{equation}
At $p=\infty$ the numerator is interpreted as $1$.

\begin{lemma}[Partition function normalization]\label{ct:lem-partition}
For any $m$ and $S$ 
there exists a constant $C_{m,S}=e^{O(mS)}$ 
such that for every $\tau\ge0$,
\begin{equation}\label{ct:eq-partition}
 C_{m,S}^{-2}\le
       \frac{e^{\tau E_1(H)}Z_\tau(H)}{Z_\tau(H_B)}
       \le C_{m,S}^2.
\end{equation}
The constant $C_{m,S}$ is independent of $n,\tau$, and $\norm V$,
and may be replaced by a known upper bound of the same form.
\end{lemma}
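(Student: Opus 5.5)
The plan is to control $f(\tau):=\log\bigl(e^{\tau E_1(H)}Z_\tau(H)/Z_\tau(H_B)\bigr)$ by integrating its derivative from $\tau=0$, instead of comparing the two partition functions directly. At $\tau=0$ we have $Z_0(H)=2^n$ and $Z_0(H_B)=2^{n_B}$, so $f(0)=n_A\log 2\le mS\log2$. For $\tau>0$,
\[
 f'(\tau)=E_1(H)-\langle H\rangle_{\rho_\tau(H)}+\langle H_B\rangle_{\rho_\tau(H_B)} ,
\]
so it suffices to show $\int_0^\infty|f'(\tau)|\,d\tau=O(mS)$, with a constant independent of $n$ and $\norm V$. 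Then $C_{m,S}=e^{O(mS)}$ works for every $\tau\ge0$.

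\textbf{Step 1 (reduce to excitations).} Using the decomposition \eqref{tf:boundary-H}, I write $\langle H\rangle_{\rho_\tau(H)}$ as the sum of $\langle H_A\rangle$, $\langle H_B\rangle$ and the coupling term, all in $\rho_\tau(H)$. Then $f'$ splits into three pieces: the boundary piece $E_1(H)-\langle H_A\rangle_{\rho_\tau(H)}$, the bath piece $\langle H_B\rangle_{\rho_\tau(H_B)}-\langle H_B\rangle_{\rho_\tau(H)}$, and the coupling expectation.

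\textbf{Step 2 (bath and coupling pieces).} In the bath eigenbasis I express $b_j^\dagger b_j$ through the free TFD modes \eqref{tf:bogoliubov} at inverse temperature $\tau$. This gives $b_j^\dagger b_j=w_j^2+(\text{normal-ordered quadratic terms in }\mathsf d_{\pm,\tau})$, and the constant term $w_j^2$ is exactly the free expectation. Consequently the bath piece is the expectation in $\ket{\mathsf{TFD}_\tau(H)}$ of a normal-ordered quadratic form in the $\mathsf d_{\eta,\tau}$. By Cauchy--Schwarz and the pull-through bound \eqref{ct:eq-all-p} at $p=2$, this is at most a sum over layer modes of $\lambda\,\hat s_\tau(v)^2$ plus cross terms $\hat s_\tau(v)\hat s_\tau(v')$. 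The coupling term $\sum_\ell a_\ell^\dagger b(T^\dagger u_\ell)$ is treated the same way. Its bath factor $b(T^\dagger u_\ell)$ is rewritten through $\mathsf d_{-,\tau}$ and $\mathsf d_{+,\tau}^\dagger$, and it touches only the depth-$0$ layers, of which there are at most $mS$.

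\textbf{Step 3 (boundary piece and integrability).} The boundary piece is handled shell by shell. By Lemma~\ref{ge:block-lemma} the coupling in shell $s$ has norm at most $2^{s-1}\omega$. Combining this with the super-layer decay of Lemma~\ref{ct:lem-fixed}, each shell should contribute $O(m)$ to $\int|f'|$. Heuristically, the coupling of shell $s$ is felt only up to $\tau\sim 2^{-s}/\omega$, and for $\tau\gtrsim 2^{-s}/\omega$ the factors $e^{-u e_s^\downarrow/2}$ in the proof of Lemma~\ref{ct:lem-fixed} give decay in $\tau$. For the largest $\tau$, I would use the bath gap $\omega$ together with $\norm{e^{-t(H-E_1)}}\le1$, exactly as in the $t\to\infty$ step of Lemma~\ref{ge:pull}. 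Summing over the $S$ shells and over layers (at most $16m$ per super-layer per shell, with geometric decay) should give $\int_0^\infty|f'|=O(mS)$.

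\textbf{Main obstacle.} The hard part is getting the $\tau$-integrability uniformly in $n$ and $\norm V$. The crude estimate $\norm{H-H_A-H_B}\le n_A\max_s\norm{T_s}\le mS$ already gives \eqref{ct:eq-partition} for $\tau=O(1)$. However, it only yields $e^{O(\tau mS)}$, which is useless for large $\tau$, so the $\tau$-decay of the $\hat s_\tau$-type quantities must really be exploited. The boundary piece $E_1(H)-\langle H_A\rangle$ does not obviously decay in $\tau$. If it resists direct treatment, my fallback is the lower-bound side alone. Lemma~\ref{ct:lem-cutoff} shows $\ket{\mathsf{TFD}_\tau(H)}$ lies, up to $e^{cmS}2^{-8R}$, in the span of $D_R=e^{O(mS)}$ vectors $G_x(\tau)$. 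I would then compare $\Tr e^{-\tau(H-E_1)}$ with $Z_\tau(H_B)$ by evaluating $\langle G_0(\tau),e^{-\tau H/2}\rangle$ via Hölder (Lemma~\ref{tf:holder}) and the basis-norm bound \eqref{ct:eq-basis-p}.
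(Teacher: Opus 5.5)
\textbf{Verdict: genuine gap.} Your derivative strategy fails at exactly the point you flag, and the failure is structural, not a missing estimate: the three-way split of Step~1 can never give $\tau$-integrable bounds. As $\tau\to\infty$, $\rho_\tau(H)$ tends to the normalized ground-space projector $\rho_{\mathrm{gs}}$ of $H$, and the free bath energy $\langle H_B\rangle_{\rho_\tau(H_B)}$ tends to $0$. Your boundary, bath and coupling pieces therefore converge, respectively, to
\[
\Tr\bigl(\rho_{\mathrm{gs}}(H-H_A)\bigr),\qquad -\Tr(\rho_{\mathrm{gs}}H_B),\qquad -\Tr\bigl(\rho_{\mathrm{gs}}(H-H_A-H_B)\bigr).
\]
Whenever $h_{\A\B}\neq0$ these limits are in general nonzero, because the interacting ground state excites the bath. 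Only their sum tends to zero. So each piece, and any bound you prove for it separately, has a non-integrable tail.

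The Step~2 bounds in particular cannot decay. As $\tau\to\infty$ one has $\hat s_\tau(v)\to s(v)=\int_0^\infty\norm{Te^{-uE}v}\,du>0$, for instance for depth-$0$ vectors. The factor $e^{-ue_s^\downarrow/2}$ in the proof of Lemma~\ref{ct:lem-fixed} sits inside the $u$-integral, so it makes the bound uniform in $\tau$, not decaying in $\tau$. Integrating a uniform $O(mS)$ bound on $|f'|$ just reproduces the $e^{O(\tau mS)}$ estimate you wanted to avoid. Note also that a bound on $\int_0^\infty|f'|$ controls the total variation of $f$, which is more than the lemma asks for.

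\textbf{Missing idea.} You need to argue pointwise in $\tau$ and use only that the truncation bounds hold uniformly in $\tau$; no $\tau$-integration is needed.
\begin{itemize}
\item Choose $R_0=O(mS)$ so that the Schatten tail \eqref{ct:eq-p-tail} is at most $1/2$.
\item Expand $\mathcal P_{R_0}(\tau)e^{-\tau H/2}$ in the orthonormal family $\{G_x(\tau)\}_{x\in\mathcal X_{R_0}}$.
\item Apply Cauchy--Schwarz to the coefficients and use \eqref{ct:eq-basis-p}. This gives $\norm{e^{-\tau H/2}}_p\le C_{m,S}\sqrt{Z_\tau(H)}\,Z_{p\tau/2}(H_B)^{1/p}/\sqrt{Z_\tau(H_B)}$.
\item At $p=\infty$ this is exactly the lower bound in \eqref{ct:eq-partition}.
\item At $p=1$ it gives $r(\tau/2)\le\log C_{m,S}+\tfrac12 r(\tau)$, where $r(\tau)=\log\bigl(Z_\tau(H)/Z_\tau(H_B)\bigr)$. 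Iterate this along $\tau,2\tau,4\tau,\ldots$ and use $2^{-a}r(2^a\tau)\to-\tau E_1(H)$, which holds because $H_B$ has ground energy zero. This yields the upper bound.
\end{itemize}

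\textbf{Why the fallback does not close the gap.} Closeness of $\ket{\mathsf{TFD}_\tau(H)}$ to the retained span in Hilbert--Schmidt norm, \eqref{ct:eq-hs-tail}, gives no control of $\norm{e^{-\tau H/2}}_\infty=e^{-\tau E_1(H)/2}$. You need the operator-norm truncation, i.e.\ \eqref{ct:eq-p-tail} at $p=\infty$. Hölder's inequality only bounds the overlap $\langle G_0(\tau),e^{-\tau H/2}\rangle$ from above, whereas a lower bound on $Z_\tau(H)$ would require a lower bound on that overlap. Finally, the fallback leaves the upper bound in \eqref{ct:eq-partition} unaddressed.
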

\begin{proof}
Choose $R_0=O(mS)$ so that \eqref{ct:eq-p-tail} is at most $1/2$.
Expand $\mathcal P_{R_0}(\tau)e^{-\tau H/2}$ in the
Hilbert--Schmidt orthonormal basis
$\{G_x(\tau):x\in\mathcal X_{R_0}\}$.
Cauchy--Schwarz on the coefficients and \eqref{ct:eq-basis-p} give
\begin{equation}\label{ct:eq-norm-comparison}
 \norm{e^{-\tau H/2}}_p\le
 C_{m,S}\sqrt{Z_\tau(H)}\frac{Z_{p\tau/2}(H_B)^{1/p}}{\sqrt{Z_\tau(H_B)}},
 \qquad C_{m,S}=e^{O(mS)}.
\end{equation}
The case $p=\infty$ proves the lower bound in
\eqref{ct:eq-partition}.
The case $p=1$ gives
\[
 \frac{Z_{\tau/2}(H)}{Z_{\tau/2}(H_B)}
 \leq C_{m,S}\sqrt{\frac{Z_\tau(H)}{Z_\tau(H_B)}}.
\]
Iterating with successively doubled times yields, for every integer
$a\geq1$,
\[
 \log\frac{Z_\tau(H)}{Z_\tau(H_B)}
 \leq(2-2^{1-a})\log C_{m,S}
       +2^{-a}\log\frac{Z_{2^a\tau}(H)}{Z_{2^a\tau}(H_B)}.
\]
Since $H_B$ has ground energy zero,
\[
 \lim_{a\to\infty}2^{-a}
       \log\frac{Z_{2^a\tau}(H)}{Z_{2^a\tau}(H_B)}=-\tau E_1(H).
\]
This proves the upper bound. The same inequality holds at $\tau=0$ by
continuity. Uniformity of the fixed-basis estimates for arbitrarily
large times is used in this step.
\end{proof}

Use Theorem~\ref{thm:ground-energy-algorithm} to compute a number $E_*$ satisfying
\[
 E_1(H)\le E_*\le E_1(H)+\frac{1}{2(1+\beta)}.
\]
Define the unnormalized but polynomially bounded doubled vector
\begin{equation}\label{ct:eq-f}
\ket{f(\tau)}_f=\frac{e^{\tau E_*/2}\ket{e^{-\tau H/2}}_f}{\sqrt{Z_\tau(H_B)}},\qquad 0\le \tau\le\beta.
\end{equation}
Lemma~\ref{ct:lem-partition} implies
\begin{equation}\label{ct:eq-f-norm}
 C_{m,S}^{-1}\le\norm{\ket{f(\tau)}_f}\le e^{1/4}C_{m,S}.
\end{equation}
This normalization is fully computable without knowing $Z_\tau(H)$.
It is essential: a bound on unnormalized evolution alone would retain
an extensive bath partition-function factor.

\subsection{The residual stability lemma}\label{ct:stability-section}

Let $\widehat{\mathcal X}_R$ consist of all bit strings obtained
from a string in $\mathcal X_R$ by flipping at most two bits.
Define the diagonal projector
\[
 \widehat P_R=\sum_{x\in\widehat{\mathcal X}_R}\ket{x}\bra{x}.
\]
Then
\begin{equation}\label{ct:eq-enlargement}
 \widehat D_R:=\rank\widehat P_R\le O((n+1)^2D_R),
 \qquad k(x)\le R+2\quad(x\in\widehat{\mathcal X}_R).
\end{equation}
Here $k(x)$ is the number of occupied bath bits defined above.
The enlarged configurations need not obey the original weighted cutoff.
Only their number and their ordinary bath particle count are used.

For $0\le\sigma\le\tau\le\beta$, define the propagator
$\Phi(\tau,\sigma)$ on doubled vectors by
\[
 \Phi(\tau,\sigma)\ket{Y}_f
 =\sqrt{\frac{Z_\sigma(H_B)}{Z_\tau(H_B)}}\,
   \ket{e^{-(\tau-\sigma)(H-E_*I)/2}Y}_f
\]
for any operator $Y:\calF_n\to\calF_n$. It satisfies
$\ket{f(\tau)}_f=\Phi(\tau,\sigma)\ket{f(\sigma)}_f$.

\begin{lemma}[Exact propagation on the residual span]\label{ct:lem-propagation}
For every operator $Y:\calF_n\to\calF_n$ with
$\ket{Y}_f\in\mathcal U_\sigma\ran\widehat P_R$ and
$0\le\sigma\le\tau\le\beta$,
\begin{equation}\label{ct:eq-stability}
 \norm{\Phi(\tau,\sigma)\ket{Y}_f}
 \le A_R\norm{\ket{Y}_f},
 \qquad A_R\le e^{O(mS)}(n+1)2^{3R/2}.
\end{equation}
\end{lemma}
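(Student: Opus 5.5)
The plan is to expand $\ket{Y}_f$ in the orthonormal basis $\{\mathcal U_\sigma\ket{x}:x\in\widehat{\mathcal X}_R\}$ and bound the propagation of each basis element separately. For a single element we use H\"older's inequality for Schatten norms, splitting $\tau$ between the propagator and the basis operator. The lemma has no a priori bound on $Z_\tau(H)$, so the unknown interacting partition function must cancel. That cancellation comes from the normalization Lemma~\ref{ct:lem-partition} together with the choice of $E_*$.

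\emph{Step 1 (expansion).} Write $\ket{Y}_f=\sum_{x\in\widehat{\mathcal X}_R}c_x\,\mathcal U_\sigma\ket{x}$, i.e.\ $Y=\sum_x c_x G_x(\sigma)$ with $\sum_x|c_x|^2=\norm{\ket{Y}_f}^2$. By the triangle inequality and the isometry \eqref{tf:isometry},
\[
 \norm{\Phi(\tau,\sigma)\ket{Y}_f}\le\sum_x|c_x|\,
 \sqrt{\tfrac{Z_\sigma(H_B)}{Z_\tau(H_B)}}\,e^{(\tau-\sigma)E_*/2}\,
 \norm{e^{-(\tau-\sigma)H/2}G_x(\sigma)}_{\mathrm{HS}} .
\]

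\emph{Step 2 (H\"older with matched exponents; the key step).} For $0<\sigma<\tau$, set $q=2\tau/(\tau-\sigma)$ and $p=2\tau/\sigma$, so that $1/p+1/q=1/2$. H\"older gives
\[
 \norm{e^{-(\tau-\sigma)H/2}G_x(\sigma)}_2\le\norm{e^{-(\tau-\sigma)H/2}}_q\,\norm{G_x(\sigma)}_p .
\]
The first factor equals $Z_\tau(H)^{(\tau-\sigma)/(2\tau)}$. For the second, apply \eqref{ct:eq-basis-p}. Since $p\sigma/2=\tau$, it yields
\[
 \norm{G_x(\sigma)}_p\le 2^{k(x)}\,Z_\tau(H_B)^{\sigma/(2\tau)}\big/\sqrt{Z_\sigma(H_B)} .
\]
Multiplying, the $Z_\sigma(H_B)$ factors cancel and the per-element factor becomes
\[
 2^{k(x)}\,e^{(\tau-\sigma)E_*/2}\left(\frac{Z_\tau(H)}{Z_\tau(H_B)}\right)^{(\tau-\sigma)/(2\tau)} .
\]
By the upper bound in \eqref{ct:eq-partition}, the last factor is at most $C_{m,S}\,e^{-(\tau-\sigma)E_1(H)/2}$, using that the exponent $(\tau-\sigma)/(2\tau)$ is at most $1/2$ and $C_{m,S}\ge1$. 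Combined with $0\le E_*-E_1(H)\le 1/(2(1+\beta))$ and $\tau-\sigma\le\beta$, the per-element factor is at most $2^{k(x)}C_{m,S}e^{1/4}$. The degenerate cases are direct: for $\sigma=\tau$ the propagator is the identity, and for $\sigma=0$ one takes $p=\infty$ with the convention stated after \eqref{ct:eq-basis-p}.

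\emph{Step 3 (summation).} By \eqref{ct:eq-enlargement}, $k(x)\le R+2$, so Cauchy--Schwarz gives
\[
 \norm{\Phi(\tau,\sigma)\ket{Y}_f}\le 4e^{1/4}C_{m,S}\,2^{R}\sqrt{\widehat D_R}\,\norm{\ket{Y}_f} .
\]
Finally, \eqref{ct:eq-enlargement} and \eqref{ct:eq-dimension} give $\sqrt{\widehat D_R}\le O(n+1)\,e^{cmS/2}2^{R/2}$. This yields $A_R\le e^{O(mS)}(n+1)2^{3R/2}$.

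The main obstacle is Step 2. The H\"older exponents must be tuned so that the bath partition functions in $\Phi$ and in \eqref{ct:eq-basis-p} cancel exactly. The surviving power of $Z_\tau(H)$ must then be exactly what Lemma~\ref{ct:lem-partition} and the estimate of $E_*$ can neutralize without any dependence on $n$ or $\norm V$. I expect the remaining bookkeeping (phases, parity signs in $G_x$) to be harmless, since only norms enter.
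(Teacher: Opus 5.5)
Your proposal is correct and follows the paper's proof essentially step for step. Like the paper, you apply H\"older with exponents $2\tau/(\tau-\sigma)$ and $2\tau/\sigma$ to each basis operator $G_x(\sigma)$, cancel the bath partition functions via \eqref{ct:eq-basis-p}, absorb the remaining power of $Z_\tau(H)$ using Lemma~\ref{ct:lem-partition} and the accuracy of $E_*$, and finish with Cauchy--Schwarz over $\widehat{\mathcal X}_R$ using \eqref{ct:eq-enlargement} and \eqref{ct:eq-dimension}; you simply spell out the exponent bookkeeping, the bound $C_{m,S}^{(\tau-\sigma)/\tau}\le C_{m,S}$, and the cases $\sigma=0$ and $\sigma=\tau$ that the paper compresses into one line.
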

\begin{proof}
For $0<\sigma<\tau$, apply H\"older's inequality with exponents
$2\tau/(\tau-\sigma)$ and $2\tau/\sigma$ to a single basis
operator $G_x(\sigma)$, where $x\in\widehat{\mathcal X}_R$.
Using the isometry \eqref{tf:isometry},
Eq.~\eqref{ct:eq-basis-p} gives
\begin{align*}
 \norm{\Phi(\tau,\sigma)\ket{G_x(\sigma)}_f}
 &\le
 2^{k(x)}
 \left(\frac{e^{\tau E_*}Z_\tau(H)}{Z_\tau(H_B)}\right)^{(\tau-\sigma)/(2\tau)}\\
 &\le 2^{R+2}e^{1/4}C_{m,S}.
\end{align*}
At $\sigma=0<\tau$ use exponent $\infty$ for $G_x(0)$;
the same bound holds. At $\sigma=\tau$ use the identity map.
Expand $\ket{Y}_f$ in the orthonormal basis
$\{\mathcal U_\sigma\ket{x}:x\in\widehat{\mathcal X}_R\}$.
Cauchy--Schwarz proves the result using \eqref{ct:eq-dimension}
and \eqref{ct:eq-enlargement}.
\end{proof}

This bound holds for every vector in
$\mathcal U_\sigma\ran\widehat P_R$ and will be used to control
the propagation of the least-squares residuals.

\subsection{A computable linear least-squares problem}\label{ct:least-squares-section}

Set $z(\tau)=\mathcal U_\tau^\dagger\ket{f(\tau)}_f$.
We regard $z(\tau)$ as a vector in the standard Fock basis.
It solves
\begin{equation}\label{ct:eq-ode}
 \begin{aligned}
 z'(\tau)&=M(\tau)z(\tau),\\
 M(\tau)&=\mathcal U_\tau^\dagger
 \left[-\frac{H_L-E_*I}{2}
       -\frac12\left(\frac{d}{d\tau}\log Z_\tau(H_B)\right)I\right]
       \mathcal U_\tau
       -\mathcal U_\tau^\dagger\mathcal U'_\tau,
 \end{aligned}
\end{equation}
where $H_L$ is left multiplication by $H$, and
\[
 -\frac{d}{d\tau}\log Z_\tau(H_B)
       =\sum_{j=1}^{n_B}\frac{\lambda_j}{1+e^{\tau\lambda_j}}.
\]
The conjugated interaction $\mathcal U_\tau^\dagger V_L\mathcal U_\tau$
is independent of $\tau$ and is supported on the last $2n_A$
standard modes, since the chosen boundary annihilators are
independent of $\tau$. It preserves $\ran P_R$ because boundary
bits are unrestricted. All other terms in $M(\tau)$ are quadratic
or scalar, and a quadratic fermionic monomial changes at most
two bits. Thus
\begin{equation}\label{ct:eq-closure}
 M(\tau)\ran P_R\subseteq\ran\widehat P_R\qquad(0\le \tau\le\beta).
\end{equation}
This is where the small support of the interaction is used
algorithmically. In particular, arbitrarily complicated interactions
inside the unrestricted boundary do not enlarge the bath cutoff.

There is a known $B_0=e^{O(mS)}$ and a polynomial $B_1=\poly(n+1)$ such that
\begin{equation}\label{ct:eq-regularity}
 \norm{z(\tau)}\le B_0,\quad
 \norm{M(\tau)}\le B_1,\quad \norm{M'(\tau)}\le B_1^2,\quad
 \norm{z''(\tau)}\le2B_1^2B_0.
\end{equation}
Indeed $\norm H=O(n+1)$. In an eigenbasis of $E$, the
time-dependent bath rotations in $\mathcal U_\tau$ have angles
$\arctan(e^{-\tau\lambda_j/2})$. Their first two derivatives
are bounded by absolute constants, since $0\le\lambda_j\le1$.
The other basis changes are independent of $\tau$.
In particular, $\norm{\mathcal U_\tau^\dagger\mathcal U'_\tau}=O(n)$. The remaining bounds follow by
differentiating \eqref{ct:eq-ode}. The exact path obeys
\begin{equation}\label{ct:eq-z-tail}
 \norm{(I-P_R)z(\tau)}\le a_R,
 \qquad a_R=e^{O(mS)}2^{-8R}.
\end{equation}

Take $L=\lceil(1+\beta)2^{4R}\rceil$, $\Delta \tau=\beta/L$, and
$\tau_j=j\Delta \tau$ for $j\in\{0,\ldots,L\}$. For $\beta=0$ output the known
Gaussian maximally entangled state directly. For $\beta>0$, solve
the following ordinary linear least-squares problem:
\begin{equation}\label{ct:eq-least-squares}
 \min_{y_1,\ldots,y_L\in\ran P_R} \;
   \sum_{j=0}^{L-1}
       \left\|\frac{y_{j+1}-y_j}{\Delta \tau}-M(\tau_j)y_j\right\|^2,
 \qquad y_0=z(0).
\end{equation}
The initial vector is explicitly known:
\[
 z(0)=2^{-n_B/2}\mathcal U_0^\dagger\ket{\Id}_f.
\]
Its first $2n_B$ bits are all zero, while its boundary factor
is the vectorization of the identity on $\calF(\A)$ in the
chosen boundary basis. Thus $z(0)\in\ran P_R$, and only the
$2n_A$ unrestricted boundary bits need be expanded to specify it.
Every residual in \eqref{ct:eq-least-squares} lies in
$\ran\widehat P_R$, so this is an explicitly computable matrix problem
with $LD_R$ unknown coefficients and at most $L\widehat D_R$
residual coefficients. It involves no unknown exact thermal states.

For clarity, the norm in \eqref{ct:eq-least-squares} is taken in the
\emph{enlarged} span. Replacing $M(\tau_j)$ by $P_RM(\tau_j)P_R$ would
erase the leakage residual and would give a different algorithm.

\begin{proposition}[Error of the least-squares solution]\label{ct:prop-ls}
For $\beta>0$, $0<\delta_0\le1/2$, and sufficiently large
\[
 R=O\!\left(mS+\log\frac{(n+1)(1+\beta)}{\delta_0}\right),
\]
the vector $\mathcal U_\beta y_L$, after normalization, approximates
$\ket{\mathsf{TFD}_\beta(H)}$ to error $\delta_0$. The least-squares problem has
polynomial size in $n,1+\beta,\delta_0^{-1},e^{mS}$.
Before normalization it also satisfies
\begin{equation}\label{ct:eq-unnormalized-output}
 \norm{y_L-z(\beta)}\le\frac{\delta_0}{2C_{m,S}}.
\end{equation}
\end{proposition}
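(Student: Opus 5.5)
The proof compares the least-squares solution with a feasible competitor built from the exact trajectory. Set $\bar y_j:=P_Rz(\tau_j)$ for $j\ge1$ and $\bar y_0=z(0)$; since $z(0)\in\ran P_R$, these are admissible in \eqref{ct:eq-least-squares}. We bound the objective at $\bar y$ and then argue that any trajectory with a small objective stays close to $z$. The residual at step $j$ is
\[
r_j=\frac{\bar y_{j+1}-\bar y_j}{\Delta\tau}-M(\tau_j)\bar y_j .
\]
Insert $z$ and use $z'=Mz$. This splits $r_j$ into three parts. The first is a Taylor remainder of size $\Delta\tau\,\norm{z''}\le 2B_1^2B_0\Delta\tau$ by \eqref{ct:eq-regularity}. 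The second is a truncation term $\bigl[(I-P_R)z(\tau_{j+1})-(I-P_R)z(\tau_j)\bigr]/\Delta\tau$, whose norm is at most $2a_R/\Delta\tau$ by \eqref{ct:eq-z-tail}. The third is $M(\tau_j)(I-P_R)z(\tau_j)$, of norm at most $B_1a_R$. With $L=\lceil(1+\beta)2^{4R}\rceil$ we have $\Delta\tau\le 2^{-4R}$ and $a_R/\Delta\tau\le e^{O(mS)}(1+\beta)2^{-4R}$. Hence $\sum_j\Delta\tau\norm{r_j}^2\le \beta\,\poly(n+1)e^{O(mS)}(1+\beta)^2 2^{-8R}$. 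The least-squares minimizer $y$ therefore has objective value no larger than this.

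To turn small residuals into small final error, write the minimizer's residuals as $\rho_j=(y_{j+1}-y_j)/\Delta\tau-M(\tau_j)y_j$. By \eqref{ct:eq-closure} each $\rho_j$ lies in $\ran\widehat P_R$. Rather than compare with a discrete recursion, which would multiply amplification factors, compare with the exact continuous propagator $\Phi$. In the frame $\mathcal U_\tau$, $\Phi$ corresponds to the solution operator of $z'=Mz$. Treat $y$ as piecewise-defined and apply a discrete Duhamel formula:
\[
y_L-z(\beta)\approx\sum_j \Delta\tau\,\Phi(\beta,\tau_{j+1})\bigl[\rho_j+\text{local Euler error on }y_j\bigr].
\]
Here $\Phi(\beta,\tau_{j+1})$ acts on a vector in $\mathcal U_{\tau_{j+1}}\ran\widehat P_R$; the frame shift from $\tau_j$ to $\tau_{j+1}$ has to be handled and contributes an $O(\Delta\tau B_1)$ correction. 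By Lemma~\ref{ct:lem-propagation} each term is amplified by at most $A_R\le e^{O(mS)}(n+1)2^{3R/2}$, with no exponential dependence on $\beta$. Cauchy--Schwarz then gives
\[
\norm{y_L-z(\beta)}\le A_R\sqrt{\beta}\Bigl(\sum_j\Delta\tau\norm{\rho_j}^2\Bigr)^{1/2}+(\text{Euler terms}).
\]
The right-hand side is at most $e^{O(mS)}\poly(n+1)(1+\beta)^2 2^{3R/2-4R}$. Choosing $R=O(mS+\log((n+1)(1+\beta)/\delta_0))$ makes this at most $\delta_0/(2C_{m,S})$, which is \eqref{ct:eq-unnormalized-output}.

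Normalization then follows. $\mathcal U_\beta z(\beta)=\ket{f(\beta)}_f$, and $\ket{\mathsf{TFD}_\beta(H)}$ is its normalization. By \eqref{ct:eq-f-norm}, $\norm{f(\beta)}\ge C_{m,S}^{-1}$. The standard inequality $\norm{a/\norm a-b/\norm b}\le2\norm{a-b}/\norm b$ then gives an error of at most $\delta_0$. For the size bound, there are $LD_R$ unknowns and at most $L\widehat D_R$ residual coefficients. By \eqref{ct:eq-dimension} and \eqref{ct:eq-enlargement}, both are polynomial in $n$, $1+\beta$, $\delta_0^{-1}$ and $e^{mS}$.

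The main obstacle is the stability step. The Euler error on the minimizer itself is not a priori controlled, because $y_j$ is not a smooth function. The clean way around this is to avoid expanding $y$ and work purely algebraically. Define $e_j=y_j-\Psi_j\cdots$ through the exact discrete relation
\[
y_{j+1}=(I+\Delta\tau M(\tau_j))y_j+\Delta\tau\rho_j ,
\]
and compare $I+\Delta\tau M(\tau_j)$ with $\mathcal U_{\tau_{j+1}}^\dagger\Phi(\tau_{j+1},\tau_j)\mathcal U_{\tau_j}$. The difference is $O(\Delta\tau^2B_1^2)$ per step on vectors in $\ran P_R$, with norms bounded via $A_R$. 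This must be checked carefully so that the accumulated factor stays $A_R$ times $\poly$ and does not compound to $A_R^L$. The crucial point is that every intermediate vector is propagated by the exact $\Phi$ from the moment it is created.
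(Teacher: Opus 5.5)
Your competitor $P_Rz(\tau_j)$ with its three-part residual bound, your normalization step, and your size count all agree with the paper. (One slip: $a_R/\Delta\tau=a_RL/\beta$ carries a factor $1/\beta$ that you dropped. It cancels against the $\sqrt\beta$ from Cauchy--Schwarz, so your final estimate survives, but the intermediate bounds as written are wrong for small $\beta$.)

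The genuine gap is the stability step, which you flag yourself and leave open. Write $G(\tau,\sigma)=\mathcal U_\tau^\dagger\Phi(\tau,\sigma)\mathcal U_\sigma$ for the standard-frame propagator of $z'=Mz$. Your exact discrete identity is $y_L-z(\beta)=\sum_j G(\beta,\tau_{j+1})\bigl[(I+\Delta\tau M(\tau_j)-G(\tau_{j+1},\tau_j))y_j+\Delta\tau\rho_j\bigr]$.
\begin{itemize}
\item The $\rho_j$ terms are fine, and no frame-shift correction is needed. Since $\rho_j\in\ran\widehat P_R$ is a standard-frame vector, Lemma~\ref{ct:lem-propagation} applies directly at $\sigma=\tau_{j+1}$.
\item The Euler defect is the problem. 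It contains $G(\tau_{j+1},\tau_j)y_j$, the exact imaginary-time evolution of $y_j$, and this vector lies outside $\ran\widehat P_R$. Lemma~\ref{ct:lem-propagation} says nothing about propagating it.
\item On general vectors, the only available bound on $\Phi(\beta,\sigma)$ carries the factor $\sqrt{Z_\sigma(H_B)/Z_\beta(H_B)}$, which can be exponential in $n_B$.
\end{itemize}
So a per-step bound of $O(\Delta\tau^2B_1^2)$ in norm does not help: a small vector outside the residual span can still be blown up. Your claim that every term is amplified by at most $A_R$ is therefore unjustified. The defect is also proportional to $\norm{y_j}$, and you have no a priori bound on $\norm{y_j}$ for the minimizer.

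The missing idea in the paper is to interpolate the computed $y_j$ linearly in the fixed basis $\{\ket x:x\in\mathcal X_R\}$.
\begin{itemize}
\item Then $y(\tau)\in\ran P_R$ for every $\tau$. By \eqref{ct:eq-closure}, the continuous residual $r(\tau)=y'(\tau)-M(\tau)y(\tau)$ lies in $\ran\widehat P_R$ almost everywhere.
\item The continuous Duhamel formula $y(\tau)-z(\tau)=\int_0^\tau G(\tau,\sigma)r(\sigma)\,d\sigma$ has every source in the residual span. Lemma~\ref{ct:lem-propagation} therefore applies pointwise in $\sigma$.
\item The Euler error is no longer a separately propagated defect. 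It sits inside $r(\tau)$ through $\norm{r(\tau)}\le(1+\Delta\tau B_1)\norm{r_j}+2\Delta\tau B_1^2\norm{y_j}$, which uses $\norm{M'}\le B_1^2$.
\item Finally, $\norm{y_j}\le B_0+\mathcal E$ with $\mathcal E=\sup_\tau\norm{y(\tau)-z(\tau)}$. This term is absorbed once $2A_R\Delta\tau B_1^2\beta\le1/2$.
\end{itemize}
Your telescoping sum can be repaired in exactly this way. Write each difference $G(\beta,\tau_{j+1})y_{j+1}-G(\beta,\tau_j)y_j$ as $\int_{\tau_j}^{\tau_{j+1}}\frac{d}{ds}\bigl[G(\beta,s)y(s)\bigr]ds=\int_{\tau_j}^{\tau_{j+1}}G(\beta,s)r(s)\,ds$, using this same interpolant. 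Note that the interpolant matters. Using $y_j+(s-\tau_j)M(\tau_j)y_j$ instead would produce $M(s)M(\tau_j)y_j$, which leaves even $\ran\widehat P_R$.
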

\begin{proof}
First bound its optimal residual. The trial values
$y_j=P_Rz(\tau_j)$ are not known to the algorithm, but prove existence
of a small objective. Taylor expansion
of the exact path gives the trial residual bound
\[
 \left\|\frac{y_{j+1}-y_j}{\Delta \tau}-M(\tau_j)y_j\right\|
 \le \frac{2a_R}{\Delta \tau}+B_1a_R+\Delta \tau\,B_1^2B_0.
\]
If $r_j$ are the residuals of a minimizing solution and
$\mathcal R=(\Delta \tau\sum_{j=0}^{L-1}\norm{r_j}^2)^{1/2}$, it follows that
\begin{equation}\label{ct:eq-objective-bound}
 \mathcal R\le\sqrt\beta
             \left(\frac{2a_R}{\Delta \tau}+B_1a_R+\Delta \tau\,B_1^2B_0\right).
\end{equation}

Interpolate the coefficients of the computed $y_j$ linearly in
the fixed basis $\{\ket{x}:x\in\mathcal X_R\}$, obtaining
$y(\tau)\in\ran P_R$. Put $r(\tau)=y'(\tau)-M(\tau)y(\tau)$.
This residual belongs to $\ran\widehat P_R$ almost everywhere.
On $[\tau_j,\tau_{j+1}]$, direct substitution and
\eqref{ct:eq-regularity} give
\begin{equation}\label{ct:eq-continuous-residual}
 \norm{r(\tau)}\le(1+\Delta \tau\,B_1)\norm{r_j}
                  +2\Delta \tau\,B_1^2\norm{y_j}.
\end{equation}
Let $\mathcal E=\sup_{0\le \tau\le\beta}\norm{y(\tau)-z(\tau)}$.
Duhamel's formula and Lemma~\ref{ct:lem-propagation}, applied only to
vectors in the residual span, imply
\[
 \mathcal E\le A_R\left[(1+\Delta \tau\,B_1)\sqrt\beta\,\mathcal R
                         +2\Delta \tau\,B_1^2\beta(B_0+\mathcal E)\right].
\]
Once $2A_R\Delta \tau\,B_1^2\beta\le1/2$, absorb the last occurrence of $\mathcal E$.
Using \eqref{ct:eq-objective-bound} then yields
\begin{equation}\label{ct:eq-error-bound}
 \mathcal E\le2A_R\left[(1+\Delta \tau\,B_1)
             (2a_RL+\beta B_1a_R+\beta \Delta \tau\,B_1^2B_0)
                         +2\Delta \tau\,B_1^2\beta B_0\right].
\end{equation}
Since $A_R\le e^{O(mS)}(n+1)2^{3R/2}$,
$a_R=e^{O(mS)}2^{-8R}$, and
$L\le2(1+\beta)2^{4R}$, both the absorption condition and
\eqref{ct:eq-error-bound} are controlled by
\[
 e^{O(mS)}\poly(n+1,1+\beta)\,2^{-5R/2}.
\]
Choose $R$ so this is at most $\delta_0/(2C_{m,S})$.
The lower bound $\norm{z(\beta)}\ge C_{m,S}^{-1}$ then makes the
error after normalization at most $\delta_0$.

Finally $D_R\le e^{O(mS)}2^R$, $\widehat D_R=O((n+1)^2D_R)$,
and $L\le2(1+\beta)2^{4R}$ are polynomial in the stated parameters.
The fixed retained set $\mathcal X_R$ can be enumerated in
$O((2n+1)D_R)$ operations by Lemma~\ref{lem:enum}, after omitting the
modes constrained to remain empty.
Matrix elements of $M(\tau_j)$ are obtained from its quadratic part and
the boundary interaction, using the finite-dimensional fermionic
matrix calculations described in Section~\ref{ge:algorithm}. Ordinary linear
least-squares computation has polynomial arithmetic cost in these
dimensions. It suffices to attain the objective to an inverse
polynomial residual tolerance, rather than to specify a particular
minimizer exactly.
There is also a polynomial conditioning bound for this discrete
problem. Apply the same residual estimate to a trial path with zero
initial value and compare it to the zero exact solution. After
absorption it gives
\[
 \left(\sum_{j=1}^L\norm{y_j}^2\right)^{1/2}
 \le 2A_R(1+\Delta \tau\,B_1)\beta
       \left(\sum_{j=0}^{L-1}\norm{r_j}^2\right)^{1/2}.
\]
Multiplying all residual equations by $\Delta \tau$ leaves the minimizer
unchanged. The resulting matrix has largest singular value at most
$2+\Delta \tau\,B_1$ and smallest singular value at least
$[2A_R(1+\Delta \tau\,B_1)L]^{-1}$. Its condition number is therefore polynomial
in the stated parameters. Sufficiently accurate scalar exponentials
and matrix entries require only polynomial numerical precision in
this arithmetic model.
\end{proof}

\subsection{Returning to an ungapped bath and the final Gaussian span}\label{ct:ungapped-section}

We complete the proof of Theorem~\ref{ct:main}, keeping the target
set $\mathcal X_{\beta,\delta}$ at the tolerance in its statement.
At $\beta=0$, use $\mathcal X_{0,\delta}=\{0^{2n}\}$ from
\eqref{tf:beta-zero} and output $c_{0^{2n}}=1$.
This is an exact normalized representation of $\ket{\mathsf{TFD}_0(H)}$.
Assume $\beta>0$ below. Here is the gap reduction needed for the
least-squares computation. Fix
\[
 \omega=\frac{\delta}{16m(1+\beta)},\qquad
 \widetilde e_j=\max(e_j,\omega),\qquad
 \widetilde H=\sum_{j=1}^n\widetilde e_j\beta_j^\dagger \beta_j+V.
\]
Let $P_<$ project onto the original energies below $\omega$, and set
$\A_<=P_<\Aimp$ and
$\B_<=P_<\mathbb C^n\ominus\A_<$.
Write the original one-particle matrix in this low-energy subspace as
\[
 \begin{pmatrix}h_{\A_<\A_<}&h_{\A_<\B_<}\\h_{\A_<\B_<}^\dagger&h_{\B_<\B_<}\end{pmatrix}
 \quad\text{on }\A_<\oplus\B_<.
\]
Define $H_{\rm eff}$ by replacing this block by
$\operatorname{diag}(\omega I_{\A_<},h_{\B_<\B_<})$.
The one-particle difference $D$ has rank at most $2m$.
Both low-energy matrices being subtracted are positive semidefinite
and at most $\omega I$. If they are denoted $X_<$ and $Y_<$, then
$-Y_<\preceq (X_<-Y_<)\preceq X_<$, hence $\norm D\le\omega$.
The norm of the corresponding quadratic Fock-space operator is at
most the sum of the absolute one-particle eigenvalues. Therefore
\[
 \norm{H_{\rm eff}-H}\le2m\omega.
\]
For arbitrary Hermitian Hamiltonians $A,B$, Duhamel's formula and
H\"older's inequality give
\begin{equation}\label{ct:eq-tfd-lipschitz}
 \norm{\ket{\mathsf{TFD}_\beta(A)}-\ket{\mathsf{TFD}_\beta(B)}}
       \le\frac\beta2\norm{A-B}.
\end{equation}
Indeed the derivative of the unnormalized heat operator has
Hilbert--Schmidt norm at most $\beta\norm{A-B}/2$ times its own
norm. Normalizing removes the parallel component of that derivative
and cannot increase this bound.

The modes in $\B_<$ commute with $V$ and factor as a free bath for
both $H_{\rm eff}$ and $\widetilde H$. Their energies are respectively
$h_{\B_<\B_<}$ and $\omega I$. A known Gaussian unitary $U_{\beta,\omega}$ on
their doubled modes maps the second free TFD to the first. Therefore
\begin{equation}\label{ct:eq-gap-error}
 \norm{\ket{\mathsf{TFD}_\beta(H)}-U_{\beta,\omega}\ket{\mathsf{TFD}_\beta(\widetilde H)}}
       \le\beta m\omega.
\end{equation}
Apply Proposition~\ref{ct:prop-ls} to $\widetilde H$ with tolerance
$\delta/8$, and then apply the Gaussian correction. The resulting
normalized vector $\ket{\psi_{\rm alg}}$ satisfies
\[
 \norm{\ket{\psi_{\rm alg}}-\ket{\mathsf{TFD}_\beta(H)}}
 \le\delta/8+\beta m\omega\le3\delta/16.
\]
This gives a computable Gaussian superposition approximating the
original TFD. Its cost follows from the parameter bounds
\[
 mS=O\!\left(m\log\frac{m(1+\beta)}{\delta}\right),\qquad
 R=O\!\left(m\log\frac{m(1+\beta)}{\delta}+\log(n+1)\right).
\]
Indeed $D_R\le e^{O(mS)}2^R$,
$\widehat D_R=O((n+1)^2D_R)$, and
$L\le2(1+\beta)2^{4R}$. Constructing and solving the least-squares
system therefore has the arithmetic cost in Theorem~\ref{ct:main}.
The preliminary ground-energy computation at accuracy
$1/[2(1+\beta)]$, the one-particle operations, and the Gaussian gap
correction fit within the same bound.

We now compute coefficients in the exact spanning set specified by
Theorem~\ref{ct:main}. Construct $U_\beta$ and
$\mathcal X_{\beta,\delta}$ for the original $H_0,\Aimp$ at tolerance
$\delta$, as in Theorem~\ref{thm:rank}, and use the projector
$\Pi_{\beta,\delta}$ from \eqref{tf:theorem-projector}.
For every $x\in\mathcal X_{\beta,\delta}$ define the ideal projected
coefficient
\begin{equation}\label{ct:eq-projected-coefficients}
 a_x=\langle x|U_\beta^\dagger|\psi_{\rm alg}\rangle.
\end{equation}
The vector with these coefficients is
$\Pi_{\beta,\delta}\ket{\psi_{\rm alg}}$.
Every intermediate Gaussian vector and every $U_\beta\ket{x}$ has
its phase fixed by its specified Bogoliubov transformation and ordered
Fock construction. Their overlaps can be computed by fermionic Gaussian
contractions, or Pfaffians of matrices of dimension $O(n)$. Summing
against the computed intermediate coefficients therefore evaluates
all $a_x$ in polynomial time in the two spanning-set sizes and $n$.

Compute approximations $\widehat a_x$ such that
\begin{equation}\label{ct:eq-coefficient-accuracy}
 \left(\sum_{x\in\mathcal X_{\beta,\delta}}
                  |\widehat a_x-a_x|^2\right)^{1/2}\le\delta/16,
 \qquad
 \ket{\widetilde\phi}=\sum_{x\in\mathcal X_{\beta,\delta}}
                  \widehat a_xU_\beta\ket{x}.
\end{equation}
For example, evaluating each coefficient to additive error
$\delta/(16\sqrt{|\mathcal X_{\beta,\delta}|})$ suffices.
The concentration theorem gives
$\norm{(I-\Pi_{\beta,\delta})\ket{\mathsf{TFD}_\beta(H)}}\le\delta$.
By contractivity of orthogonal projection and the intermediate error
bound above,
\begin{align}
 \norm{\ket{\widetilde\phi}-\ket{\mathsf{TFD}_\beta(H)}}
 &\le\delta/16+3\delta/16+\delta
 =5\delta/4.\label{ct:eq-final-projection-error}
\end{align}
Since $\delta\le1/2$, this implies $\norm{\widetilde\phi}\ge3/8$.
Thus normalization is well defined and stable. Output
\begin{equation}\label{ct:eq-output-coefficients}
 c_x=\frac{\widehat a_x}
 {\left(\sum_{y\in\mathcal X_{\beta,\delta}}|\widehat a_y|^2\right)^{1/2}}
 \quad(x\in\mathcal X_{\beta,\delta}),\qquad
 \ket\phi=\sum_{x\in\mathcal X_{\beta,\delta}}c_xU_\beta\ket x.
\end{equation}
Orthonormality gives $\sum_{x\in\mathcal X_{\beta,\delta}}|c_x|^2=1$.
For a nonzero vector $q$ and a normalized vector $\psi$, the reverse
triangle inequality gives
$\norm{q/\norm q-\psi}\le2\norm{q-\psi}$.
Applying this inequality to \eqref{ct:eq-final-projection-error},
\begin{equation}\label{ct:eq-final-state-error}
 \norm{\ket\phi-\ket{\mathsf{TFD}_\beta(H)}}\le5\delta/2\le3\delta.
\end{equation}
Consequently the theorem holds, for example, with the universal
constant $C=3$. No change from $\mathcal X_{\beta,\delta}$ to a set
with a different tolerance is made in this final step. Constructing
this set and computing the projected coefficients fit within the
runtime already established. This completes the proof of
Theorem~\ref{ct:main}.

The normalized output also specifies a reduced state
$\widehat\rho=\Tr_{\rm anc}\ket\phi\bra\phi$. Contractivity of the
trace norm under partial trace and the pure-state distance bound give
\[
 \norm{\widehat\rho-\rho_\beta(H)}_1
 \le2\norm{\ket\phi-\ket{\mathsf{TFD}_\beta(H)}}\le2C\delta.
\]

\begin{remark}[Role of the interaction]
The interaction preserves the retained span because all doubled
boundary occupations are unrestricted:
$[\mathcal P_R(\tau),V_L]=0$. The quadratic part of the generator maps
this span into the explicitly enumerable residual span. The
least-squares argument uses exact propagation of each residual and
therefore requires no stability bound for a sequence of projections
at different temperatures.
\end{remark}

\section{Approximating free energy and thermal expectation values}\label{ct:thermodynamics}

Thermal expectation values can be used to approximate partition
functions and free energies, as shown by Bravyi, Chowdhury, Gosset,
and Wocjan~\cite{bravyi2022quantum}.
Their reduction varies the
inverse temperature and reconstructs the partition function from
thermal energy estimates.  Here we use a variant that keeps the temperature fixed and varies
the strength of the interaction $V$, which gives a better
dependence of the runtime on the system size $n$.

Throughout this section, assume $\norm{V}\leq1$, and let $C\geq1$
be the universal constant in Theorem~\ref{ct:main}. We use $\delta$
for the input tolerance of the TFD approximation algorithm and
$\varepsilon$ for the desired additive error in a scalar quantity such as the free energy or  thermal expectation values.

\paragraph{Thermal expectation values.}
Let $P$ be a Hermitian observable, and let $P_L$ denote the
corresponding operator on the doubled system, acting as $P$
on the physical modes and trivially on the ancillary modes.
Since the physical reduced state of
$\ket{\mathsf{TFD}_\beta(H)}$ is $\rho_\beta(H)$, one has
\[
 \Tr\!\left(P\rho_\beta(H)\right)
 =\bra{\mathsf{TFD}_\beta(H)}P_L
       \ket{\mathsf{TFD}_\beta(H)}.
\]
Theorem~\ref{ct:main} computes a normalized state
\[
 \ket\phi=\sum_{x\in\mathcal X_{\beta,\delta}}
                  c_xU_\beta\ket{x},
 \qquad
 \norm{\ket\phi-\ket{\mathsf{TFD}_\beta(H)}}\leq C\delta.
\]
Normalization and the identity $\norm{P_L}=\norm P$ imply
\begin{equation}\label{ct:eq-observable-error}
 \left|\bra\phi P_L\ket\phi
       -\Tr\!\left(P\rho_\beta(H)\right)\right|
 \leq2C\delta\norm P.
\end{equation}
We have
\[
 \bra\phi P_L\ket\phi
 =\sum_{x,y\in\mathcal X_{\beta,\delta}}
      \overline{c_x}c_y
      \bra{x}U_\beta^\dagger P_L U_\beta\ket{y}.
\]
A state $U_\beta \ket{x}$ is Gaussian for any $x\in \{0,1\}^{2n}$.
A matrix element of any Majorana monomial between any pair of Gaussian states can be computed in time  $O(n^3)$ 
using the generalized Wick theorem of~\cite{bravyi2017complexity}. 
Accordingly, if $P$ is given as a sum of $M$ 
Majorana monomials, any matrix element   $\bra{x}U_\beta^\dagger P_L U_\beta\ket{y}$ can be computed in time $O(Mn^3)$.
Thus $\bra\phi P_L\ket\phi$ can be computed in time
\[
 O\!\left(Mn^3|\mathcal X_{\beta,\delta}|^2\right).
\]
Choose an orthonormal basis of $\Aimp$ and extend its
associated Majorana operators to a full Majorana basis.
In this basis, $V$ has at most $4^m$ Majorana monomials, while
$H_0$ has at most $O(n^2)$ monomials. Consequently 
the expectations $\langle \phi| V_L|\phi\rangle$
and $\langle \phi|(H_0)_L |\phi\rangle$
can be computed in time 
$\poly(n)4^m|\mathcal X_{\beta,\delta}|^2$.

\paragraph{Free energy.}
Below we write  $\mathsf F_\beta(H)$ for the free energy, where $\beta$ is the inverse temperature and $H$ is the Hamiltonian. 
We now prove Corollary~\ref{ct:free-energy}.
\begin{proof}
Fix $\beta>0$ and $0<\varepsilon\leq1/2$. For $0\leq\lambda\leq1$,
define
\[
 H_\lambda=H_0+\lambda V,
 \qquad
 g(\lambda)=\Tr\!\left(V\rho_\beta(H_\lambda)\right).
\]
Differentiating the free energy and using cyclicity of the
trace gives
\begin{equation}\label{ct:eq-free-energy-gradient}
 \frac{d}{d\lambda}\mathsf F_\beta(H_\lambda)=g(\lambda)
\end{equation}
(this identity does not require $H_0$ and $V$ to commute). Consequently,
\begin{equation}\label{ct:eq-coupling-integration}
 \mathsf F_\beta(H)
 =\mathsf F_\beta(H_0)+\int_0^1g(\lambda)\,d\lambda.
\end{equation}
The free contribution is
\begin{equation}\label{ct:eq-free-bath-energy}
 \mathsf F_\beta(H_0)
 =-\frac1\beta\sum_{k=1}^n\log(1+e^{-\beta e_k}),
\end{equation}
where $e_1,\ldots,e_n\in[0,1]$ are the bath excitation energies.

The Gibbs variational principle implies that
$\mathsf F_\beta(H_\lambda)$ is concave in $\lambda$, since it is
the minimum, over density matrices $\sigma$, of
\[
 \Tr(\sigma H_0)+\lambda\Tr(\sigma V)
       +\beta^{-1}\Tr(\sigma\log\sigma).
\]
Thus $d^2 \mathsf F_\beta(H_\lambda)/d\lambda^2\le 0$, that is,
$g(\lambda)$ is a nonincreasing function. Also, $|g(\lambda)|\leq\norm V\leq1$.
For
\[
 N=\lceil4/\varepsilon\rceil,
 \qquad \lambda_j=j/N\quad(0\leq j\leq N),
\]
the left and right Riemann sums therefore give
\begin{equation}\label{ct:eq-coupling-quadrature}
 0\leq\frac1N\sum_{j=0}^{N-1}g(\lambda_j)
          -\int_0^1g(\lambda)\,d\lambda
 \leq\frac{g(0)-g(1)}N
 \leq\frac2N\leq\frac\varepsilon2.
\end{equation}

At each grid point, apply Theorem~\ref{ct:main} to $H_{\lambda_j}$
with state tolerance $\delta=\varepsilon/(4C)$, obtaining state
$\ket{\phi_j}$ approximating $\ket{\mathsf{TFD}_\beta(H_{\lambda_j})}$ within error $C\delta$.
 By~\eqref{ct:eq-observable-error},
\[
 \left|\bra{\phi_j}V_L\ket{\phi_j}-g(\lambda_j)\right|
 \leq\varepsilon/2.
\]
Set $\widehat g_j=\bra{\phi_j}V_L\ket{\phi_j}$, which can be
computed using Wick's theorem.
Assume for simplicity that the free bath contribution  $\mathsf F_\beta(H_0)$ is computed exactly.
Return
\begin{equation}\label{ct:eq-free-energy-output}
 F_{\rm out}=\mathsf F_\beta(H_0)+
             \frac1N\sum_{j=0}^{N-1}\widehat g_j.
\end{equation}
Equations~\eqref{ct:eq-coupling-integration}--\eqref{ct:eq-free-energy-output}
give
\[
 |F_{\rm out}-\mathsf F_\beta(H)|
 \leq\varepsilon/2+\varepsilon/2
 =\varepsilon.
\]

Every $H_\lambda$ has the same bath and impurity support and satisfies
$\norm{\lambda V}\leq1$. There are $O(\varepsilon^{-1})$ calls to
Theorem~\ref{ct:main}, each with $\delta=\Theta(\varepsilon)$.
Thus the total runtime is
\[
 \poly(n,1+\beta,\varepsilon^{-1})
 \exp\!\left[O\!\left(m\log\frac{m(1+\beta)}{\varepsilon}\right)\right],
\]
as claimed.
\end{proof}

\section{BQP-completeness of simulating time evolution}\label{ct:BQP}
We now prove Theorem~\ref{thm:main-reduction}.  As described in the introduction, the construction encodes a quantum circuit into the initial
state of a time-independent quantum impurity model.  The quadratic part
\(H_0\) moves localized one-particle states through the system, while a
fixed quartic interaction \(V\), supported on \(24\) fermionic modes,
implements the computation when these states reach the interacting region.
 The proof proceeds in two steps.  We first formalize the sequential
computation described in the introduction and show how an arbitrary circuit
can be compiled into it.  We then construct the corresponding impurity
Hamiltonian and bound the error between its time evolution and the ideal
computation.

\subsection{Sequential computation model}
\label{sec:1}
We first define the computation model that will later be encoded into the fermionic system. 
This model has the same computational power as the quantum circuit model, but is more convenient for our purpose. 
The core idea is to use a fixed update rule throughout the computation, as in quantum Turing machines \cite{Deutsch1985,BernsteinVazirani1997}; 
closely related ideas, where the program is supplied as part of the initial state, also appear in programmable quantum cellular automata \cite{ShepherdFranzWerner2006} and quantum ring machines \cite{BauschCubittOzols2017}.

The model has three subsystem, which we call data, program, and processor.
The size of the processor is fixed, while those of the data and program will scale with the circuit that we want to encode. 
At each time step, a data or program register visits the processor, and a fixed unitary $U_*$ will be applied on the combined system.

\paragraph{Data, program, and processor}
Let \(D_1,\ldots,D_k\) be \(k\) data registers, each of which is a qubit $D_i \simeq \mathbb C^2$. 

The processor $Q$ contains two parts: an auxiliary qubit \(A\simeq\mathbb C^2\) and a five-dimensional command register \(C\simeq\mathbb C^5\).  
We write
$    Q:=A\otimes C$ 
and use the basis \(\ket{a,c}_Q\), where
\(a\in\{0,1\}\) and \(c\in\{0,1,2,3,4\}\).
 The value \(c\) specifies which operation is to be applied when the next
data qubit is presented.  Define
\begin{equation}\label{eq:1-2}
    G_0=I,\qquad
    G_1=H_D,\qquad
    G_2=T_D,\qquad
    G_3=\operatorname{SWAP}_{A,D},\qquad
    G_4=\operatorname{CZ}_{A,D}.
\end{equation}
Thus \(G_0,G_1,G_2\) act trivially on \(A\), while \(G_3,G_4\) act on
\(A\otimes D\). 

We also introduce program registers
\[
    P_1,\ldots,P_{K+1},
    \qquad
    P_r\simeq\mathbb C^5.
\]

\paragraph{Initialization and update rule}
Each \(P_r\) is initialized in a basis state \(\ket{p_r}\), with \(p_r\in\{0,1,2,3,4\}\).  
The sequence
\(p_1,\ldots,p_{K+1}\) will be chosen in the next section from the circuit to be simulated. 
Everything else, including the processor and the data registers, will be initialized in $\ket{0}$.

At each step, either a program register or a data register interacts with
\(Q\).  To describe both cases by one fixed unitary, let
\[
    X:=P\oplus D,
    \qquad
    P\simeq\mathbb C^5,\quad D\simeq\mathbb C^2.
\]
We denote the corresponding basis states by
\(\ket{P,p}\), \(p\in\{0,\ldots,4\}\), and
\(\ket{D,d}\), \(d\in\{0,1\}\). Define the fixed unitary \(U_*\) on \(Q\otimes X\) by
\begin{equation}\label{eq:1-5}
\begin{aligned}
U_*(\ket{a,c}_Q\ket{P,p}) &= \ket{a,p}_Q\ket{P,c},\\
U_*(\ket{a,c}_Q\ket{D,d}) &= G_c\bigl(\ket a_A\ket d_D\bigr)\otimes\ket c_C.
\end{aligned}
\end{equation}
The unitary \(U_*\) is the fixed update performed at every visit:
a program visit loads the next operation into \(C\) by a swap, while upon a data visit, the command currently stored in \(C\) is executed according to \cref{eq:1-2} via a controlled unitary.
The registers are presented in the order
\[
    P_1,D_1,P_2,D_2,\ldots,P_K,D_K,P_{K+1},
\]
where the data-register labels are extended periodically by
\[
    D_{r+k}:=D_r.
\]
Thus the data qubits are visited repeatedly in the order
\(D_1,D_2,\ldots,D_k,D_1,\ldots\), with one program register preceding
each data visit.

Since program visits only exchange
basis states with \(C\), all program registers and \(C\) remain in basis
states throughout the computation.

\subsection{Compiling circuits into programs}\label{sec:2}

Without loss of generality, let \(\mathcal C\) be a quantum circuit on \(k\ge 1\) qubits with \(g\)
gates from \(\{H,T,\operatorname{CZ}\}\), which is a universal gate set.
The input is
\(\ket{0^k}\), and we take \(D_k\) to be the output qubit.  Writing
\(\ket{\psi_{\mathcal C}}=\mathcal C\ket{0^k}\), let
\begin{equation}
    \label{eq:1-1}
    p_{\mathcal C}
    :=
    \expval{(\ketbra{0})_k}{\psi_{\mathcal C}} .
\end{equation}
Now we specify the values \(p_r\) so that the above fixed evolution in Section~\ref{sec:1} implements the given quantum circuit $\mathcal C$.

To do so, process the gates of \(\mathcal C\) in their
circuit order. For a gate \(H\) or \(T\) acting on \(D_i\), wait until the next visit of
\(D_i\), set the preceding program value $p$ to \(1\) or \(2\), respectively,
and use \(0\) at all intermediate visits. For a gate \(\operatorname{CZ}_{D_i,D_j}\), first swap the state of
\(D_i\) into \(A\), apply \(\operatorname{CZ}_{A,D_j}\) when \(D_j\)
next appears, and swap the state back when \(D_i\) appears again.  This
uses the identity
\begin{equation}
    \label{eq:2-2}
    I_A\otimes\operatorname{CZ}_{D_i,D_j}
    =
    \operatorname{SWAP}_{A,D_i}
    \operatorname{CZ}_{A,D_j}
    \operatorname{SWAP}_{A,D_i}.
\end{equation}
After this sequence, the state of \(A\) is again \(\ket 0\).  Hence \(A\)
serves only as temporary storage while implementing a two-qubit gate.
A one-qubit gate requires at most one complete pass through
\(D_1,\ldots,D_k\), and a two-qubit gate requires at most three.
Therefore the \(g\) gates of \(\mathcal C\) require at most \(3kg\)
data visits.  
Furthermore, after completing the circuit, we make one additional pass
and use identity operations until \(D_k\) is reached.  We pad if necessary
so that this final visit has index \(K\) with \(D_K=D_k\), and choose
\(p_K=3\) to swap the output qubit into \(A\).  Thus
\begin{equation}
    \label{eq:2-3}
    K\le (3g+1)k .
\end{equation}
Finally, we append \(p_{K+1}=0\).  The last program visit resets the
command register \(C\) to \(0\) and does not act on the data qubits. 

With \(p_0:=0\), the initial state of the sequential computation is
\begin{equation}
    \label{eq:omegain}
    \ket{\omega_{\mathcal C}^{\mathrm{in}}}
    =
    \ket{0,0}_Q\otimes\ket{0^k}_D
    \otimes
    \bigotimes_{r=1}^{K+1}\ket{p_r}_{P_r}.
\end{equation}
After all visits, the state is
\begin{equation}
    \label{eq:1-7}
    \ket{\omega_{\mathcal C}^{\mathrm{out}}}
    =
    \operatorname{SWAP}_{A,D_k}
    \left(
        \ket{0,0}_Q\otimes\ket{\psi_{\mathcal C}}_D
        \otimes
        \bigotimes_{r=1}^{K+1}\ket{p_{r-1}}_{P_r}
    \right).
\end{equation}
Indeed, each program visit exchanges the new value \(p_r\) with the value
previously stored in \(C\), so \(P_r\) contains \(p_{r-1}\) afterward.
The final choice \(p_{K+1}=0\) therefore returns \(C\) to \(0\). It follows immediately that measuring \(A\) in
\(\ket 0\) after the protocol gives the same probability as measuring
the output qubit \(D_k\) of the original circuit:

\begin{lemma}
    With the protocol introduced in Section~\ref{sec:1}, we have:
\begin{equation}
\label{eq:output-prob}
    \expval{(\ketbra{0})_A}{\omega_{\mathcal C}^{\mathrm{out}}} = \expval{(\ketbra{0})_{k}}{\psi_{\mathcal C}}=p_{\mathcal C}.
\end{equation}
\end{lemma}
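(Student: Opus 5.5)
The plan is to compute the reduced state of $A$ in $\ket{\omega_{\mathcal C}^{\mathrm{out}}}$ directly from \eqref{eq:1-7}. The main work is verifying \eqref{eq:1-7} itself, by induction over the visit sequence $P_1,D_1,\ldots,P_K,D_K,P_{K+1}$. The invariant I will maintain is the following. Immediately after the program visit $P_r$, the command register $C$ holds $\ket{p_r}$, register $P_r$ holds $\ket{p_{r-1}}$, and all registers remain in a product state with $C$ in a basis state. This holds because the first line of \eqref{eq:1-5} merely exchanges basis labels between $C$ and $P$ and leaves $A$ untouched. It also follows that the following data visit $D_r$, by the second line of \eqref{eq:1-5}, applies exactly $G_{p_r}$ to $A\otimes D_r$ while leaving $C$ unchanged. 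Hence the whole protocol acts on $A\otimes D_1\cdots D_k$ as the ordered product $G_{p_K}\cdots G_{p_1}$, with each factor acting on $A$ and the appropriate periodically indexed data qubit. Meanwhile the program and command registers end in $\bigotimes_r\ket{p_{r-1}}_{P_r}\otimes\ket{p_{K+1}}_C=\bigotimes_r\ket{p_{r-1}}_{P_r}\otimes\ket 0_C$.

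The next step is to check that the compilation of Section~\ref{sec:2} makes this product equal $\operatorname{SWAP}_{A,D_k}(I_A\otimes\mathcal C)$. I will process the gates in circuit order, keeping the invariant that $A$ is in $\ket0$ between compiled gates. For a single-qubit gate the claim is immediate, since the identity padding contributes $G_0=I$. For a $\operatorname{CZ}_{D_i,D_j}$ gate, the three visits SWAP, CZ, SWAP realize \eqref{eq:2-2}. Here I must check that the intermediate visits to other data qubits are identities, and that $A$, which starts in $\ket0$, returns to $\ket0$: the first SWAP moves $\ket0_A$ into $D_i$ and the second moves it back. When $i=j$ does not occur, the cyclic order guarantees that the visits happen in the required order within three passes. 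The final visit, with $p_K=3$ on $D_K=D_k$, contributes the outer $\operatorname{SWAP}_{A,D_k}$. This establishes \eqref{eq:1-7}.

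Finally, with $A$ in $\ket0$ before the last swap, the swap places the reduced state of $D_k$ from $\ket{\psi_{\mathcal C}}$ onto $A$. Therefore $\langle(\ketbra{0})_A\rangle$ on the output equals $\langle(\ketbra{0})_k\rangle_{\psi_{\mathcal C}}=p_{\mathcal C}$, since the other factors are normalized basis states and drop out of the expectation. The only real obstacle is the bookkeeping in the CZ case: I need to confirm that $A$ is used only between the paired swaps, so that no SWAP ever entangles $A$ with a third qubit. Once that invariant is stated explicitly, the remaining steps are routine.
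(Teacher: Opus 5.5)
Your proposal is correct and follows the paper's own argument: program visits swap $p_r$ into $C$ and leave $p_{r-1}$ in $P_r$, so each data visit applies $G_{p_r}$. The compiled schedule then realizes $\operatorname{SWAP}_{A,D_k}(I_A\otimes\mathcal C)$ via \eqref{eq:2-2}, and the final swap transfers the marginal of $D_k$ onto $A$.

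Two small refinements. First, your ``product state'' invariant should claim only that $C$ and the program registers sit in a basis state factored off from $A\otimes D_1\cdots D_k$, since the data qubits, and temporarily $A$, do become entangled. Second, \eqref{eq:2-2} is an operator identity, so the return of $A$ to $\ket{0}$ after each CZ block is automatic and needs no separate bookkeeping.
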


\paragraph{Example}
As a short but complete example, take three data qubits, and take $\mathcal C = T_2 \operatorname{CZ}_{1,3} H_3$. 
The scan rule gives:
\begin{center}
\begin{tabular}{llll}
\toprule
$r$ & Command $p_r$ & Visiting register & Operation \\
\midrule
$1$ & $0$ & $D_1$ & $I$ \\
$2$ & $0$ & $D_2$ & $I$ \\
$3$ & $1$ & $D_3$ & $H_{D_3}$ \\
$4$ & $3$ & $D_1$ & $\operatorname{SWAP}_{A,D_1}$ \\
$5$ & $0$ & $D_2$ & $I$ \\
$6$ & $4$ & $D_3$ & $\operatorname{CZ}_{A,D_3}$ \\
$7$ & $3$ & $D_1$ & $\operatorname{SWAP}_{A,D_1}$ \\
$8$ & $2$ & $D_2$ & $T_{D_2}$ \\
$9$ & $3$ & $D_3$ & $\operatorname{SWAP}_{A,D_3}$ \\
$10$ & $0$ &  &  \\
\bottomrule
\end{tabular}
\end{center}
The corresponding visit schedule is
\begin{equation}
\begin{aligned}
&P_1[0],D_1,P_2[0],D_2,P_3[1],D_3,P_4[3],D_1,
P_5[0],D_2,P_6[4],D_3,P_7[3],D_1,P_8[2],D_2,\\
&P_9[3],D_3,P_{10}[0].
\end{aligned}
\end{equation}
The three registers in the second line are due to our convention to swap the output qubit to $A$ and reset $C$, as explained above.

\subsection{Encoding into quantum impurity models}

We now realize the sequential computation model through the time evolution of a quantum impurity model. Following the sketch, we first encode the registers as moving fermionic packets, then we design a quartic interaction to implement a fixed unitary $U_*$ on the processor and the passing register. 
Throughout, $K\le (3g+1)k$ is the number of data visits.

\subsubsection{Wave packets for data and program registers}

We arrange fermionic modes on five open chains labeled by $p\in\{0,1,2,3,4\}$ and two rings labeled by $d\in\{0,1\}$. 
Each ring has $L_D=80k\sigma$ sites, while each path has $2B_0+1$ sites indexed by $x=-B_0,\ldots,B_0$, where $B_0=100(K+k+1)\sigma$, and $\sigma\ge 2$ is an integer to be chosen later. 

Define a profile function for a wave packet starting at site $z$:
\begin{equation}
    \chi_{z}(x) = (-i)^x f_\sigma(x-z),
\end{equation}
where $f_\sigma(x)$ satisfies:
\begin{equation}
\label{eq:profile}
    f_\sigma(x)\propto
    \begin{cases}
        x^3(\sigma-x)^3, & 0<x<\sigma,\\
        0, & \text{otherwise}.
    \end{cases}
\end{equation}
We normalize $f_\sigma(x)$ as:
\begin{equation}
    \sum_{i=0}^{\sigma-1}  \abs{f_\sigma(i)}^2 = 1.
\end{equation}

\begin{figure}[H]
\centering
\includegraphics[width=0.65\linewidth]{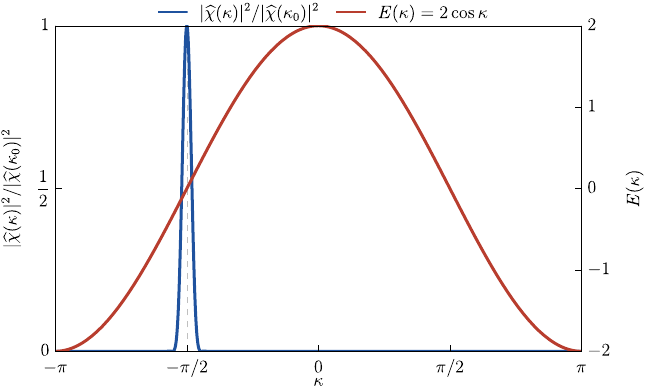}
\caption{Blue: Momentum distribution (relative to the peak at $\kappa_0=-\pi/2$) of the wave packet, evaluated for $\sigma=80$.
Red: dispersion relation for nearest-neighbor hopping. }
\label{fig:packet-dispersion}
\end{figure}

On a data ring, position labels are taken modulo $L_D$. Since $L_D$ is divisible by four, the phase $(-i)^x$ is consistent around each ring. 
The profile is designed such that its momentum concentrates near $\kappa_0=-\pi/2$. 
Recall that, the dispersion relation for unit nearest-neighbor hopping is
\begin{equation}
E(\kappa)=2\cos\kappa,\qquad E'(-\frac\pi2)=2, \qquad E''(-\frac\pi2)=0.
\end{equation}
Therefore, the group velocity is $2$ at $\kappa_0$.
The quadratic dispersion $E''(-\frac\pi2)$ is absent, so wave-packet deformation is governed by higher-order terms in the expansion of $E(\kappa)$ and is relatively smaller.
See \Cref{fig:packet-dispersion} for an illustration.
Conceptually, one may think this wave packet as a rigid particle moving with velocity 2; the propagation error is bounded in \Cref{sec:8-1}.

Let $z_i^D$ and $z_r^P$ denote the starting positions of wave packets for data register $D_i$ and program register $P_r$, respectively.
Here, the starting positions $z_i^D$ and $z_r^P$ are chosen in a equally spaced ($40\sigma$), interlaced way:
\begin{align}
\label{eq:7-2}
z_i^D&=-(80i+2)\sigma, &&i=1,\ldots,k,\\
\label{eq:7-3}
z_r^P&=-\bigl(80r-38\bigr)\sigma, &&r=1,\ldots,K+1,
\end{align}
with data positions taken modulo $L_D$. 

Let $f_{x,D,d}^\dagger$ and $f_{x,P,p}^\dagger$ be fermion creation operators at site $x$ in the indicated path or ring.
Using the profile $\chi_{z}$ defined above, we define the packet creation operators by
\begin{equation}\label{eq:createdp}
\begin{aligned}
\xi_{D_i,d}^\dagger
&=\sum_x\chi_{z_i^D}(x)f_{x,D,d}^\dagger,\\
\xi_{P_r,p}^\dagger
&=\sum_x\chi_{z_r^P}(x)f_{x,P,p}^\dagger.
\end{aligned}
\end{equation}
We will use these operators to create fermions in the corresponding paths or rings.
Since they are far enough apart, the wave packets have disjoint supports, so the corresponding one-particle orbitals created by \Cref{eq:createdp} are orthonormal.

In the approximation that each wave packet moves rigidly at velocity $2$ without spreading, program packet $P_r$ crosses the $0\leftrightarrow 1$ edge within the time window
\begin{equation}
\left[20\sigma(2r-1),\;20\sigma(2r-1)+2\sigma\right],
\end{equation}
while data packet $D_i$ makes its first crossing during
\begin{equation}
\left[40\sigma i,\;40\sigma i+2\sigma\right].
\end{equation}
Thus program and data packets visit the processor alternately without overlap.

\subsubsection{Processor fermion and impurity interaction}
Let $a_q$ be the fermionic operators for 10 different modes, $q=(a,c)\in\{0,1\}\times\{0,\ldots,4\}$. 
The ten modes will be placed in the middle of the $0\leftrightarrow 1$ edge, as shown in \Cref{fig:processor-architecture}(b).
We use the one-hot encoding to encode the 10 dimensional processor $Q$ as one fermion, so that the processor basis state $\ket{q}_Q=\ket{a,c}_Q$ correspond to the state created by $a_{(a,c)}^\dagger$. 

The total number of modes is therefore
\begin{equation}
\label{eq:mode-count}
    n=10+5(2B_0+1)+2L_D=O((K+k)\sigma).
\end{equation}
We use $\ket{\mathrm{vac}}$ to denote the Fock vacuum, i.e., the unique state that is annihilated by all $N$ annihilation operators defined above.

\paragraph{Hamiltonian.}\label{sec:5-2}
Let $\alpha,\beta$ range over the seven basis states of $X=P\oplus D$, and set $u_{q'\alpha,q\beta}=\matrixel{q',\alpha}{U_*}{q,\beta}$, where $U_*$ is defined in \Cref{eq:1-5}.
Define
\begin{equation}
\label{eq:5-3}
V_{\mathcal I}=\sum_{q,q',\alpha,\beta}u_{q'\alpha,q\beta}
f_{1,\alpha}^\dagger a_{q'}^\dagger a_q f_{0,\beta}+\mathrm{h.c.}
\end{equation}
The interaction $V_{\mathcal I}$ is Hermitian, quartic, and independent of the circuit $\mathcal C$ being encoded. Its coefficients lie in $\mathbb Q(i,\sqrt2)$, and its impurity support consists of the ten processor modes and the seven packet-channel modes at position 0 and 1:
\begin{equation}
\label{eq:5-4}
|\mathcal I|=10+7\times 2=24.
\end{equation}

Let $h$ be the unit-hopping matrix on the paths and rings defined above, with the seven links $0\leftrightarrow1$ removed and with zero hopping on the processor modes. Set
\begin{equation}
H_{0}=\sum_{x,y}\tilde h_{xy}f_x^\dagger f_y,
\qquad H=H_{0}+V_{\mathcal I}.
\end{equation}
It belongs to the class of quantum impurity models.
Here, $f_x$ ranges over the operators $f_{z,P,p}$, $f_{z,D,d}$.
The hopping $\tilde h$ is the nearest-neighbor hopping on each chain or ring, with the $0\leftrightarrow1$ edge dropped; 
namely, $\tilde h_{x,y}=1$ if and only if $x$ and $y$ are on the same ring (for data modes) or chain (for program modes), and their spatial distance is 1, and they do not lie on the $0\leftrightarrow1$ edge; and $\tilde h_{x,y}=0$ otherwise.
We impose open boundary condition for the program chain and periodic boundary condition for the data ring.

\begin{lemma}\label{lem:5-1}
The one-particle hopping matrix satisfies $\|\tilde h\|\le2$.
The Hamiltonian $H$ separately conserves the number of processor fermions, data fermions, and program fermions. 
On the sector with one processor fermion and one packet,
\begin{equation}\label{eq:Vrb}
V_{\mathcal I} a_q^\dagger f_{0,\beta}^\dagger\ket{\mathrm{vac}}
=\sum_{q',\alpha}u_{q'\alpha,q\beta}
a_{q'}^\dagger f_{1,\alpha}^\dagger\ket{\mathrm{vac}},
\end{equation}
with the adjoint transition from endpoint $1$ to endpoint $0$ and zero action on packets away from the edge.
\end{lemma}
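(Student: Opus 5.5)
The plan is to verify the three claims of Lemma~\ref{lem:5-1} directly from the explicit form of $\tilde h$ and $V_{\mathcal I}$.

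\emph{Norm bound.} The matrix $\tilde h$ is block diagonal. There is one block for each of the five program chains with the $0\leftrightarrow1$ link cut, one for each of the two data rings with that link cut, and a zero block on the ten processor modes. Each nonzero block is the adjacency matrix of a graph of maximum degree at most $2$ with unit weights. A Gershgorin (or Schur test) argument bounds the norm of such a matrix by its maximal row sum, giving $\|\tilde h\|\le 2$.

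\emph{Number conservation.} The quadratic term $H_0$ consists of hoppings $f_x^\dagger f_y$ in which $x,y$ lie on the same chain or the same ring. Such a term conserves the processor number $N_Q=\sum_q a_q^\dagger a_q$, the data number $N_D=\sum_{x,d}f_{x,D,d}^\dagger f_{x,D,d}$, and the program number $N_P$.

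For $V_{\mathcal I}$, each term is $f_{1,\alpha}^\dagger a_{q'}^\dagger a_q f_{0,\beta}$ with coefficient $u_{q'\alpha,q\beta}$. It conserves $N_Q$ because it contains one processor creator and one processor annihilator. To handle the packet channels, note that $U_*=\mathrm{diag}\{U_*^P,U_*^D\}$ is block diagonal in $X=P\oplus D$. Hence $u_{q'\alpha,q\beta}=0$ unless $\alpha$ and $\beta$ lie in the same summand. Every surviving term therefore moves a fermion from site $0$ to site $1$ within the program chains or within the data rings, so $N_D$ and $N_P$ are each conserved. The hermitian conjugate terms inherit this property.

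\emph{Action on one processor fermion and one packet.} Apply each term to $a_q^\dagger f_{0,\beta}^\dagger\ket{\mathrm{vac}}$. Using the CAR we get
\[
a_{\tilde q}f_{0,\tilde\beta}\,a_q^\dagger f_{0,\beta}^\dagger\ket{\mathrm{vac}}=\pm\,\delta_{\tilde q q}\delta_{\tilde\beta\beta}\ket{\mathrm{vac}}.
\]
The sign should come out $-1$, since $f\,a^\dagger f^\dagger = -a^\dagger f f^\dagger$.

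Next, $f_{1,\alpha}^\dagger a_{q'}^\dagger\ket{\mathrm{vac}}=-a_{q'}^\dagger f_{1,\alpha}^\dagger\ket{\mathrm{vac}}$. This second sign cancels the first, which yields \eqref{eq:Vrb}.

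It remains to check the hermitian conjugate terms $f_{0,\beta}^\dagger a_q^\dagger a_{q'}f_{1,\alpha}$. Each contains an annihilator at site $1$, so it kills a state whose only packet fermion sits at site $0$. These terms instead implement the adjoint transition $1\to0$, with coefficients $\overline{u}$, by the same computation. Finally, if the packet fermion occupies a mode other than the fourteen endpoint modes, every term of $V_{\mathcal I}$ annihilates the state. This is because each term contains $f_{0,\beta}$ or $f_{1,\alpha}$, and those annihilators anticommute past the processor creator and then hit the vacuum.

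The only delicate point is the sign bookkeeping in the last computation. I expect it to be routine once the ordering convention $a_q^\dagger f^\dagger$ is kept consistently on both sides of \eqref{eq:Vrb}.
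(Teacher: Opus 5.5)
Your proposal is correct and follows essentially the same route as the paper. You use a row-sum (Schur/Gershgorin) bound for $\|\tilde h\|\le 2$, derive separate conservation of data and program number from the block-diagonal structure $U_*=\mathrm{diag}\{U_*^{P},U_*^{D}\}$, and carry out a direct CAR computation in which the two signs cancel, exactly as in the paper's displayed identity. Your explicit treatment of the Hermitian-conjugate terms and of packets away from the edge fills in what the paper leaves as following ``from its construction.''
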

\begin{proof}
    An application of Cauchy-Schwarz implies that
    \begin{equation}
        \norm{\tilde h}^2 \leq (\max_i \sum_j \abs{\tilde h_{ij}})(\max_j \sum_i \abs{\tilde h_{ij}}).
    \end{equation}
Since $\tilde h$ is Hermitian with absolute row sums at most two, we have $\|\tilde h\|\le2$.

Each term in $H$ preserves processor and total packet-fermion number; separate conservation of data and program number follows from $U_*(Q\otimes P)=Q\otimes P$ and $U_*(Q\otimes D)=Q\otimes D$.

The canonical anticommutation relations give
\begin{equation}
\label{eq:5-6}
f_{1,\alpha}^\dagger a_{q'}^\dagger a_q f_{0,\beta}
r_s^\dagger f_{0,\gamma}^\dagger\ket{\mathrm{vac}}
=\delta_{q,s}\delta_{\beta,\gamma}a_{q'}^\dagger f_{1,\alpha}^\dagger\ket{\mathrm{vac}}.
\end{equation}
Summing over $q,q',\alpha,\beta$ proves the \Cref{eq:Vrb}. 
The fact that $V_{\mathcal I}$ has zero action on away from the edge follows from its construction.
\end{proof}

\paragraph{Single-packet evolution.}\label{sec:4}

We verify that the interaction implements $U_*$ when a single packet passes the processor.
The full many-body time evolution is analyzed in \Cref{sec:8}.

Because the scattering is local and the packets remain far from the distant boundaries during one time window, it is convenient first to analyze the same problem on the infinite line, coupled to the processor through $V_{\mathcal I}$ at the edge $0\leftrightarrow1$.
Consider the Fock space sector where there is one data or program fermion, and one processor fermion. 
\Cref{lem:5-1} shows that this sector is invariant under time evolution.
We may identify this sector with $\ell^2(\mathbb Z)\otimes Q\otimes X$.
Define the uniform spatial hopping operators:
\begin{equation}
\label{eq:4-1}
\begin{aligned}
h&=\sum_{x\in\mathbb Z}(\ketbra{x+1}{x}+\mathrm{h.c.}),\\
h_{\mathrm f}&=h\otimes I_{Q\otimes X}.
\end{aligned}
\end{equation}
By \Cref{lem:5-1}, the full Hamiltonian on this sector has the position-space representation
\begin{equation}
h_U=h_{\mathrm f}
+\ketbra{1}{0}\otimes(U_*-I)
+\ketbra{0}{1}\otimes(U_*^\dagger-I).
\end{equation}
Write $P_L=\sum_{x\le0}\ketbra{x}$, $P_R=\sum_{x\ge1}\ketbra{x}$, and define
\begin{equation}
\label{eq:4-2}
\Gamma=P_L\otimes I+P_R\otimes U_*.
\end{equation}

\begin{lemma}
    The operator $\Gamma$ is unitary, and
\begin{equation}
\label{eq:4-3}
h_U=\Gamma h_{\mathrm f}\Gamma^\dagger,
\qquad
e^{-ih_Ut}=\Gamma e^{-ih_{\mathrm f}t}\Gamma^\dagger.
\end{equation}
\end{lemma}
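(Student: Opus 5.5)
Unitarity of $\Gamma$ is immediate. $P_L$ and $P_R$ are complementary orthogonal projectors on $\ell^2(\mathbb Z)$, and $U_*$ is unitary on $Q\otimes X$. Hence
\[
\Gamma^\dagger\Gamma=P_L\otimes I+P_R\otimes U_*^\dagger U_*=I,
\]
and similarly $\Gamma\Gamma^\dagger=I$, since the cross terms contain $P_LP_R=0$. The second identity in \eqref{eq:4-3} follows from the first: conjugation by a unitary commutes with the exponential series, so $e^{-i\Gamma h_{\mathrm f}\Gamma^\dagger t}=\Gamma e^{-ih_{\mathrm f}t}\Gamma^\dagger$. The whole task is therefore to verify $\Gamma h_{\mathrm f}\Gamma^\dagger=h_U$.

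To do this, split the hopping into three parts,
\[
h=h_{LL}+h_{RR}+h_{\partial},\qquad h_\partial=\ketbra{1}{0}+\ketbra{0}{1}.
\]
Here $h_{LL}=P_LhP_L$ collects the bonds $x\leftrightarrow x+1$ with $x+1\le 0$, and $h_{RR}=P_RhP_R$ collects the bonds with $x\ge1$. The only bond crossing the cut is $h_\partial$, so $P_LhP_R=\ketbra{0}{1}$ and $P_RhP_L=\ketbra{1}{0}$. Now conjugate each block by $\Gamma$:
\begin{itemize}
\item The block $h_{LL}\otimes I$ picks up $I$ on both sides and is unchanged.
\item The block $h_{RR}\otimes I$ becomes $h_{RR}\otimes U_*U_*^\dagger=h_{RR}\otimes I$, also unchanged.
\item The crossing term $\ketbra{1}{0}\otimes I$ has its left leg in $R$ and its right leg in $L$, so it becomes $\ketbra{1}{0}\otimes U_*$.
\item Its adjoint $\ketbra{0}{1}\otimes I$ becomes $\ketbra{0}{1}\otimes U_*^\dagger$.
\end{itemize}
Summing, $\Gamma h_{\mathrm f}\Gamma^\dagger$ equals $h_{\mathrm f}$ with the boundary bond replaced:
\[
\Gamma h_{\mathrm f}\Gamma^\dagger=h_{\mathrm f}+\ketbra{1}{0}\otimes(U_*-I)+\ketbra{0}{1}\otimes(U_*^\dagger-I)=h_U.
\]

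No step here is a real obstacle; the only care needed is bookkeeping of which projector sits on which side of each off-diagonal bond term. The substantive input is upstream, in \Cref{lem:5-1}, which justifies the position-space form of $h_U$. It also implicitly requires the unit-hopping edge $0\leftrightarrow1$ of the infinite-line model to carry the amplitude $U_*$ (the full bond weight $I+(U_*-I)$). This is consistent with the dropped edge in $\tilde h$ being supplied by $V_{\mathcal I}$, whose matrix elements are exactly $u_{q'\alpha,q\beta}$. I would state that identification briefly before the computation.
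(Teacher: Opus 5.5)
Your proof is correct and follows essentially the paper's argument: unitarity of $\Gamma$ from $P_LP_R=0$ and $U_*^\dagger U_*=I$, then conjugating only the crossing bond, $\ketbra{1}{0}\otimes I\mapsto\ketbra{1}{0}\otimes U_*$ (and its adjoint), while same-side hopping terms are unchanged, with the exponential identity following by unitary conjugation. Your closing remark about \Cref{lem:5-1} is harmless but not needed for this lemma, since here $h_U$ is simply defined by the displayed formula.
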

\begin{proof}
    Since $P_L+P_R=I$ and $P_LP_R=0$, unitarity of $U_*$ gives
$\Gamma^\dagger\Gamma=\Gamma\Gamma^\dagger=I$.
For the distinguished edge, we have
\begin{equation}
\Gamma\bigl(\ketbra{1}{0}\otimes I\bigr)\Gamma^\dagger
=\ketbra{1}{0}\otimes U_*,
\end{equation}
and also its adjoint.
Every other hopping term has both endpoints on the same side and is unchanged by conjugation.
This proves $h_U=\Gamma h_{\mathrm f}\Gamma^\dagger$ and hence \Cref{eq:4-3}.
\end{proof}

\begin{lemma}[Single-packet propagation]\label{lem:single-packet-propagation}
Let $\chi,\chi_{\mathrm{out}}$ be two normalized spatial states and let $\zeta\in Q\otimes X$ be any normalized internal state. Then the following hold.
\begin{enumerate}
\item \emph{Exact evolution.}
If $P_L\chi=\chi$, then, writing
$\chi_L(t)=P_Le^{-iht}\chi$ and
$\chi_R(t)=P_Re^{-iht}\chi$, we have
\begin{equation}
\label{eq:4-4}
e^{-ih_Ut}(\chi\otimes\zeta)
=\chi_L(t)\otimes\zeta+\chi_R(t)\otimes U_*\zeta.
\end{equation}

\item \emph{Error bounds.}
Suppose
\begin{equation}
\|e^{-iht}\chi-\chi_{\mathrm{out}}\|\le\delta.
\end{equation}
If $P_L\chi=\chi$ and
$P_R\chi_{\mathrm{out}}=\chi_{\mathrm{out}}$, then
\begin{equation}
\label{eq:4-5}
\|e^{-ih_Ut}(\chi\otimes\zeta)
-\chi_{\mathrm{out}}\otimes U_*\zeta\|\le\delta.
\end{equation}
If instead, $P_s\chi=\chi$ and $P_s\chi_{\mathrm{out}}=\chi_{\mathrm{out}}$ for the same $s\in\{L,R\}$, then
\begin{equation}
\|e^{-ih_Ut}(\chi\otimes\zeta)
-\chi_{\mathrm{out}}\otimes\zeta\|\le\delta.
\end{equation}
\end{enumerate}
\end{lemma}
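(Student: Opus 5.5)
The plan is to deduce everything from the conjugation identity \eqref{eq:4-3}, $e^{-ih_Ut}=\Gamma e^{-ih_{\mathrm f}t}\Gamma^\dagger$, together with the block structure of $\Gamma=P_L\otimes I+P_R\otimes U_*$.

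For part 1, I will first compute $\Gamma^\dagger(\chi\otimes\zeta)$. Since $P_L\chi=\chi$, we have $\Gamma^\dagger(\chi\otimes\zeta)=P_L\chi\otimes\zeta+P_R\chi\otimes U_*^\dagger\zeta=\chi\otimes\zeta$. Next, $h_{\mathrm f}=h\otimes I$ acts only on the spatial factor, so $e^{-ih_{\mathrm f}t}(\chi\otimes\zeta)=e^{-iht}\chi\otimes\zeta$. Finally, applying $\Gamma$ and splitting $e^{-iht}\chi=\chi_L(t)+\chi_R(t)$ gives $\chi_L(t)\otimes\zeta+\chi_R(t)\otimes U_*\zeta$, which is \eqref{eq:4-4}.

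For part 2 I use the same route, but compare against the target state instead of expanding. In the case $P_L\chi=\chi$ and $P_R\chi_{\mathrm{out}}=\chi_{\mathrm{out}}$, note that $\Gamma(\chi_{\mathrm{out}}\otimes\zeta)=\chi_{\mathrm{out}}\otimes U_*\zeta$. Hence
\[
e^{-ih_Ut}(\chi\otimes\zeta)-\chi_{\mathrm{out}}\otimes U_*\zeta=\Gamma\bigl[(e^{-iht}\chi-\chi_{\mathrm{out}})\otimes\zeta\bigr].
\]
Because $\Gamma$ is unitary and $\zeta$ is normalized, the norm of the right-hand side equals $\|e^{-iht}\chi-\chi_{\mathrm{out}}\|\le\delta$.

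In the case where $\chi$ and $\chi_{\mathrm{out}}$ are supported on the same side $s$, I define $W_L=I$ and $W_R=U_*$. Then $\Gamma^\dagger(\chi\otimes\zeta)=\chi\otimes W_s^\dagger\zeta$ and $\Gamma(\chi_{\mathrm{out}}\otimes W_s^\dagger\zeta)=\chi_{\mathrm{out}}\otimes\zeta$. This gives
\[
e^{-ih_Ut}(\chi\otimes\zeta)-\chi_{\mathrm{out}}\otimes\zeta=\Gamma\bigl[(e^{-iht}\chi-\chi_{\mathrm{out}})\otimes W_s^\dagger\zeta\bigr],
\]
and unitarity again bounds the norm by $\delta$.

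There is no real obstacle here. The only care needed is the bookkeeping of which side's projector picks up $U_*$ versus $U_*^\dagger$ in the same-side case $s=R$. In particular, the error vector $e^{-iht}\chi-\chi_{\mathrm{out}}$ need not be supported on one side, so the argument must not split it by side. This is why I pass the whole difference through $\Gamma$ and use unitarity, rather than projecting it.
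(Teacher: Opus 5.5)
Your proposal is correct and follows essentially the same route as the paper. You conjugate by $\Gamma$ via $e^{-ih_Ut}=\Gamma e^{-ih_{\mathrm f}t}\Gamma^\dagger$, and pass the entire difference $(e^{-iht}\chi-\chi_{\mathrm{out}})\otimes(\cdot)$ through the unitary $\Gamma$; your $W_s$ notation simply merges the paper's two same-side subcases into one line.
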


\begin{proof}
Since $P_L\chi=\chi$, we have $\Gamma^\dagger(\chi\otimes\zeta)=\chi\otimes\zeta$. Hence
\begin{align}
e^{-ih_Ut}(\chi\otimes\zeta)
&=\Gamma e^{-ih_{\mathrm f}t}\Gamma^\dagger(\chi\otimes\zeta)
=\Gamma e^{-ih_{\mathrm f}t}\bigl(\chi\otimes\zeta\bigr)
=\Gamma \bigl(e^{-iht}\chi\otimes\zeta\bigr)\\
&=(P_Le^{-iht}\chi)\otimes\zeta
 +(P_Re^{-iht}\chi)\otimes U_*\zeta,
\end{align}
which is \Cref{eq:4-4}.

Now suppose $\|e^{-iht}\chi-\chi_{\mathrm{out}}\|\le\delta$.
For left-to-right transfer, $P_L\chi=\chi$ and
$P_R\chi_{\mathrm{out}}=\chi_{\mathrm{out}}$, giving
\begin{equation}
\Gamma(\chi_{\mathrm{out}}\otimes\zeta)
=\chi_{\mathrm{out}}\otimes U_*\zeta.
\end{equation}
Therefore,
\begin{align}
\|e^{-ih_Ut}(\chi\otimes\zeta)
-\chi_{\mathrm{out}}\otimes U_*\zeta\|
=\|\Gamma((e^{-iht}\chi-\chi_{\mathrm{out}})\otimes\zeta)\|
=\|e^{-iht}\chi-\chi_{\mathrm{out}}\|
\le\delta,
\end{align}
which proves \Cref{eq:4-5}.
Here we have used the unitarity of $\Gamma$ and $\|\zeta\|=1$.

If both $\chi$ and $\chi_{\mathrm{out}}$ are supported on the left, the same calculation applies with
$\Gamma(\chi_{\mathrm{out}}\otimes\zeta)=\chi_{\mathrm{out}}\otimes\zeta$.
If both are supported on the right, then
\begin{align}
\Gamma^\dagger(\chi\otimes\zeta)
=\chi\otimes U_*^\dagger\zeta,\qquad
\Gamma(\chi_{\mathrm{out}}\otimes U_*^\dagger\zeta)
=\chi_{\mathrm{out}}\otimes\zeta.
\end{align}
Consequently,
\begin{align}
\|e^{-ih_Ut}(\chi\otimes\zeta)
-\chi_{\mathrm{out}}\otimes\zeta\|
=\|\Gamma((e^{-iht}\chi-\chi_{\mathrm{out}})
\otimes U_*^\dagger\zeta)\|
=\|e^{-iht}\chi-\chi_{\mathrm{out}}\|
\le\delta.
\end{align}
\end{proof}

For later convenience, we remark that the same conclusion and proof holds if the infinite chain is replaced by a finite open interval containing $0$ and $1$.
Moreover, let $\mathcal R$ be an arbitrary spectator space, and extend the operators $h_{\mathrm f},h_U,\Gamma,U_*$ by the identity on $\mathcal R$, and let $\zeta\in Q\otimes X\otimes\mathcal R$ be any normalized internal state.
The conclusion still holds, without a product-state assumption between the internal registers and the spectators.

\subsubsection{Initial state, readout, and simulation guarantee}\label{sec:6}

\paragraph{Initial state}

Given the quantum circuit $\mathcal C$, we compile it into a program $(p_r)_{r=1}^{K+1}$, following the procedure in \Cref{sec:2}.
Then, using the packet operators in \Cref{eq:createdp}, we define
\begin{equation}
\ket{\Psi_{\mathcal C}^{\mathrm{in}}}
=a_{(0,0)}^\dagger
\prod_{i=1}^k \xi_{D_i,0}^\dagger
\prod_{r=1}^{K+1} \xi_{P_r,p_r}^\dagger
\ket{\mathrm{vac}}.
\end{equation}
Here the products are ordered by register labels (within each product, operators with smaller labels are closer to $\ket{\mathrm{vac}}$). The occupied orbitals are orthonormal, so this is a normalized Slater determinant. 

Due to \Cref{lem:5-1}, the number of data fermions, program fermions, processor fermions will remain invariant throughout the entire process, which are $k$, $K+1$, and 1, repectively.
The total number of fermions is:
\begin{equation}\label{eq:defF}
    F=k+K+2.
\end{equation}

\paragraph{Readout}
Define the total simulation time as
\begin{equation}
\label{eq:simulation-time}
    T=20\sigma(2K+1)+2\sigma=2\sigma(20K+11).
\end{equation}
Starting from $\ket{\Psi_{\mathcal C}^{\mathrm{in}}}$, we time evolve it by time $T$ using the quantum impurity Hamiltonian $H$, then we measure register $A$ of the processor by measuring the occupation number of mode $(0,0)$.
Its expectation value is:
\begin{equation}
    \expval{e^{iHT}a_{(0,0)}^\dagger a_{(0,0)}e^{-iHT}}{\Psi_{\mathcal C}^{\mathrm{in}}}.
\end{equation}

\subsection{Error estimation and hardness reduction}\label{sec:8}

The main technical result of this note is the following.
\begin{theorem}\label{thm:simulation}
Let $\mathcal C$ be a $k$-qubit circuit with $g$ gates.
Let $p_{\mathcal C}=\expval{\mathcal C^\dagger (\ketbra{0})_k \mathcal C}{0^k}$.
For any $0<\epsilon\le1/24$, the packet width $\sigma$ can be chosen so that the construction above satisfies
\begin{equation}
\label{eq:6-3}
\left|\expval{e^{iHT} a_{(0,0)}^\dagger a_{(0,0)} e^{-iHT}}{\Psi_{\mathcal C}^{\mathrm{in}}}
-p_{\mathcal C}\right|
\le \epsilon.
\end{equation}
For fixed $\epsilon$, the instance satisfies $n,T=O((k+g)^{4})$.
\end{theorem}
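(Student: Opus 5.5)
The plan is to compare the true many-body evolution $e^{-iHT}\ket{\Psi_{\mathcal C}^{\mathrm{in}}}$ with an idealized evolution in which every packet moves rigidly at velocity $2$ and each crossing of the $0\leftrightarrow1$ edge applies $U_*$, as in Lemma~\ref{lem:single-packet-propagation}. Both evolutions are chopped into the $2K+1$ visit windows of length $20\sigma$ determined by the starting positions \eqref{eq:7-2}--\eqref{eq:7-3}. The idealized final state is $\ket{\omega^{\mathrm{out}}_{\mathcal C}}$ from \eqref{eq:1-7}, embedded into Fock space with rigidly translated packets. The readout $a_{(0,0)}^\dagger a_{(0,0)}$ then has expectation $p_{\mathcal C}$ by the output-probability lemma, since the processor fermion sits in mode $(a,c)$ with $c=0$ after the reset. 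A total state error $\eta$ changes the expectation by at most $2\eta$, so it suffices to make $\eta\le\epsilon/2$.

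The first ingredient is a free propagation estimate for the profile \eqref{eq:profile}. For the packet $\chi_z$ on the infinite line I will show $\|e^{-iht}\chi_z-\chi_{z+2t}\|\le c\,t/\sigma^{3}$ or better. The approach is to work in momentum space: $\hat f_\sigma$ decays like $|\kappa\sigma|^{-4}$ because $f_\sigma$ is $C^2$ with compact support. Expanding $E(\kappa)=2\cos\kappa$ around $-\pi/2$ gives $E=2(\kappa+\pi/2)+O((\kappa+\pi/2)^3)$, so the phase error is $t\,|\kappa+\pi/2|^3$, which is small where $\hat f_\sigma$ is concentrated. The tails are controlled by the polynomial decay. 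This yields a per-window error $\delta_1=O(\sigma^{-2})$ for time $20\sigma$ (the exponent may need adjusting), together with an estimate on the leakage outside support windows.

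The many-body bookkeeping proceeds as follows. The Hamiltonian is quadratic except for $V_{\mathcal I}$, and the processor number is conserved (Lemma~\ref{lem:5-1}). During a given window only one packet is near the edge. Using the spectator remark after Lemma~\ref{lem:single-packet-propagation}, I write the state as a Slater-like antisymmetrized product of the packets that are far away and one scattering packet tensored with the processor. The far packets evolve under $H_0$ alone, up to their tails touching the edge, and I bound that interaction by the tail mass. The packets are disjoint and separated by $40\sigma$, which keeps the antisymmetrization harmless, because orthonormal orbitals evolve unitarily. The boundary conditions matter in two places. Program chains have length $2B_0+1$ with $B_0=100(K+k+1)\sigma$, which is long enough that no program packet reaches the open end by time $T$. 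The data rings of length $80k\sigma$ return each data packet every $40k\sigma$, consistent with the alternating schedule. Each window then contributes error $O(F\delta_1)$ in the triangle inequality, where $F=k+K+2$ counts the packets, and there are $O(K)$ windows. The total error is therefore $\eta=O(K F\sigma^{-2})$.

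The main obstacle is controlling the interaction of packet tails with the impurity while other packets are still spreading, with a constant good enough to reach $n,T=O((k+g)^4)$. Since $K=O(kg)$ and $F=O(kg)$, I need $\sigma$ with $K^2\sigma^{-2}\lesssim\epsilon$, so roughly $\sigma\sim K/\sqrt{\epsilon}$. This is consistent with $n=O((K+k)\sigma)$ and $T=O(K\sigma)$ being $O((kg)^2)\subseteq O((k+g)^4)$. If the dispersion bound only gives $O(t\sigma^{-3})$ per unit time, the per-window error becomes $O(\sigma^{-2})$ anyway, which is why I expect this route to close. Finally I choose $\sigma$ as an even integer large enough to meet these bounds.
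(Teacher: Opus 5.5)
Your outer structure matches the paper's. You telescope against rigid-packet reference states using unitarity, charge each window $O(F\sigma^{-2})$ from single-packet dispersion, and take $\sigma\sim K/\sqrt{\epsilon}$. Your Fourier-decay route to $\|q^3\widehat{f_\sigma}\|=O(\sigma^{-3})$ also works; the paper gets it more directly from $\|q^3\widehat{f_\sigma}\|\le(\pi/2)^3\|\nabla^3 f_\sigma\|$.

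\textbf{The gap.} It is exactly the step you name as the main obstacle and then leave open: turning single-packet estimates into the many-body window bound $\|e^{-iH t_{\mathrm{win}}}\Phi_j-\Phi_{j+1}\|=O(F\sigma^{-2})$. In second quantization there is no well-defined dynamics in which ``the far packets evolve under $H_0$ alone''. Three things block it: $V_{\mathcal I}$ acts on any fermion at sites $0,1$ of any channel; packets on a common chain are identical fermions; and once the processor has entangled internal states, the state is no longer a single Slater determinant. You need a device that labels packets and makes the evolution factorize \emph{exactly}.

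\textbf{How the paper does it.} It takes the projector $\Pi_j$ onto configurations with exactly one register fermion in each of the disjoint intervals $\Lambda_{\mu,j}$ (margins $10\sigma$) plus one processor fermion, and sets $H_j=\Pi_jH\Pi_j$. On $\operatorname{Im}\Pi_j\cong\bigl(\bigotimes_\mu\ell^2(\Lambda_{\mu,j})\otimes X_\mu\bigr)\otimes Q$, the evolution is a product of commuting single-packet evolutions. At most one of them involves the processor, and that one is handled by $\Gamma$. The leakage $(e^{-2i\sigma H}-e^{-2i\sigma H_j})\Phi_j$ is then bounded by Duhamel through the amplitudes at the interval endpoints. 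A weighted (Combes--Thomas) estimate makes these at most $2^{-10\sigma}e^{3u/2}$.

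\textbf{Why your window length breaks this.} The confinement needs windows short compared with the margins, which is why the paper uses windows of length $2\sigma$ (packets move $4\sigma$), not your visit windows of length $20\sigma$.
\begin{itemize}
\item Over time $20\sigma$, a packet's backward-moving Fourier components travel $40\sigma$, and they are only polynomially small, not exponentially small. An interval with exponentially small leakage therefore needs margins above $40\sigma$ on both sides, i.e.\ length above $81\sigma$. That exceeds the $80\sigma$ spacing on a chain.
\item With windows centred on crossings, the packet that has just left the processor starts only $\approx 20\sigma$ from the edge. Its interval must then also contain the edge, so the processor couples to two intervals and the factorization is destroyed.
\end{itemize}
Your ``bound by the tail mass'' then yields only a polynomially small correction, which you have not controlled.

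\textbf{The fix costs nothing.} The dispersion error is $O(t\sigma^{-3})$ per packet, so the total $O(FT\sigma^{-3})=O(FK\sigma^{-2})$ does not depend on the window length. You can therefore switch to $2\sigma$ windows, aligned so that each crossing packet lies entirely on one side of the edge at both endpoints of its window, as in Corollary~\ref{cor:finite-interval-propagation}(b).
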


\paragraph{Strategy of the proof.}
The overall idea is to compare the actual time evolution under $H$ with an idealized evolution: wave packets are rigid particles moving with a common constant speed, and each time a wave packet crosses the processor zone, a logical $U_*$ is applied.
We call the latter the reference evolution, and wave packets in it reference wave packets, etc.

Divide $[0,T]$ into $W=T/(2\sigma)=20K+11$ time windows, and let $\Phi_j$ be the reference state at time $t_j=2\sigma j$. 
$\Phi_0 = \Psi_{\mathcal C}^{\mathrm{in}}$. 
The reference output state $\Phi_W$ yields $p_{\mathcal C}$ exactly.
By unitarity, window errors only add up,
\begin{equation}
\|e^{-iHT}\Psi^{\mathrm{in}}_{\mathcal C}-\Phi_W\|
\le\sum_{j<W}\|e^{-2i\sigma H}\Phi_j-\Phi_{j+1}\|.
\end{equation}
To estimate each term, we cut the system into intervals, each containing at most one packet. Then we consider $H_j=\Pi_jH\Pi_j$ where the many-body projector $\Pi_j$ confines each of the packets to its own interval, and split
\begin{equation}
\|e^{-2i\sigma H}\Phi_j-\Phi_{j+1}\|
\le\underbrace{\|(e^{-2i\sigma H}-e^{-2i\sigma H_j})\Phi_j\|}_{\text{confinement}}
+\underbrace{\|e^{-2i\sigma H_j}\Phi_j-\Phi_{j+1}\|}_{\text{dispersion}}.
\end{equation}
Under $H_j$ the evolution factorizes exactly into single-body problems, each of which can be controlled directly.
The first term measures the error caused by confining the packets in their respective intervals; it can be bounded, ultimately since the packets finite propagation speed.
The second term captures the error due to wave-packet dispersion, which is bounded due to our choice of the wave packets profile (more precisely, its momentum distribution).

\subsubsection{Wave-packet profile and single-body estimation}
\paragraph{Free propagation in one time window}\label{sec:8-1}
We first consider the free propagation of a single wave packet without any quartic interaction. 
This is a single-body problem and we work in the first quantization picture.
We may also ignore the internal degree of freedom and focus on the spatial degree of freedom.

More precisely, we consider one wave packet on an infinite line with profile $\chi$, and consider its time evolution under the hopping Hamiltonian $h=\sum_x(\ketbra{x+1}{x}+\mathrm{h.c.})$.

The following lemma (similar to Theorem 1 in \cite{childs2013universal}) controls how well this can be approximated by the idealized spatial translation with velocity 2.
Here, for ease of notation, we identify $\chi_z$ with the single body wavefunction $\ket{\chi_z}=\sum \chi_z(x)\ket{x}$, and $e^{-iht}\chi_z$ is an abbreviation for $e^{-iht}\ket{\chi_z}$. 

\begin{lemma}\label{lem:8-1}
On the infinite chain, for every integer $\sigma\ge2$ and every integer $0\le t\le2\sigma$, we have
\begin{equation}
\label{eq:8-1}
\|e^{-iht}\chi_{z}-\chi_{z+2t}\|\le O(\frac{1}{\sigma^{2}}),
\end{equation}
\end{lemma}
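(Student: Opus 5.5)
The plan is to work in momentum space, where free hopping is diagonal, and to compare the exact phase $e^{-itE(\kappa)}$ with the phase of an exact lattice translation by $2t$ sites. With the convention $\hat\chi(\kappa)=\sum_x\chi(x)e^{-i\kappa x}$ on $[-\pi,\pi)$, the operator $h$ acts by multiplication by $E(\kappa)=2\cos\kappa$. Write $\kappa=-\pi/2+q$, so that $\cos\kappa=\sin q$. Since $(-i)^x=e^{-i\pi x/2}$, we have $\hat\chi_z(-\pi/2+q)=\hat g_z(q)$ with $g_z(x)=f_\sigma(x-z)$. Hence $|\hat\chi_z|$ is concentrated near $q=0$ exactly when $\hat f_\sigma$ is concentrated near $0$.

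The translated packet $\chi_{z+2t}(x)=(-i)^x f_\sigma(x-2t-z)$ is obtained from $\chi_z$ by the shift $x\mapsto x-2t$ followed by multiplication by $(-i)^{2t}=(-1)^t$. In momentum space this is multiplication by $(-1)^t e^{-2it\kappa}=(-1)^t e^{i\pi t}e^{-2itq}=e^{-2itq}$. I will double-check this phase bookkeeping carefully, since an error there would make the bound false rather than merely loose. By Parseval,
\[
\|e^{-iht}\chi_z-\chi_{z+2t}\|^2=\frac1{2\pi}\int_{-\pi}^{\pi}\bigl|e^{-2it\sin q}-e^{-2itq}\bigr|^2\,|\hat f_\sigma(q)|^2\,dq .
\]
The integrand is bounded using $|e^{ia}-1|\le|a|$ and $|\sin q-q|\le|q|^3/6$, which give $|e^{-2it\sin q}-e^{-2itq}|\le t|q|^3/3$. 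So the error is at most $(t/3)\,\|q^3\hat f_\sigma\|_{L^2}$, which is where the vanishing of $E''(-\pi/2)$ is used.

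To control $\|q^3\hat f_\sigma\|$, I pass back to position space through finite differences. Let $\Delta$ be the forward difference operator, whose symbol satisfies $|e^{iq}-1|=|2\sin(q/2)|\ge(2/\pi)|q|$ on $[-\pi,\pi]$. Then $\|q^3\hat f_\sigma\|_{L^2}\le(\pi/2)^3\|\Delta^3 f_\sigma\|_{\ell^2}$.

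Now write $f_\sigma=c_\sigma\,x^3(\sigma-x)^3\mathbf 1_{(0,\sigma)}$. Since $\sum_x f_\sigma(x)^2\asymp c_\sigma^2\sigma^{13}$, the normalization gives $c_\sigma\asymp\sigma^{-13/2}$. The function $F(x)=x^3(\sigma-x)^3$, extended by zero, is $C^2$ on $\mathbb R$ because it vanishes to third order at $0$ and $\sigma$. Its third derivative is piecewise bounded by $O(\sigma^3)$. Consequently $|\Delta^3F(x)|\le\sup|F'''|=O(\sigma^3)$, using $\Delta^3F(x)=\int_{[0,1]^3}F'''(x+s_1+s_2+s_3)\,ds$, which is valid since $F\in C^2$ with bounded piecewise $F'''$. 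Moreover $\Delta^3F$ is supported on $O(\sigma)$ sites. This yields $\|\Delta^3f_\sigma\|_{\ell^2}=O(\sigma^{-13/2}\cdot\sigma^3\cdot\sigma^{1/2})=O(\sigma^{-3})$. Combining with $t\le2\sigma$ gives $\|e^{-iht}\chi_z-\chi_{z+2t}\|\le(2\sigma/3)(\pi/2)^3\,O(\sigma^{-3})=O(\sigma^{-2})$, as claimed.

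The main obstacle I anticipate is the momentum-space phase identity matching the exact translate, including the $(-i)^{2t}$ factor and the choice of Fourier sign convention. The next most delicate point is the endpoint regularity needed for the third-difference bound; this is exactly why the profile uses cubic powers rather than lower ones. Everything else reduces to routine estimates.
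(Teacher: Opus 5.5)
Your proposal is correct and follows the paper's proof essentially step for step. Both arguments compare $e^{-2it\sin q}$ with $e^{-2itq}$ in momentum space, use $|\sin q-q|\le|q|^3/6$, convert $\|q^3\hat f_\sigma\|$ into a third finite difference, and apply the integral representation of $\Delta^3$ to the $C^2$ profile. The differences are cosmetic: you use a forward difference instead of a backward one, and you bound $\|\Delta^3F\|$ by $\sup|F'''|$ times the support size rather than by the paper's Cauchy--Schwarz estimate $\|\nabla^3 g\|^2\le\int (g''')^2\,dx$; both give the same $\sigma^{7}$ scaling.
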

\begin{proof}
In momentum space, the hopping Hamiltonian acts by multiplication:
\begin{equation}
\widehat{h\psi}(\kappa)=E(\kappa)\widehat\psi(\kappa),
\qquad E(\kappa)=2\cos\kappa.
\end{equation}

Define the Fourier transform of $f_\sigma$ as $\widehat {f_\sigma}(\kappa)=(2\pi)^{-1/2}\sum_x e^{-i\kappa x}f_\sigma(x)$, with momentum understood modulo $2\pi$. 
Denote $\kappa_0=-\pi/2$ and $q=\kappa-\kappa_0\in[-\pi,\pi]$.
Then the Fourier transforms of $\chi_z$ and $\chi_{z+2t}$ are
\begin{equation}
    \begin{aligned}
        \widehat{\chi_z}(\kappa_0+q) &= e^{-iqz}\widehat {f_\sigma} (q),\\
        \widehat{\chi_{z+2t}}(\kappa_0+q) &= e^{-iq(z+2t)}\widehat {f_\sigma} (q).
    \end{aligned}
\end{equation}
Note that $E(\kappa_0+q)=2\sin q$, we have, with norms in $L^2[-\pi,\pi]$,
\begin{equation}
\|e^{-iht}\chi_{z}-\chi_{z+2t}\|
=\|(e^{-2it\sin q}-e^{-2itq})\widehat{f_\sigma}\|
\le\frac t3\|q^3\widehat{f_\sigma}\|.
\end{equation}
In the last step, we used $|e^{ia}-e^{ib}|\le|a-b|$ and $|\sin q-q|\le|q|^3/6$.

Denote the backward difference for any function as $\nabla f(s)=f(s)-f(s-1)$.
We have
\begin{equation}
    \widehat{\nabla f} = (1-e^{-iq}) \widehat{f}.
\end{equation}
Applying it three times on $\widehat{f_\sigma}$ and using $|q|\le\frac\pi2|1-e^{-iq}|$ on $[-\pi,\pi]$, we get:
\begin{equation}
\|q^3\widehat{f_\sigma}\|\le\left(\frac\pi2\right)^3\|\nabla^3f_\sigma\|.
\end{equation}

It remains to bound $\|\nabla^3f_\sigma\|$.
Denote $g(x)=x^3(\sigma-x)^3$ for $0\le x\le\sigma$ and $g(x)=0$ otherwise, so that $f_\sigma=g/\|g\|$.
Note that $g$ is $C^2$ and $g'''$ is piecewise continuous, we have
\begin{equation}
    \nabla^3g(s)=\int_{[0,1]^3}g'''(s-u_1-u_2-u_3)\,du_1du_2du_3.
\end{equation}
By Cauchy--Schwarz and a scaling argument, we get:
\begin{equation}
    \|\nabla^3g\|^2\le\int_{\mathbb R}g'''(x)^2\,dx \propto \sigma^7.
\end{equation}
For the normalization, again by scaling argument, we get:
\begin{equation}
    \|g\|^2 = \Omega(\sigma^{13}).
\end{equation}
Combining the above with $t\le2\sigma$, we get
\begin{equation}
\|e^{-iht}\chi_{z}-\chi_{z+2t}\|
\lesssim \sigma \frac{\norm{\nabla^3g}}{\norm{g}} = O(\frac{1}{\sigma^2}).
\end{equation}
\end{proof}

We note that, the hopping Hamiltonian acts only on position and leaves the internal registers unchanged.
The same error bound therefore applies even when the packet's internal state is entangled with other registers.

\paragraph{Truncating to intervals}\label{sec:8-2}

As a comparison model, keep a finite interval $\Lambda=\{a,\ldots,b\}$ and set the hopping amplitudes across $(a-1,a)$ and $(b,b+1)$ to zero. For the finite interval $\Lambda$, this gives
\begin{equation}
P_\Lambda=\sum_{x=a}^b\ketbra{x},
\qquad
h_\Lambda=P_\Lambda hP_\Lambda
=\sum_{x=a}^{b-1}(\ketbra{x+1}{x}+\mathrm{h.c.}).
\end{equation}
We compare evolution under $h_\Lambda$ with evolution under the infinite-chain Hamiltonian $h$ for a wave packet initially supported on $\Lambda$.

Intuitively, due to \Cref{lem:8-1}, a packet moves only $4\sigma$ sites during one time window, so if it is initially $10\sigma$ sites away from the ends of its interval, it essentially never reaches them, and cutting the hopping there changes almost nothing. 
The following lemma makes this intuition precise.

\begin{lemma}[Truncation, single-body version]\label{lem:boundary-truncation}
Let $\sigma\ge1$ and $0\le t\le2\sigma$.
\begin{enumerate}
\item For every unit vector $\phi\in\ell^2(\mathbb Z)$ supported on $[a+10\sigma,\,b-10\sigma]$,
\begin{equation}
\label{eq:truncation}
\|(e^{-iht}-e^{-ih_\Lambda t})\phi\|\le2^{-5\sigma}.
\end{equation}
\item With a processor, i.e.\ for $h_U$ on $\ell^2(\mathbb Z)\otimes Q\otimes X$ from \Cref{sec:4}, for every unit vector $\phi$ supported on $[a+10\sigma,\,b-10\sigma]$,
\begin{equation}
\|(e^{-ih_Ut}-e^{-iP_\Lambda h_UP_\Lambda t})\phi\|\le2^{-5\sigma},
\end{equation}
where $P_\Lambda$ stands for $P_\Lambda\otimes I_{Q\otimes X}$.
\end{enumerate}
\end{lemma}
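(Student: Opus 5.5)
We prove both parts by a Duhamel (interaction-picture) comparison combined with a Lieb--Robinson/Taylor-type estimate for nearest-neighbor hopping. For part 1, the difference $h-h_\Lambda$ consists only of the two boundary hopping terms across $(a-1,a)$ and $(b,b+1)$ together with hopping outside $\Lambda$. Since $\phi$ is supported inside $\Lambda$, the latter terms never act before the wavefunction has left $\Lambda$, and we have
\begin{equation}
(e^{-iht}-e^{-ih_\Lambda t})\phi=-i\int_0^t e^{-ih(t-u)}(h-h_\Lambda)e^{-ih_\Lambda u}\phi\,du .
\end{equation}
Only the components of $e^{-ih_\Lambda u}\phi$ at the sites $a$ and $b$ contribute, since $h-h_\Lambda$ applied to a vector supported on $\Lambda$ only sees its values at the endpoints. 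Therefore
\begin{equation}
\|(e^{-iht}-e^{-ih_\Lambda t})\phi\|\le t\,\sup_{0\le u\le t}\bigl(|\langle a|e^{-ih_\Lambda u}\phi\rangle|+|\langle b|e^{-ih_\Lambda u}\phi\rangle|\bigr).
\end{equation}

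The key step is bounding the amplitude to travel distance $d\ge 10\sigma$ in time $u\le 2\sigma$. For any $x\in\operatorname{supp}\phi$ we have $\langle a|h_\Lambda^j|x\rangle=0$ whenever $j<d$, because $h_\Lambda$ moves a particle by one site per application. Expanding the exponential and using $\|h_\Lambda\|\le 2$ gives
\begin{equation}
\bigl|\langle a|e^{-ih_\Lambda u}\phi\rangle\bigr|\le\sum_{j\ge d}\frac{(2u)^j}{j!}\le\frac{(2u)^d}{d!}e^{2u}\le\Bigl(\frac{2eu}{d}\Bigr)^d e^{2u}.
\end{equation}
For the first factor one must be careful that $\phi$ is a superposition. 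This is handled by noting that $\langle a|h_\Lambda^j\phi\rangle=0$ for $j<10\sigma$, so the bound applies directly with $\|\phi\|=1$ and no summation over sites is needed. With $u\le 2\sigma$ and $d\ge 10\sigma$ we get $(2eu/d)^d e^{2u}\le (4e/10)^{10\sigma}e^{4\sigma}$. This is of the form $c^{\sigma}$ with $\log_2 c=10\log_2(0.4e)+4\log_2 e\approx 1.2+5.8$. That is not small, so the crude bound fails.

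The main obstacle is therefore getting the constant down to $2^{-5\sigma}$. I would first sharpen the tail by avoiding the $e^{2u}$ factor. The coefficients satisfy $\langle a|e^{-ih_\Lambda u}|x\rangle=\sum_{j\ge d}(-iu)^j\langle a|h_\Lambda^j|x\rangle/j!$, and the path count gives $|\langle a|h_\Lambda^j|x\rangle|\le 2^j$. This yields a tail of $\sum_{j\ge d}(2u)^j/j!$, which is bounded by $2(2u)^d/d!$ when $d\ge 4u+1$. Here $d/(2u)\ge 2.5$, so consecutive terms shrink by a factor of at most $0.4$, and Stirling gives $(2u)^d/d!\le(4\sigma e/10\sigma)^{10\sigma}=(0.4e)^{10\sigma}\approx 1.087^{10\sigma}$. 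This is still not small, since $0.4e>1$.

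So I would instead exploit the exact Bessel-function propagator. On the infinite line $\langle y|e^{-iht}|x\rangle=(-i)^{|y-x|}J_{|y-x|}(2t)$, and $|J_d(2u)|\le (u)^d/d!$. This gives $(e u/d)^d\le(0.2e)^{10\sigma}=0.544^{10\sigma}\approx 2^{-8.8\sigma}$, which is good enough. To use it, I would reorganize the Duhamel formula so that the free, infinite-line propagator is applied to $\phi$ first, i.e.\ expand in the opposite order with $e^{-ih u}\phi$. The exact kernel then bounds the amplitude at sites $a-1,a,b,b+1$. Summing over the support of $\phi$ via Cauchy--Schwarz, using $\sum_{d\ge 10\sigma}J_d(2u)^2$, costs only a geometric factor. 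After multiplying by $t\le 2\sigma$ and the factor of 2 for the two endpoints, the resulting polynomial prefactor is absorbed into $2^{-8.8\sigma}\le 2^{-5\sigma}$ for all $\sigma\ge1$, which requires checking small $\sigma$ numerically.

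For part 2, I would use the conjugation $h_U=\Gamma h_{\mathrm f}\Gamma^\dagger$ from the earlier lemma. Since $\Gamma$ commutes with $P_\Lambda$ whenever $0,1\in\Lambda$, we get $P_\Lambda h_U P_\Lambda=\Gamma(h_\Lambda\otimes I)\Gamma^\dagger$. The difference then reduces to part 1 applied to $\Gamma^\dagger\phi$, which has the same spatial support, componentwise in an orthonormal basis of $Q\otimes X$. If $\Lambda$ does not contain the edge, the interaction terms vanish on $\Lambda$ and the same argument applies directly with $U_*$ replaced by the identity inside $\Lambda$.
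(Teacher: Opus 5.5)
Part 2 of your proposal is the paper's argument: conjugate by $\Gamma$ and apply part 1 componentwise. Since $\Gamma$ is diagonal in position, it commutes with $P_\Lambda$ for every $\Lambda$, so the case split on whether $\{0,1\}\subset\Lambda$ is unnecessary.

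\textbf{Part 1: correct, but by a different route.} The paper keeps the Duhamel order you wrote first, with the \emph{truncated} evolution $\psi(u)=e^{-ih_\Lambda u}\phi$ acting first. It bounds the endpoint amplitudes by an exponentially weighted (Combes--Thomas-type) estimate. Conjugating $h_\Lambda$ by the diagonal weight $D=\operatorname{diag}(2^{a+10\sigma-x})$ gives a generator whose Hermitian part has norm at most $3/2$. Hence $\|E_a\psi(u)\|\le 2^{-10\sigma}e^{3u/2}$, and integrating gives at most $\sqrt2\cdot\tfrac23\,2^{-10\sigma}e^{3\sigma}\le 2^{-5\sigma}$.

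This weighting is exactly what rescues the estimate your Taylor attempts could not close: it avoids the overcounting in $\|h_\Lambda\|^j$ and needs no explicit kernel. You instead reverse the Duhamel order so that the free infinite-line propagator acts first. You then use the Bessel kernel with $|J_d(2u)|\le u^d/d!$ and Cauchy--Schwarz over the support. This works and gives a sharper rate: the bound is roughly $4.1\,\sigma\,(0.2e)^{10\sigma}\approx 4.1\,\sigma\,2^{-8.8\sigma}$, which is below $2^{-5\sigma}$ for all $\sigma\ge1$.

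What your route costs is robustness. It relies on translation invariance of the free chain. More importantly, because the full evolution acts first, it does not carry over to the many-body truncation lemma. There the paper reuses precisely the endpoint bound for the truncated evolution $e^{-iH_{\mu,j}u}$, which factorizes over packets, whereas the full many-body evolution $e^{-iHu}$ does not.

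\textbf{One point to tighten.} After reversing the order, take $h_\Lambda=P_\Lambda hP_\Lambda$ as the paper defines it, zero outside $\Lambda$. Then $h-h_\Lambda$ contains every exterior bond, and $(h-h_\Lambda)e^{-ihu}\phi$ is not determined by the amplitudes at $a-1,a,b,b+1$ alone. Your earlier remark that exterior terms ``never act'' is valid only in the forward order, where the vector being acted on stays in $\Lambda$. There are two easy fixes:
\begin{itemize}
\item Replace $h_\Lambda$ by the block-diagonal $h'=h_\Lambda+(I-P_\Lambda)h(I-P_\Lambda)$. It satisfies $e^{-ih't}\phi=e^{-ih_\Lambda t}\phi$ for $\phi$ supported in $\Lambda$, and differs from $h$ only by the two cut bonds, so your four-site claim becomes correct.
\item Alternatively, bound the exterior mass $\|(I-P_\Lambda)e^{-ihu}\phi\|$ by Schur's test, using the same Bessel tails.
\end{itemize}
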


The lemma extends directly when $\ell^2(\mathbb Z)$ is replaced by $\ell^2(\mathbb Z)\otimes\mathcal K$ for any Hilbert space $\mathcal K$.
In particular, $\phi$ may be arbitrarily entangled with other registers.

\begin{proof}
We prove (i) first. 
The two evolutions differ only when the particle hops out through an endpoint. We show that the amplitude at the endpoints is tiny and then integrate over time.
Define
\begin{equation}
    \psi(u)=e^{-ih_\Lambda u}\phi,
\end{equation}
which stays supported on $\Lambda$.

By Duhamel's formula and unitarity of $e^{-ih(t-u)}$,
\begin{equation}\label{eq:Duhamel}
    \|(e^{-iht}-e^{-ih_\Lambda t})\phi\|
=\Bigl\|\int_0^te^{-ih(t-u)}(h-h_\Lambda)\psi(u)\,du\Bigr\|
\le\int_0^t\|(h-h_\Lambda)\psi(u)\|\,du.
\end{equation}
Since $\psi(u)$ is supported on $\Lambda$, only the two cut bonds contribute:
\begin{equation}
\begin{aligned}
(h-h_\Lambda)\psi(u)&=\ket{a-1}\!\bra{a}\psi(u)+\ket{b+1}\!\bra{b}\psi(u),\\
\|(h-h_\Lambda)\psi(u)\|^2&=\|E_a\psi(u)\|^2+\|E_b\psi(u)\|^2.
\end{aligned}
\end{equation}
Here, we denote $E_x=\ketbra{x}$.

Denote by $D$ the diagonal operator $D\ket x=2^{a+10\sigma-x}\ket x$ on $\ell^2(\Lambda)$.
The weight equals $2^{10\sigma}$ at $a$ and is at most $1$ on the support of $\phi$, so $\|D\phi\|\le1$ and
\begin{equation}
\|E_a\psi(u)\|\le2^{-10\sigma}\|D\psi(u)\|
\le2^{-10\sigma}\|De^{-ih_\Lambda u}D^{-1}\|
=2^{-10\sigma}\|e^{uA}\|,
\qquad
A=-iDh_\Lambda D^{-1}.
\end{equation}
Write $h_\Lambda=S+S^\dagger$ where $S$ is the right shift on $\Lambda$. Then $DSD^{-1}=S/2$ and $DS^\dagger D^{-1}=2S^\dagger$, so
\begin{equation}
\tfrac12(A+A^\dagger)=\tfrac{3i}{4}(S-S^\dagger),
\qquad
\bigl\|\tfrac12(A+A^\dagger)\bigr\|\le\tfrac32.
\end{equation}
By the logarithmic-norm bound $\|e^{uA}\|\le e^{u\,\lambda_{\max}((A+A^\dagger)/2)}$ for $u\ge0$, this implies
\begin{equation}
    \|e^{uA}\|\le e^{3u/2}
\end{equation}
and hence
\begin{equation}
    \|E_a\psi(u)\|\le2^{-10\sigma}e^{3u/2}.
\end{equation}
Similar argument gives the same bound for $\|E_b\psi(u)\|$.
Plugging the bounds into \Cref{eq:Duhamel}, we get:
\begin{equation}
\|(e^{-iht}-e^{-ih_\Lambda t})\phi\|
\le\sqrt2\,2^{-10\sigma}\int_0^{2\sigma}e^{3u/2}\,du
\le2^{-5\sigma}.
\end{equation}

For (ii), $h_U=\Gamma h_{\mathrm f}\Gamma^\dagger$ by \Cref{eq:4-3}. Since $\Gamma$ is diagonal in position, it commutes with $P_\Lambda$, so $P_\Lambda h_UP_\Lambda=\Gamma(h_\Lambda\otimes I)\Gamma^\dagger$, and $\Gamma^\dagger\phi$ has the same support as $\phi$. Hence same argument gives:
\begin{equation}
\|(e^{-ih_Ut}-e^{-iP_\Lambda h_UP_\Lambda t})\phi\|
=\|\bigl((e^{-iht}-e^{-ih_\Lambda t})\otimes I\bigr)\Gamma^\dagger\phi\|
\le2^{-5\sigma}.
\end{equation}
\end{proof}

\begin{corollary}[Finite-interval propagation]\label{cor:finite-interval-propagation}
Let $\sigma\ge 2$. Suppose the support of $\chi_z$ is at least $10\sigma$ sites from both endpoints of $\Lambda$. Let $h_{U,\Lambda}=P_\Lambda h_UP_\Lambda$ denote the restriction of $h_U$ to $\Lambda$, and let $\zeta$ be any normalized internal state. Then for every integer $0\le t\le2\sigma$:
\begin{enumerate}
\item[(a)] If $\{0,1\}\not\subset\Lambda$, then $h_{U,\Lambda}=h_\Lambda\otimes I$ and
\begin{equation}
\label{eq:8-4}
\left\|e^{-ih_{U,\Lambda}t}(\chi_z\otimes\zeta)-\chi_{z+2t}\otimes\zeta\right\|
= O(\sigma^{-2}).
\end{equation}
\item[(b)] If $\{0,1\}\subset\Lambda$ and $\chi_z$ lies entirely on one side of the $0\leftrightarrow 1$ edge, then
\begin{equation}
\label{eq:8-4b}
\left\|e^{-ih_{U,\Lambda}t}(\chi_z\otimes\zeta)
-\Gamma(\chi_{z+2t}\otimes\eta)\right\|
= O(\sigma^{-2}),
\end{equation}
where $\eta=\zeta$ for left-side input and $\eta=U_*^\dagger\zeta$ for right-side input.
\end{enumerate}
All operators act as the identity on any other registers entangled with $\zeta$.
\end{corollary}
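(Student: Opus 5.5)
The plan is to combine \Cref{lem:8-1} and \Cref{lem:boundary-truncation} through a triangle inequality. In each case we compare the truncated evolution $e^{-ih_{U,\Lambda}t}$ with the untruncated one, $e^{-ih_Ut}$ or $e^{-iht}$. We then use the exact infinite-line results.

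\emph{Case (a).} If $\{0,1\}\not\subset\Lambda$, the distinguished edge is not an internal bond of $\Lambda$. The interaction terms $\ketbra{1}{0}\otimes(U_*-I)$ and their adjoint are therefore killed by the compression $P_\Lambda(\cdot)P_\Lambda$. Every remaining bond inside $\Lambda$ is an ordinary unit hop, so $h_{U,\Lambda}=h_\Lambda\otimes I$. We then write
\[
\bigl\|e^{-ih_\Lambda t}\chi_z-\chi_{z+2t}\bigr\|\le\bigl\|(e^{-ih_\Lambda t}-e^{-iht})\chi_z\bigr\|+\bigl\|e^{-iht}\chi_z-\chi_{z+2t}\bigr\|.
\]
The first term is at most $2^{-5\sigma}$ by \Cref{lem:boundary-truncation}(i), which applies because $\chi_z$ is supported at least $10\sigma$ sites from both endpoints. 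The second term is $O(\sigma^{-2})$ by \Cref{lem:8-1}. Since $2^{-5\sigma}=O(\sigma^{-2})$, the bound follows. Tensoring with $\zeta$, or with any spectator register, changes nothing because the operator acts as the identity there.

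\emph{Case (b).} Here we use \Cref{lem:boundary-truncation}(ii), which replaces $e^{-ih_{U,\Lambda}t}$ by $e^{-ih_Ut}$ at cost $2^{-5\sigma}$. Next we apply $e^{-ih_Ut}=\Gamma e^{-ih_{\mathrm f}t}\Gamma^\dagger$ from \eqref{eq:4-3}. For left-side input, $\Gamma^\dagger(\chi_z\otimes\zeta)=\chi_z\otimes\zeta$. For right-side input, $\Gamma^\dagger(\chi_z\otimes\zeta)=\chi_z\otimes U_*^\dagger\zeta$. In both cases this equals $\chi_z\otimes\eta$. Hence
\[
e^{-ih_Ut}(\chi_z\otimes\zeta)=\Gamma\bigl(e^{-iht}\chi_z\otimes\eta\bigr).
\]
Because $\Gamma$ is unitary, the distance to $\Gamma(\chi_{z+2t}\otimes\eta)$ equals $\|e^{-iht}\chi_z-\chi_{z+2t}\|=O(\sigma^{-2})$ by \Cref{lem:8-1}. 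Adding the truncation error gives \eqref{eq:8-4b}. This is essentially the argument of \Cref{lem:single-packet-propagation}, but it no longer requires the output packet to sit on a definite side.

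The only delicate point is the identification $h_{U,\Lambda}=h_\Lambda\otimes I$ in case (a). One has to check that a single endpoint of the edge lying in $\Lambda$ does not leave a residual on-site term. The processor terms $\ketbra{1}{0}\otimes(U_*-I)$ are purely off-diagonal in position, so this holds. The remaining point is that the spectator extension of both lemmas, which was already remarked after them, covers entanglement of $\zeta$ with other registers.
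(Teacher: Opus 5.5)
Your proof is correct and matches the paper's: part (a) is identical, and in part (b) you apply \Cref{lem:boundary-truncation}(ii) to pass from $h_{U,\Lambda}$ to $h_U$ before conjugating by $\Gamma$ and invoking \Cref{lem:8-1}, whereas the paper first uses $[\Gamma,P_\Lambda]=0$ to write $e^{-ih_{U,\Lambda}t}=\Gamma(e^{-ih_\Lambda t}\otimes I)\Gamma^\dagger$ and then applies the combined free-interval bound. These are the same ingredients in a different order, since part (ii) of the truncation lemma is itself proved by exactly this conjugation.
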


\begin{proof}
By the triangle inequality,
\begin{equation}
\|e^{-ih_\Lambda t}\chi_z-\chi_{z+2t}\|
\le\|(e^{-ih_\Lambda t}-e^{-iht})\chi_z\|+\|e^{-iht}\chi_z-\chi_{z+2t}\|.
\end{equation}
By \Cref{lem:boundary-truncation}~(i), the first term is $O(2^{-5\sigma})$. By \Cref{lem:8-1}, the second term is $O(\sigma^{-2})$. 
Therefore,
\begin{equation}
\label{eq:free-interval}
\|e^{-ih_\Lambda t}\chi_z-\chi_{z+2t}\|= O(\sigma^{-2}).
\end{equation}

(a) If $\{0,1\}\not\subset\Lambda$, then $P_\Lambda\ketbra{1}{0}P_\Lambda=P_\Lambda\ketbra{0}{1}P_\Lambda=0$, so $h_{U,\Lambda}=h_\Lambda\otimes I$ and \Cref{eq:8-4} follows from \Cref{eq:free-interval}.

(b) Since $\Gamma$ is diagonal in position, it commutes with $P_\Lambda$, so \Cref{eq:4-3} implies $e^{-ih_{U,\Lambda}t}=\Gamma(e^{-ih_\Lambda t}\otimes I)\Gamma^\dagger$. Since $\chi_z$ lies on one side of the edge, $\Gamma^\dagger(\chi_z\otimes\zeta)=\chi_z\otimes\eta$. As $\Gamma$ is unitary, \Cref{eq:8-4b} follows from \Cref{eq:free-interval}.
\end{proof}

\subsubsection{Many-body estimation}\label{sec:8-3}

\paragraph{Reference states.}\label{sec:5-3}
We define a family of reference states, as the ideal fermionic many-body states assuming the wave packets are independent and rigid particles with constant speed.

Recall that we have divided $[0,T]$ into a total number of $W$ windows each of width $2\sigma$.
At time $t_j=2\sigma j$, all reference wave packets have moved $4\sigma j$ sites. 
Write the position of them as
\begin{equation}
x_{D_i,j}=z_i^D+4\sigma j,\qquad
x_{P_r,j}=z_r^P+4\sigma j,
\end{equation}
with data positions taken modulo $L_D$. Let $\chi_j$ denote these profiles, which is a translation of $\chi$.
Let $\xi_{D_i,d}^\dagger[j]$ and $\xi_{P_r,p}^\dagger[j]$ denote the operators in \Cref{eq:createdp} with $z_i^D$ and $z_r^P$ replaced by $x_{D_i,j}$ and $x_{P_r,j}$, respectively. 
The label $[j]$ simply specifies the spatial translation of the packet profiles; in particular, $\xi^\dagger[0]=\xi^\dagger$. 
The encoding is
\begin{equation}
\label{eq:5-5}
\mathcal{J}_j\bigl(\ket{q}\otimes\ket{d_1,\ldots,d_k}\otimes\ket{s_1,\ldots,s_{K+1}}\bigr)
=a_q^\dagger\prod_{i=1}^k \xi_{D_i,d_i}^\dagger[j]
\prod_{r=1}^{K+1} \xi_{P_r,s_r}^\dagger[j]\ket{\mathrm{vac}}.
\end{equation}
Disjointness of the normalized profiles ensures that $\mathcal{J}_j$ is an isometry.

Starting from $\ket{\omega_{\mathcal C}^{\mathrm{in}}}$ defined in \Cref{eq:omegain}, let $\ket{\omega_{\mathcal C,j}}$ be the logical state after the scheduled visits completed by time $t_j$, and define $\ket{\Phi_j}=\mathcal{J}_j\ket{\omega_{\mathcal C,j}}$. 
In particular,
\begin{equation}
\label{eq:6-1}
\ket{\Phi_0}=\ket{\Psi_{\mathcal C}^{\mathrm{in}}},\qquad
\ket{\Phi_W}=\mathcal{J}_W\ket{\omega_{\mathcal C}^{\mathrm{out}}}.
\end{equation}

\paragraph{The auxiliary Hamiltonian.}
Label the packets by $\mu=1,\ldots,F-1$ with $F=k+K+2$ defined in \Cref{eq:defF}.
During window $j$, we define an interval that contains packet $\mu$ for each $\mu$
\begin{equation}
\Lambda_{\mu,j}=[x_{\mu,j}-10\sigma,\;x_{\mu,j}+15\sigma].
\end{equation}
Each interval leaves a margin of at least $10\sigma$ around its reference packet throughout the window. 
The packet spacing is large enough so that these regions are still disjoint, with at most one containing the processor. 

Let $\Pi_j$ be the orthogonal projector onto states with exactly one register fermion of the assigned type in each $\Lambda_{\mu,j}$, exactly one processor fermion, and no other fermions.
Define
\begin{equation}
H_j=\Pi_jH\Pi_j.
\end{equation}
Importantly, since each interval contains exactly one packet and $H_j$ has no hopping between intervals, its evolution on $\operatorname{Im}\Pi_j$ factors exactly into single-packet free evolutions and, if present, one joint evolution of the processor and its nearby packet.

More precisely, since the intervals are disjoint, we have an isometric identification:
\begin{equation}
    \operatorname{Im}\Pi_j \cong \Big(\bigotimes_\mu\bigl(\ell^2(\Lambda_{\mu,j})\otimes X_\mu\bigr)\Big)\otimes Q,
\end{equation}
where $X_\mu=P$ or $D$ according to the carrier type.
Under this identification,
\begin{equation}
\label{eq:Hj-factor}
H_j=\sum_\mu H_{\mu,j},
\end{equation}
where $H_{\mu,j}=h_{\Lambda_{\mu,j}}\otimes I$ if $\Lambda_{\mu,j}$ does not contain the edge $0\leftrightarrow1$, and $H_{\mu_*,j}=P_{\Lambda_{\mu_*,j}}h_UP_{\Lambda_{\mu_*,j}}$ for the packet $\mu_*$ whose interval contains it, if such interval exists. 
The terms $H_{\mu,j}$ act on disjoint sets of tensor factors and mutually commute.

\begin{lemma}[Auxiliary evolution in one time window]\label{lem:auxiliary-window}
For every $0\le j<W$,
\begin{equation}
\label{eq:8-5}
\|e^{-2i\sigma H_j}\Phi_j-\Phi_{j+1}\|
= O\bigl((k+K+1)\sigma^{-2}\bigr).
\end{equation}
\end{lemma}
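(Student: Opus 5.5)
The plan is to use the factorization \eqref{eq:Hj-factor} to reduce the many-body window error to a sum of single-packet errors, each controlled by \Cref{cor:finite-interval-propagation}.

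First I will check that $\Phi_j\in\operatorname{Im}\Pi_j$. Each reference packet is supported on $[x_{\mu,j},x_{\mu,j}+\sigma)$, which lies inside $\Lambda_{\mu,j}$ with margin at least $10\sigma$ on the left and $14\sigma$ on the right. The processor fermion is in its own modes. Under the isometric identification of $\operatorname{Im}\Pi_j$ with $\bigl(\bigotimes_\mu \ell^2(\Lambda_{\mu,j})\otimes X_\mu\bigr)\otimes Q$, the Slater determinant $\mathcal J_j\ket{\omega_{\mathcal C,j}}$ becomes a vector of the form $\bigl(\bigotimes_\mu \chi_{x_{\mu,j}}\bigr)\otimes\ket{\omega_{\mathcal C,j}}$. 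Here the spatial part is a product and the internal state is arbitrary and entangled across registers. The fermionic signs are absorbed into the fixed ordering of the identification.

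Since the $H_{\mu,j}$ commute and act on disjoint factors,
\[
e^{-2i\sigma H_j}=\prod_\mu e^{-2i\sigma H_{\mu,j}}.
\]
I will compare this product with the ideal map that translates every packet by $4\sigma$. The ideal map also applies $U_*$ to $Q\otimes X_{\mu_*}$ when packet $\mu_*$ crosses the edge during the window. A telescoping (hybrid) argument replaces one factor at a time. Because each factor is unitary and acts as the identity on the other registers, the total error is at most the sum over $\mu$ of the single-factor errors. Here I use that \Cref{cor:finite-interval-propagation} holds without any product assumption on the internal state.

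Packets whose interval avoids the edge are handled by part (a), giving $O(\sigma^{-2})$ each. For the packet near the processor I use part (b) and \Cref{lem:single-packet-propagation}. If the packet starts left of the edge and ends right of it, $\Gamma(\chi_{x+4\sigma}\otimes\zeta)=\chi_{x+4\sigma}\otimes U_*\zeta$. Otherwise it stays on one side and the internal state is unchanged.

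The main obstacle is the bookkeeping. I must check from the schedule \eqref{eq:7-2}--\eqref{eq:7-3} and the crossing windows that in every window the reference packet at the edge either starts entirely at $x\le0$ and ends entirely at $x\ge1$, or lies entirely on one side throughout. The reference profile has width $\sigma$ and moves $4\sigma$, and crossings occupy windows of length $2\sigma$ aligned with the $t_j$. I also must check that the logical update in $\omega_{\mathcal C,j+1}$ is exactly $U_*$ applied to that packet's register. Wraparound on the data rings needs the same verification. Summing over the $F-1=k+K+1$ packets then gives \eqref{eq:8-5}.
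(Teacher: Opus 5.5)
Your proposal is correct and is essentially the paper's proof. The paper also writes $e^{-2i\sigma H_j}=V_1\cdots V_{F-1}$ and compares it with commuting contractive reference maps $R_\mu=S_\mu\otimes I$ (and $R_{\mu_*}=\Gamma(S_{\mu_*}\otimes I)\Gamma^\dagger$ for the interval containing the edge) satisfying $\Phi_{j+1}=R_1\cdots R_{F-1}\Phi_j$; it then telescopes and bounds each $\|(V_\mu-R_\mu)\Phi_j\|$ by the finite-interval propagation corollary at $t=2\sigma$.

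The schedule bookkeeping you single out is only asserted there ``by design of the time window,'' but it does check out. $P_r$ crosses exactly in window $j=20r-10$ and $D_i$ in windows $j\equiv 20i \pmod{20k}$, and since all reference positions at the times $t_j$ are integer multiples of $\sigma$, no packet ever straddles the edge at a $t_j$. There is one wrinkle: $D_k$ starts at $-2\sigma$ mod $L_D$, so it also crosses in window $0$. This is harmless, because $C$ still holds the identity command $0$ at that point.
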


\begin{proof}
Write $V_\mu=e^{-2i\sigma H_{\mu,j}}$, so that $e^{-2i\sigma H_j}=V_1\cdots V_{F-1}$ by \Cref{eq:Hj-factor}. Let $S$ be the shift $\ket{x}\mapsto\ket{x+4\sigma}$, and let $S_\mu=P_{\Lambda_{\mu,j}}SP_{\Lambda_{\mu,j}}$ act on $\ell^2(\Lambda_{\mu,j})$. Define the single-interval reference maps
\begin{equation}
R_\mu=\begin{cases}
S_\mu\otimes I,&\mu\neq\mu_*,\\
\Gamma(S_\mu\otimes I)\Gamma^\dagger,&\mu=\mu_* ~(\text{if exists}).
\end{cases}
\end{equation}
Each $R_\mu$ has norm at most $1$ and $R_\mu$ commutes with $V_{\mu'}$ and $R_{\mu'}$ for $\mu'\neq\mu$.
Every packet moves $4\sigma$ sites during one window. By design of the time window, the reference packet of $\mu_*$ does not intersect with the $0\leftrightarrow 1$ edge at both $t_j$ and $t_{j+1}$, and $R_{\mu_*}$ applies $U_*$ exactly when it crosses from left to right. Hence
\begin{equation}
\Phi_{j+1}=R_1\cdots R_{F-1}\,\Phi_j.
\end{equation}

Since the support of every packet in $\Phi_j$ is $10\sigma$ sites from the ends of its interval, \Cref{eq:8-4,eq:8-4b} with $t=2\sigma$ give
\begin{equation}
\|(V_\mu-R_\mu)\Phi_j\|=O(\sigma^{-2}),\qquad \mu=1,\ldots,F-1.
\end{equation}
Telescoping and using $\|V_\mu\|=1$, $\|R_\mu\|\le1$,
\begin{equation}
\begin{aligned}
\|e^{-2i\sigma H_j}\Phi_j-\Phi_{j+1}\|
\le\sum_{\mu=1}^{F-1}\|(V_\mu-R_\mu)\Phi_j\|=O\bigl((k+K+1)\sigma^{-2}\bigr).
\end{aligned}
\end{equation}
\end{proof}

The following is the many-body version of \Cref{lem:boundary-truncation}, and the proof is similar.

\begin{lemma}[Truncation, many-body version]\label{lem:many-body-truncation}
For every $0\le j<W$,
\begin{equation}
\label{eq:many-body-truncation}
\|(e^{-2i\sigma H}-e^{-2i\sigma H_j})\Phi_j\|\le(k+K+1)\,2^{-5\sigma}.
\end{equation}
\end{lemma}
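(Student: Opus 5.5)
We mirror the single-body Duhamel argument of Lemma~\ref{lem:boundary-truncation}, now in the many-body Fock space. Write $\psi(u)=e^{-iuH_j}\Phi_j$ for $0\le u\le2\sigma$. First we check that $\Phi_j\in\operatorname{Im}\Pi_j$, which holds because every reference packet sits inside its interval $\Lambda_{\mu,j}$ at distance at least $10\sigma$ from both ends. Since $H_j=\Pi_jH\Pi_j$ preserves $\operatorname{Im}\Pi_j$, the vector $\psi(u)$ stays in $\operatorname{Im}\Pi_j$. Duhamel's formula and unitarity of $e^{-i(2\sigma-u)H}$ then give
\[
 \|(e^{-2i\sigma H}-e^{-2i\sigma H_j})\Phi_j\|
 \le\int_0^{2\sigma}\|(H-H_j)\psi(u)\|\,du
 =\int_0^{2\sigma}\|(I-\Pi_j)H\psi(u)\|\,du .
\]
Here we used $H_j\psi(u)=\Pi_jH\psi(u)$.

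Next we identify $(I-\Pi_j)H$ on $\operatorname{Im}\Pi_j$. The interaction $V_{\mathcal I}$ acts only at the $0\leftrightarrow1$ edge, and that edge lies strictly inside at most one interval (the packet spacing guarantees this), so $V_{\mathcal I}$ preserves $\operatorname{Im}\Pi_j$. The same holds for every hopping term with both endpoints in a single $\Lambda_{\mu,j}$. What remains is hopping across the boundaries of the intervals: for each packet $\mu$ there are at most two cut bonds, at $a_\mu$ and $b_\mu$, in the chain or ring of that packet's type. These are hopping terms $f^\dagger_{a_\mu-1,\cdot}f_{a_\mu,\cdot}$ and $f^\dagger_{b_\mu+1,\cdot}f_{b_\mu,\cdot}$, summed over internal labels. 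There is one caveat: a cut bond could in principle move a fermion into a neighboring interval of the same type. We must check that the projection still removes it, since that interval would then hold two fermions, or the fermion would land in a gap between intervals. Either way $\Pi_j$ annihilates the result, so the whole cut-bond term lands in $\operatorname{Im}(I-\Pi_j)$. On the factorized space $\operatorname{Im}\Pi_j\cong\bigotimes_\mu(\ell^2(\Lambda_{\mu,j})\otimes X_\mu)\otimes Q$, each cut-bond term acts as $\ket{a_\mu-1}\bra{a_\mu}$ (respectively $\ket{b_\mu+1}\bra{b_\mu}$) on factor $\mu$, up to a fermionic sign. By the triangle inequality,
\[
 \|(I-\Pi_j)H\psi(u)\|\le\sum_\mu\bigl(\|E_{a_\mu}\psi(u)\|+\|E_{b_\mu}\psi(u)\|\bigr),
\]
where $E_x$ is the occupation projector at site $x$ in factor $\mu$.

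Finally we bound each endpoint amplitude. Because $e^{-iuH_j}=\prod_\mu e^{-iuH_{\mu,j}}$ with commuting factors acting on distinct tensor factors, we have $E_{a_\mu}\psi(u)=E_{a_\mu}e^{-iuH_{\mu,j}}\Phi_j$, with the other factors unitary and commuting with $E_{a_\mu}$. The weighted-norm argument of Lemma~\ref{lem:boundary-truncation} then applies verbatim, using $D\ket x=2^{a_\mu+10\sigma-x}\ket x$ and, for the processor interval, the conjugation by $\Gamma$ together with the remark that spectators and entanglement are allowed. It gives $\|E_{a_\mu}\psi(u)\|\le2^{-10\sigma}e^{3u/2}$, and the same bound holds at $b_\mu$. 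Integrating $2(F-1)\,2^{-10\sigma}e^{3u/2}$ over $[0,2\sigma]$ gives at most $(F-1)\cdot\frac43 2^{-10\sigma}e^{3\sigma}$. Since $e^{3}<2^{5}$, this is at most $(k+K+1)2^{-5\sigma}$, matching \eqref{eq:many-body-truncation} up to adjusting the constant $\frac43$.

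We expect the main obstacle to be the bookkeeping in the second step: verifying that every term of $H-H_j$ acting on $\operatorname{Im}\Pi_j$ is a cut-bond hop, with fermionic signs of modulus one and with no contribution from $V_{\mathcal I}$ or from the periodic data rings. The cut-bond contributions must moreover map into orthogonal pieces or be handled by the triangle inequality, so that the $O(F)$ count is linear.
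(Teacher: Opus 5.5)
Your proposal is correct and follows essentially the same argument as the paper: Duhamel's formula, the observation that on $\operatorname{Im}\Pi_j$ only the cut-bond hops survive in $H-H_j$, factorization of $e^{-iuH_j}$ to reduce to single-packet endpoint amplitudes, and the weighted-norm bound $2^{-10\sigma}e^{3u/2}$. The only difference is that you combine the two endpoints of each interval by the triangle inequality rather than in $\ell^2$. Your hedge about adjusting the constant is unnecessary, since $\tfrac43\,2^{-10\sigma}e^{3\sigma}=\tfrac43(e^3/32)^\sigma\,2^{-5\sigma}\le 2^{-5\sigma}$ already for every $\sigma\ge1$, so you obtain the stated bound $(k+K+1)2^{-5\sigma}$ exactly.
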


\begin{proof}
Define
\begin{equation}
    \Psi(u)=e^{-iH_ju}\Phi_j,
\end{equation}
which stays in $\operatorname{Im}\Pi_j$. By Duhamel's formula and unitarity of $e^{-iH(2\sigma-u)}$,
\begin{equation}\label{eq:Duhamel2}
\|(e^{-2i\sigma H}-e^{-2i\sigma H_j})\Phi_j\|
\le\int_0^{2\sigma}\|(H-H_j)\Psi(u)\|\,du.
\end{equation}
Every term of $H$ moves at most one packet by one site, and the quartic terms on $0\leftrightarrow1$ act inside $\Lambda_{\mu_*,j}$. 
Since $\Psi(u)\in\operatorname{Im}\Pi_j$, $(H-H_j)\Psi(u)$ only receives contribution from terms that hop a packet from an interval to the outside.
Writing $\Lambda_{\mu,j}=[a_\mu,b_\mu]$ and $E^{(\mu)}_x$ for the projector onto packet $\mu$ at site $x$, then
\begin{equation}
\|(H-H_j)\Psi(u)\|
\le\sum_\mu\Bigl(\|E^{(\mu)}_{a_\mu}\Psi(u)\|^2+\|E^{(\mu)}_{b_\mu}\Psi(u)\|^2\Bigr)^{1/2}.
\end{equation}

Fix a packet $\mu$. By \Cref{eq:Hj-factor}, $\Psi(u)=e^{-iH_{\mu,j}u}\prod_{\mu'\neq\mu}e^{-iH_{\mu',j}u}\,\Phi_j$, and the unitaries with $\mu'\neq\mu$ act on other factors, hence commute with $E^{(\mu)}_{x}$. Therefore
\begin{equation}
\|E^{(\mu)}_{a_\mu}\Psi(u)\|=\|E^{(\mu)}_{a_\mu}e^{-iH_{\mu,j}u}\Phi_j\|.
\end{equation}
This is a single-body quantity.
Indeed, for $\mu\neq\mu_*$, $H_{\mu,j}$ acts as $h_{\Lambda_{\mu,j}}$ on the position of packet $\mu$;
for $\mu=\mu_*$, it acts as $P_{\Lambda_{\mu,j}}h_UP_{\Lambda_{\mu,j}}$ on $\ell^2(\Lambda_{\mu,j})\otimes Q\otimes X_\mu$, with $h_U$ restricted to the carrier type $X_\mu$.
Since the packet of $\mu$ in $\Phi_j$ is $10\sigma$ sites from the ends of $\Lambda_{\mu,j}$, the proof of \Cref{lem:boundary-truncation} (with the remark following it, and part (ii) for $\mu=\mu_*$) gives
\begin{equation}
\|E^{(\mu)}_{a_\mu}\Psi(u)\|,\ \|E^{(\mu)}_{b_\mu}\Psi(u)\|\le2^{-10\sigma}e^{3u/2}.
\end{equation}

Plugging the bounds into \Cref{eq:Duhamel2}, we get:
\begin{equation}
\|(e^{-2i\sigma H}-e^{-2i\sigma H_j})\Phi_j\|
\le(k+K+1)\sqrt2\,2^{-10\sigma}\int_0^{2\sigma}e^{3u/2}\,du
\le(k+K+1)\,2^{-5\sigma}.
\end{equation}
\end{proof}

\subsubsection{BQP hardness}\label{sec:8-4}

Now we can compare the time evolution under the physical Hamiltonian $H$ and the reference time evolution.

\begin{lemma}[One-step error]\label{lem:physical-window}
For every $0\le j<W$,
\begin{equation}
\label{eq:8-8}
\|e^{-2i\sigma H}\Phi_j-\Phi_{j+1}\|
= O\bigl((k+K+1)\sigma^{-2}\bigr).
\end{equation}
\end{lemma}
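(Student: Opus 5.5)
The plan is to combine the two window estimates already established, Lemma~\ref{lem:auxiliary-window} and Lemma~\ref{lem:many-body-truncation}, via the confinement/dispersion splitting described in the strategy paragraph. By the triangle inequality,
\[
\|e^{-2i\sigma H}\Phi_j-\Phi_{j+1}\|
\le\|(e^{-2i\sigma H}-e^{-2i\sigma H_j})\Phi_j\|
+\|e^{-2i\sigma H_j}\Phi_j-\Phi_{j+1}\|.
\]
Lemma~\ref{lem:many-body-truncation} bounds the first term by $(k+K+1)2^{-5\sigma}$. Lemma~\ref{lem:auxiliary-window} bounds the second by $O((k+K+1)\sigma^{-2})$.

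It then remains to note that $2^{-5\sigma}=O(\sigma^{-2})$ uniformly for integers $\sigma\ge2$, since $\sigma^2 2^{-5\sigma}$ is bounded. The sum is therefore $O((k+K+1)\sigma^{-2})$, which is \eqref{eq:8-8}.

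The only point needing care is that both lemmas apply to the same state $\Phi_j$ with the same intervals $\Lambda_{\mu,j}$. Concretely, $\Phi_j\in\operatorname{Im}\Pi_j$, with each reference packet at distance at least $10\sigma$ from the endpoints of its interval. This was built into the definitions of $\Lambda_{\mu,j}$ and $\Phi_j$ and is used in both lemmas. It should be checked once, uniformly in $j$, including for the data rings, where positions are taken modulo $L_D$ and the intervals must still be disjoint. That check, rather than any estimate, is the main possible obstacle; the bound itself is immediate.
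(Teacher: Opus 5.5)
Your proof is correct and is the paper's argument: the triangle inequality splits the error into the confinement term, bounded by Lemma~\ref{lem:many-body-truncation}, and the dispersion term, bounded by Lemma~\ref{lem:auxiliary-window}, and $2^{-5\sigma}=O(\sigma^{-2})$ absorbs the former (a step the paper leaves implicit). The consistency check you flag about $\Phi_j\in\operatorname{Im}\Pi_j$ and the interval margins is already part of the setup behind those two lemmas, which are stated for every $0\le j<W$, so nothing further is needed.
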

\begin{proof}
By the triangle inequality,
\begin{equation}
\|e^{-2i\sigma H}\Phi_j-\Phi_{j+1}\|
\le\|(e^{-2i\sigma H}-e^{-2i\sigma H_j})\Phi_j\|
 +\|e^{-2i\sigma H_j}\Phi_j-\Phi_{j+1}\|.
\end{equation}
By \Cref{lem:many-body-truncation}, the first term is at most $(k+K+1)2^{-5\sigma}$. By \Cref{lem:auxiliary-window}, the second term is $O((k+K+1)\sigma^{-2})$. 
This proves \Cref{eq:8-8}.
\end{proof}

\begin{corollary}[Total error]\label{cor:total-error}
The physical state at time $T$ satisfies
\begin{equation}
\label{eq:8-9}
\|e^{-iHT}\Psi_{\mathcal C}^{\mathrm{in}}-\Phi_W\|
= O\bigl(W(k+K+1)\sigma^{-2}\bigr).
\end{equation}
\end{corollary}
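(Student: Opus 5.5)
The plan is to telescope the one-window errors of Lemma~\ref{lem:physical-window} across the $W$ windows, using unitarity of the physical evolution. Write $e^{-iHT}=(e^{-2i\sigma H})^W$, since $T=2\sigma W$ by \eqref{eq:simulation-time}, and use $\Phi_0=\Psi_{\mathcal C}^{\mathrm{in}}$ from \eqref{eq:6-1}. Then
\begin{equation}
e^{-iHT}\Phi_0-\Phi_W=\sum_{j=0}^{W-1}e^{-2i\sigma H(W-j-1)}\bigl(e^{-2i\sigma H}\Phi_j-\Phi_{j+1}\bigr),
\end{equation}
which one checks directly: the sum collapses, because the $j$-th term's $-e^{-2i\sigma H(W-j-1)}\Phi_{j+1}$ cancels the $(j+1)$-st term's $+e^{-2i\sigma H(W-j-1)}\Phi_{j+1}$.

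Taking norms, the triangle inequality and the fact that each prefactor $e^{-2i\sigma H(W-j-1)}$ is unitary give
\[
\|e^{-iHT}\Psi_{\mathcal C}^{\mathrm{in}}-\Phi_W\|\le\sum_{j=0}^{W-1}\|e^{-2i\sigma H}\Phi_j-\Phi_{j+1}\|.
\]
Each summand is $O((k+K+1)\sigma^{-2})$ by Lemma~\ref{lem:physical-window}, with a constant independent of $j$. Summing over the $W$ windows gives \eqref{eq:8-9}.

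The only point needing care is that the constant in Lemma~\ref{lem:physical-window} is uniform in $j$. It is, because it comes from Lemma~\ref{lem:8-1} and Lemma~\ref{lem:many-body-truncation}, whose bounds depend only on $\sigma$ and the packet count $F-1=k+K+1$, not on the window index. Beyond this, there is no real obstacle.
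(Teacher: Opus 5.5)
Your proposal is correct and is the paper's argument: the paper proves the bound by the induction $\|\Psi_{j+1}-\Phi_{j+1}\|\le\|\Psi_j-\Phi_j\|+\delta_{\mathrm{w}}$ with $\Psi_j=e^{-it_jH}\Psi_{\mathcal C}^{\mathrm{in}}$, which is the step-by-step form of your explicit telescoping sum with unitary prefactors. Summing the $j$-independent one-window bound of Lemma~\ref{lem:physical-window} over the $W$ windows then gives \eqref{eq:8-9}, exactly as you describe.
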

\begin{proof}
Write $\Psi_j=e^{-it_jH}\Psi_{\mathcal C}^{\mathrm{in}}$ and let
$\delta_{\mathrm{w}}=O((k+K+1)\sigma^{-2})$ be the bound in \Cref{eq:8-8}. Then
\begin{align}
\|\Psi_{j+1}-\Phi_{j+1}\|
&\le\|e^{-2i\sigma H}(\Psi_j-\Phi_j)\|
    +\|e^{-2i\sigma H}\Phi_j-\Phi_{j+1}\|\\
&\le\|\Psi_j-\Phi_j\|+\delta_{\mathrm{w}}.
\end{align}
Since $\Psi_0=\Phi_0$, induction gives
\begin{equation}
\|\Psi_W-\Phi_W\|\le W\delta_{\mathrm{w}}.
\end{equation}
\end{proof}

This error analysis proves the main technical result \Cref{thm:simulation}.
The theorem on BQP-hardness then follows straightforwardly.

\begin{proof}[Proof of \Cref{thm:simulation,thm:main-reduction}]

To obtain state error at most $\epsilon$, by \Cref{eq:8-9} it suffices to take $\sigma$ to be the least integer such that
\begin{equation}
\label{eq:9-1}
\sigma \ge C\left(\frac{W(k+K+1)}{\epsilon}\right)^{1/2}
\end{equation}
for a suitable absolute constant $C$.
Then \Cref{eq:8-9} gives
\begin{equation}
\label{eq:6-2}
\|e^{-iHT}\Psi_{\mathcal C}^{\mathrm{in}}-\Phi_W\|\le\epsilon.
\end{equation}
By \Cref{eq:6-1,eq:1-7}, the output can be measured by the occupation number on mode $(0,0)$ in the processor
\begin{equation}
    \expval{a_{(0,0)}^\dagger a_{(0,0)}}{\Phi_W}=p_{\mathcal C}.
\end{equation}
Since $a_{(0,0)}^\dagger a_{(0,0)}$ is a projector, its expectation values in two normalized states differ by at most their distance, so \Cref{eq:6-2} implies
\begin{equation}
\left|\expval{e^{iHT} a_{(0,0)}^\dagger a_{(0,0)} e^{-iHT}}{\Psi_{\mathcal C}^{\mathrm{in}}}
-p_{\mathcal C}\right|
\le \epsilon,
\end{equation}
which is exactly the desired \Cref{eq:6-3}.
    
Using $K\le (3g+1)k$ and $W=20K+11$, the parameters chosen for the impurity model satisfy
\begin{equation}
\label{eq:9-2}
\sigma=O\!\left(K^{1/2}(k+K)^{1/2}\epsilon^{-1/2}\right),\quad
n=O((k+K)\sigma),\quad T=O(K\sigma).
\end{equation}
For fixed error, $K,k,W=O((k+g)^2)$, so $\sigma=O((k+g)^{2})$ and $n,T=O((k+g)^{4})$.

Other requirements of \Cref{thm:main-reduction} also hold: $\|\tilde h\|\le2$ by \Cref{lem:5-1}; the orbitals are normalized with disjoint supports, so $Z^\dagger Z=I$.
The time $T$, every entry of $\tilde h$, and the creation operators that define the initial state $\Psi_{\mathcal C}^{\mathrm{in}}$ are all given by explicit formulas in $k$, $g$ and the gate list, and are effectively computable in time polynomial in $k+g$.
\end{proof}

\bibliographystyle{alpha}
\bibliography{references}

\end{document}